\newtheorem{theorem}{Theorem}[section]
\newtheorem{lemma}[theorem]{Lemma}
\newtheorem{proposition}[theorem]{Proposition}
\newtheorem{definition}[theorem]{Definition}
\newtheorem{problem}[theorem]{Problem}
\newtheorem{question}[theorem]{Question}
\newcommand{\BCONGEST}{\ensuremath{\textsf{BROADCAST CONGEST}}\xspace}
\newcommand{\CONGEST}{\ensuremath{\textsf{CONGEST}}\xspace}
\newcommand{\LOCAL}{\ensuremath{\textsf{LOCAL}}\xspace}
\newcommand{\OPT}{\ensuremath{\textsf{OPT}}\xspace}
\newcommand{\cnt}{\ensuremath{\textsf{count}}\xspace}
\newcommand{\FPT}{\ensuremath{\text{FPT}}\xspace}
\newcommand{\W}[1]{\ensuremath{\text{W}[#1]}}
\newcommand{\cO}{\mathcal{O}}
\newcommand{\N}{\mathbb{N}}
\newcommand{\ball}{\mathrm{ball}}
\newcommand{\polylog}{\mathrm{polylog}}
\def\true{\textsf{true}}
\def\id{\textsf{id}}
\def\adj{\textsf{adj}}
\def\lab{\textsf{lab}}
\def\depth{\textsf{depth}}
\def\FO{\mathrm{FO}}
\def\parent{\textsf{parent}}
\def\lca{\textsf{lca}}
\def\root{\textsf{root}}
\def\depth{\textsf{depth}}
\def\type{\textsf{type}}
\def\Type{\textsf{Type}}
\def\good{\textsf{good}}
\def\val{\textsf{val}}
\def\levels{\textsf{Levels}}
\def\level{\textsf{level}}
\def\Col{\textsf{Col}}
\def\col{\textsf{color}}
\def\parent{\textsf{parent}}
\def\depth{\textsf{depth}}
\def\Cert{\textsf{Cert}}
\def\ntcol{\textsf{Non-3-Coloring}}
\def\Disj{\textsf{Disj}}
\def\cross{\textsf{cr}}
\def\csixfree{C_6\textsf{-free}}
\tikzstyle{class} = [rectangle, rounded corners, draw=black, fill=white, text centered, minimum width=3cm, minimum height=1cm]
\tikzstyle{arrow} = [thick,->,>=stealth]
\tikzstyle{dottedline} = [dotted, thick]
\title{What Can Be Computed Locally Revisited\\
\Large{--- First-Order Logic on Sparse Graphs in Distributed Computing\footnote{The research leading to these results has been supported by the Research Council of Norway via the project BWCA (314528), and by Université Paris Cité via the Emergence project METALG. F.~Fomin partially supported by European Research Council (ERC) via grant NewPC, reference 101199930. P.~Fraigniaud was partially supported by the ANR projects DUCAT (ANR-20-CE48-0006), PREDICTIONS (ANR-23-CE48-0010), ENEDISC (ANR-24-CE48-7768-01), and the French PEPR integrated project EPiQ (ANR-22-PETQ-0007). P.~Montealegre was supported by ECOS-ANID  ECOS240020, STIC-AMSUD ECODIST AMSUD240005,  Centro de Modelamiento Matemático (CMM), FB210005, BASAL funds
for centers of excellence from ANID-Chile, FONDECYT 1230599 and ANID-MILENIO-NCN2024\_103. I.~Rapaport has received funding from FB210005, BASAL funds for centers of excellence from ANID-Chile, and FONDECYT 1220142.} ---}} 
\author[1]{Lélia Blin}
\author[2]{Fedor V. Fomin} 
\author[1]{Pierre Fraigniaud}
\author[1,6]{Sylvain Gay}
\author[2]{Petr Golovach}
\author[3]{\\Pedro Montealegre}
\author[4]{Ivan Rapaport}
\author[5]{Ioan Todinca}
\affil[1]{Université Paris Cité, CNRS, IRIF, Paris, France}
\affil[2]{University of Bergen, Bergen, Norway}
\affil[3]{Universidad Adolfo Ibañez, Santiago, Chile}
\affil[4]{DIM-CMM, Universidad de Chile, Santiago, Chile}
\affil[5]{LIFO, Université d'Orléans and INSA Centre-Val de Loire, Orléans, France}
\affil[6]{École Normale Supérieure, Paris, France}
\date{}
\begin{document}
%%%%%%%%%%%%%%%%%%%%%%%%%%%%%%%%%%

\maketitle
\thispagestyle{empty}
\begin{abstract}
The question of ``what can be computed locally?'' lies at the heart of \emph{distributed computing in networks}. As established in Naor and Stockmeyer's seminal paper (STOC 1993, Edsger W.~Dijkstra Prize in Distributed Computing 2025), this question is undecidable, even for graph problems whose solutions can be checked locally. In this paper, we adopt a novel perspective on the question, by asking for which classes $\Pi$ of problems, and for which classes $\mathcal{G}$ of graphs, \emph{all} problems in~$\Pi$ can be solved efficiently in a distributed manner in \emph{all} graphs of~$\mathcal{G}$. This paper focuses on two natural candidates for such an approach, namely the class of problems expressible in \emph{first-order logic (FO)}, because they possess an intrinsic form of locality thanks to Gaifman's theorem, and the class  of graphs with \emph{bounded expansion}, because they form a large class of graphs encompassing, e.g., planar, bounded-genus,  bounded-treewidth, and bounded-degree graphs, as well as graphs excluding a fixed minor or topological minor, sparse Erd\H{o}s--R\'enyi graphs (a.a.s.), and several network models such as stochastic block models for suitable parameter ranges.

The starting point of our work is the decade-old open question of Ne\v{s}et\v{r}il and Ossona de Mendez (Distributed Computing~2016) on the distributed complexity of \emph{local} FO formula on graphs of bounded expansion, in the standard \CONGEST model of distributed computing. Recall that a formula $\varphi(x)$ is local if the satisfaction of $\varphi(x)$ depends only on the $r$-neighborhood of its free variable~$x$, for some fixed~$r$. For instance, the formula ``$x$ belongs to a triangle'' is local. We resolve the open problem of Ne\v{s}et\v{r}il and Ossona de Mendez positively by showing that, for every local FO formula~$\varphi(x)$, and for every graph class $\mathcal{G}$ of bounded expansion, there exists a deterministic algorithm that identifies, for every $n$-vertex graph $G\in \mathcal{G}$,  all vertices $v$ of $G$ such that $G\models \varphi(v)$, in $\cO(\log n)$ rounds. The requirement of locality is unavoidable, as even the simple FO formula ``there exist two vertices of degree~3'' requires $\Omega(D)$ rounds in \CONGEST, even on trees of diameter~$D$. Nevertheless, we establish a second result, which goes beyond the question of Ne\v{s}et\v{r}il and Ossona de Mendez. We show that $\cO(D+\log n)$ rounds are sufficient for deciding \emph{any} FO formula~$\varphi$ on graphs of bounded expansion. That is, the overhead to be paid over the diameter is just $\cO(\log n)$. We underline that the techniques behind our two distributed ``meta-theorems'' extend to distributed \emph{counting}, \emph{optimization}, and \emph{certification} problems.  

Our results are tight in several ways. Regarding the choice of the graph class~$\mathcal{G}$, we show that deciding FO formulas may have high round complexity in \CONGEST\ on larger classes of graphs, even if they remain \emph{sparse}. For instance, the simple local FO formula expressing $C_6$-freeness requires $\tilde{\Omega}(\sqrt{n})$ rounds to be decided in graphs of degeneracy~2 with constant diameter. Regarding the choice of the class~$\Pi$ of problems, we show that deciding problems expressible in monadic second-order (MSO) logic may have high round complexity in \CONGEST, even in classes of graphs with bounded expansion. For example, deciding non-3-colorability requires $\tilde{\Omega}(n)$ rounds in bounded-degree graphs with logarithmic diameter. 

%\pedro{The abstract is very long, I propose remove this paragraph and simply state that we show that our results are tight in several ways. }

%Overall, while the interplay between logic, graph structure, and algorithms is a predominant theme in the development of centralized algorithms, both parameterized and approximation, our results show that this interplay is quite fruitful in the field of distributed computing as well. 

\medskip

\noindent\textbf{Keywords:} Distributed computing, \CONGEST model, Graph algorithms. 
\end{abstract}

\thispagestyle{empty}

%%%%%%%%%%%%%%%%%%%%%%%%%%%%%%%%%%%%%%%%%%%
\newpage
%%%%%%%%%%%%%%%%%%%%%%%%%%%%%%%%%%%%%%%%%%%

\setcounter{tocdepth}{2}
\tableofcontents
\thispagestyle{empty}

%%%%%%%%%%%%%%%%%%%%%%%%%%%%%%%%%%%%%%%%%%%
\newpage
%%%%%%%%%%%%%%%%%%%%%%%%%%%%%%%%%%%%%%%%%%%

\setcounter{page}{1}

%%%%%%%%%%%%%%%%%%%%%%%%%%%%%%%%%%%%%%%%%%%
\section{Introduction}
%%%%%%%%%%%%%%%%%%%%%%%%%%%%%%%%%%%%%%%%%%%

%==========================================
\subsection{Context and Objective}
%==========================================

%Distributed computing in networks is mostly interested in identifying which graph problems  can be efficiently solved ``by the graphs themselves''. That is, it is assumed that each vertex of every graph is a computing device, referred to as a \emph{node}, and the nodes of a graph communicate by exchanging messages along the edges of that graph. Initially, each node knows only local information (e.g., its identifier, its degree, the weights of its incident edges if any, etc.), and it must  compute its own output (e.g., whether or not it belongs to the computed minimum dominating set, or which of its incident edges belong to the computed minimum-weight spanning tree). Each node must do so by itself, without any other information than those obtained after exchanging data with other nodes. 

Distributed computing in networks is primarily concerned with identifying which graph problems can be efficiently solved “by the graphs themselves.” That is, each vertex of the graph is assumed to be a computing device, referred to as a \emph{node}, and nodes communicate by exchanging messages along the edges of the graph. Initially, each node knows only local information (e.g., its identifier, its degree, the weights of its incident edges if any, etc.), and it must compute its own output (e.g., whether it belongs to the computed minimum dominating set, or which of its incident edges belong to the computed minimum-weight spanning tree). Each node must do so autonomously, based only on the information obtained by exchanging data with its neighbors.

In this framework, the nodes are typically subject to two constraints: \emph{distance} and \emph{bandwidth}. The former merely expresses that exchanging information between two nodes at distance $d$ in the graph requires at least $d$ rounds of communication, for traversing a path of length $d$ between these two nodes. The latter is motivated by the fact that network links may have limited capacity. In $n$-node graphs, this is usually captured by limiting  communication to transmitting a single message of $\cO(\log n)$  bit per edge in each direction at each round. To sum up, distributed computation in networks proceeds as a sequence of synchronous rounds where, at each round, every node (1)~sends  $\cO(\log n)$-bit messages to each of its neighbors, (2)~receives the messages sent by its neighbors, and (3)~performs some individual computation. This model is commonly referred to as \CONGEST in the literature~\cite{Peleg2000}. In \CONGEST, each node is given an $\cO(\log n)$-bit identifier, which is unique in the network. 

\paragraph{Local and Global Problems.}

Two categories of graph problems can be identified with respect to their solvability in \CONGEST. \emph{Global problems} are problems that require $\Omega(D)$ rounds to be solved in graphs of diameter~$D$. This includes, e.g., minimum-weight spanning tree (MST), as deciding whether an edge belong to the tree may depend on the weight of another edge, at distance~$\Omega(D)$. In contrast, \emph{local problems} are problems that can be solved in $\polylog(n)$ rounds, even in graphs with %polynomially large  
 diameter $\Omega(n)$. 

\begin{itemize}
\item For a global problem, the typical objective is  to express its round complexity as $\cO(D+h(n))$ rounds, and to minimize the overhead function~$h(\cdot)$. For instance, the overhead for MST is $\tilde{\cO}(\sqrt{n})$ rounds\footnote{The $\tilde{\cO}$ and $\tilde{\Omega}$ notation ignores $\polylog(n)$ factors.} \cite{garay1993sub}, and this is optimal~\cite{PelegR00}. For other examples, such as SSSP, Min-Cut, routing, min-weight cycles, etc., see, e.g., \cite{DoryEMN21,LenzenP13,ManoharanR24,Nanongkai14}, and see~\cite{SarmaHKKNPPW12} for lower bounds. 

\item For local problems, the challenging question of ``what can be computed locally?'', to quote Naor and Stockmeyer's seminal paper~\cite{NaorS95}, has been mostly addressed by considering \emph{locally checkable labeling} (LCL) problems\footnote{The paper~\cite{NaorS95} actually considered the \LOCAL  model~\cite{Linial92}, which is \CONGEST without any restriction on the size of the messages. Nevertheless, since then, the question as been addressed in both \LOCAL and \CONGEST in the literature.}
%The paper~\cite{NaorS95} was awarded the Edsger W.~Dijkstra Prize in Distributed Computing in 2025.}.
  Roughly, LCL problems are problems for which an assignment of labels to the nodes or edges is a correct solution if and only if this assignment  is locally correct at each node. This is for instance the case of proper $k$-coloring, maximal independent set (MIS), maximal matching, etc. In other words, LCL problems are problems whose solutions can be checked at each node by inspecting its ball of constant radius around it. Determining which LCL problems are locally solvable (i.e., in $\polylog(n)$ rounds) has given rise of an enormous amount of work during the last two decades (see, e.g., \cite{Faour0GKR23,FischerGK17,GhaffariK21}). 
\end{itemize}

In this paper, we tackle local and global problems from a different perspective, which has the side effect of placing these two types of problems under the same umbrella whereas they are so far somehow treated separately in the literature (see above). 

\paragraph{The Novel ``Logical Approach'' of Distributed Locality.}

Another line of research aiming at identifying which problems can be solved locally in \CONGEST has  emerged in the years 2010s~\cite{NesetrilM16}, motivated by Gaifman's locality theorem~\cite{gaifman1982local}. Recall that Gaifman's theorem states that every first-order (FO) sentence is equivalent to a Boolean combination of local sentences. That is, in graphs, every FO sentence is equivalent to a sentence saying that there exist nodes that are far apart from one another in the graph, and each satisfies some local condition described by an FO formula referring to the vicinity of the node. 

Considering graph problems expressible in FO logic opens a completely new approach of locality in distributed computing. Note that the class of problems expressible in FO is quite rich. In particular, it includes problems as diverse as  subgraph isomorphism  (induced or not), $k$-dominating set, $k$-independent set, diameter at most~$k$, etc. For instance,  triangle-freeness can be expressed as 
\begin{equation}\label{eq:FO-triangle-freeness}
\neg\Big(\exists x \, \exists y \, \exists z \; \big(\adj(x,y) \land \adj(y,z) \land \adj(z,x)\big)\Big), 
\end{equation}
where $\adj$ denotes the adjacency predicate. Clearly,  every FO problem expressed as the combination of at least two local sentences cannot be decided in less than $\Omega(D)$ rounds in graphs of diameter~$D$, as the two local sentences may be satisfied by far away nodes in the network. For instance, the mere problem ``there exist at least two nodes of degree~3'' requires $\Omega(D)$ rounds to be checked in \CONGEST, even in trees. On the other hand, FO problems expressed as a unique local sentence may or may not be decidable in $\polylog(n)$ rounds. For instance, deciding\footnote{In distributed decision, a positive instance, i.e., an instance satisfying the predicate, must be accepted by all nodes, whereas a negative instance, i.e., an instance not satisfying the predicate, must be rejected by at least one node.}  whether the graph does not contain a given tree~$T$ as a subgraph (i.e., deciding $T$-freeness) can be done in a constant number of rounds~\cite{EvenFFGLMMOORT17}.  However, deciding $C_4$-freeness cannot be decided in less than $\widetilde{\Omega}(\sqrt{n})$ rounds~\cite{DruckerKO13}. This gives rise to a new question, complementary to the aforementioned  question by Naor and Stockmeyer. 

\begin{center}
\begin{tcolorbox}[colback=blue!5!white, colframe=blue!40!black,
                  width=1\textwidth, boxrule=0.6pt,
                  arc=3pt, auto outer arc,
                  left=8pt, right=8pt, top=6pt, bottom=6pt]
Instead of asking which problems can be solved locally in \emph{all} graphs, 
we focus on a broad class of problems, namely those expressible in FO, 
and ask for which classes of graphs \emph{all such problems} 
can be solved efficiently.
\end{tcolorbox}
\end{center}

%
%\begin{center}
%\begin{tcolorbox}[colback=gray!5!white, colframe=gray!50!black,
%                  width=0.9\textwidth, boxrule=0.6pt,
%                  arc=3pt, auto outer arc,
%                  left=8pt, right=8pt, top=6pt, bottom=6pt]
%Instead of asking which problems can be solved locally in \emph{all} graphs, 
%we fix a (large) class of problems, namely those expressible in FO, 
%and we ask for which classes of graphs \emph{all these problems} 
%can be solved efficiently. 
%\end{tcolorbox}
%\end{center}

More precisely, the paper addresses the following problem. 

\begin{problem}\label{pb:which-graph-classes}
Which graph classes $\mathcal{G}$ satisfy the following two conditions?
\begin{itemize}
\item Every \emph{local} FO property can be decided in $\cO(\polylog(n))$ rounds in the \CONGEST\ model for every $n$-node graph belonging to the class $\mathcal{G}$;
\item Every FO property can be decided in $\cO(D+\polylog(n))$ rounds in the \CONGEST\ model for every $n$-node graph of diameter~$D$ belonging to the class $\mathcal{G}$. 
\end{itemize}
\end{problem}

For instance, do the above two conditions hold for the class of planar graphs? 
Or for random graphs (with some statistical guarantees)? 
Or for graphs with constant maximum degree~$\Delta$? 
Nešetřil and Ossona de Mendez~\cite{NesetrilM16} partially addressed this question in the more general setting of graph classes with \emph{bounded expansion}, as part of the development of their \emph{sparsity theory}~\cite{NesetrilM06,NesetrilOdM12}. 
Their findings have important applications to the subgraph isomorphism problem.

\paragraph{Deciding Subgraph Isomorphism in Graphs of Bounded Expansion.}

Graph classes with bounded expansion form a rich graph family (see Figure~\ref{fig:graphclasse}). For instance, the class of planar graphs, the class of graphs with bounded genus, the class of graphs of constant maximum degree~$\Delta$, and the class of minor-free graphs and topologically minor-free graphs are all of bounded expansion, as well as sparse Erdős–Rényi graphs (a.a.s.).
Moreover, \cite{DemaineRRVSS19} demonstrated that many network models (e.g., stochastic block models) exhibit bounded expansion for some ranges of parameters, and these findings are even supported with empirical measurements on a corpus of real-world networks. All these points make graph classes of bounded expansion quite attractive from both conceptual and practical perspectives.

\begin{figure}
\begin{center}
%\documentclass[tikz,border=2mm]{standalone}
%\usetikzlibrary{shapes.geometric, arrows, positioning, backgrounds}

% Define styles for the nodes
%\tikzstyle{class} = [rectangle, rounded corners, draw=black, fill=white, text centered, minimum width=3cm, minimum height=1cm]
%\tikzstyle{arrow} = [thick,->,>=stealth]
%\tikzstyle{dottedline} = [dotted, thick]
%
%\begin{document}
%\begin{tikzpicture}[node distance=1.5cm]
\begin{tikzpicture}[node distance=1.5cm, scale=0.8, every node/.style={transform shape}]

% Frame and Background colors
%\draw[thick, rounded corners, black] (-8, 2) rectangle (8, -9.5); % Adjusted frame to encapsulate everything
%\fill[red!15] (-8, 2) rectangle (8, -9.5); % Upper part (light red)
 %\fill[orange!20] (0, 2) rectangle (8, -9.5); % Middle part (soft %orange)
%\fill[green!15] (-8, -0.7) rectangle (8, -9.5); % Lower part (light green)

% Nodes
%\node[class] (boundeddegen) at (0,0.5) {Bounded degeneracy};
%\node[class, right=5.9cm of boundeddegen] (noweherdens) {Nowhere dense};
\node[class] at (0,-1.5) (boundexp) {Bounded expansion};
\node[class, below of=boundexp] (excludedtopo) {Excluded topological subgraph};
\node[class, below left=1cm and 1cm of excludedtopo] (excludedminor) {Excluded minor};
\node[class, below right=1cm and 1cm of excludedtopo] (boundeddeg) {Bounded degree};
\node[class, below of=excludedminor] (boundedgenus) {Bounded genus};
\node[class, right=2cm of boundedgenus] (boundedtreewidth) {Bounded treewidth};
\node[class, below of=boundedgenus] (planar) {Planar};
\node[class, below of=boundedtreewidth] (boundtreedepth) {Bounded treedepth};

% Dotted lines
%\draw[dottedline] (-8, -0.7) -- (8, -0.7);
% \node[above left=0.1cm and -7cm of boundeddegen.north] {$\tilde{\Omega}(\sqrt{n})$ rounds for constant diameter};

%\draw[dottedline] (-8, -2.1) -- (9, -2.1);
%\node[below left=-0.9cm and -6cm of boundexp.south] {$\cO(D+\log n)$ rounds};

%\node[above left=0.2cm and -2.5cm  of noweherdens.north] {Open};

% Arrows
%\draw[arrow] (boundeddegen) -- (boundexp);
%\draw[arrow] (noweherdens) -- (boundexp);
\draw[arrow] (boundexp) -- (excludedtopo);
\draw[arrow] (excludedtopo) -- (excludedminor);
\draw[arrow] (excludedtopo) -- (boundeddeg);
\draw[arrow] (excludedminor) -- (boundedgenus);
\draw[arrow] (excludedminor) -- (boundedtreewidth);
\draw[arrow] (boundedtreewidth) -- (boundtreedepth);
\draw[arrow] (boundedgenus) -- (planar);

\end{tikzpicture}
\caption{Different classes of graphs, with their mutual inclusion relations. An arrow from a class $\mathcal{G}$ to a class $\mathcal{G}'$ indicates that $\mathcal{G}'$ is (strictly) included in $\mathcal{G}$.  
}
\label{fig:graphclasse}
\end{center}
\end{figure}

In particular, the study of graphs with bounded expansion has already driven major advances in model checking~\cite{DvorakKT13}, parameterized complexity~\cite{NesetrilM06}, kernelization \cite{drange2016kernelization,eickmeyer2017neighborhood} and approximation algorithms~\cite{Har-PeledQ17}. In the context of distributed computing, the subgraph isomorphism  can be efficiently solved in graphs of bounded expansion. Specifically, the proposition below states that, for any fixed connected graph~$H$ (e.g., a triangle~$C_3$), $H$-freeness can be decided in $\cO(\log n)$ rounds in any $n$-node graph $G$ picked in a class of graphs with bounded expansion. That is, if $G$ contains $H$ as a subgraph, at least one node rejects, otherwise all nodes accept. 

\begin{proposition}[Nešetřil and Ossona de Mendez~\cite{NesetrilM16}]
For every graph class $\mathcal{G}$ of bounded expansion, and for every connected graph $H$, there exists a distributed algorithm that, for every $n$-node graph $G\in \mathcal{G}$, decides whether $G$ does not contain $H$ as a subgraph in $\cO(\log n)$ rounds in \CONGEST.
\end{proposition}

Actually, the algorithm in~\cite{NesetrilM16} does more than just deciding $H$-freeness. It actually marks all nodes of the input graph $G$ belonging to an isomorphic copy of~$H$ (these nodes are the ones that reject whenever $G$ is not $H$-free). Note that it is known~\cite{Censor-Hillel21,DruckerKO13} that, for some~$H$ (e.g., $H=C_{2k+1}$ for $k>1$), deciding $H$-freeness requires $\widetilde{\Omega}(n)$ rounds in arbitrary graphs. Instead, for all graphs~$H$, $H$-freeness can be decided in just $\cO(\log n)$ rounds in graphs of bounded expansion, and thus in particular in planar graphs. 

\paragraph{Intuition for Planar Graphs.}
The algorithm of Nešetřil and Ossona de Mendez for subgraph isomorphism~\cite{NesetrilM16} is important for our work, so we devote some space to highlight their approach. Their algorithm works roughly as follows. We provide the intuition for the case of planar graphs. It is known~\cite{NesetrilOdM12}  that there exists a function $f:\mathbb{N}\to\mathbb{N}$ such that, for every integer $k\geq 1$, every planar graph $G$ can be properly colored with $f(k)$ colors so that, for every set $S$ of at most $k$ colors, the subgraph of $G$ induced by the nodes colored with a color in $S$ has \emph{treedepth} at most~$|S|$. Intuitively, treedepth measures how far a graph is from being a forest of bounded depth. (More formally,  a graph has treedepth $d$ if there exists a rooted forest $F$ on the same set of vertices as the graph, in which all trees have depth at most~$d$, such that every edge of $G$ is between a node and one of its ancestors in~$F$.)  Graphs of bounded treedepth are therefore appealing from an algorithmic perspective.  As a side remark, the existence of such a function $f$ leading to colored subgraphs of bounded treedepth characterizes the graph classes of bounded expansion, but we focus on planar graphs here for concreteness.

Let us fix $k=|V(H)|$, i.e., $k$ is the number of nodes of~$H$. By coloring the input graph $G$ with $f(k)$ colors, there must exist a set $S$ of $k$ colors such that, if $H$ is contained in $G$ as a subgraph, then at least one copy of $H$ appears in the subgraph $G[S]$, where $G[S]$ denotes the subgraph of $G$ induced by the nodes with colors in~$S$. Without going into details, since $G[S]$ has bounded treedepth, deciding $H$-freeness in $G[S]$  can actually be done in $O(1)$ rounds~\cite{FominFMRT24}. It is therefore sufficient to test all choices of $k$ colors among the $f(k)$ colors of the palette, and for a node to reject if it belongs to an isomorphic copy of~$H$ for one of these ${f(k) \choose k}$ choices. The main contribution in~\cite{NesetrilM16} is in fact the design of a distributed algorithm for \CONGEST that, given any positive integer~$k$, computes an appropriate $g(k)$-coloring of any $n$-node planar graph~$G$ in $\cO(\log n)$ rounds, for some function $g:\mathbb{N}\to\mathbb{N}$. In fact, the algorithm works not only for planar graphs, but also for any graph class $\mathcal{G}$ of expansion $f:\mathbb{N}\to\mathbb{N}$. For a given $k$, the number of colors $g(k)$ used by the algorithm may be larger than $f(k)$, but the crucial point is that the subgraph induced by any set of $k$ colors among the $g(k)$ colors has bounded treedepth.

\paragraph{Beyond Subgraph Isomorphism.}
At this point, it is worth noticing that $H$-freeness for a connected graph $H$ is expressible by an FO formula of a very specific form: it is merely the negation of an existential formula; see, e.g., Eq.~\eqref{eq:FO-triangle-freeness}. Therefore, checking whether such a formula holds boils down to searching for $k=|V(H)|$ nodes satisfying the adjacency conditions specified by~$H$. Since these $k$ nodes must appear in one of the ${f(k) \choose k}$ subgraphs $G[S]$ of bounded treedepth, this is (to some extent) straightforward. However, the situation becomes immediately more challenging once universal quantifiers are introduced. 
For instance, consider checking the existence of \emph{twins} in planar graphs, that is, the existence of two nodes $v$ and $v'$ with the same neighborhood, i.e., satisfying $N_G(v)\smallsetminus \{v'\} = N_G(v')\smallsetminus \{v\}$. The existence of twin nodes can be written as (assuming all variables refer to distinct vertices): 
\[
\exists x \, \exists y \, \forall z \; \Big( \big(\adj(x,z) \land \adj(y,z) \big) \; \lor \; \big(\neg\adj(x,z) \land \neg\adj(y,z) \big)\Big).
\]
The formula above has only three variables, as is the case for Eq.~\eqref{eq:FO-triangle-freeness}. However, here the variable $z$ is universally quantified, whereas all variables in Eq.~\eqref{eq:FO-triangle-freeness} are existentially quantified. As a consequence, it is no longer true that it suffices to search for three nodes in each of the ${f(3) \choose 3}$ subgraphs $G[S]$ of bounded treedepth in order to determine whether there are twin nodes in a planar graph~$G$. Indeed, two twins may each have up to $\Omega(n)$ neighbors, which may not appear together in any subgraph of $G$ induced by a constant number of colors.

Ne\v{s}et\v{r}il and Ossona de Mendez~\cite{NesetrilM16} therefore posed a broader and more challenging question, later reiterated by Pilipczuk, Siebertz, and Toruńczyk~\cite{PilipczukST18}, and by Fomin, Fraigniaud, Montealegre, Rapaport, and Todinca~\cite{FominFMRT24}. Recall that an FO formula $\varphi(x)$ is \emph{local} if the satisfaction of $\varphi(x)$ depends only on the $r$-neighborhood of its free variable~$x$, for some fixed~$r$. For example, in graphs, the formula ``node $x$ has a twin'' is local. Another example of a local formula $\varphi(x)$ is the following, for fixed positive integers~$d$ and~$k$: ``the distance-$d$ neighborhood of~$x$ in $G$ has a dominating set of size at most~$k$''. The question in~\cite{NesetrilM16} can be formulated as follows: 

\begin{question}[Ne\v{s}et\v{r}il and Ossona de Mendez~\cite{NesetrilM16}]
\label{prob:NM}
Is there, for every local first-order (FO) formula~$\varphi(x)$ and for every graph class $\mathcal{G}$ of bounded expansion, a distributed algorithm that, for every $n$-node graph $G\in \mathcal{G}$, marks all vertices $v\in V(G)$ such that $G\models \varphi(v)$ in $\cO(\log n)$ rounds in the \CONGEST\ model?
\end{question}

Note that a positive answer to this question implies that all graph classes of bounded expansion satisfy the first item of  Problem~\ref{pb:which-graph-classes}.

%==========================================
\subsection{Our Results}
\label{ss:results}
%==========================================

Our first main result resolves \Cref{prob:NM}. 

\begin{theorem}
\label{thm:thmFOlocal-informal}
For every local FO formula $\varphi(x)$ on graphs, and for every class of graphs $\mathcal{G}$ of bounded expansion, there exists a deterministic algorithm that, for every $n$-node network $G\in \mathcal{G}$, marks all vertices $v$ of $G$ satisfying $G\models \varphi(v)$, in $\cO(\log n)$ rounds in the \CONGEST\ model.
\end{theorem}

In fact, the result above also holds for FO formulas on \emph{labeled} graphs, i.e., graphs in which every node is provided with labels drawn from a finite set. In other words, formulas may also include \emph{unary} predicates on node labels (see Theorem~\ref{thm:thmFOlocal} for the formal statement). Before summarizing the techniques used to prove \Cref{thm:thmFOlocal-informal}, it is worth mentioning that we extend this result in several directions.

\subsubsection{Extension 1: Deciding FO formulas}

Our second result concern general FO formulas, for which the dependence on the diameter~$D$ is unavoidable. Again, the result below holds for labeled graphs too. For the sake of clarity, we state it here for unlabeled graphs only (see Theorem~\ref{thm:thmFOexpCONG} for the more general statement). 

\begin{theorem}
\label{thm:thmFOexpCONG-informal}
    For every FO formula $\varphi$ on graphs, and for every class of graphs $\mathcal{G}$ of bounded expansion, there exists a distributed algorithm that,  for every $n$-node network $G\in \mathcal{G}$  of diameter~$D$, decides whether $G\models\varphi$ in $\cO(D+\log n)$ rounds under the \CONGEST model.
\end{theorem}

Note that bounded degeneracy does not suffice to establish the same result as above\footnote{Any class of graphs with bounded expansion has bounded degeneracy, but the reciprocal does not hold. For instance, the class of cliques with all edges subdivided once has degeneracy~2, but is not of bounded expansion.}. Indeed, we show that the simple FO formula expressing $C_6$-freeness requires $\tilde{\Omega}(\sqrt{n})$ rounds to be checked in \CONGEST, even in graphs of degeneracy~2 and constant diameter (see \Cref{thm:lowerboundC6degeneracy}). Moreover, our result is optimal with respect to the class of formulas, in the sense that it cannot be extended to, e.g., MSO. In particular, we show that non-3-colorability requires $\tilde{\Omega}(n)$ rounds to be checked in bounded-degree graphs with logarithmic diameter (see \Cref{thm:lowerboundMSOinEXP}).

\paragraph{Take Away Message.}

Theorems~\ref{thm:thmFOlocal-informal} and~\ref{thm:thmFOexpCONG-informal} state that there are large classes of graphs satisfying the two conditions of Problem~\ref{pb:which-graph-classes}, namely, the graph classes with bounded expansion.

\begin{tcolorbox}[colback=blue!5!white, colframe=blue!40!black,
                  width=1\textwidth, boxrule=0.6pt,
                  arc=3pt, auto outer arc,
                  left=8pt, right=8pt, top=6pt, bottom=6pt]

In the \CONGEST\ model, for every graph class $\mathcal{G}$ of bounded expansion,  
\begin{itemize}
\item every local FO property can be decided in $\cO(\log n)$ rounds, and 
\item every FO property can be decided in $\cO(D+\log n)$ rounds.
\end{itemize}
\end{tcolorbox}

\noindent While the tight connections between problems expressible in FO logic and graph classes of bounded expansion in the context of sequential model checking are well established and well understood (see \Cref{sect:relatedwork}), prior to our work this was not the case for distributed computing. In the latter setting, our results demonstrate that:
\begin{enumerate}
    \item Locality of FO formulas implies locality of distributed decision, within just $\cO(\log n)$ rounds;
    \item Gaifman’s locality theorem for FO formulas implies only a small $\cO(\log n)$-round overhead beyond the unavoidable round complexity~$D$.
\end{enumerate}
Moreover, our contribution goes even beyond distributed decision, as detailed next.

\subsubsection{Extension 2: Counting, Optimization, and Certification}

\paragraph{Counting Solutions.}

We consider the mere problem of \emph{counting} solutions of FO formulas with free variables. We actually consider two forms of counting: 
\begin{itemize}
    \item \emph{Local counting}: Given a local formula $\varphi(x_1,\dots,x_k)$ with $k\geq 1$ free variables (e.g., $\adj(x_1,x_2)\land \adj(x_2,x_3) \land \adj(x_3,x_1)$, i.e., $x_1,x_2,x_3$ form a triangle), each node $v$ of the considered graph $G$ outputs a (possibly negative)
    integer $\nu(v)$ satisfying that the number of tuples $v_1,\dots,v_k$ for which $G\models \varphi(v_1,\dots,v_k)$ equals $\sum_{v\in V}\nu(v)$.
    
    \item \emph{Global counting}: Given a general formula $\varphi(x_1,\dots,x_k)$ with $k\geq 1$ free variables,  all nodes  output a same value equal to the number of tuples $v_1,\dots,v_k$ for which $G\models \varphi(v_1,\dots,v_k)$. 
\end{itemize}
The next theorem extends  \Cref{thm:thmFOlocal-informal,thm:thmFOexpCONG-informal} to local and global counting. Again, the results below also hold for labeled graphs (cf. \Cref{thm:thmFOexpCNT}).

\begin{theorem}\label{thm:thmFOexpCNT-informal}
For every FO formula $\varphi(x_1,\dots,x_k)$ on graphs with $k\geq 1$ free variables $x_1,\dots,x_k$, and for every class of graphs $\mathcal{G}$ of bounded expansion:
\begin{itemize}
    \item There exists a distributed algorithm that, for every $n$-node network $G\in \mathcal{G}$ of diameter~$D$, performs global counting in ${O(D+\log n)}$ rounds in \CONGEST;
    \item If the formula $\varphi(x_1,\dots,x_k)$  is local, then there exists a distributed algorithm that performs local counting in $\cO(\log n)$ rounds in \CONGEST. 
\end{itemize}
\end{theorem}

Note that, in contrast to our~\Cref{thm:thmFOexpCNT-informal}, the best known algorithm for (locally) counting triangles in general graphs takes $\tilde{\cO}(n^{1/3})$ rounds \cite{ChangPTZ21}.

\paragraph{Optimization Problems.}

Some optimization problems can also be expressed as finding optimal solutions for FO formulas with free variables, over weighted graphs, e.g., finding a maximum-weight triangle. (If no triangles exist in the graph, then all nodes reject.) Note that, as usual in the literature on \CONGEST~\cite{Peleg2000}, the weights assigned to the nodes or edges  are assumed to be polynomial in the network size so that any weight can be transmitted between neighbors in a single round of communication. 
Our next theorem extends  \Cref{thm:thmFOexpCONG-informal} to   weighted problems. For the version of labeled graphs, see \Cref{thm:thmFOexpOPT}.

\begin{theorem}\label{thm:thmFOexpOPT-informal}
For every FO formula $\varphi(x_1,\dots,x_k)$ on graphs with $k\geq 1$ free variables $x_1,\dots,x_k$, and for every class of graphs $\mathcal{G}$ of bounded expansion, there exists a distributed algorithm that, for every $n$-node network $G=(V,E)\in \mathcal{G}$ of diameter~$D$, and  every weight function $\omega:V\to \mathbb{N}$, computes a $k$-tuple of vertices $(v_1,\dots,v_k)$ such that 
\[
G\models\varphi(v_1,\dots,v_k), \;\; \text{and} \;\; 
\sum_{i=1}^k\omega(v_i) \; \mbox{is maximum,}
\]
in $\cO(D+\log n)$ rounds  under the \CONGEST model. (The same holds if replacing maximum by minimum.) If no tuples $(v_1,\dots,v_k)$ exist such that $G\models\varphi(v_1,\dots,v_k)$, then all nodes reject. 
\end{theorem}

\Cref{thm:thmFOexpOPT-informal} allows to find, for example, a monochromatic triangle of minimum  weight, or a multi-colored $k$-independent set of maximum weight, in a class $\mathcal{G}$ of $c$-colored graphs of bounded expansion.

\paragraph{Distributed Certification.}

Last but not least, our results extend to distributed certification. Roughly, a \emph{distributed certification scheme}~\cite{GoosS16,KormanKP10} for a graph property~$\varphi$ is a pair prover-verifier. The prover is a centralized, computationally unlimited oracle that assigns \emph{certificates} to the nodes. The verifier is a 1-round distributed algorithm whose role is to check whether the certificates assigned to the nodes form a proof that the graph satisfies the property. A distributed certification scheme is correct if it satisfies: 
\begin{description}
    \item[Completeness:] if $G \models \varphi$, then the prover can assign certificates such that the verifier accepts at all nodes; 
    \item[Soundness:] if $G \not\models \varphi$, then, no matter the certificates assigned to the nodes, the verifier rejects in at least one node. 
\end{description}

The main complexity measure is the \emph{size} of the certificates assigned by the prover on legal instances. In recent years, a vibrant and rapidly evolving line of research has emerged around algorithmic meta-theorems for distributed certification.  
It began with the seminal result by Bousquet, Feuilloley and Pierron \cite{BousquetFP21} showing that any MSO property can be certified with \(O(\log n)\)-bit certificates on graphs of bounded treedepth.  
This was then extended in \cite{FraigniaudMRT24} to bounded treewidth graphs, where all MSO properties admit \(O(\log^2 n)\)-bit certificates.  
Further extensions covered dense graph classes of bounded clique-width for MSO\(_1\) properties \cite{FraigniaudM0RT23}.  
More recently, Baterisna and Chang \cite{baterisna2025optimal} achieved \(O(\log n)\)-bit certificates on graphs of bounded pathwidth for MSO\(_2\) properties, and Cook, Kim and Masařík \cite{cook2025tight} proved that on graphs of bounded treewidth, every MSO\(_2\) property admits \(O(\log n)\)-bit certification.  

Our work contributes to this active line of research by significantly broadening its scope.  
We show that every first-order (FO) property can be certified with \(O(\log n)\)-bit certificates on all graphs of bounded expansion. As in  the previous theorems, the  theorem below extends to labeled graphs (cf. Theorem~\ref{thm:thmFOexpCER}).

\begin{theorem}
\label{thm:thmFOexpCER-informal} 
For every FO formula $\varphi$ on graphs, and for every class  of graphs $\mathcal{G}$ of bounded expansion, there exists a distributed certification scheme that, for every $n$-node network $G \in \mathcal{G}$, certifies $G \models \varphi$ using certificates on $\cO(\log n)$ bits.
\end{theorem}

Note that, with $O(\log n)$-bit certificates, the verifier can be implemented in a single round of communication in the \CONGEST model, as desired. Note also that our certification scheme is actually a \emph{proof-labeling scheme}~\cite{KormanKP10}, which is a more demanding form of distributed certification than, e.g., \emph{locally checkable proofs}~\cite{GoosS16}.
As a direct consequence of  \Cref{thm:thmFOexpCER-informal}, certifying $C_4$-freeness can be done with $\cO(\log n)$-bit certificates in graphs of bounded expansion. In contrast, certifying $C_4$-freeness requires certificates on $\tilde{\Omega}(\sqrt{n})$ bits in general graphs, using the reduction from~\cite{DruckerKO13}. More generally,  
\cite{FeuilloleyBP22} shows that every MSO formula can be \emph{certified} with $O(\log n)$-bit certificates in graphs of bounded treedepth. \Cref{thm:thmFOexpCER-informal} extends this result. Indeed, while  \Cref{thm:thmFOexpCER-informal} is for FO formulas, and the result of \cite{FeuilloleyBP22} is for MSO, both logics have identical expressive power on graphs of bounded treedepth. 
Last but not least, \Cref{thm:thmFOexpCER-informal} is the best that can be achieved. Indeed, we show that it does not extend to MSO, as certifying non-3-colorability requires certificates of $\tilde{\Omega}(\sqrt{n})$ bits in planar graphs (see \Cref{thm:lowerboundMSOinPlanar}), and even certificates of $\tilde{\Theta}(n)$ bits in bounded degree graphs (see \Cref{thm:lowerboundMSOinEXP}).

\paragraph{Remark.} 

The open problem by Ne\v{s}et\v{r}il and Ossona de Mendez was actually stated for the \BCONGEST model, i.e., a weaker version of the \CONGEST\/ model in which, at every round, each node $v$ must send the \emph{same} $\cO(\log n)$-bit message  to all its neighbors (see, e.g., ~\cite{DruckerKO13,korhonen_et_al:LIPIcs:2018:8625}). However, 
(1)~the problem is not necessarily easier to solve in the general version of the \CONGEST model, and 
(2)~we solve it anyway by the affirmative using an algorithm designed for the \BCONGEST model, and the complexity of our algorithm is expressed under the latter. 
For the sake of simplifying the notations, we simply refer to \CONGEST in the statements of our theorems, but all our upper bound results (except \Cref{thm:thmFOexpOPT-informal}) hold under the \BCONGEST model too\footnote{The main reason why \Cref{thm:thmFOexpOPT}  does not hold in \BCONGEST is that our algorithm involves an information dissemination phase going downward a tree, in which different messages are sent to different children. }.

%==========================================
\subsection{Our Techniques}
%==========================================

In this subsection, we provide a general overview of our techniques, postponing a more detailed description to \Cref{sec:high-level-description-proof}.
In a nutshell, our technical contributions cover several aspect of distributed computing and logic:
\begin{tcolorbox}[colback=blue!5!white, colframe=blue!40!black,
                  width=1\textwidth, boxrule=0.6pt,
                  arc=3pt, auto outer arc,
                  left=8pt, right=8pt, top=6pt, bottom=6pt]
\begin{itemize}[leftmargin=1.2em]
    \item Distributed implementation of the quantifier-elimination method.
    \item A new ``locality-preserving technique'' for quantifier-elimination.
    \item A new ``controlled quantifier-elimination technique'' that enables extending quantifier-elimination to both local and global counting, and solving these problems in a distributed manner.
\end{itemize}
\end{tcolorbox}

Let us first clarify why the approach by Ne\v{s}et\v{r}il and Ossona de Mendez~\cite{NesetrilM16} for marking all nodes $v$ satisfying $G\models \varphi(v)$ fails beyond the case of local \emph{and} existential FO formula, and how we address the challenge of going further than this specific type of formulas. Given a local  \emph{existential} FO formula 
\[
\varphi=\exists x_1 \,\dots\,\exists x_p\, \psi(x_1,\dots,x_p),
\]
where $\psi$ is quantifier-free, the fact that the input graph $G$ belongs to a graph class $\mathcal{G}$ of expansion $f:\mathbb{N}\to\mathbb{N}$ can be exploited somehow ``directly'' for checking whether $G\models \varphi$. 
Indeed, this checking is achieved by:
\begin{enumerate}
    \item Computing a coloring $c:V(G)\to \{1,\dots,f(p)\}$ satisfying that, for every set $U\subseteq \{1,\dots,f(p)\}$ of $p$ colors, the subgraph $G[U]$ induced by the nodes of $G$ colored with colors in~$U$ has treedepth~$p$ (such a coloring can be computed in $\cO(\log n)$ rounds in \CONGEST~\cite{NesetrilM16}).

    \item Observing that  $G\models \varphi$ if and only if there exists such a set $U$ such that $G[U]\models \varphi$, it is sufficient to test all sets $U\subseteq \{1,\dots,f(p)\}$ of cardinality~$p$, and, for each of them, to check whether  $G[U]\models \varphi$. The latter can be performed in $\cO(1)$ rounds as $G[U]$ has bounded treedepth~\cite{FominFMRT24}.
\end{enumerate}
This approach works but under two conditions only. First, the formula $\varphi$ must be \emph{local}. Second, it must be \emph{existential}. Instead, in this paper, we address the general case of deciding whether $G\models \varphi$ where $\varphi$ is general, and in particular not necessarily local. Moreover, for local formulas, we address the case of marking all nodes $v$ satisfying $G\models \varphi(v)$ for general local formulas, not necessarily existential. To do so, our first contribution is a distributed algorithm for quantifier-elimination, which is challenging as it must run under the \CONGEST model, in which only adjacent nodes can exchange messages, and where the amount of information two adjacent nodes can exchange is limited.

\paragraph{Distributed quantifier-elimination.}

The goal is to transform $\varphi$ into an equivalent \emph{quantifier-free} formula $\widehat{\varphi}$. That is, $\widehat{\varphi}$ is quantifier-free, and it satisfies $G\models \varphi$ if and only if $\widehat{G}\models \widehat{\varphi}$ where $\widehat{G}$ is the same graph as $G$ but with nodes labeled with labels taken from a finite set. One first contribution of this paper is to prove that quantifier elimination can be performed in $\cO(D)$ rounds in graphs of bounded expansion and diameter~$D$, using the framework developed in~\cite{PilipczukST18} for parameterized circuit complexity. Computing $\widehat{\varphi}$ can be done by every node without communication. The challenge is to assign the new labels to the nodes. We show that this is doable in $\cO(D)$ rounds in \CONGEST. 

In essence, the reassignment of labels is performed by going from graphs of bounded expansion to graphs of bounded treedepth (thanks to the coloring~$c:V(G)\to \{1,\dots,f(p)\}$ where $p$ is the number of variables of~$\varphi$), then from graphs of bounded treedepth to so-called \emph{skeletons} (i.e., collections of forests), and then from skeletons to rooted trees of bounded depth. One quantifier can be eliminated at this level, thanks to assigning new labels to each node. Handling the next quantifier requires to return back to the level of graphs with bounded treedepth. This back-and-forth process must be executed a constant number of times, for eliminating each quantifier one by one. The main difficulty here is to execute this protocol, which proceeds at different levels of abstraction, while insuring that all communications occur between neighbors in the graphs, and that the exchanged messages are of limited size. The exact number of executions depends  on the formula $\varphi$ only, and each execution is performed in $\cO(D)$ rounds in graphs of diameter~$D$. 

Overall, our distributed quantifier-elimination algorithm consumes $\cO(D+\log n)$ rounds in total, where $\cO(\log n)$ rounds are used for coloring the graphs with $f(p)$ colors, and $\cO(D)$ rounds are consumed for eliminating the quantifiers, and eventually evaluating the formula on the resulting labeled graph. 

\paragraph{Local FO formulas.}

For \emph{local} FO formula $\varphi(x)$ with one free variable, marking all nodes $v$ of the (possibly labeled) graph satisfying the formula, i.e., for which  $G \models \varphi(v)$ holds, is also performed via quantifier elimination. However, quantifier elimination faces an obstacle when it comes to eliminating universal quantifiers in subformulas of the form $\forall y\, \psi(x,y,z_1,\dots,z_q)$. A basic example is marking all nodes having a twin. 
Indeed, quantifier elimination proceeds in a standard way by rewriting the formula using only existential quantifiers, and negations. E.g., the formula is rewritten as  $\neg \exists y\, \neg \psi(x,y,z_1,\dots,z_q)$. As a result, locality is lost. For instance, 
\[
\textsc{Twin}(x) = \exists y\, \forall z\, 
\big(\mathsf{adj}(x,z) \Leftrightarrow \mathsf{adj}(y,z)\big)
\]
 can be rewritten as $\textsc{Twin}(x) = \exists y\, \neg \textsc{NoTwins}(x,y)$, with 
\[
\textsc{NoTwins}(x,y) =
\exists z\, \big[(\mathsf{adj}(x,z)\wedge \neg \mathsf{adj}(y,z))
\vee (\neg \mathsf{adj}(x,z)\wedge \mathsf{adj}(y,z))\big].
\]
The  formula $\textsc{NoTwins}(x,y)$ is however not local around $x$, even with $x$ fixed, as there may exist $y$ and $z$ at arbitrarily large distance from $x$ such that $\textsc{NoTwins}(x,y)$ holds. As a consequence, performing quantifier elimination blindly incurs  an overhead of $\Omega(D)$ rounds. This is not an issue for general formulas, for which a round-complexity $\Omega(D)$ cannot be overcame, but this is a severe problem for local formulas. 

The main challenge here is to show how to perform quantifier-elimination in a distributed manner while preserving locality. To prevent dependencies between distant vertices introduced during quantifier elimination, we adopt a redundant \emph{relativization} of quantifiers. Instead of ranging over the entire structure, each quantified variable is evaluated only within a bounded region. This requires to generalize locality with respect to more than one variable, and to revisit the entire quantifier elimination process. We  show that our \emph{local} quantifier-elimination process can be implemented in $\cO(\log n)$ rounds in \CONGEST, which allows us to eventually solve the open problem by Ne\v{s}et\v{r}il and Ossona de Mendez (cf. Question~\ref{prob:NM}).

\paragraph{Controlled quantifier-elimination for counting.}

Quantifier elimination is not sufficient in itself, even in its local form, when it comes to solving counting problems. Let us explain why. In the context of counting, quantifier elimination is applied to a formula $\varphi(\overline{x})$ that involves several free variables, $\overline{x}=(x_1,\dots,x_k)$, $k\geq 1$.
This produces an equivalent quantifier-free formula $\widehat{\varphi}(\overline{x})$ defined over a \emph{skeleton} structure (i.e., a collection of rooted forests) that captures all bounded–distance relations among the vertices of~$\overline{x}$. Indeed, quantifier-elimination actually occurs at the level of the skeleton.
However, counting operates at the level of labeled graphs. Therefore, \(\widehat\varphi\) must be translated back to the world of graphs, which requires to overcome many obstacles.  In particular, formulas in skeletons use predicates such as $\lca_i(x,y)$, expressing ``the common ancestor of $x$ and $y$ is at depth~$i$'' (of some tree in a forest). Reformulating such predicates with adjacency and equality predicates can be done thanks to an existential formula expressing, e.g., ``there exist $w_1, \dots, w_q$ forming a path between $x$ and $y$ such that one of the $w$’s is at depth~$i$''. The issue is that when one performs counting in the resulting formula \(\exists \overline w\, \psi(\overline x, \overline w)\), all the variables are involved, including the newly introduced variable $\overline{w}=(w_1, \dots, w_q)$. If these $w_1, \dots, w_q$ were unique, we would be fine, but this is not necessarily the case. Controlling uniqueness is one of the  challenges that we had to address. 

For limiting the effect of new variables introduced in the process of quantifier-elimination, we design a new mechanism referred to as \emph{controlled} quantifier-elimination. 
The main idea is to use the uniqueness of a node's parent in each tree of the skeleton to enforce the uniqueness of each \(w_i\) (given fixed \(x_1,\dots,x_k\)). To that end, we encode the structure of the skeleton into additional labels for each node of the graph. Then, we rewrite our existential formula \(\exists \overline w\, \psi(\overline x, \overline w)\) to make sure that each \(w_i\) is the parent of some \(w_j\), $j\neq i$, or of one of the \(x_j\)s in some tree of the skeleton. This guarantees that, for each tuple \(\overline x\) satisfying \(\exists \overline w\, \psi(\overline x, \overline w)\), there exists a \emph{unique} tuple \(\overline w\) such that \(\psi(\overline x, \overline w)\) holds. 

Note that, even for quantifier-free formulas, another important issue must be addressed. When testing all subset $U$ of colors, i.e., all subgraphs $G[U]$ of bounded treedepth, the same tuple $\overline v$ may appear in several subsets $U$ as it may not be the case that the vertices $v_1,\dots,v_k$ instantiating the variables $x_1,\dots,x_k$ are colored with distinct colors. This causes over counting. For avoiding over counting, we use the inclusion–exclusion principle over the family of color sets.  Finally, since each subgraph $G[U]$ has bounded treedepth, the existential formula can be evaluated by performing dynamic programming over a rooted forest decomposition of bounded depth, which takes $\cO(1)$ rounds~\cite{FominFMRT24}. 

\paragraph{Optimization and Certification.}

For optimization problems, we follow the same guideline as the one sketched above, where every vertex $v$ has a weight $w(v)$ provided as  input.  
Dynamic programming can  be used to compute the optimal value for every subset $U$ of colors, and the global optimum is the maximum (or minimum) over all sets $U$. For each $U$, the maximum (or minimum) value can be computed by aggregating the values of all connected components (as $G[U]$ may not be connected) in $\cO(D)$ rounds thanks to a breadth-first search (BFS) tree.

For certification we let the prover provide each vertex with the transcript of the execution of quantifier elimination as certificate. This includes, among other parameters, colors, forest membership, labels, etc.  
The verifier checks consistency of the transcript using  $\cO(\log n)$-bit certificates for bounded-treedepth graphs~\cite{BousquetFP21}, and it verifies any residual global aggregation along a certified BFS tree~\cite{KormanKP10}.

%==========================================
\subsection{Related Work}\label{sect:relatedwork}
%==========================================

The results of our paper lie at the intersection of two research directions.  
The first is the study of  distributed computing and distributed certification schemes on sparse networks, such as planar graphs, or graphs excluding a fixed minor, for fundamental graph optimization problems.  
The second is the development of meta-theorems for sparse graph classes in centralized computation.  

\subsubsection{Distributed Computing on Sparse Networks}

The class of graphs with bounded expansion was introduced by Ne\v{s}et\v{r}il and Ossona de Mendez as part of the sparsity theory they developed~\cite{NesetrilOdM12}. 
One of the major research areas in distributed computing concerns the study of classical combinatorial problems on sparse networks, including planar graphs, bounded-degree graphs, graphs of bounded treewidth, and graphs excluding a fixed minor. In this context, a central goal is to design distributed algorithms that exploit structural sparsity to achieve improved efficiency---compared to general networks---particularly in terms of the number of communication rounds. Typical problems studied in this context include coloring, matching, dominating sets, cuts, spanning trees, and network decompositions. Research directions focus on understanding how sparsity affects the complexity of these problems in standard distributed models such as \textsf{LOCAL} and  \CONGEST, and on designing algorithms with optimal or near-optimal round complexity. This line of work often builds on structural properties of sparse networks, revealing deep and fruitful connections between structural graph theory and distributed computation.

Examples of work in this direction include distributed algorithms for maximal matchings, maximal independent sets (MIS) and graph colorings in bounded-degree graphs 
\cite{panconesi2001simple}; depth-first search (DFS) in planar graphs \cite{ghaffari:2017}; minimum spanning tree (MST) and min-cut in planar, bounded genus, and $H$-minor-free graphs \cite{GhaffariH16a,GhaffariH16b,GhaffariH21,haeupler2016low,HaeuplerIZ16,HaeuplerLZ18}; minimum dominating set on planar and  $H$-minor-free graphs \cite{amiri2019distributed,bonamy2025local, czygrinow2008fast,czygrinow2018distributed,lenzen2013distributed},  and MIS for graphs excluding a fixed minor \cite{chang2023efficient}.

The $\tilde{\mathcal{O}}(D)$ round complexity  we achieved in  \Cref{thm:thmFOexpCONG-informal}, is considered a “gold standard” for algorithms on sparse graphs in the \CONGEST model. This complexity has been obtained, for instance, for MST on planar graphs and, more generally, on $H$-minor-free graphs~\cite{GhaffariH16a,GhaffariH16b,GhaffariH21,haeupler2016low,HaeuplerIZ16,HaeuplerLZ18}.  
Algorithms with $\tilde{\mathcal{O}}(D)$ complexity in the \CONGEST model have also been developed for bounded-treewidth graphs for a variety of problems, including matching, directed single-source shortest paths, bipartite unweighted maximum matching, and girth~\cite{izumi2022fully}.

We are not aware of any meta-theorem in the \CONGEST model that combines both logical and combinatorial constraints, with one exception. The paper~\cite{FominFMRT24} shows that every MSO formula can be decided in a constant number of rounds on graphs of bounded treedepth. The results of~\cite{FominFMRT24} are incomparable to \Cref{thm:thmFOlocal-informal,thm:thmFOexpCONG-informal}. Indeed, \Cref{thm:thmFOlocal-informal,thm:thmFOexpCONG-informal} apply to graph classes that are significantly more general than graphs of bounded treedepth, whereas the number of rounds in the algorithm of~\cite{FominFMRT24} is completely independent of the number of vertices. We incorporate some of the methods developed in~\cite{FominFMRT24} into our work.

Regarding distributed algorithms on graphs of bounded expansion, the foundational work in this area is due to Ne\v{s}et\v{r}il and Ossona de Mendez~\cite{NesetrilM16}. In the \CONGEST model, they presented a distributed algorithm that computes a low-treedepth decomposition of graphs of bounded expansion in $O(\log n)$ rounds. We use this construction as a subroutine in our algorithms. In the same paper, Ne\v{s}et\v{r}il and Ossona de Mendez also used this decomposition to design an algorithm for checking whether a graph of bounded expansion contains a given \emph{connected} subgraph $H$ of constant size, also in $O(\log n)$ rounds. \Cref{thm:thmFOlocal-informal} significantly extends this result.

Approximation algorithms for the minimum dominating set problem on graphs of bounded expansion in the \textsf{LOCAL} model were developed in \cite{amiri2019distributed,heydt2025distributed,kublenz2021constant}.
In \cite{SiebertzV19}, Siebertz and Vigny study the complexity of the model-checking problem for first-order logic on graphs of bounded expansion, in the \textsf{CONGESTED CLIQUE} model.
 
Distributed certification is a very active area of research; see \cite{FeuilloleyF16} for a survey. 
Several meta-theorems are known in this setting.
A significant result in the field was achieved in  \cite{FeuilloleyBP22}, which proved that  any MSO property can be locally \emph{certified} on graphs of bounded treedepth using only a logarithmic number of bits per node, a result widely regarded as  a gold standard in certification. This theorem has numerous implications; for more details we refer to \cite{FeuilloleyBP22}. In particular, the FO property of $C_4$-freeness and the MSO property of non-3-colorability, which in general graphs require polynomial-size certificates, can be certified using just $\mathcal{O}(\log n)$ bits per node on bounded treedepth graphs. The result by Bousquet et al. has since been extended to broader 
graph classes, including graphs excluding a small minor \cite{BousquetFP21}, graphs of bounded \emph{treewidth} \cite{baterisna2025optimal,cook2025tight,FraigniaudMRT24}, and graphs of bounded \emph{cliquewidth} \cite{FraigniaudM0RT23}.
 
\subsubsection{Centralized Meta-Theorems}

Meta-theorems, according to Grohe and Kreutzer \cite{GroheK09}, are results 
  stating that certain  model-checking problems for formulas of a given logic are tractable (in some sense) on specific classes of graphs.
 One of the most celebrated examples of a meta-theorem is Courcelle's theorem, which asserts that graph properties definable in MSO are decidable in linear time on graphs of bounded treewidth \cite{Courcelle90}.
There is a long line of meta-theorems for FO logic on sparse graph classes in the centralized model of computation. The starting point is Seese’s theorem  \cite{Seese96}, which establishes the linear-time decidability of FO properties on graphs of bounded degree. Fricke and Grohe \cite{FrickG01} presented linear-time algorithms for planar graphs and all apex-minor-free graph classes, as well as $\cO(n^{1+\varepsilon})$ algorithms for graphs of bounded local treewidth. Flum and Grohe \cite{FlumG01} proved that deciding FO properties is fixed-parameter tractable on graph classes with excluded minors.  The next step, to classes of graphs locally excluding a minor was achieved  by  Dawar, Grohe, and Kreutzer \cite{DawarGK07}.

The sparsity theory of 
\cite{NesetrilOdM12} opened new horizons for model checking on sparse graphs. In \cite{DvorakKT13} it was proved that 
 FO properties can be decided in linear time on graph classes of bounded expansion. 
  Further extensions of these results are known for nowhere-dense graphs ~\cite{GroheKS14} and graphs of bounded twinwidth \cite{Bonnet0TW20twinw}.
 See \cite{Kreutzer11algo}  and \cite{Grohe07logi} for surveys.

From a technical perspective, the most important aspect for our work is the quantifier elimination procedure introduced in \cite{DvorakKT13} and \cite{GroheK09}, as well as in \cite{PilipczukST18}. Similar to these works, we apply the idea of modifying the structure by adding new unary relations and functions without changing the Gaifman graph.

%%%%%%%%%%%%%%%%%%%%%%%%%%%%%%%%%%%%%%%%%%%
\section{High Level Description of the Proofs}
\label{sec:high-level-description-proof}
%%%%%%%%%%%%%%%%%%%%%%%%%%%%%%%%%%%%%%%%%%%

Let us first focus on sketching the proof of Theorem~\ref{thm:thmFOlocal-informal}. 

\subsection{Deciding Local FO Formulas Locally}

We first recall structural tools that allow us to work with bounded-expansion graphs in the distributed setting.  
These lead naturally to \emph{low-treedepth colorings}, which will be the foundation for the construction of skeletons and for the distributed quantifier-elimination method developed later. 

\paragraph{Low-treedepth colorings.}

We recall the notion of \emph{low-treedepth colorings} introduced by Nešetřil and Ossona de Mendez. 
Given a graph $G$ and an integer $p \in \mathbb{N}$, a proper coloring $c : V(G) \rightarrow \mathbb{N}$ is a \emph{$p$-treedepth coloring} if, for every set $U \subseteq \mathbb{N}$ with $|U| \leq p$, the subgraph of $G$ induced by the vertices $\{u \in V(G) : c(u) \in U\}$ has tree-depth at most $|U|$. 
The minimum number of colors required for such a coloring is denoted by $\chi_p(G)$. 
A graph class $\mathcal{G}$ has \emph{bounded expansion} if there exists a function $f:\mathbb{N}\to\mathbb{N}$ such that, for every $G \in \mathcal{G}$ and every $p \in \mathbb{N}$, $\chi_p(G) \leq f(p)$, i.e., every $G \in \mathcal{G}$ admits a $p$-treedepth coloring using at most $f(p)$ colors.  

\medskip

For the rest of the section, let us fix a graph class $\mathcal{G}$ of  expansion bounded by~$f:\mathbb{N}\to\mathbb{N}$. We denote by $\Col(p)$ the family of sets $U\subset [1,f(p)]$  with cardinality at most~$p$.

\paragraph{Subgraph detection via treedepth colorings.}

Let $H$ be a connected graph on $p$ vertices, and let us first assume we are given a $p$-treedepth coloring of a graph $G$.  
To detect whether $G$ contains $H$ as a subgraph, we can then check, for every $U \in \Col(p)$, whether $H$ appears in $G[U]$, i.e., in the subgraph of $G$ induced by the nodes with colors in~$U$.  
If $G$ contains a copy of $H$, then there must exist a color set $U$ such that $G[U]$ contains~$H$.  
Moreover, as $H$ is connected, $H$~lies in a single component of $G[U]$.  
Since $G[U]$ has bounded treedepth, this check can be done efficiently in the \CONGEST model~\cite{FominFMRT24}.
Nešetřil and Ossona de Mendez~\cite{NesetrilM16} showed that, for every bounded-expansion class $\mathcal{G}$ with expansion function $f$, there is a \CONGEST algorithm that computes a $p$-treedepth coloring with $f(p)$ colors, in $O(\log n)$ rounds.  
It follows that there exists a \CONGEST algorithm that marks every vertex belonging to a copy of $H$ in~$G$, in $\mathcal{O}(\log n)$ rounds~\cite{NesetrilM16}.  

The reasoning above extends to every property expressible by an \emph{existential} first-order formula ($\exists$FO) that is local, i.e., a formula of the form
\[
\varphi(x) = \exists y_1\, \dots \, \exists y_k \; \psi(x, y_1, \dots, y_k),
\]
where (1)~$\psi$ is quantifier-free with $k{+}1$ free variables, and (2)~the satisfiability of $\varphi(u)$ depends only on the structure of a ball of constant radius around vertex~$u$ in the graph.  
As in subgraph detection, a solution is determined by $k{+}1$ vertices, and thus $k{+}1$ colors.  
Given a $(k{+}1)$-treedepth coloring, it suffices to search within connected subgraphs of bounded treedepth.  
It is known~\cite{FominFMRT24} that any FO property can be decided in $O(1)$ rounds on graphs of bounded treedepth in \CONGEST. Therefore, as for subgraph detection,  combining~\cite{NesetrilM16} and~\cite{FominFMRT24} yields an $O(\log n)$-round \CONGEST algorithm for any local existential FO property.

\paragraph{Beyond existential FO properties.}

For general local FO formulas, with both existential and universal quantifiers, the situation is more intricate.  
Indeed, the solution may no longer be determined by a fixed set of vertices, and thus restricting the search to a bounded-treedepth subgraph may not be sufficient. Let us revisit the aforementioned twin problem, with free variable $x$, asking whether there is a node $y$ with exactly the same neighborhood as~$x$. That is, let us consider the formula 
\[
\textsc{Twin}(x) = \exists y \; \forall z \; \big(\mathsf{adj}(x,z) \Leftrightarrow \mathsf{adj}(y,z)\big), 
\]
where it is assumed that $x,y$, and $z$ refer to different vertices (for the sake of clarifying the presentation, we omit these constraints in the definition of $\textsc{Twin}(x)$). 
Since the neighborhood of a vertex may have $\Omega(n)$ nodes, not all neighbors $z$ of $x$ and $y$ may lie in a region of bounded treedepth resulting from a low-treedepth coloring.  
\textsc{Twin} is however local as it suffices to inspect the ball of radius at most~$3$ around vertex $u$ to decide $\textsc{Twin}(u)$. Yet, the logical structure of \textsc{Twin}, which includes the variable $z$ quantified by a universal quantifier, prevents a direct application of the technique applicable to local existential FO formula.
To proceed further, let us rewrite $\textsc{Twin}$ using De Morgan’s laws:
\begin{align*}
  \textsc{Twin}(u) &= \exists v \; \neg \exists w \; \neg \big(\mathsf{adj}(u,w) \Leftrightarrow \mathsf{adj}(v,w)\big)\\ 
  &= \exists v \; \neg \exists w \;
\big[\big(\mathsf{adj}(u,w) \wedge \neg \mathsf{adj}(v,w)\big) \vee \big(\neg \mathsf{adj}(u,w) \wedge \mathsf{adj}(v,w)\big)\big].
\end{align*}
Note that the negation between the first and second quantifier still prevents us from using the reasoning for local $\exists$FO.   
Let us thus focus on the inner existential part of $\textsc{Twin}(u)$, denoted by
\[
\textsc{NoTwins}(u,v) = \exists w \;
\big[\big(\mathsf{adj}(u,w) \wedge \neg \mathsf{adj}(v,w)\big) \vee \big(\neg \mathsf{adj}(u,w) \wedge \mathsf{adj}(v,w)\big)\big].
\]
The trick is to apply \emph{quantifier elimination}, that is, removing the existential quantifier from $\textsc{NoTwins}$, as explained hereafter. 

\paragraph{Quantifier elimination.}

From this point on, we consider \emph{labeled graphs} $(G,\ell)$, where each vertex $u$ carries a set of labels $\ell(u)$ where the labels are taken from a set $\Lambda$ of constant size. We now consider labeled graphs because quantifier-elimination induces labeling the nodes.  
First-order formulas over $(G,\ell)$ use adjacency, equality, and \emph{unary label predicates} $\lab_\lambda(x)$, which hold whenever $\lambda \in \ell(x)$. 
A first-order formula over unlabeled graphs can thus be regarded as a particular case without unary predicates.

Quantifier elimination removes quantifiers while preserving equivalence within the class of structures under consideration. 
This technique is at the core of recent advances in algorithmic model checking on sparse graphs~\cite{DvorakKT13,GroheK09,PilipczukST18}.  
Let $\varphi$ be an existential FO formula with free variables $x_1, \dots, x_k$, and quantified variables $y_1, \dots, y_t$, i.e.,
\[
\varphi(x_1,\dots,x_k) = \exists y_1\, \dots\, \exists y_t \, \psi(x_1, \dots, x_k, y_1, \dots, y_t),
\]
where $\psi$ is quantifier-free.  
Quantifier elimination maps $\varphi$ into a quantifier-free FO formula $\widehat{\varphi}$ over an \emph{enriched structure}, which, in our setting, will later correspond to the so-called \emph{skeletons}~\cite{PilipczukST18}.

\paragraph{Skeletons.}

Given an integer~$p$, which corresponds to the size of the sets $U\in\Col(p)$, a $p$-skeleton is a collection $\{F_1, \dots, F_s\}$ of labeled rooted forests, each of depth at most~$d$, where both $s$ and $d$ depend only on~$p$.  
All forests share the same vertex set~$V$, corresponding to that of the underlying labeled graph~$(G,\ell)$.  
Each vertex $u\in V$ retains its set of labels $\ell(u)\subseteq\Lambda$, while the structural relations between vertices are captured by the following predicates:
\begin{itemize}[leftmargin=1.2em]
    \item \emph{Label predicates} $\lab_\lambda(x)$, already defined to hold whenever $\lambda\in \ell(x)$.
    \item \emph{Least-common-ancestor (lca) predicates} $\lca^i_j(x,y)$, which hold when $x$ and $y$ belong to the same tree of the $i$-th forest $F_i$, and their least common ancestor $\mathrm{lca}(x,y)$ lies at depth~$j$ in~$F_i$.  
    The predicate $\lca^i_{-1}(x,y)$ holds when $x$ and $y$ lie in distinct trees of~$F_i$.
\end{itemize}

The sets of label predicates, and $\lca$ predicates are both finite, depending only on $p$ and $|\Lambda|$, where $\Lambda$ is the set of original labels.  
An FO formula over $p$-skeletons uses \emph{only} these two kinds of predicates. In particular, it does \emph{not} use adjacency predicates.   
It was proved in~\cite{PilipczukST18} that, for any graph class $\mathcal{G}$ of bounded expansion, any existential formula 
\[
\varphi(x_1,\dots,x_k)= \exists y_1\, \dots\, \exists y_t \, \psi(x_1, \dots, x_k, y_1, \dots, y_t)
\]
and any $G \in \mathcal{G}$, there exist a \emph{quantifier-free} FO formula $\widehat{\varphi}$ on skeletons and a $(t{+}k)$-skeleton $\widehat{S}(G)$ such that, for every $k$-tuple $(v_1,\dots,v_k)$ of vertices of~$G$,
\[
(G,\ell)\models\varphi(v_1, \dots, v_k)
\;\;\iff\;\;
\widehat{S}(G) \models \widehat{\varphi}(v_1, \dots, v_k).
\]
A central result (\Cref{lem:qelim:remove-lca-general}) in our paper further shows that, for local existential FO formulas, the skeleton $\widehat{S}(G)$ can be constructed in a distributed manner in $\mathcal{O}(\log n)$ rounds in the \CONGEST\ model.

Before performing quantifier elimination, we first define a canonical $p$-skeleton $S(G)$ associated with each graph~$G$.  
The purpose of this construction is to provide a representation of~$G$ by a bounded number of forests of bounded depth.
This intermediate structure serves as a bridge between graphs of bounded expansion and trees, allowing us to apply the known quantifier-elimination techniques that operate on tree-like structures.  
Once quantifiers are eliminated, we obtain a new skeleton $\widehat{S}(G)$ that captures, in a quantifier-free way, the properties expressed in the original formula.

\paragraph{Construction of the Skeleton.}

For $G \in \mathcal{G}$ and $p \in \mathbb{N}$, the $p$-skeleton $S(G)$ is built as follows.  
Let us fix a $p$-treedepth coloring $c:V(G)\to [f(p)]=\{1,\dots,f(p)\}$ of $G$.  
For every $U \in \Col(p)$, we compute a spanning forest $F^U$ of $G[U]$ by performing a depth-first search (DFS) traversal in each connected component of $G[U]$, starting from the node of minimum identifier in that component.
Since $G[U]$ has treedepth at most $p$, the trees of $F^U$ have depth at most $d=2^{p}$ (see~\Cref{prop:depth}).  
The forests of $S(G)$ are those in the set $\{F^U: U \in \Col(p)\}$. New labels are assigned as follows.   
 
\begin{itemize}
    \item For every node~$u$, its color label $c(u)$ is added to its new set of labels $\ell(u)$, that is, the color $c(u)$ of node~$u$ is included in its collection of labels~$\ell(u)$.   

    \item For every $U \in \Col(p)$, if $u$ is at depth $i \in [d]$ in $F^U$, then the label $(\depth,i,U)$, denoted by $\depth_i^U$ for short, is added to $\ell(u)$, that is, every vertex colored by $U$ has its depth in $F^U$ included in its collection of labels.

    \item $F^U$ is a collection of DFS trees. Therefore, every edge in $E(G[U]) $ connects a vertex to one of its ancestors in~$F^U$. For every such edge $\{u,v\}$ with $u$ a descendant of~$v$, and $v$ at depth~$i$, the label $(\level,i,U)$, abbreviated into $\level_i^U$, is added to~$\ell(u)$. That is, for each ancestor \(v\) of \(u\) that is adjacent to \(u\) in $G[U]$, \(\ell(u)\) contains \(\level_i^U\), where \(i\) is the depth of \(v\) in $F^U$.
\end{itemize}

\noindent\textit{Remark.} It follows from the last item that every edge of $G$ is represented in $S(G)$ by a label attached to one of its endpoints.  
Indeed, let $e=\{u,v\}$ be an edge of $G$, and let $U\in\Col(p)$ be a color set containing the colors of both endpoints of~$e$.  
Assume that $u$ lies at depth~$j$ and $v$ at depth~$i<j$ in the forest $F^U$.  
Then the edge $e$ is encoded as the label $\level_i^U$ attached to the deeper vertex~$u$.

\medskip

Any FO formula $\varphi$ over labeled graphs can then be translated into an equivalent  FO formula $\widetilde{\varphi}$ over $p$-skeletons by rewriting adjacency and equality predicates as Boolean combinations of $\lca$ and label predicates. For instance,
\begin{align*}\adj(x,y) = \bigvee_{U\in \Col(p)}\bigvee_{0 \leq i < j \leq d} \Big[&\left(\lab_{\depth_i^U}(x) \wedge \lab_{\depth_j^U}(y) \wedge \lca_i(x,y) \wedge \lab_{\level^U_i}(y)  \right) \\
& \vee \left(\lab_{\depth_i^U}(y) \wedge \lab_{\depth_j^U}(x) \wedge \lca_i(x,y) \wedge \lab_{\level^U_i}(x)  \right) \Big].
\end{align*}
The equality predicates can be rewritten in a similar way.  Therefore, for every assignment $\overline{v}$ to the free variables,
\[
(G,\ell) \models \varphi(\overline{v}) \iff S(G) \models \widetilde{\varphi}(\overline{v}),
\]
That is, $\varphi$ and $\widetilde{\varphi}$ are semantically equivalent. One refers to labeled graphs while the other refers to skeletons, i.e.,   
$\widetilde{\varphi}$ is the syntactic rewriting of the  formula $\varphi$ into the language of skeletons. We explain next how to remove quantifiers from $\widetilde{\varphi}$ to get the desired quantifier-free formula~$\widehat{\varphi}$. Quantifier elimination is performed on skeletons.

\paragraph{Quantifier elimination in skeletons.}

Recall that we start from an existential  formula
\[
\varphi(x_1,\dots,x_k) = \exists y_1\, \dots\, \exists y_t \, \psi(x_1, \dots, x_k, y_1, \dots, y_t),
\]
where $\psi$ is quantifier-free. We have that, for every assignment $\overline{v} = v_1,\dots,v_k$ and $\overline{u} = u_1,\dots,u_t$ of the variables such that  $(G,\ell) \models \psi(\overline{v},\overline{u})$ holds, there exists $U \in \Col(p)$ where  
$p = t + k$ such that the colors of all $u_i$ and $v_j$ are in~$U$.  
Hence, it follows from the previous discussion that 
\begin{align*}
    (G,\ell) \models \psi(\overline{x},\overline{y}) 
& \iff \exists U \in \Col(p)\,:\, (G[U], \ell) \models \psi(\overline{x},\overline{y}) \\
& \iff \exists U \in \Col(p) \,:\, S(F^U) \models  \widetilde{\psi}^U(\overline{x},\overline{y}),
\end{align*}
where $\widetilde{\psi}^U$ refers to nodes colored with  colors in $U$ only, and $S(F^U)$ denotes the skeleton $S(G)$ restricted to $F^U$.  
Thus the elimination of existential quantifiers from formulas on skeletons reduces to the elimination of these quantifiers from formulas on forests of bounded depth.

\paragraph{Quantifier elimination in bounded-depth forests.}

We eliminate existential quantifiers one at a time. Concretely, that is, we consider a subformula of the form
\[
\exists y \; \psi(x_1,\dots,x_k,y),
\]
and we remove the quantifier for this single variable $y$ before proceeding to the next one. The quantifier-elimination procedure on bounded-depth forests relies on the notion of \emph{types} (see \Cref{def:types}). For a rough definiton of what we mean by type, recall that, in  forests of bounded depth, the structural relations between vertices can be expressed using the predicates $\lca_i(x,y)$ indicating that the lowest common ancestor of $x$ and $y$ lies at depth~$i$ (in what follows we omit the super-indices on the lca predicates). On a forest of depth~$d$, a first-order formula therefore depends only on (i) the labels that are held by each vertex, and (ii) how the vertices of the tuple $\overline{x}=(x_1,\dots,x_k)$ are positioned with respect to one another in this ancestor hierarchy. These local configurations are captured thanks to different types. For a tuple of variables $\overline{x}$, a type is determined by 
\[
\mbox{a \emph{label map}} \; \gamma:[k]\to 2^\Lambda, 
\;\mbox{and an \emph{ancestry map}}\; 
\delta:[k]\times[k]\to\{-1,0,\dots,d-1\}
\]
that jointly describe which labels each variable $x_i$ carries, and at which depths its ancestors stand. Each consistent pair $(\gamma,\delta)$ defines a conjunction $\zeta(\overline{x},y)$ of atomic predicates fixing both the label and ancestral structure of the tuple. Since there are finitely many such combinations, every quantifier-free formula over lca and label predicates can be rewritten as a finite disjunction of types. This presentation is referred to as the \emph{basic normal form} of the formula (see \Cref{prop:basicnormal}).
To eliminate an existential quantifier $\exists y\,\psi(\overline{x},y)$, we proceed type by type. For a fixed type $\zeta(\overline{x},y)$, we identify a variable $x_s$ closest to $y$, and we distinguish three cases. 

\begin{itemize}
\item If $y$ is an ancestor of $x_s$ at depth $t$, then we mark each vertex with an ancestor at depth $t$ and carrying the appropriate labels by a new unary predicate $\good$. We then replace the formula by the restriction of the type to $\overline{x}$ conjoined with the predicate $\lab_{\good}(x_s)$. 

\item If $y$ lies in the subtree of a child of $x_s$, then we mark all nodes whose subtrees contain a potential \emph{witness}\footnote{In this context, a \emph{witness} for a tuple $\overline{x}$ in a formula $\exists y\,\psi(\overline{x},y)$ is simply a vertex $v$ such that $\psi(\overline{x},v)$ holds. That is, $v$ is a node whose existence makes the existential statement true. During quantifier elimination we never select this vertex explicitly. Instead, we locally encode, through additional labels, whether such a witness exists in the relevant part of the forest.}, and attach a counter with bounded values indicating how many such subtrees exist, reducing the existential to a finite disjunction over child patterns. 

\item If $y$ belongs to a different tree than all $\overline{x}$, then we label as \emph{active} the roots of all trees that contain a suitable witness, and we propagate this information across components by communicating along the edges of~$G$.
\end{itemize}
In all cases, we obtain a quantifier-free formula $\widehat{\psi}(\overline{x})$ with a few additional  labels, unary predicates, and bounded counters. 

\paragraph{Local vs.\ non-local formulas.}
Before going further, let us clarify how the locality of the formula affects the running time of quantifier elimination. Recall that our algorithm runs in $O(\log n)$ rounds for local formulas, and in $O(D+\log n)$ rounds for general ones, where $D$ denotes the diameter of the communication graph. The following intuition explains this difference.

If $\psi$ is local, then all occurrences of its  variables lie within the same tree, at a bounded distance from one another. Hence, whether a node $v$ should receive a new label (as part of eliminating the quantifier in~$\psi$) can be decided by $v$ using only local information from its own tree, whose depth is bounded. In the distributed setting, both the computation of these new labels and the evaluation of the resulting quantifier-free formula take place entirely within each tree, requiring only $\cO(1)$ communication rounds in the \CONGEST\ model.

If $\psi$ is non-local, however, a witness for some tuple $\overline{x}$ may lie in a different tree. 
Then, the label assigned to a node $v$ after eliminating the quantifier may depend on information carried by \emph{active roots} located outside $v$’s component, potentially at distance $\Omega(D)$ in the original graph~$G$. 
Quantifier elimination for non-local formulas may thus require communication across components, along edges of~$G$, at distances up to~$D$, leading to a total complexity of $\cO(D+\log n)$ rounds.

\paragraph{From existential to general formulas.}
So far, we have described how to eliminate   quantifiers
from a formula of the form $\exists y_1 \dots \exists y_k\,\psi(x_1,\dots,x_k,y_1, \dots, y_k)$.
To handle arbitrary FO formulas, which may alternate existential and universal quantifiers, we follow a standard transformation: every universal quantifier is rewritten as the negation of an existential one. 
This allows us to apply the existential elimination procedure iteratively. 
The process can be illustrated with the search for twins. Recall that
\[
\textsc{Twin}(x)=\exists y\,\forall z\,(\mathsf{adj}(x,z)\Leftrightarrow \mathsf{adj}(y,z)).
\]
We first rewrite the universal quantifier as a negated existential:
\[
\textsc{Twin}(x)=\exists y\,\neg\exists z\,
\neg(\mathsf{adj}(x,z)\Leftrightarrow \mathsf{adj}(y,z)).
\]
We then focus on the innermost existential subformula,
\[
\textsc{NoTwins}(x,y)
=\exists z\,\neg(\mathsf{adj}(x,z)\Leftrightarrow \mathsf{adj}(y,z)).
\]
At this point, we eliminate the quantifier $\exists z$ using the procedure described before, producing a quantifier-free formula
\[
\widehat{\textsc{NoTwins}}(x,y),
\]
which, through new labels, encodes whether there exists a vertex $z$ distinguishing $x$ and $y$ by adjacency.  
Our elimination procedure never identifies this witness explicitly. Instead, it records locally, via additional labels, whether such a witness exists within the  neighborhood of the relevant variable.
Substituting this expression into the original one yields the formula 
\[
\exists y\,\neg \widehat{\textsc{NoTwins}}(x,y).
\]
Before eliminating the next quantifier~$\exists y$, the formula $\neg\widehat{\textsc{NoTwins}}(x,y)$, which is expressed over $p$-skeletons, must be translated back into the standard structure of labeled graphs. 
This translation replaces each $\lca$-predicate by an equivalent existential formula using only adjacency, equality and label predicates (cf. \Cref{lem:qelim:remove-lca-general}). 
After this translation, we obtain again an existential formula $\psi'(x,y)$ over labeled graphs. 
Adding the outer quantifier $\exists y$ preserves this existential form, yielding again a formula of the type $\exists y\,\psi'(x,y)$.
We can therefore apply the same quantifier-elimination procedure once more.
The outcome of this second elimination is a quantifier-free formula with a single free variable~$x$, denoted by 
\[
\widehat{\textsc{Twin}}(x).
\]
The resulting formula is expressed as a Boolean combination of unary predicates, each corresponding to a label on~$x$ that records whether a suitable witness $y$ existed at a previous step.

This iterative process, alternating quantifier elimination and negation, extends the existential elimination method to arbitrary FO formulas. 
Each quantifier, regardless of its type,
is ultimately replaced by local labels and Boolean combinations of unary predicates. But, there is a big but: 

\medskip
\centerline{\it this procedure does not necessarily preserve locality!}
\medskip

\noindent Additional care is thus required for ensuring that each intermediate formula remains local. This is our second contribution, described next. 

\subsection{Locality-preserving quantifier elimination}

Standard quantifier elimination at the core of model checking on sparse graphs~\cite{DvorakKT13,GroheK09,PilipczukST18} may \emph{not preserve locality}. 
That is, even if the input formula is local, the resulting quantifier-free formula may refer to nodes that are arbitrarily far apart in the graph. 
Before illustrating this issue, let us first formalize what we mean by locality for formulas with \emph{several} free variables.

\paragraph{Local formulas and distance predicates.}

Let $r\ge 0$ be an integer.  
A formula $\psi(x,\bar y)$, with $\bar y=(y_1,\dots,y_k)$, is said to be \emph{$r$-local around~$x$} if the following two conditions hold:
\begin{enumerate}
\item[(i)] Whenever $(G,\ell)\models \psi(x,\bar y)$, all variables $\bar y$ are interpreted within the $r$-ball of~$x$, that is,
\[
(G,\ell)\models \psi(x,\bar y)\ \Longrightarrow\ \bar y\subseteq B_G(x,r),
\]
where $B_G(x,r)$ denotes the set of vertices at distance at most~$r$ from~$x$ in~$G$.
\item[(ii)] The truth of $\psi$ depends only on the labeled structure inside this ball:  
for any two labeled graphs $(G,\ell)$ and $(G',\ell')$, and vertices $x\in V(G)$, $x'\in V(G')$,  
if there exists a label-preserving isomorphism $h:B_G(x,r)\to B_{G'}(x',r)$ with $h(x)=x'$, then
\[
(G,\ell)\models\psi(x,\bar y)\iff (G',\ell')\models\psi(h(x),h(\bar y)).
\]
\end{enumerate}
In formulas with several free variables, the distinguished variable~$x$ is called the \emph{reference vertex}, and locality is always defined with respect to it.

In practice, enforcing $r$-locality syntactically is difficult.  
To make the dependence explicit, we use \emph{distance predicates} $\ball(y,x,r)$, meaning that $y$ is at distance at most~$r$ from~$x$. By extension, for a tuple of variables \(\overline y\), we write \(\ball(\overline y, x, r)\) for \(\bigwedge_{z\in\overline y} \ball(z,x,r)\).
During quantifier elimination, quantified variables are guarded by appropriate $\ball$ predicates, and these guards are kept redundantly so that locality is preserved even under negation.

\paragraph{Failure of standard elimination.}

The loss of locality can be seen on the \textsc{Twin} property:
\[
\textsc{Twin}(x) = \exists y\, \forall z\, 
\big(\mathsf{adj}(x,z) \Leftrightarrow \mathsf{adj}(y,z)\big),
\]
which can be rewritten as
\[
\textsc{NoTwins}(x,y) =
\exists z\, \big[(\mathsf{adj}(x,z)\wedge \neg \mathsf{adj}(y,z))
\vee (\neg \mathsf{adj}(x,z)\wedge \mathsf{adj}(y,z))\big],
\]
so that 
\[
\textsc{Twin}(x) = \exists y\, \neg \textsc{NoTwins}(x,y).
\]
The formula $\textsc{NoTwins}(x,y)$ is not local \emph{around $x$}: even with $x$ fixed, there may exist $y$ (and the witnessing $z$) at arbitrarily large distance from $x$. %Preserving locality around a designated reference variable (here, $x$) is crucial in our distributed setting: it guarantees that both the evaluation of intermediate formulas and each quantifier-elimination step can be confined to the bounded-depth tree containing $x$, keeping the communication to $O(1)$ rounds in \CONGEST.\sylvain{This sentence seems redundant with the paragraph ``Local vs. non-local formulas'' above.} When locality around $x$ is lost, the truth of the formula may depend on information outside that tree, potentially requiring communication across components up to distances of order~$D$.

\paragraph{Relativizing quantifiers.}

To prevent dependencies between distant vertices introduced during quantifier elimination, we adopt a redundant \emph{relativization} of quantifiers.  
Instead of ranging over the entire structure, each quantified variable is evaluated only within the bounded region centered at the reference vertex, which is then translated to a tree of bounded depth.  
The relativization is expressed syntactically through distance predicates.

Formally, we define a \emph{local form} for local formulas by induction on quantifier depth, as follows. A formula \(\varphi(x, \overline y)\) is in \(r\)-local form if one of the following conditions holds: 
\begin{itemize}
    \item \(\varphi(x,\overline y)=\psi(x,\overline y) \wedge \ball(\overline y,x, r)\), where
    \(\psi\) is quantifier-free,
    
    \item \(\varphi(x, \overline y)=\exists z\; \psi(x, \overline y, z)\), where \(\psi\) is in \(r\)-local
    form, or
    
    \item \(\varphi(x,\overline y)=\ball(\overline y,x,r)
    \wedge \neg\exists z\; \ball((\overline y,z),x,r)
     \wedge \neg \psi(x,\overline y, z)\) where
    \(\psi\) is in \(r\)-local form.
\end{itemize}
Note how the universal quantifier requires a particular care, as it is handled as a negated existential quantifier. This formulation introduces negations, which do not preserve locality, as illustrated by the \textsc{Twin} example. This is why we introduce \emph{redundant} \(\ball\) predicates that guarantee syntactically the locality property.

Given a labeled graph $(G,\ell)$, and an FO formula $\varphi$ in local form, our \emph{locality-preserving quantifier elimination procedure} produces an updated labeling $\widehat{\ell}$ and a quantifier-free formula $\widehat{\varphi}$ such that, (1)~for every vertex $v\in V(G)$,
\[
(G,\ell)\models\varphi(v)\iff (G,\widehat{\ell})\models\widehat\varphi(v),
\]
and (2)~every intermediate formula produced by the process remains in local form. That is we maintain explicit \(\ball\) predicates throughout the transformation.
All computations and communications therefore occur within the bounded-depth tree (or $r$-ball) associated with the reference vertex.

A key result in this paper is that any local formula can be rewritten in local form. Let us run through this transformation for the \textsc{Twin} formula.

\paragraph{Example: rewriting \textsc{Twin} in local form.}

Recall that \(\textsc{Twin}(x) = \exists y\, \forall z\, 
\big(\mathsf{adj}(x,z) \Leftrightarrow \mathsf{adj}(y,z)\big)\) is \(3\)-local. We start by \emph{relativizing} each quantifier. This way, the locality of \(y\) and \(z\) with respect to \(x\) is explicit:
\[
\textsc{Twin}^r(x) = \exists y \, \Big( \ball(y,x,3) \wedge \forall z\, \Big[\neg\ball(z,x,3) \vee \big(\mathsf{adj}(x,z) \Leftrightarrow \mathsf{adj}(y,z)\big)\Big]\Big).
\]
Note that the formula above is logically equivalent to \(\textsc{Twin}\), i.e., \(\textsc{Twin}(x)\equiv \textsc{Twin}^r(x)\).
For handling the universal quantifier~$\forall z$, that is, rather,  for handling the negated existential quantifier $\neg\exists z$, we introduce a redundant \(\ball(y,x,3)\) predicate. This yields
\begin{align*}
    \textsc{Twin}(x) & \equiv \exists y\,\Big( \ball(y,x,3) \wedge \neg\exists z\, \Big[\ball(z,x,3) \wedge \neg\big(\mathsf{adj}(x,z) \Leftrightarrow \mathsf{adj}(y,z)\big)\Big]\Big)\\
    & \equiv \exists y\, \Big(\ball(y,x,3) \wedge \neg\exists z\, \Big[\ball(y,x,3)\wedge\ball(z,x,3) \wedge \neg\big(\mathsf{adj}(x,z) \Leftrightarrow \mathsf{adj}(y,z)\big)\Big]\Big) \\
    & \equiv \exists y\, \Big(\ball(y,x,3) \wedge \neg\exists z\, \Big[\ball((y,z),x,3) \wedge \neg\big(\mathsf{adj}(x,z) \Leftrightarrow \mathsf{adj}(y,z)\big)\Big]\Big)
\end{align*}
%Here, we use the fact that \(A \wedge \neg B \equiv A \wedge \neg (A\wedge B)\).
At this point, one more step is needed for obtaining a formula in local form, as the innermost part of the formula, i.e., \(\mathsf{adj}(x,z) \Leftrightarrow \mathsf{adj}(y,z)\), still requires a \(\ball\) predicate to enforce its locality. This yields
\[
    \textsc{Twin}(x) \equiv \exists y\, \Big(\ball(y,x,3) \wedge \neg \exists z\, \Big[\ball((y,z),x,3) \wedge \neg\big(\ball((y,z),x,3) \wedge \mathsf{adj}(x,z) \Leftrightarrow \mathsf{adj}(y,z)\big)\Big]\Big).
\]
This process successfully eventually produces a formula in \(3\)-local form, as desired. In particular, note that,  wherever there is a negation, there is a corresponding  ball predicate whose role is to enforce locality not only before but also \emph{after} the negation.

\subsection{Model Checking, Counting, Optimization, and Certification}

\paragraph{Model Checking General FO Predicates.} 

For checking whether $G \models \varphi$ in \CONGEST, for a given FO formula $\varphi$, and a given labeled graph $G\in\mathcal{G}$, we use the same techniques as the ones used to mark all nodes satisfying a given local FO formula. That is, the proof of Theorem~\ref{thm:thmFOexpCONG-informal} follows the same guidelines as the ones used to establish Theorem~\ref{thm:thmFOlocal-informal}. On the one hand, the approach is simplified as one does not need to care about relativization. On the other hand, the "non-locality" of the formula requires checking properties that potentially occur at nodes that are far away from each other in the graph (e.g., the presence of at least two nodes of degree~3). In the context of quantifier elimination, this translates into counting the number of nodes in which unary  predicates are satisfied. This counting is merely achieved along the edges of a breadth-first search (BFS) tree, in $\cO(D)$ rounds. We show that all other operations involved in quantifier elimination can be performed in $\cO(\log n)$ rounds, for a total of $\cO(D+\log n)$ rounds. 

\paragraph{Counting and Optimization.} 

Establishing Theorem~\ref{thm:thmFOexpCNT-informal}, i.e., showing how to efficiently  count the number of $k$-tuples of vertices $(v_1,\dots,v_k)$ such that $(G,\ell)\models \varphi(v_1,\dots,v_k)$, requires more work. For the global counting problem, we proceed in two stages. 

\medskip

\noindent\emph{Stage 1: Quantifier-free formulas.} 
As for model checking, we use the fact that there exists an integer function $g$ depending only on the class $\mathcal{G}$ such that, for any graph $G \in \mathcal{G}$, and any constant~$p$, $G$ admits a coloring with $g(p)$ colors such that any subset of at most $p$ colors induces a subgraph of treedepth at most~$p$. For $p=k$, each $k$-uple of vertices $(v_1,\dots,v_k)$ of the labeled graph in $(G,\ell)$ satisfying the formula $\varphi(x_1, \dots, x_k)$ covers some set $U$ of colors, with $|U|\leq p$. Assuming given an $\cO(D + \log n)$-round algorithm counting the solutions of $\varphi(x_1,\dots,x_k)$ on labeled subgraphs $(G[U],\ell)$ of treedepth at most~$k$, we show how to count the solutions on the whole labeled graph $(G,\ell)$. For any two sets of colors $U_1, U_2$, the set of solutions of $\varphi(x_1,\dots,x_k)$ on $(G[U_1 \cap U_2],\ell)$ is exactly the intersection of the set of solutions for $(G[U_1],\ell)$ and for $(G[U_2],L)$. This holds thanks to the fact that $\varphi(x_1,\dots,x_k)$ has no  variables other than $(x_1,\dots,x_k)$. As a consequence, by counting the solutions on $(G[U],\ell)$ for every set $U$ of at most $k$ colors, and then using the inclusion-exclusion principle, we are able to count solutions of $\varphi$ on $(G,\ell)$. 

\medskip

\noindent\emph{Stage 2: General formulas.}
    In order to use the previous stage on quantifier free formulas, we first proceed with
    quantifier elimination. This introduces label and $\lca$
    predicates, which prevents us from directly applying the previous stage. To overcome this
    difficulty, we eliminate the $\lca$ predicates by introducing new variables, which are
    existentially quantified. To avoid over-counting, we enforce unicity, i.e., at most one instance of each newly introduced variable may satisfy the formula. Informally, it is as
    if every existential quantifier ``there exists \(y\) such that'' was replaced by
    ``there exists \emph{a unique} \(y\) such that''.

To establish \Cref{thm:thmFOexpOPT-informal}, we first utilize the entire quantifier elimination machinery developed for distributed model-checking in \CONGEST. We emphasize that, during the process, the variables of the quantifier-free formula $\widehat{\varphi}$ are precisely the free variables of $\varphi(\bar{x})$, but $\widehat{\varphi}$ may have a new set $\overline{y}$ of existentially-quantified variables. Let $p = |\overline{x}|+|\overline{y}|$. We construct now  a coloring of $G$ using $g(p)$ colors in $\cO(\log n)$ rounds, and we observe that, for any $k$-uple of vertices $(v_1,\dots,v_k)$ of $G$, we have $(G,\ell)\models \varphi(v_1,\dots,v_k)$ if and only if there exists a set $U$ of at most $p$ colors such that $(G[U],\widehat{\ell})\models \widehat{\varphi}(v_1,\dots,v_k)$. Indeed, $U$ is obtained by guessing not only the colors of $v_1,\dots,v_k$, but also the colors of the tuple of vertices $u_1,\dots,u_{p-k}$ assigned to~$\overline{y}$. Optimizing the solution of $\varphi$ on $G$ boils down to optimizing over all optimal solutions of $\widehat{\varphi}$ on graphs $(G[U],\widehat{\ell})$ of treedepth at most~$p$.

Based on the above, it remains to deal with optimization and global counting for FO formulas on labeled subgraphs $G[U]$ of $G$, each of which has a constant treedepth. For this purpose, we extend the results from~\cite{FominFMRT24} stating that counting and optimization of FO formulae can be achieved on \emph{connected} labeled graphs of bounded treedepth, in a constant number of rounds. Indeed, we show that one can perform optimization and counting for FO formulae over an induced  (not necessarily connected) subgraph of constant treedepth of the labeled graph~$G$, in $\cO(D)$ rounds. In~\cite{FominFMRT24}, the algorithm starts by constructing a tree decomposition of small width of the graph, and then performs dynamic programming over this tree decomposition. If the input graph of small treedepth is connected, this directly enables the construction of a rooted tree decomposition of constant depth and constant width. Moreover, the decomposition tree is a spanning tree of the graph, which allows communications to proceed with dynamic programming bottom-up the tree in a constant number of rounds. The root of the tree can then compute the output by simulating the centralized algorithm for optimization and counting on graphs of bounded treewidth of~\cite{BoPaTo92}. However, in our case, the graphs $G[U]$ may not be connected, and thus we must deal with a forest decomposition~$ F$ instead of a mere tree decomposition. A BFS tree rooted at an arbitrary vertex can span all the roots of the trees in the forest~$F$, resulting in a rooted tree of depth $D + \cO(1)$. This tree can serve as a base for constructing a tree-decomposition of $G[U]$. Moreover, since the nodes of the tree-decomposition correspond to vertices of~$G$, the centralized dynamic programming algorithm of~\cite{BoPaTo92} can be simulated for counting and optimization on $G[U]$, in $D + \cO(1)$ rounds. The information stored in the dynamic programming tables can indeed be computed in a sequence of rounds, starting from the leaves, upward to the root of the tree decomposition by communicating along the edges of~$G$. Altogether, the whole process for counting and optimization for formula $\varphi$ on the pair $(G,\ell)$ costs $\cO(D + \log n)$ rounds in \CONGEST.

Our method for local counting follows the same lines as that for global counting. That is, we first reduce the study to quantifier-free formulas, using the (locality preserving) quantifier
elimination method. Similarly to the case of global counting, this reduction introduces \lca-predicates,
which need to be eliminated in a locality-preserving fashion. It turns out that the method
applied to the case of global counting happens to preserve locality, although the proof is
somewhat technical.
For quantifier-free formulas, we essentially apply the same idea as for global counting, based on the
inclusion-exclusion principle. However, we also use the fact that the formula is
local, by considering some set of colors~\(U\), and a tuple of vertices \(\overline v\) such that
\((G[U],\ell)\) satisfies \(\varphi(\overline v)\). Then, since \(\varphi\) is local, all vertices
in \(\overline v\) belong to the same connected component of~\(G[U]\), which implies that we only
need to count the solutions to \(\varphi\) in the connected components of~\(G[U]\). The latter are
connected graphs of bounded treedepth, and thus counting is achieved  in \(\cO(1)\) rounds using the algorithm in~\cite{FominFMRT24}.

\paragraph{Distributed Certification.}

%Let us start with a preliminary remark. At first glance, it may seem overcomplicated to use quantifier elimination for certifying a formula of the type $\varphi(\bar{x})=\exists y \, \psi(\bar{x},y)$. Indeed, to certify that $\varphi(\bar{x})=\top$, it may seem sufficient to certify the existence of a vertex $y$ that yield $\psi(\bar{x},y)=\top$, which can be done easily thanks to a spanning tree rooted at~$y$, consuming $\cO(\log n)$-bit certificates~\cite{KormanKP10}. The situation is, however, more complex than it seems, for at least two reasons. First, in the intermediate steps $\psi$ may refers to free variables $(x_1,\dots,x_\ell)=\bar{x}$, which makes the identification of a vertex $y$ such that $\psi(\bar{x},y)=\top$ non-trivial.\sylvain{I don't understand this sentence.} Second, there may be negations in the global formula, that is, one may have to consider also the case where $\varphi(\bar{x})=\neg \exists y \, \psi(\bar{x},y)$, and certifying the \emph{non existence} of a vertex $y$ is not straightforward. A typical example is the case of certifying the existence of a 4-cycle, which is trivial using $\cO(\log n)$-bit certificate, in contrast to certifying the \emph{non} existence of a 4-cycle, which requires certificates on $\tilde{\Omega}(\sqrt{n})$ bits in general graphs. 

At a high level, our certification scheme for an FO formula $\varphi$ and a class of graphs $\mathcal{G}$ with bounded expansion essentially consists of the prover providing each node with the transcript of the execution of our distributed model-checking algorithm at this node. The role of the verifier is then to check (in a distributed manner) whether this transcript encodes a correct execution of our algorithm. Note that the certification scheme does not have to certify the bounded expansion property of $\mathcal{G}$, which is a promise.  However, given a coloring of the input graph $G=(V,E)\in \mathcal{G}$, the certification scheme must certify that any set $U$ of $k=|\varphi|$ colors does induce a subgraph with treedepth~$k$. This is done in a specific manner, using an algorithm by Bousquet et al.~\cite{BousquetFP21} for certifying MSO formulas on graphs of bounded treedepth with certificates of size $\mathcal {O} (\log n)$. The main difficulty is to encode the transcript of the quantifier elimination procedure, i.e., the transformation from $(\varphi,G)$ to $(\widehat{\varphi},\widehat{\ell})$ with $\widehat{\varphi}$ quantifier-free. Each $\lca$ predicate refers to a local condition, and the value of such a predicate can be certified locally at each node, in forests of bounded depth. On the other hand, the unary predicates may refer to non-local properties. Nevertheless, we demonstrate that verifying the correctness of the unary predicates can be accomplished using a spanning tree whose root is capable of making decisions. (Certify a spanning tree is standard using $\cO(\log n)$-bit certificates~\cite{KormanKP10}.) 

%%%%%%%%%%%%%%%%%%%%%%%%%%%%%%%%%%%%%%%%%%%
\section{Model and Definitions}
\label{sec:model-and-definitions}
%%%%%%%%%%%%%%%%%%%%%%%%%%%%%%%%%%%%%%%%%%%

In this section, we recall basic concepts of distributed computing, logic, and graphs. Readers familiar with these frameworks may skip the first two sub-sections, but Section~\ref{ss:DefLocal}, which introduces and extends the notion of \emph{local} formulas, and Section~\ref{ss:DefBE}, which recalls properties of graphs with \emph{bounded expansion}, may require special attention. 

%==========================================
\subsection{The CONGEST Model}
%==========================================

The \CONGEST model is a standard model of distributed computation in networks~\cite{Peleg2000}. The network is modeled as a connected simple (i.e., no self-loops nor multiple edges) graph $G=(V,E)$. The $n\geq 1$ vertices of $G$ model the processing nodes in the networks, and its edges models the communication links. In the following, we may indistinguishably use ``node'' of ``vertex'' for referring to the processing elements in the network. Each node is assigned an identifier, which is assumed to be unique in the network. The identifiers may not be between 1 and~$n$, but the \CONGEST model assumes that each identifier can be stored on $\cO(\log n)$ bits in $n$-node networks. In fact, it is usually assumed that all the nodes of a graph $G$ are given a polynomial upper bound $N=\mbox{poly}(n)$ on the number $n$ of vertices in~$G$, with all identifiers in $\{1,\dots,N\}$.  

Computation proceeds in lockstep, as a sequence of synchronous rounds. All nodes start simultaneously, at the same round. At each round, every node~$v$ sends an $\cO(\log n)$-bit message to each of its neighbors, receives the messages sent by all its neighbors, and performs some individual computation. The round-complexity of an algorithm $A$ is the function $R_A(n)=\max_{|V(G)|\leq n}R_A(G)$ where $R_A(G)$ denotes the number of rounds performed by the nodes executing $A$ in $G$ until all of them terminate. 

As said before, it is worth mentioning the \textsf{BROADCAST CONGEST} model, that is, the weaker variant of the \CONGEST model in which, at every round, each node must send the \emph{same} $\cO(\log n)$-bit message to all its neighbors. For the sake of simplifying the notations, we solely refer to the \CONGEST model in the statements of our results. Yet, all our upper bound results also holds in the weaker \textsf{BROADCAST CONGEST} model, apart from  \Cref{thm:thmFOexpOPT}.

\paragraph{Remark.}

Throughout the paper, we will use the following convention regarding the inputs and outputs of our algorithms. When an algorithm receives a ``global value'' as input, whether it be an integer~$p$, a logical formula~$\varphi$, or a function (e.g., a bound $f:\mathbb{N}\to\mathbb{N}$ on the expansion), we assume that all nodes are given this global value as input. Instead, when an algorithm receives as input a node labeling function~$\ell$, we assume that each node $v$ is only given its label $\ell(v)$. Similarly, when an algorithm returns a node-labeling function $\widehat\ell$, we mean that each node $v$ returns its label $\widehat\ell(v)$.

%==========================================
\subsection{First-Order Formulas on Labeled Graphs}
\label{sec:FOlabel}
%==========================================

%------------------------------------------
\subsubsection{Definition} 
%------------------------------------------

Recall that a \emph{first-order} (FO) formula over graphs is a combination (conjunction, disjunction, and negations) of Boolean predicates over variables, which may be quantified or not.  Every variable represents a vertex of a graph, and there are two binary predicates: equality $x=y$, and adjacency $\adj(x,y)$, where the latter states whether or not there is an edge between the two vertices $x$ and~$y$. The variables of an FO formula can be quantified with existential or universal quantifiers. Non-quantified variables are called \emph{free variables}. Given an FO formula $\varphi$ with free variables $x_1, \dots, x_k$, we make explicit that the latter are free by denoting $\varphi=\varphi(x_1, \dots, x_k)$. In the following, $\FO$ denotes the set of all first order formulas over undirected graphs. 
 
Throughout the paper, a tuple  $(x_1, \dots, x_k)$ is abbreviated in $\overline{x}$. The length $k$ of a tuple $(x_1, \dots, x_k)$ is then denoted by~$|\overline{x}|$. Expressions such as $\exists \overline{x}$ or $\forall \overline{x}$ are shorthand for $\exists x_1 \dots \exists x_k$ and $\forall x_1 \dots \forall x_k$, respectively. We may also abuse notation in context where no confusions may occur by letting $\overline{x}$ denote the \emph{set} $\{x_1, \dots, x_k\}$. 

Given an FO formula $\varphi$ and a graph $G$, $G\models \varphi$ denotes that $G$ satisfies $\varphi$ (i.e., $G$ is a \emph{model} for~$\varphi$). The negation of $G\models \varphi$ is denoted by $G\not\models \varphi$. Given an FO formula $\varphi(\overline{x})$ with free variables, a graph $G = (V,E)$, and a set of vertices $\overline{v} \in V^k$ where $k=|\overline{x}|$, we say that $G$ satisfies $\varphi(\overline{x})$ for vertices $\overline{v}$, denoted by $G\models \varphi(\overline{v})$, if the formula is satisfied by~$G$ whenever assigning $\overline{v}$ to the free variables of $\varphi(\overline{x})$.  

A graph property is called a \emph{first-order} (FO) property if it can be expressed by a first-order formula with no free variables. For example, the property ``the graph contains a triangle'' is clearly an FO property: it holds whenever there is a set of three pairwise adjacent vertices in the graph. Indeed, it can be written as the formula
\[
\varphi \;=\; \exists x_1 \,\exists x_2 \,\exists x_3\,
\bigl(
\mathrm{adj}(x_1,x_2) \;\wedge\;\mathrm{adj}(x_2,x_3) \;\wedge\;\mathrm{adj}(x_1,x_3)
\bigr).
\]
More generally, the property of containing a fixed graph~\(H\) as a subgraph (sometimes called ``subgraph isomorphism'') is an FO property. Its negation---namely, the property that a graph does not contain~\(H\) as a subgraph (often called ``\(H\)\!-freeness'')---is also an FO property, since first-order logic is closed under negation.

A formula with free variables expresses properties over graphs and tuples of vertices. As an example, the following formula $\psi(x)$ holds whenever $x$ is a node in a triangle, formulated as
$$
\psi(x) =  \exists y_1 \exists y_2, \adj(x,y_1) \wedge \adj(x,y_2) \wedge \adj(y_1,y_2).
$$  
The graph $G$ with vertex set~$V=\{a,b,c,d\}$, and edge set $E=\{\{a,b\}, \{a,c\}, \{b,c\}, \{c,d\}\}$ satisfies $G\models \varphi$ as the vertices $\{a,b,c\}$ form a triangle. Similarly, $G\models \psi(a)$, but $G  \not\models \psi(d)$ as $d$ is not in any triangle. 

%------------------------------------------
\subsubsection{FO Formulas for Labeled Graphs}
\label{sss:labeled-fo-def}
%------------------------------------------

The formalism of FO properties can be extended to labeled graphs. Let $\Lambda$ be a finite set, whose elements are called \emph{labels}. A $\Lambda$-labeled graph is a pair $(G,\ell)$ where $G = (V,E)$ is a graph, and $\ell: V \rightarrow 2^{\Lambda}$ is an assignment of a set of labels to each vertex of~$G$.  An FO formula over $\Lambda$-labeled graphs  is a first order formula $\varphi$ over graphs, extended with predicates of the form $\lab_\lambda(x)$ for every $\lambda \in  \Lambda$. The predicate $\lab_\lambda(x)$ is true if vertex $x$ contains $\lambda$ as one of its labels.  Such predicates are called \emph{label predicates}. Whenever $\lab_\lambda(x)$ holds, we say that $x$ is labeled with~$\lambda$. Note that since $\ell$ assigns a set of labels to each vertex, it may be the case that $\lab_\lambda(x)\land\lab_{\lambda'}(x)$ holds for two different labels $\lambda$ and $\lambda'$. It may thus be convenient to  define the predicates $\lab_S$ for sets $S\subseteq \Lambda$, as follows. Given $S\in\Lambda$, $\lab_S(x)$ holds if and only if $\ell(x)=S$. In particular, $\lab_S(x)\land \lab_{S'}(x)$ is necessarily false whenever $S\neq S'$.  The set of all first-order logic formulas over $\Lambda$-labeled graphs is denoted $\FO[\Lambda]$. 

Given a $\Lambda$-labeled graph $(G,\ell)$, and a subgraph $G'$ of $G$, we abuse notation and denote $(G',\ell)$ the $\Lambda$-labeled graph defined by the restriction of $\ell$ to the nodes in~$G'$.
For instance, let $\Lambda = \{0,1\}$, and let us consider the following graph property on labeled graphs: ``existence of a triangle with a node labeled~$1$''. This property can be expressed by the formula:
$$
\varphi = \exists x_1 \exists x_2 \exists x_3, \adj(x_1,x_2) \wedge \adj(x_2,x_3) \wedge \adj(x_1,x_3) \wedge  \lab_1(x_1).
$$ 
We may also consider formula with free variables, such as 
$$
\psi(x) =  \exists y_1 \exists y_2, \adj(x,y_1) \wedge \adj(x,y_2) \wedge \adj(y_1,y_2) \wedge  \lab_1(x).
$$  
The graph $G$ with vertex set $V=\{a,b,c,d\}$ and edge set $E=\{\{a,b\}, \{a,c\}, \{b,c\}, \{c,d\}\}$ with labels $\ell(a) = \{0\}, \ell(b) = \{0\}, \ell(c) = \{1\}$, and $\ell(d) = \{1\}$ satisfies $(G,\ell) \models \varphi$ as the nodes $\{a,b,c\}$ form a triangle where $c$ is labeled~$1$. For the same reason, we have that $(G,\ell)\models \psi(c)$. However, $(G,\ell)\not\models \psi(a)$ as $a$ is not labeled~$1$ (even if it is in a triangle), and $(G,\ell) \not\models \psi(d)$ as node $d$ is not in a triangle (even if it is labeled $1$). 

For $\varphi(x_1, \dots, x_k) \in \FO[\Lambda]$, and a $\Lambda$-labeled graph $(G,\ell)$, we denote by $\true(\varphi,G,\ell)$ the set of all tuples of vertices $(v_1,\dots, v_k)$ satisfying that $(G,\ell)\models \varphi(v_1, \dots, v_k)$. If $\varphi$ has no free variables then $\true(\varphi, G, \ell)$ is equal to the truth value of  $(G,\ell)\models \varphi$.
Two $\FO[\Lambda]$ formulas $\varphi_1$ and $\varphi_2$ are \emph{equivalent} if $\true(\varphi_1,G,\ell) = \true(\varphi_2,G,\ell)$  for every $\Lambda$-labeled graph $(G,\ell)$. The following basic definitions play a central role in this paper. 
\begin{itemize}
    \item A formula $\varphi(\overline{x})\in \FO[\Lambda]$  is called  \emph{quantifier-free} if it has no quantifiers, i.e., it can be expressed as a boolean combination of predicates involving only the free variables in~$\overline{x}$. 
    \item A formula $\varphi(\overline{x}) \in \FO[\Lambda]$ is \emph{existential} if it can be written as $\varphi(\overline{x}) = \exists \overline{y}\, \psi(\overline{x}, \overline{y})$, where $\psi(\overline{x}, \overline{y})$ is quantifier-free formula in $\FO[\Lambda]$. 
\end{itemize}
Furthermore, a formula $\varphi(\overline{x})$ is in \emph{prenex normal form} if it can be written in the form
$$
\varphi(\overline{x})=Q_1 y_1 \, \dots \, Q_t y_t \; \psi(\overline{x}, \overline{y}),
$$ 
where $\psi(\overline{x}, \overline{y}) \in \FO[\Lambda]$, and $Q_1, \dots, Q_t$ are existential or universal quantifiers. The value of $t$ is the \emph{quantifier depth} of $\varphi$. Every $\FO$ formula  admits an equivalent formula in prenex normal form (see, e.g., \cite{epstein2011classical}). This fact can be easily established by induction on the length of the formula, and it can be  extended to $\FO[\Lambda]$, i.e., first order formulas for labeled graphs, in a straightforward manner. 

\paragraph{Notation.} For every family $\mathcal{G}$ of graphs, and every finite set $\Lambda$, we denote by $\mathcal{G}[\Lambda]$ the set of all $\Lambda$-labeled graphs $(G,\ell)$ with $G=(V,E)\in\mathcal{G}$, and $\ell:V\to 2^\Lambda$.

%============================================
\subsection{Local Formulas}
\label{ss:DefLocal}
%============================================

\emph{Local formulas} can be defined as follows. For a graph
\(G = (V,E)\) and a vertex \(v\in V\), let  \(B_G(v, r)\) denote the \(r\)-neighborhood of \(v\)
in \(G\), that is, the subgraph of \(G\) induced by vertices at distance at most \(r\) from
\(v\). By extension, when considering a labeled graph \((G,\ell)\), \(B_G(v,r)\) is also a
labeled graph: each node \(u \in B_G(v,r)\) is labeled by the same set of labels~\(\ell(u)\).

\begin{definition}\label{def:r-local}
    Let \(\Lambda\) be a finite set of labels, and let $r\geq 0$ be an integer. A formula \(\varphi(x) \in FO[\Lambda]\) with one free variable is \emph{\(r\)-local} if, for any two \(\Lambda\)-labeled graphs
    \((G,\ell)\) and
    \((G',\ell')\), and for any two vertices $v \in V(G)$ and $v'\in V(G')$, we have
    \[
    B_G(v,r) = B_{G'}(v',r) \implies
    \Bigl((G,\ell) \models \varphi(v) \iff (G',\ell') \models \varphi(v')\Bigr).
    \]
    where \(B_G(v,r) = B_{G'}(v',r)\) means that there exists a label-preserving  isomorphism
    \(h: B_G(v,r) \to B_{G'}(v',r)\) such that \(h(v) = v'\).
    
    We say \(\varphi(x)\) is \emph{local} if there exists $r\geq 0$ such that $\varphi(x)$ is $r$-local.
\end{definition}

For technical reasons that will appear clear later, we need to extend the notion of local formulas
to formulas \(\varphi(x, \overline y)\) with more than a single free variable.
Our notion of locality is defined with respect to the first free variable \(x\) of the formula:
intuitively, we require that each variable is in a ball centered on \(x\).
This is different from the notion of Gaifman-locality \cite{gaifman1982local},
in which \emph{quantified}
variables are required to be in a ball centered on \emph{any} of the free variables.

\begin{definition}\label{def:r-local-general}
    Let \(\Lambda\) be a finite set of labels, and let \(r\geq 0\) be an integer.
    A formula \(\varphi(x, \overline y)\in FO[\Lambda]\) is \emph{\(r\)-local} if,
    for any \(\Lambda\)-labeled graph \((G,\ell)\) with $G=(V,E)$, the following two properties are satisfied:
    \begin{itemize}
        \item for every pair \((v,\overline u) \in \true(\varphi, G, \ell)\), it holds that \(\overline u\subseteq B_G(v,r)\);
        
        \item for every \(\Lambda\)-labeled graph \((G', \ell')\) with $G'=(V',E')$, for every vertices
        \(v \in V, v' \in V'\), $\overline u \subseteq B_G(v,r)$ and $\overline{u'}\subseteq B_{G'}
        (v',r)$, if there exists a label-preserving  isomorphism \(h: B_G(v,r) \to B_{G'}(v',r)\) such that
        \(h(v) = v'\) and \(h(\overline u)\) = \(\overline{u'}\), then:
        \[
            (G,\ell) \models \varphi(v,\overline u) \iff
            (G', \ell') \models \varphi(v', \overline{u'}
)        \]
    \end{itemize}
\end{definition}

We shall often write \(\varphi(\overline x)\) instead of \(\varphi(x,\overline y)\),
in which case it is  \(x_1\) that takes the particular role of \(x\) in the definition above.
 
The above definition is, unfortunately, very difficult in practice, because it provides little
information on the form of the considered formula. We therefore  define the
notion of \emph{local form} in \Cref{def:r-local-form}.
To that end, let us introduce the new predicate $\ball(y,x,r)$ defined as 
\[
G \models \ball(u,v,r) \iff u \in B_G(v,r). 
\]
More generally, for $\overline{y}=(y_1,\dots,y_k)$, we define 
\[
\ball(\overline{y},x,r)=\bigwedge_{i=1}^{|\overline y|} \ball(y_i,x,r).
\]
These predicates will be referred to as \emph{distance predicates}.
Note that they can easily be written as existential FO formulas, see \Cref{prop:beta}.
We use distance predicates to \emph{relativize} quantifiers with respect to the \(r\)-neighborhood
of a given vertex (relativization of quantifiers is standard technique, see for example \cite{hodges1993model,Ebbinghaus1995}):
let us then consider an existential \(r\)-local formula \(\varphi(x) = \exists \overline y\, \psi(x, \overline y)\). Since $\varphi(x)$ is $r$-local, it can be rewritten as
\[\varphi(x) = \exists \overline y\; \big(\psi(x, \overline y) \wedge \ball(\overline y,x,r)\big).\]
Such a formula, where the locality is enforced explicitly by a distance predicate, is referred to as a formula in
\emph{\(r\)-local form}. The following definition formalizes this notion.

\begin{definition}\label{def:r-local-form}
Let $r\geq 0$ be an integer. A formula \(\varphi(x, \overline y) \in FO[\Lambda]\) with at
least one free variable is in \emph{\(r\)-local form} if 
\begin{itemize}
    \item \(\varphi(x,\overline y)=\psi(x,\overline y) \wedge \ball(\overline y,x, r)\), where
    \(\psi\in FO[\Lambda]\) is quantifier-free, or
    
    \item \(\varphi(x, \overline y)=\exists z\; \psi(x, \overline y, z)\), where \(\psi\in FO[\Lambda]\) is in \(r\)-local
    form, or
    
    \item \(\varphi(x,\overline y)=\ball(\overline y,x,r)
    \wedge \forall z\; \big( \neg \ball((\overline y,z),x,r)
     \vee \psi(x,\overline y, z)\big)\) where
    \(\psi\in FO[\Lambda]\) is in \(r\)-local form.
    \end{itemize}
    We say \(\varphi(x, \overline y)\) is in \emph{local form} if there exists $r\geq 0$ such that $\varphi(x, \overline y)$ is in \(r\)-local form.
\end{definition}

When considering a formula with several free variables, we say it is (\(r\)-)local if it can be
rewritten in (\(r\)-)local form.

\paragraph{Remark.}

The last case of Definition~\ref{def:r-local-form} introduces a lot of redundancy, with two distance predicates, in addition to requiring that \(\psi\) must be in \(r\)-local form. The main motivation for this redundancy is to guarantee that 
every ``subformula'' of \(\varphi\) remains local. Thanks to this redundancy, all the
formulas below are local, despite the fact that negation does \emph{not} preserve locality
for formulas with several free variables.
In particular, the following two formulas are equivalent, where the universal quantifier is merely replaced by a negated existential quantifier: 
\begin{align*}
\ball(\overline y,x,r) &
    \wedge \forall z\; \big( \neg \ball((\overline y,z),x,r)
     \vee \psi(x, \overline y, z)\big) \\
& \equiv 
\ball(\overline y, x, r)
    \wedge \neg\exists z\; \big( \ball((\overline y,z), x, r)
    \wedge \neg \psi(x,\overline y,z)\big).
\end{align*}
Similarly,  the formulas 
\[
\exists z\; \big( \ball((\overline y,z), x, r)
    \wedge \neg \psi(x,\overline y,z)\big), \;\text{and}\;
\ball((\overline y,z), x, r) 
    \wedge \neg \psi(x, \overline y, z)
\]
are both local. In other words, the implicit or explicit presence of negations forces us to introduce redundancies in the 
distance predicates for ``balancing''  the potential loss of the locality property.

\medskip

The following result establishes the consistency of Definitions~\ref{def:r-local-general} and~\ref{def:r-local-form}.  

\begin{lemma}\label{prop:local_form}
For every integer $r\geq 0$, every \(r\)-local formula can be rewritten in \(r\)-local
    form.
\end{lemma}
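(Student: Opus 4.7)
My plan is to first convert $\varphi$ into prenex normal form $\varphi(x, \overline y) = Q_1 z_1 \cdots Q_t z_t\, \psi$ with $\psi$ quantifier-free, and then induct on the quantifier depth~$t$. The engine of the proof is a \emph{relativization} operator $\chi \mapsto \chi^{[r]}$, defined recursively by replacing each $\exists z\, \theta$ with $\exists z\,(\ball(z,x,r) \wedge \theta^{[r]})$ and each $\forall z\, \theta$ with $\forall z\,(\neg \ball(z,x,r) \vee \theta^{[r]})$, where $x$ denotes the first free variable of $\chi$. The key preliminary claim to establish is that whenever $\chi$ is $r$-local, $\chi \equiv \ball(\overline y, x, r) \wedge \chi^{[r]}$; this follows by evaluating $\chi(v, \overline u)$ in the induced subgraph $H = B_G(v,r)$, using the easy identity $B_H(v,r) = B_G(v,r)$ together with both properties of $r$-locality to transfer truth between $G$ and $H$, and observing that $\chi^{[r]}$ evaluated in $G$ captures exactly what $\chi$ computes in $H$.

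The base case $t=0$ is immediate: a quantifier-free $\varphi$ is equivalent to $\varphi \wedge \ball(\overline y, x, r)$, which matches the first clause of \Cref{def:r-local-form}. For the inductive step, write $\varphi = Q_1 z_1\, \psi_1$. I cannot apply the induction hypothesis to $\psi_1$ directly, since $\psi_1$ need not itself be $r$-local. Instead I form $\varphi'(x, \overline y, z_1) = \psi_1^{[r]} \wedge \ball((\overline y, z_1), x, r)$, which \emph{is} $r$-local with free variables $(x,\overline y, z_1)$ (property~(1) is enforced by the distance predicate; property~(2) follows from the fact that $\psi_1^{[r]}$ restricts all of its quantifiers to $B_G(x, r)$) and has quantifier depth $t-1$, so the inductive hypothesis yields an equivalent formula $\varphi''$ in $r$-local form. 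When $Q_1 = \exists$, the formula $\exists z_1\, \varphi''$ already matches the existential clause of \Cref{def:r-local-form}, and its equivalence to $\varphi$ follows from the relativization claim. When $Q_1 = \forall$, I assemble the candidate $\ball(\overline y, x, r) \wedge \forall z_1\,(\neg \ball((\overline y, z_1), x, r) \vee \varphi'')$, which matches the universal clause of \Cref{def:r-local-form}.

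I expect the universal case to be the main obstacle. The difficulty is that negation does not preserve locality for formulas with several free variables, which is the very reason why \Cref{def:r-local-form} carries redundant distance predicates in its universal clause. Verifying equivalence with $\varphi$ requires two successive boolean manipulations: first, using $\neg B \vee (\theta \wedge B) \equiv \neg B \vee \theta$ to absorb the inner ball conjunct of $\varphi' \equiv \varphi''$; second, using the fact that under the outer $\ball(\overline y, x, r)$, the predicate $\ball((\overline y, z_1), x, r)$ collapses to $\ball(z_1, x, r)$. Once these simplifications are performed, the candidate reduces to $\ball(\overline y, x, r) \wedge \varphi^{[r]}$, which is equivalent to $\varphi$ by the relativization claim, completing the induction.
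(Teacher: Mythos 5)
Your approach is essentially the paper's: the relativization operator $\chi^{[r]}$ is the paper's $\chi^r$, the ``relativization claim'' $\chi \equiv \ball(\overline y, x, r) \wedge \chi^{[r]}$ is exactly what the paper derives by evaluating in $B_G(v,r)$ and invoking both halves of $r$-locality, and the Boolean manipulations in the universal case (absorbing the inner ball conjunct, collapsing $\ball((\overline y, z_1), x, r)$ to $\ball(z_1, x, r)$ under the outer ball) mirror the paper's Case~3 verbatim. However, the inductive step as written has a genuine gap: you invoke the lemma's own inductive hypothesis on $\varphi' = \psi_1^{[r]} \wedge \ball((\overline y, z_1), x, r)$, claiming it has quantifier depth $t-1$. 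The lemma quantifies over $r$-local formulas in $\FO[\Lambda]$, while $\varphi'$ contains $\ball$ atoms. If you treat $\ball$ as a new primitive, $\varphi'$ indeed has $t-1$ explicit quantifiers, but when the recursion reaches depth~$0$ its quantifier-free body still carries $\ball$ atoms, which does not match the first clause of \Cref{def:r-local-form} (demanding a quantifier-free $\psi \in \FO[\Lambda]$); you would need the unstated observation that every leftover atom has the form $\ball(a,x,r)$ for a free variable $a$ and is therefore absorbed by the outer $\ball(\overline y, x, r)$. If instead you expand $\ball$ via \Cref{prop:beta}, the prenex prefix of $\varphi'$ strictly exceeds $t-1$, so the induction does not terminate. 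The paper sidesteps this tangle by decoupling: it proves $\varphi \equiv \ball(\overline y, x, r)\wedge\varphi^r$ once (semantically), then runs a \emph{separate} syntactic induction --- on the quantifier depth of the un-relativized $\chi$ --- showing that $\ball(\overline y, x, r)\wedge\chi^r$ can always be rewritten in $r$-local form. That second induction never leaves the relativized fragment and never re-verifies semantic locality. To make your proof close, reformulate your recursion as a claim of exactly this syntactic shape, rather than calling back into the top-level lemma.
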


\begin{proof}
    Let \(\varphi(x, \overline y)\) be \(r\)-local formula, and assume, w.l.o.g, that it is written in prenex normal form
    $\varphi(x, \overline y)=Q_1 z_1 \, \dots \, Q_t z_t \; \psi(\overline x, \overline{y}, \overline z)$.
    First, we rewrite \(\varphi\) by \emph{relativizing} the quantifier with respect to the
    \(r\)-neighborhood of~\(x\). Intuitively, we replace existential quantification  \(\exists z\) by
    \(\exists z \in B_G(x, r)\), i.e., $z$ must be in the ball of radius~$r$ around~$x$ in~$G$.  Similarly universal quantification \(\forall z\) is replaced by \(\forall z \in B_G(x,r)\), i.e., the formula is checked only for nodes $z$ close to~$x$.
    Formally, for the \(r\)-local formula \(\varphi(x, \overline y)\),
    we define \(\varphi^r(x,\overline y)\) by induction on quantifier depth.
    If \(\varphi\) is quantifier free, then \(\varphi^r = \varphi\).
    If \(\varphi(x, \overline y) = \exists z, \zeta(x, \overline y, z)\), then
    \(\varphi^r(x, \overline y) = \exists z, \ball(z, x, r)\wedge \zeta^r(x,\overline y,z)\).
    Finally, if \(\varphi(x,\overline y) = \forall z, \zeta(x,\overline y,z)\), then \(\varphi^r(x,\overline y) = \neg \exists z, \ball(z, x, r) \wedge \neg \zeta^r(x, \overline y, z)\).
    
    By definition, for \(v \in V\) and \(\overline u \subseteq B_G(v,r)\),
    \[
        G\models \varphi^r(v,\overline u)
        \iff B_G(v,r) \models \varphi^r(v, \overline u)
        \iff B_G(v,r) \models \varphi(v, \overline u).
    \]
    By locality of \(\varphi\), we deduce that
    \[
        G \models \ball(\overline u, v, r) \wedge \varphi^r(v, \overline u) \iff
        B_G(v,r) \models \varphi(v, \overline u) \iff
        G \models \varphi(v, \overline u).
    \]
    It only remains to show that the formula
    \(\ball(\overline y, x, r) \wedge \varphi^r(x,\overline y)\) can be rewritten in \(r\)-local
    form. Once again, we proceed by induction on quantifier depth. By definition of \(\varphi^r\),
    there are only three cases to consider.
    \begin{description}
        \item[Case 1:] \(\varphi\) is quantifier free.  Then \(\varphi^r = \varphi\)
    and \(\ball(\overline y, x, r) \wedge \varphi(x, \overline y)\) is already in local form.

    \item[Case 2:] \(\varphi(x, \overline y) = \exists z \, \psi(x,\overline y,z)\). Then,
    \begin{align*}
        \ball(\overline y, x, r) \wedge \varphi^r(x,\overline y)
        & = \ball(\overline y, x, r) \wedge
        \exists z \, \big( \ball(z, x,r) \wedge \psi^r(x, \overline y, z)\big) \\
        & = \exists z \, \big(\ball((\overline y,z), x,r)\wedge \psi^r(x, \overline y,z)\big)
    \end{align*}
    By induction, we can rewrite \(\ball((\overline y,z), x,r)\wedge \psi^r(x, \overline y,z)\)
    in \(r\)-local form.

    \item[Case 3:] \(\varphi(x,\overline y) = \forall z, \psi(x,\overline y,z)\). Then, 
    \begin{align*}
        \ball(\overline y, x, r) \wedge \varphi^r(x, \overline y) 
        & = \ball(\overline y, x, r) \wedge
        \neg\exists z \,\big( \ball(z, x, r) \wedge \neg\psi^r(x, \overline y, z)\big )
        \\
        &= \ball(\overline y, x, r) \wedge \neg\Big(\ball(\overline y,x,r) \wedge
        \exists z \, \big(\ball(z,x,r) \wedge \neg\psi^r(x,\overline y,z)\big)
        \Big)
        \\
        &= \ball(\overline y, x, r) \wedge \neg
        \exists z \, \big(\ball((\overline y,z),x,r) \wedge \neg\psi^r(x,\overline y,z)
        \big )
        \\
        &= \ball(\overline y, x, r) \wedge \forall z \, \big( \neg\ball((\overline y, z), x,r)
        \vee \psi^r(x,\overline y,z)\big)
        \\
        &= \ball(\overline y, x, r) \wedge \forall z \, \Big( \neg\ball((\overline y, z), x,r)
        \vee \big( \ball((\overline y,z), x,r) \wedge \psi^r(x, \overline y,z)\big)\Big)
    \end{align*}
    By induction, \(\ball((\overline y,z), x,r) \wedge \psi^r(x, \overline y,z)\) can be rewritten
    in \(r\)-local form.
        \end{description}
The study of these three cases completes the proof.
\end{proof}

We conclude this section by some simple observation that will be useful at several places in the paper.

\begin{lemma}
\label{prop:beta}
Let $r \in \N$, and let $\overline{x}=(x_1,\dots,x_k)$. There exists an existential FO formula 
\[
\beta_r(\overline x) = \exists \overline y\,
    \psi(\overline x, \overline y),
\] where \(\psi\) is quantifier free such that, for every graph \(G\), and every $k$-tuple $\bar v$ of vertices of~$G$, 
\[
G \models \beta_r(\overline v) \iff G \models \ball(\overline v, v_1, r).
\]
Moreover, for every \(r' \geq r\), every graph \(G\), and every $k$-tuple $\bar v$ of vertices of~$G$,
\[
G \models \beta_r(\overline v) \iff G \models \beta_{r,r'}(\overline v),
\]
where \(\beta_{r,r'}(\overline x) = \exists \overline y \,\big( \psi(\overline x, \overline y)
    \wedge \ball((\overline x, \overline y), x_1, r')\big)\).
\end{lemma}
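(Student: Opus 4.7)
The plan is to encode the predicate $\ball(y, x_1, r)$ as a first-order formula asserting the existence of a walk of length at most $r$ from $x_1$ to $y$ in $G$. To avoid a disjunction over the length of the walk, I would allow each step to be either an actual edge of $G$ or a \emph{stationary step} (consecutive vertices coinciding). Concretely, for each index $i \in \{1,\ldots,k\}$, I would introduce fresh variables $y^{(i)}_0, y^{(i)}_1, \ldots, y^{(i)}_r$ meant to represent successive vertices along such a walk from $x_1$ to $x_i$, and define
\[
\xi_i(\overline x, \overline y) \;=\; y^{(i)}_0 = x_1 \,\wedge\, y^{(i)}_r = x_i \,\wedge\, \bigwedge_{\ell = 0}^{r-1}\bigl( y^{(i)}_\ell = y^{(i)}_{\ell+1} \,\vee\, \adj(y^{(i)}_\ell, y^{(i)}_{\ell+1})\bigr).
\]
Gathering all $k(r+1)$ new variables into a single tuple $\overline y$, I would set $\psi(\overline x, \overline y) = \bigwedge_{i=1}^k \xi_i(\overline x, \overline y)$ and finally take $\beta_r(\overline x) = \exists \overline y \, \psi(\overline x, \overline y)$. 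By construction, $\psi$ is quantifier-free and $\beta_r$ is existential, as required.

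To establish the first equivalence, I would argue that any walk of length $\ell \leq r$ from $v_1$ to $v_i$ can be padded to a tuple satisfying $\xi_i$ by repeating the final vertex $r - \ell$ times, and conversely any tuple satisfying $\xi_i$ collapses to such a walk once the stationary steps are suppressed. Hence $G \models \beta_r(\overline v)$ if and only if each $v_i$ lies at distance at most $r$ from $v_1$ in $G$, which is precisely $G \models \ball(\overline v, v_1, r)$.

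For the second equivalence, the implication $\beta_{r,r'}(\overline v) \Rightarrow \beta_r(\overline v)$ is immediate, since $\beta_{r,r'}$ is obtained from $\beta_r$ by strengthening the body under the same existential quantifier. For the converse, I would fix witnesses $\overline u$ making $\psi(\overline v, \overline u)$ hold and observe that each component $u^{(i)}_\ell$ is reached from $v_1$ by the partial walk $v_1 = u^{(i)}_0, u^{(i)}_1, \ldots, u^{(i)}_\ell$ of length at most $\ell \leq r \leq r'$, so $u^{(i)}_\ell \in B_G(v_1, r')$. Together with the inclusion $v_i \in B_G(v_1, r) \subseteq B_G(v_1, r')$ given by the first equivalence, this yields $G \models \ball((\overline v, \overline u), v_1, r')$, and hence $G \models \beta_{r,r'}(\overline v)$.

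There is no real obstacle here; the only point worth care is that one cannot naively express ``there exists a path of length at most $r$'' in first-order logic without disjoining over the length, and the stationary-step trick is exactly what lets a single quantifier-free body $\psi$ capture the whole range of distances up to $r$ uniformly.
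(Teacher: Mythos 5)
Your construction is correct, and it diverges from the paper's in a small but real way. The paper writes $\beta_r$ with a single chain of existential variables $z_0,\dots,z_r$ and a \emph{disjunction over the path length} $j$, so that only the prefix $z_0,\dots,z_j$ is constrained to be a path while the remaining $z_{j+1},\dots,z_r$ are left completely free. Consequently, for the second equivalence the paper must \emph{replace} those unconstrained tail variables (which a priori could be anywhere in $G$) by copies of $v_1$ in order to satisfy $\ball((\overline x,\overline y), x_1, r')$. You instead encode distance-at-most-$r$ by allowing stationary steps $y^{(i)}_\ell = y^{(i)}_{\ell+1}$ along a fixed-length walk of exactly $r$ steps; this eliminates the disjunction over lengths and, more usefully here, forces \emph{every} witness $u^{(i)}_\ell$ to lie within distance $\ell \le r$ of $v_1$. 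So in your second part the same witnesses automatically satisfy the added distance predicate and no re-selection is needed, which is a slightly cleaner argument. Both formulations produce an existential formula with quantifier-free body, so both satisfy the lemma; the trade-off is a larger variable count ($k(r+1)$ rather than $r+1$, though the paper also implicitly multiplies by $k$ when conjoining $\bigwedge_{j\ge 2}\beta_r(x_1,x_j)$) in exchange for a more uniform proof of the $r\mapsto r'$ relaxation. One nit: your closing remark that one ``cannot naively express'' bounded distance ``without disjoining over the length'' slightly overstates the issue — the paper's disjunction lives inside a quantifier-free body and is perfectly acceptable; your trick merely trades that disjunction for padding, it doesn't rescue something that would otherwise fail.
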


\begin{proof}
    It is sufficient to describe \(\beta_r(x,y)\) with  two variables only, as $\beta_r(x_1, \dots, x_k)$ can be written as 
    \(\bigwedge_{j=2}^k \beta_r(x_1, x_j)\).
    We define
    \[
    \beta_r(x, y) = \exists z_0 \,\dots\, \exists z_r \, \left[ (x = z_0) \wedge
    \bigvee_{j = 0}^r  \Big((y = z_j) \wedge \bigwedge_{i=0}^{j-1} \adj(z_i, z_{i+1})
    \Big)\right].
    \]
    In this formula, \(j\) is the distance between \(x\) and \(y\), and \(z_0,\dots, z_j\)
    represents the path from \(x\) to \(y\). This formula satisfies the first item of
    \Cref{prop:beta}. Let us now prove that it satisfies the second item.
    Let \(r' \geq r\). By definition, if 
    \(G\models \beta_{r,r'}(u,v)\) then \(G\models \beta_{r}(u,v)\).
    Conversely, let us assume that \(G \models \beta_r(u,v)\). Then, there exists some
    \(\overline w=w_0, \dots, w_r\)
    such that \(G\models \psi(u,v, \overline w)\). Moreover, there exists \(j \in [0, r]\) such
    that 
    \[
    G \models (u = w_0) \wedge (v = w_j) \wedge \bigwedge_{i=0}^{j-1} \adj(w_i, w_{i+1}).
    \]
    For every \(i \in [0,j]\), let \(w'_i = w_i\), and, for every \(i \in [j+1, r]\), let \(w'_i = u\). By definition, every vertex 
    \(w'_i\) satisfies \(\ball(w'_i, u, r)\), and thus \(\ball(w'_i, u,r')\) since \(r'\geq r\). This implies that
    \[
    G \models (u = w'_0) \wedge (v = w'_j) \wedge \bigwedge_{i=0}^{j-1} \adj(w'_i, w'_{i+1})
    \wedge \ball((u,v,\overline {w'}), u, r').
    \]
    As a consequence,  \(G \models \psi(u, v, \overline {w'})
    \wedge \ball((u,v,\overline {w'}), u, r')\), and thus \(G \models \beta_{r,r'}(u,v)\).
\end{proof}

\begin{lemma}\label{lem:local-disjunction}
    Let \(\Lambda\) be a set of labels, and let \(\varphi(\overline x)\in\FO[\Lambda]\) be a
    \(r\)-local, quantifier-free formula. If
    \(\varphi(\overline x) = \varphi_1(\overline x) \vee \varphi_2(\overline x)\) for some formulas
    \(\varphi_1\) and \(\varphi_2\), then both \(\varphi_1\) and \(\varphi_2\) must be \(r\)-local.
\end{lemma}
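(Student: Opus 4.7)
The plan is to verify directly the two clauses of Definition~\ref{def:r-local-general} for $\varphi_1$; the argument for $\varphi_2$ is symmetric. The key structural observation I would rely on is that, since $\varphi$ is quantifier-free and $\varphi = \varphi_1 \vee \varphi_2$ is a disjunctive decomposition at the top level, $\varphi_1$ and $\varphi_2$ are subformulas of $\varphi$ and are themselves quantifier-free, i.e., Boolean combinations of atoms of the form $x_i = x_j$, $\adj(x_i, x_j)$, and $\lab_\lambda(x_i)$ over the free variables $\overline x$.

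First, I would establish the distance-constraint clause. Let $(v, \overline u) \in \true(\varphi_1, G, \ell)$. Then $(G, \ell) \models \varphi_1(v, \overline u)$ implies $(G, \ell) \models \varphi(v, \overline u)$, and the $r$-locality of $\varphi$ immediately yields $\overline u \subseteq B_G(v, r)$. This is the distance clause for $\varphi_1$ at radius $r$, and is essentially the only place where the hypothesis that $\varphi$ is $r$-local is used.

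Second, I would establish the isomorphism-invariance clause, and this is where quantifier-freeness is decisive. Let $(G, \ell)$ and $(G', \ell')$ be $\Lambda$-labeled graphs, $v \in V(G)$, $v' \in V(G')$, $\overline u \subseteq B_G(v, r)$, $\overline{u'} \subseteq B_{G'}(v', r)$, and let $h\colon B_G(v, r) \to B_{G'}(v', r)$ be a label-preserving isomorphism with $h(v) = v'$ and $h(\overline u) = \overline{u'}$. Since $\varphi_1$ is quantifier-free, its truth at $(v, \overline u)$ is determined by the values of the atoms $=$, $\adj$, and $\lab_\lambda$ applied solely to elements of $\{v\} \cup \overline u \subseteq B_G(v,r)$. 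As $h$ preserves equality, adjacency, and labels, each such atomic formula has the same truth value at $(v', \overline{u'})$ in $(G', \ell')$ as at $(v, \overline u)$ in $(G, \ell)$, so $(G, \ell) \models \varphi_1(v, \overline u)$ iff $(G', \ell') \models \varphi_1(v', \overline{u'})$.

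The main subtlety I would flag is that the conclusion hinges on interpreting $\varphi = \varphi_1 \vee \varphi_2$ as a syntactic decomposition, ensuring that $\varphi_1$ and $\varphi_2$ inherit quantifier-freeness from $\varphi$. A purely semantic reading would fail: for instance, $\adj(x, y)$ is equivalent to $(\adj(x,y) \wedge \exists z\, \lab_a(z)) \vee (\adj(x, y) \wedge \neg \exists z\, \lab_a(z))$, yet neither disjunct is local, since each depends on the global existence of an $a$-labeled vertex. With the syntactic interpretation in place, the proof is essentially a routine unfolding of the definitions.
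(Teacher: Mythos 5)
Your proof is correct and takes the same approach as the paper's: show that the distance-constraint clause for $\varphi_1$ follows from $\varphi_1 \Rightarrow \varphi$ together with $r$-locality of $\varphi$, and observe that the isomorphism-invariance clause holds automatically for any quantifier-free formula. The paper states the latter observation without elaboration ("checking the $r$-locality of $\varphi_1$ is equivalent to checking that \dots $\overline v \subseteq B_G(v_1,r)$"), whereas you unfold it by arguing atom-by-atom; you also usefully flag that the decomposition $\varphi = \varphi_1 \vee \varphi_2$ must be read syntactically for $\varphi_1,\varphi_2$ to inherit quantifier-freeness, a point the paper takes for granted.
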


\begin{proof}
    Since \(\varphi_1\) and \(\varphi_2\) play symmetric roles, we  only prove that
    \(\varphi_1\) is local. Since $\varphi$ is quantifier free, and therefore $\varphi_1$ as well, checking the $r$-locality of $\varphi_1$ is equivalent to checking
    that, for all \(\Lambda\)-labeled graphs \((G,\ell)\), 
    \[
        (G,\ell) \models \varphi_1(\overline v) \implies \overline v \subseteq B_G(v_1, r).
    \]
    This holds trivially as, by locality of \(\varphi\),
    \[
        (G,\ell) \models \varphi_1(\overline v) \implies
        (G,\ell) \models \varphi(\overline v) \implies
        \overline v \subseteq B_G(v_1, r), 
    \]
    which completes the proof. 
\end{proof}

%============================================
\subsection{Graphs of Bounded Expansion }\label{ss:DefBE}
%============================================

Recall that a graph $G=(V,E)$ has \emph{treedepth} at most~$d$ if there exists a forest~$F$ of rooted trees with depth at most~$d$ satisfying that there exists a one-to-one mapping $f:V\to V(F)$ such that, for every edge $\{u,v\}\in E$, $f(u)$ is an ancestor of $f(v)$ in a tree of~$F$, or $f(v)$ is an ancestor of $f(u)$ in a tree of~$F$. The treedepth of a graph is the smallest~$d$ such that such a mapping $f$ exists. 

It has been recently shown in~\cite{FominFMRT24} that decision, counting and optimization of FO formulas can be performed in $\cO(1)$ rounds in \CONGEST, for \emph{connected} labeled graphs of bounded treedepth.

\begin{proposition}[\cite{FominFMRT24}]\label{thm:FFMRT24}
    Let $\mathcal{G}[\Lambda]$
    be a class of \emph{connected} labeled graphs of bounded treedepth, and let $\varphi(x_1,\dots,x_k) \in FO[\Lambda]$. Then, for any 
    and any $(G,\ell) \in \mathcal{G}[\Lambda]$, the following problems can be solved by a distributed algorithm running in  $\cO(1)$
    rounds in \CONGEST: 
    \begin{description}
        \item[- Deciding $\varphi$:]  deciding whether it exists a $k$-uple of vertices $(v_1,\dots,v_k)$ such that\linebreak $(G,\ell)\models \varphi(v_1,\dots,v_k)$.
        \item[- Counting the solutions of $\varphi$:] computing the number of $k$-uples $(v_1,\dots,v_k)$ such that\linebreak $(G,\ell)\models \varphi(v_1,\dots,v_k)$.
        \item[- Optimizing $\varphi$:] assuming $n$-node graphs $G=(V,E)$ are provided with a weight function\linebreak ${\omega:V\to \mathbb{N}}$ satisfying $\omega(v)=O(n^c)$ for some $c\geq 0$, computing a $k$-tuple of vertices $(v_1,\dots,v_k)$ such that $G\models\varphi(v_1,\dots,v_k)$, and $\sum_{i=1}^k\omega(v_i)$ is maximum (or minimum).
    \end{description}
\end{proposition}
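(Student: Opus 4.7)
The plan is to distributively simulate the classical bottom-up dynamic programming for FO logic on bounded-treedepth structures, exploiting the crucial fact that a connected graph of treedepth $\leq d$ has diameter bounded by $2^{d}-1 = O(1)$. I would proceed in three steps: (i)~construct a rooted elimination tree $T$ of depth $\leq d$ spanning $G$ in $O(1)$ rounds; (ii)~run a bottom-up dynamic program on $T$ whose tables have size bounded in terms of $\varphi$ and $d$; and (iii)~read off the answer (decision, count, or optimum) at the root, possibly broadcasting it back in another $O(1)$ rounds.

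For the elimination tree, since $G$ is connected with treedepth $\leq d$, a canonical $T$ can be built recursively: at each level, select as root the vertex of smallest identifier whose removal splits the current subgraph into components of strictly smaller treedepth, and recurse into each component. The recursion has depth $d = O(1)$, and each level can be realized in $O(1)$ rounds using deterministic tie-breaking on identifiers together with bounded-diameter gossiping of the summaries needed to detect ``pivot'' vertices. After this phase, every vertex knows its at most $d$ ancestors in $T$, and every edge of $G$ connects an ancestor-descendant pair.

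On $T$, I would compute bottom-up, at each vertex $v$, the FO $q$-type of the labeled subgraph induced by $v$ and its descendants, relativized to the at most $d-1$ ancestors on the root-to-$v$ path, where $q = \mathrm{qr}(\varphi)$. By a standard Feferman--Vaught / Ehrenfeucht--Fraïssé argument, the set of such types has cardinality bounded by a function of $q$, $d$, $k$ and $|\Lambda|$, hence fits in $O(1)$ bits; aggregation at an internal vertex is a finite lookup table combining children types with labels and ancestor-adjacencies. For \emph{counting}, each type is refined with, for every compatible partial assignment of the free variables to ancestors, the number of $k$-tuples in the subtree that complete it --- a count bounded by $n^k$, hence $O(\log n)$ bits. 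For \emph{optimization}, the refinement stores, per pattern, the optimal weight and the identifiers of a realizing tuple; again this is $O(\log n)$ bits since $k$ and the weight exponent are constant. After $d = O(1)$ rounds of bottom-up aggregation, the root produces the decision or count; for optimization, the identifiers of the winning tuple are broadcast back down $T$ in another $O(1)$ rounds so that every node can output the result consistently.

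The main technical obstacle is step~(i) under the $O(\log n)$-bandwidth restriction: a vertex may have degree $\gg \log n$, so it cannot transmit its full neighbor list to select pivots naively. The resolution is that the construction of $T$ and the type DP can be \emph{interleaved}: rather than shipping neighborhoods, vertices exchange only $O(1)$-sized summaries of the partial types accumulated so far, which are precisely the information needed both to identify the pivot at each recursive level and to drive the subsequent DP. Canonical tie-breaking on identifiers ensures that all vertices converge on a globally consistent elimination tree, and the $d$-fold recursion keeps the whole procedure within $O(1)$ rounds.
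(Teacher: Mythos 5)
Your high-level plan (build a constant-depth decomposition structure, then run a bounded-table bottom-up dynamic program along it) is sound and is indeed the shape of the argument in \cite{FominFMRT24}, as recalled in the proof of \Cref{lem:FFMRText}. However, the two proofs diverge in a way that matters, and your version has a genuine gap in step~(i).

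The paper (via \cite{FominFMRT24}) does \emph{not} try to build a proper elimination tree of depth exactly $\leq d$. It builds a \emph{decomposition tree}: a rooted spanning tree of $G$ in which every non-tree edge is a back edge (any DFS tree of the connected graph $G$ works). Such a tree has depth up to $2^{d}-1$, still $\cO(1)$, and crucially it is a subgraph of $G$, so parent--child communication is a single CONGEST round. From it one derives a tree decomposition by taking, as the bag of a node, its set of ancestors in the tree (width $\leq 2^d - 1$), and then runs the Borie--Parker--Tovey dynamic program bottom-up. The decomposition tree is constructed in $\cO(2^{2d})$ rounds (Lemma~11 of \cite{FominFMRT24}), which is the delicate part, but it only needs the ``back-edge'' invariant, not the exact treedepth invariant.

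Your construction instead insists on an elimination tree of depth $\leq d$, chosen by repeatedly selecting a ``pivot'' of smallest identifier whose removal splits the current piece into pieces of strictly smaller treedepth. This creates two problems you do not resolve. First, detecting which vertices reduce treedepth is a global, non-monotone property; your proposed fix (``interleave'' the pivot selection with the type DP and exchange only $\cO(1)$-bit type summaries) is circular, since the type DP itself presupposes a decomposition structure that has not yet been built, and you never make precise what $\cO(1)$-bit summary would let a vertex verify that its removal reduces treedepth of its component under $\cO(\log n)$-bandwidth. Second, the elimination tree you obtain is not, in general, a subgraph of $G$: the pivot $v$ is adjacent to each residual component but need not be adjacent to the pivot $r_i$ of that component, so ``parent of $r_i$'' in $T$ can be a non-edge of $G$, and you would have to route messages through $G$ --- doable in $\cO(1)$ rounds given the bounded diameter, but this is an additional mechanism you would need to spell out, and it erases the apparent advantage of having depth $d$ rather than $2^d$.

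Your step~(ii), by contrast, is a legitimately different and perfectly valid route once a decomposition tree is in hand: rather than the algebraic Borie--Parker--Tovey machinery on a derived tree decomposition, you aggregate bounded-rank FO types relativized to the root-path, with a Feferman--Vaught-style composition lemma, and refine the types with $\cO(\log n)$-bit counts or optimum weights. That part is correct and arguably cleaner than the paper's route. The fix to your proposal is therefore localized: replace the exact-treedepth elimination tree of step~(i) by any decomposition tree (e.g.\ a DFS tree), accept depth up to $2^d$, and appeal to the constant-round construction of such a tree as a black box; your type DP then goes through unchanged with depth parameter $2^d$ instead of $d$.
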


Graphs of bounded expansion~\cite{NesetrilOdM12} can be defined in term of local minors, and
degeneracy. It is however more convenient for us to use the
following characterization. A class $\mathcal{G}$ of graphs has \emph{bounded expansion} if there
exists a function $f:\mathbb{N}\to\mathbb{N}$ such that, for every $G\in\mathcal{G}$, and for every
$k\in \mathbb{N}$, there exists a coloring of the vertices of $G$ with colors in
$[f(k)]=\{1,\dots,f(k)\}$ such that, for every set $U\subseteq [f(k)]$
of cardinality at most~$k$
the subgraph of~$G$ induced by the nodes colored by a color in~$U$ has
treedepth at most~$|U|$. In particular, each subset of at most $k$ colors induces a subgraph of
treedepth at most $k$. 
For every positive integers $p$ and $k$, a \emph{$(p,k)$-treedepth coloring} of a graph $G$ is a $k$-coloring of $G$ satisfying that each set $U\subseteq[k]$ of at most $p$ colors induces a graph of treedepth at most $|U|$. 

\begin{definition}
    Given a function $f:\mathbb{N}\rightarrow \mathbb{N}$, a class of graphs $\mathcal{G}$ has \emph{expansion} $f$ if, for every $p\in \mathbb{N}_{>0}$, every $G\in \mathcal{G}$ has a $(p, f(p))$-treedepth coloring. A class of graphs has \emph{bounded} expansion if it has expansion $f$ for some function~$f:\mathbb{N}\rightarrow \mathbb{N}$.  
\end{definition}

A class of graphs has \emph{effective} bounded expansion if it has expansion $f$ for some Turing-computable function~$f$. 
In the following, we always assume that the considered functions $f$ are Turing-computable. Interestingly, treedepth colorings can be efficiently computed in the \CONGEST model. 

\begin{proposition}[\cite{NesetrilM16}]\label{th:congestltd}
   Let $\mathcal{G}$ be a class of graphs of bounded expansion~$f$. There exists a function $g: \mathbb{N} \to \mathbb{N}$ such that, for any graph $G \in \mathcal{G}$, and any constant positive integer~$p$, a $(p,g(p))$-treedepth coloring of $G$ can be computed in $\cO(\log n)$ rounds in the \CONGEST\ model.
\end{proposition}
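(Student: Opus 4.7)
The plan is to construct a \(p\)-centered coloring of \(G\) with \(g(p)\) colors, where a coloring is called \emph{\(p\)-centered} if every connected subgraph of \(G\) either uses at least \(p\) distinct colors or contains a vertex whose color is unique within that subgraph. Such a coloring is automatically a \((p, g(p))\)-treedepth coloring: for any \(U \subseteq [g(p)]\) with \(|U|\leq p\), every connected subgraph of \(G[U]\) must contain a uniquely-colored vertex, since it uses at most \(p\) colors; taking that vertex as the root of a tree in the decomposition and recursing on the connected components of its removal yields a tree-depth decomposition of depth at most \(|U|\), because the number of available colors strictly decreases along the recursion. Hence it suffices to produce a \(p\)-centered coloring with \(g(p)\) colors in \(\cO(\log n)\) \CONGEST rounds.

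The distributed construction relies on two structural facts about bounded-expansion classes. First, bounded expansion implies that \(G\) admits, for every \(r\), an orientation with in-degree bounded by a function of the expansion; this follows from the uniform bound on the greatest reduced average degree. Such a low in-degree orientation can be computed in \(\cO(\log n)\) rounds by a distributed peeling procedure, where at each round vertices of degree below a constant threshold orient their edges outward and are removed, the degeneracy bound guaranteeing that a constant fraction of remaining vertices can be peeled at each step. Second, transitive-fraternal augmentations (adding shortcut arcs for length-two oriented paths of the appropriate orientation types) preserve bounded expansion. After \(\cO(\log p) = \cO(1)\) rounds of augmentation performed on top of the initial orientation, the augmented graph carries enough structural information for a greedy coloring along the reverse orientation to produce a \(p\)-centered coloring with \(g(p)\) colors. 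This final greedy coloring can be implemented in \(\cO(\log n)\) rounds using a Linial-style deterministic algorithm for bounded in-degree graphs.

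The main obstacle is implementing each round of transitive-fraternal augmentation in \CONGEST without violating the \(\cO(\log n)\)-bit-per-edge bandwidth constraint. Although in-degrees are bounded, out-degrees may be arbitrarily large, so one cannot afford to broadcast information along all out-edges. The resolution is to route information only along \emph{in}-edges: since each vertex has only a constant number of in-neighbors, it can exchange with them its in-neighborhood list in \(\cO(1)\) rounds, after which every vertex can locally determine which shortcut arcs incident to it must be added and reorient them to preserve a bounded in-degree invariant. One must verify that the in-degree bound after augmentation remains controlled by a function of the expansion, which is precisely what the structural theory of bounded-expansion classes guarantees; chaining the initial orientation, the constant number of augmentation rounds, and the final greedy coloring yields the total complexity of \(\cO(\log n)\) rounds.
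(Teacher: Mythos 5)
This proposition is cited by the paper from~\cite{NesetrilM16}; the paper does not reprove it but uses it as a black box (see the remark after the statement, which refers to ``the algorithm of~\cite{NesetrilM16}''). Your proposal is therefore not being compared against an in-paper argument, but against the external reference.

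That said, your sketch does correctly reconstruct the architecture of the Nešetřil--Ossona de Mendez distributed algorithm: reduce treedepth colorings to centered colorings, obtain centered colorings via a low in-degree (fraternal) orientation computed by degeneracy peeling in $\cO(\log n)$ rounds, iterate a constant number of transitive-fraternal augmentations while routing information along in-edges only to respect the bandwidth constraint, and finish with a distributed proper coloring of the augmented orientation. Two small inaccuracies are worth flagging. First, there is an off-by-one in your definition of $p$-centered: with ``every connected subgraph either uses \emph{at least} $p$ colors or has a uniquely colored vertex,'' a set $U$ with $|U|=p$ could induce a connected subgraph using exactly $p$ colors with no unique vertex, so your recursion does not bottom out; you need a $(p+1)$-centered coloring (in your convention) to obtain a $(p,\cdot)$-treedepth coloring, which only changes constants. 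Second, the final step is not ``any greedy coloring along the reverse orientation,'' which in a DAG could take $\Theta(n)$ rounds to schedule; what one actually uses is a deterministic $\cO(\log n)$-round algorithm that properly colors a graph equipped with a bounded in-degree orientation (e.g., via repeated peeling or a deterministic defective-coloring cascade), and one then argues separately that a proper coloring of the sufficiently augmented graph is centered for the original. Neither issue is a genuine gap in the plan; both are standard repairs.
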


The algorithm of \cite{NesetrilM16} takes as input the function~$f$, and an integer~$p$, and outputs a $(p,g(p))$-treedepth coloring of the actual graph~$G$ in a distributed manner. The function $g$ may however be different from the given bound~$f$ on the expansion of the class~\(\mathcal G\), which may itself be different from the ``best'' expansion function for the class~\(\mathcal G\).

%%%%%%%%%%%%%%%%%%%%%%%%%%%%%%%%%%%%%%%%%%
\section{Distributed Model Checking of Local Formulas}
\label{sec:local-model-checking}
%%%%%%%%%%%%%%%%%%%%%%%%%%%%%%%%%%%%%%%%%%

We have now all the ingredients to solve the open problem in~\cite{NesetrilM16}. 

\begin{theorem}
\label{thm:thmFOlocal}
Let $\Lambda$ be a finite set. For every local FO formula $\varphi(x)$ on $\Lambda$-labeled graphs, and for every class of graphs $\mathcal{G}$ of bounded expansion, there exists a deterministic algorithm that, for every $n$-node network $G=(V,E)\in \mathcal{G}$ and every labeling $\ell:V\to 2^\Lambda$, marks all vertices $v\in V$ such that $(G,\ell)\models \varphi(v)$, in $\cO(\log n)$ rounds in the \CONGEST\ model.
\end{theorem}

Our algorithm runs on labeled graphs \((G,\ell) \in \mathcal G[\Lambda]\). Initially, every node $v\in V$ of an $n$-node labeled graph $(G,\ell)\in \mathcal{G}[\Lambda]$ is only given its identifier $\id(v)$ on $\mathcal{O}(\log n)$ bits, which is unique in the graph, and its labels \(\ell(u)\in 2^\Lambda\) (in addition to a polynomial upper bound on the number $n$ of vertices, as specified in the \CONGEST\/ model).
At the end of the execution of our algorithm, every node $v$ outputs \(\top\) if \((G,\ell) \models \varphi(v)\), and \(\bot\) otherwise.

We prove Theorem~\ref{thm:thmFOlocal}  by adapting the (centralized) quantifier elimination technique of~\cite{PilipczukST18} (see also \cite{DvorakKT13,GroheK09}) to the distributed setting. Let $(G,\ell)$ be a labeled graph with $G \in \mathcal{G}$ and $\ell:V\to 2^\Lambda$, on which we aim at checking a local formula $\varphi(x)$ at every node. Quantifier elimination proceeds by induction on the depth of the formula. The formula considered at any stage of the induction may thus have several free variables~$\overline{x}$ with $x_1=x$. We therefore describe a distributed algorithm that, in $\cO(\log n)$ rounds, transforms both the formula $\varphi(\overline{x})$ and the node-labeling  $\ell$ into a new formula $\widehat{\varphi}(\overline{x})$ and a new node labeling $\widehat{\ell}$ such that:
\begin{itemize}
    \item $\widehat{\varphi}(\overline{x})$ is quantifier-free, but uses new unary and binary predicates, 
    
    \item $\widehat{\ell}$ uses a new set of labels $\widehat{\Lambda}\supseteq \Lambda$ for the nodes of~$G$, using additional labels involved in the new unary and binary predicates, and
    
    \item the transformation preserves locality, and, for any tuple $\overline{v}$ of $|\overline{x}|$ vertices of~$G$, we have: $$(G,\ell) \models \varphi(\overline{v}) \iff (G,\widehat{\ell}) \models  \widehat{\varphi}(\overline{v}).$$  
\end{itemize}
Note that, while the transformation preserves locality, it might be the case that \(\varphi\) is \(r\)-local for some~$r$, where as \(\widehat\varphi\) is \(r'\)-local for some \(r'\geq r\). Quantifiers are eliminated one by one, and the main difficulty consists in the elimination of the existential quantifiers. As in~\cite{PilipczukST18}, the proof works in three steps. 

\begin{itemize}
    \item First, the transformation from $(\varphi, \ell)$ to $(\widehat{\varphi},\widehat{\ell})$ is performed on rooted forests of constant depth (cf. Subsection~\ref{ss:local_forests}). This is the most technical part of the quantifier elimination process. 

    \item Second, the quantifier elimination is performed on graphs of bounded treedepth (cf. Subsection~\ref{ss:local_treedepth}), by reducing the analysis of graphs of bounded treedepth to the analysis of rooted forests. This is where the set of labels is enriched. 

    \item Third, the analysis of graphs of bounded expansion is reduced to the analysis of graphs of bounded treedepth thanks to a $(p,f(p))$-treedepth coloring (cf. Subsection~\ref{ss:local_bexp}), for an appropriate~$p$. This is the steps for which the preservation of the locality property needs particular care. 
\end{itemize}

We emphasis that, at each step of the transformation of the original formula~$\varphi$, and the original labeled graph $(G,\ell)$, the computations must be efficiently performed in the \CONGEST\ model. Typically, one needs to make sure that adding new labels to the graph can be done efficiently in \CONGEST. The rooted forests involved in Subsection~\ref{ss:local_forests}, as well as the graphs of bounded treedepth involved in Subsection~\ref{ss:local_treedepth}, are subgraphs of the communication graph~$G$, and thus all communications passing through the edges of the former can be implemented along the edges of the latter.

%============================================
\subsection{Rooted Forests of Bounded Depth}
\label{ss:local_forests}
%============================================

In this section we adapt the quantifier elimination procedure from \cite{PilipczukST18} for rooted forests of constant depth (i.e., every tree in the forest is rooted and has constant depth) to the case of local formulas, and show how to implement this procedure in \CONGEST.
The basic idea for eliminating the quantifiers is to encode information
about the satisfiability of the formula in node-labels. This approach does increase the size of the
label set. Nevertheless, by taking advantage of the simple structure of FO formulas on forests of constant
depth, it is possible to bound the number of labels by a constant. This constant depends solely on the size of the formula, and on the depth of the trees.

As the locality property focuses on distances between nodes, the diameter of each tree plays
an important role. Therefore, instead of considering forests with depth at most \(d\) for some~$d$,
we consider forests in which each tree has diameter at most \(d-1\) for some~$d$.
For every \(d \in \N\), let us denote by $\mathcal{F}_d$ the set of forests in which each tree has diameter at most~\(d-1\).
Note that, in particular, all forests of \(\mathcal F_d\) have depth at most \(d\).
For a set of labels $\Lambda$, let us denote by $\mathcal{F}_d[\Lambda]$ the set of all $\Lambda$-labeled
graphs $(G,\ell)$ such that $G\in \mathcal{F}_d$.

% - - - - - - - - - - - - - - - - - - - - - - - - - - - - 
\subsubsection{First-Order Formulas on Forests of Bounded Depth}\label{sec:forests}
% - - - - - - - - - - - - - - - - - - - - - - - - - - - - 

Let $\Lambda$ be a (finite) set of labels, and let $d$ be a positive integer. It is convenient to consider a specific way of encoding formulas, suitable for $\mathcal{F}_d[\Lambda]$. For each $i\in \{-1,0,1,\dots, d-1\}$, we define the predicate $\lca_i(x,y)$ (for least common ancestor) as the predicate with free variables $x$ and $y$ that holds if the path $P_x$ from $x$ to the root of the tree containing $x$, and the path $P_y$ from $y$ to the root of the tree containing $y$ share exactly $i+1$ nodes. Moreover, we set $\lca_{-1}(x,y)$ to be true whenever $x$ and $y$ are in different trees. Note that $\lca_i(x,x)$ holds if and only if $x$ is a node at depth~$i$ in its tree. In particular, $\lca_0(x,x)$ holds if and only if $x$ is a root. The predicates $\lca_i$ are called \emph{$\lca$-predicates}. As opposed to the label predicates, for which it may be the case that $\lab_\lambda(x)$ and $\lab_{\lambda'}(x)$ both hold for two different labels $\lambda\neq\lambda'$, we have that, for every $x$ and~$y$, $\lca_i(x,y)$ holds for exactly one index $i\in \{-1,0,1,\dots, d-1\}$. 

A formula $\varphi\in FO[\Lambda]$ is said to be $\lca$-expressed if it contains solely predicates of the form $\lca_i(x,y)$ ($\lca$ predicates), and $\lab_\lambda(x)$ (label predicates). In other words, in a $\lca$-expressed formula, there are no predicates of type $\adj(x,y)$ or $x=y$. The following is a direct consequence of the definitions. 

\begin{lemma}\label{rem:fromFOtolca}
Every formula $\varphi\in \FO[\Lambda]$ to be evaluated in $\mathcal{F}_d[\Lambda]$ can be $\lca$-expressed by replacing every equality predicate $x=y$ by 
$$
\bigvee_{i\in[0,d-1]} \Big (\lca_i(x,x) \land \lca_i(x,y) \land \lca_i(y,y)\Big),
$$
and every adjacency predicate $\adj(x,y)$ by 
$$
\bigvee_{i\in [0,d-2]}
\Big(
\lca_i(x,y) 
\land 
\big( 
(\lca_i(x,x) \land \lca_{i+1}(y,y)) 
    \lor 
(\lca_{i+1}(x,x) \land \lca_i(y,y)) 
\big) 
\Big). 
$$ 
\end{lemma}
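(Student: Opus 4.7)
The plan is to show that each of the two proposed substitutions is, in every labeled forest $(F,\ell)\in\mathcal{F}_d[\Lambda]$, pointwise equivalent to the atom it replaces; the lemma will then follow by the standard structural induction that replaces atoms by logically equivalent subformulas throughout $\varphi$. After performing both substitutions, the only atoms remaining are $\lca_i$-predicates and $\lab_\lambda$-predicates, so the resulting formula is $\lca$-expressed by definition.

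For the equality rewrite, the forward direction is immediate: if $x=y$, set $i$ to be their common depth (which lies in $[0,d-1]$ since trees in $\mathcal{F}_d$ have depth at most $d-1$); then $\lca_i(x,x)$, $\lca_i(y,y)$, and $\lca_i(x,y)$ all hold directly from the definitions, the last one because the two root-paths coincide entirely. Conversely, if the disjunction holds for some~$i$, then both $x$ and $y$ sit at depth~$i$ in their (necessarily common) tree and their least common ancestor also lies at depth~$i$; but the only ancestor of a node $v$ at depth $\mathrm{depth}(v)$ is $v$ itself, so $x$ equals the LCA, symmetrically $y$ equals the LCA, and hence $x=y$. Vertices in distinct trees satisfy only $\lca_{-1}$, which automatically falsifies the right-hand side for every $i\in[0,d-1]$, consistent with $x\neq y$.

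For the adjacency rewrite, I use the structural fact that $\adj(x,y)$ holds in a forest iff one of $x,y$ is the parent of the other, i.e.\ their depths differ by exactly~$1$ and the shallower vertex is their LCA. The two sub-disjuncts correspond to the two orientations ($\mathrm{depth}(x)=i$ with $\mathrm{depth}(y)=i+1$, or the reverse); in each case, conjoining with $\lca_i(x,y)$ forces the LCA to lie at the depth of the shallower node, which by the same ``ancestor-at-own-depth-is-the-node-itself'' argument identifies the LCA with that node, placing the shallower node on the root-path of the deeper one and thus making it its parent. The index range $i\in[0,d-2]$ is exactly what guarantees that $i+1\leq d-1$ remains a valid depth, so all adjacent pairs are covered.

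No real obstacle is expected: the argument is essentially an unpacking of the definitions of $\lca_i$, and the only mild care required is bookkeeping the index ranges $[0,d-1]$ and $[0,d-2]$, and handling the case of vertices in different trees (which is automatic, since it corresponds to the excluded value $i=-1$). Once the two pointwise equivalences are in hand, the lemma is a one-line induction.
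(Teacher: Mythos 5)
Your proof is correct and takes the same route the paper implicitly intends: the paper omits the proof, asserting it to be a "direct consequence of the definitions," and your argument is precisely the expected unpacking of those definitions (forward/backward pointwise equivalence of each atom with its replacement, then substitution of equivalent subformulas). The only stylistic nitpick is your aside that trees in $\mathcal{F}_d$ have depth at most $d-1$; the paper's own prose says "depth at most $d$", but since $\lca_i$ is indexed by $i\in\{-1,0,\dots,d-1\}$, your reading is the one consistent with the $\lca$ definitions, and it does not affect the argument.
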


Given a fixed set $\Lambda$ of labels, a formula $\varphi \in \FO[\Lambda]$ is called \emph{$\lca$-reduced} if (1)~it is $\lca$-expressed, and (2)~it is quantifier-free. In other words, an $\lca$-reduced formula is a Boolean combination of label and $\lca$ predicates on free variables. 

\begin{definition}\label{def:types}
  Let $\overline{x} = (x_1, \dots, x_k)$ be a $k$-tuple of variables. For any two functions $\gamma: [k]\rightarrow 2^\Lambda$, and $\delta:[k]\times[k] \rightarrow [-1,d-1]$, let us define the  formula 
$$
\type_{\gamma, \delta}(\overline{x}) \in \FO[\Lambda]
$$
as the $\lca$-reduced formula defined by the conjunction of the following two predicates:
$$
\mbox{\rm Predicate 1: }
\forall i \in \{1,\dots, k\}, \; \bigwedge_{\lambda \in \gamma(i)} \lab_\lambda(x_i) \wedge \bigwedge_{\lambda \in \Lambda \smallsetminus \gamma(i)} \neg  \lab_\lambda(x_i),
$$
and
$$
\mbox{\rm Predicate 2: }
\forall (i,j) \in \{1,\dots, k\}^2, \; \lca_{\delta(i,j)}(x_i,x_j).
$$
We call such predicate $\lca$-type, and we denote by $\Type(k,d,\Lambda)$ the set of all $\lca$-type predicates with $k$ variables. 
  \end{definition}

Observe that the size of $\Type(k,d,\Lambda)$ is at most $2^{k|\Lambda|} (d+1)^{k^2}$, and that it can be computed from $k,d$ and~$\Lambda$. Note also that, given $\gamma$ and $\delta$, it may be the case that  the formula $\type_{\gamma, \delta}(\overline{x})$ is always false. This is typically the case when the function $\delta$ yields a conjunction of incompatible $\lca$ predicates (e.g., $\delta(i,j) = 5$, $\delta(i,i)=2$, and $\delta(j,j) = 1$, or merely any function $\delta$ such that $\delta(i,j) \neq \delta(j,i)$). In such cases, i.e., when $\type_{\gamma, \delta}(\overline{x})$ is always false, we say that the formula is \emph{trivial}. 

For any non-trivial $\lca$-type formulas $\varphi(\overline{x}) = \type_{\gamma,\delta}(\overline{x})$, we have that $\delta$ and $\gamma$ induce a labeled forest $(F_\varphi, \ell_\varphi) \in \mathcal{F}_d[\Lambda]$, as follows. Each $i\in [k]$ is identified with a node $u_i$ in $F_\varphi$, and each leaf of $F_\varphi$ belongs to the $k$-tuple $\overline{u}_\varphi = (u_1, \dots, u_k)$. Furthermore, for every $(i,j)\in [k]\times [k]$, $\lca_{\delta(i,j)}(u_i, u_j)$ holds in $F_\varphi$. Finally, for every $i\in [k]$, $\ell_\varphi(u_i) = \gamma(i)$, and $\ell_\varphi(v) = \emptyset$ whenever $v \notin \overline{u}_\varphi$.

For any labeled forest $(F,\ell)$ with $F=(V,E)$, any integer~$k$, and any set of vertices  $\overline{v}\in V^k$, let us denote by $F[\overline{v}]$ the subgraph of $F$ induced by all the nodes in $\overline{v}$, and their ancestors.

\begin{lemma}\label{prop:isomorphism}
 Let $(F,\ell) \in \mathcal{F}_d[\Lambda]$ with $F=(V,E)$, and let $\varphi(\overline{x})=\type_{\gamma,\delta}(\overline{x})$  be a non-trivial $\lca$-type formula with $k$ free variables. For every $\overline{v}=(v_1,\dots,v_k) \in V^k$, $(F,\ell) \models \varphi(\overline{v})$ if and only if 
 there exists a graph isomorphism $f$ between $F[\overline{v}]$ and $F_\varphi$ such that, for every $i\in[k]$,  $\ell_\varphi(f(v_i)) = \ell(v_i)$.   
\end{lemma}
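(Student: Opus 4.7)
The plan is to prove both directions by explicit construction, leveraging that both $F[\overline v]$ and $F_\varphi$ are exactly the union of their root-to-distinguished-vertex paths, so each is pinned down structurally by the pairwise $\lca$-depths and the labels $\gamma(i)$.

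For the forward direction, assume $(F,\ell)\models\varphi(\overline v)$. For each $i$, Predicate~2 yields $\lca_{\delta(i,i)}(v_i,v_i)$, so let $P_i=(a_i^0,a_i^1,\dots,a_i^{\delta(i,i)})$ be the root-to-$v_i$ path in $F$ (with $a_i^{\delta(i,i)}=v_i$), and let $P_i^{\ast}=(b_i^0,\dots,b_i^{\delta(i,i)})$ be the analogous path in $F_\varphi$ (with $b_i^{\delta(i,i)}=u_i$). Since every node of $F[\overline v]$ is either some $v_i$ or an ancestor of one, $F[\overline v]=\bigcup_i P_i$; similarly $F_\varphi=\bigcup_i P_i^{\ast}$ by the construction. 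I would then set $f(a_i^m)=b_i^m$.

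The key step is the well-definedness of $f$: if $a_i^m = a_j^{m'}$, they coincide as a node and thus share a depth, so $m = m'$; moreover this node is a common ancestor of $v_i$ and $v_j$, hence lies on the root-to-$\lca(v_i,v_j)$ segment, yielding $m \leq \delta(i,j)$ by Predicate~2. By the construction of $F_\varphi$, $\lca(u_i,u_j)$ sits at depth $\delta(i,j)$, so $u_i$ and $u_j$ share the ancestor at every depth $m \leq \delta(i,j)$, giving $b_i^m = b_j^m$. Injectivity follows by the symmetric argument inside $F_\varphi$, surjectivity is immediate from $F_\varphi = \bigcup_i P_i^{\ast}$, and adjacency is preserved since edges on both sides are precisely parent-child pairs along the respective paths. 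Label preservation on $\overline v$ is direct: $f(v_i) = u_i$ and $\ell_\varphi(u_i) = \gamma(i) = \ell(v_i)$ by Predicate~1.

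For the backward direction, I read the isomorphism as respecting the ordered distinguished tuples, i.e., $f(v_i) = u_i$ for every $i$. Then Predicate~1 follows immediately from $\ell(v_i) = \ell_\varphi(f(v_i)) = \ell_\varphi(u_i) = \gamma(i)$. Since a rooted-forest isomorphism preserves depths and least common ancestors, we obtain $\depth(\lca_F(v_i,v_j)) = \depth(\lca_{F_\varphi}(u_i,u_j)) = \delta(i,j)$ by the construction of $F_\varphi$, yielding Predicate~2.

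The main obstacle I expect is the well-definedness argument in the forward direction, but this collapses to the single structural observation that any common ancestor of two distinguished vertices $v_i,v_j$ must lie on the root-to-$\lca$ path, hence sits at depth at most $\delta(i,j)$, thereby making the identification on the $F_\varphi$ side consistent across different paths.
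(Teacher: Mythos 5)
Your proof is correct and tracks the paper's argument: construct $f$ by sending $v_i \mapsto u_i$ and extending along root-to-leaf paths, then verify Predicates~1 and~2 in the reverse direction. You are more careful than the paper's terse ``by construction $f$ is an isomorphism'': the observation that a collision $a_i^m = a_j^{m'}$ forces $m = m' \le \delta(i,j)$, which in turn forces $b_i^m = b_j^{m'}$ because $\lca(u_i,u_j)$ sits at depth $\delta(i,j)$ in $F_\varphi$, is exactly the well-definedness check the paper leaves implicit, and your injectivity and surjectivity remarks close the argument cleanly.

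Your backward direction rests on the reading that the isomorphism must send $v_i$ to $u_i$, not merely preserve labels at the $v_i$'s, and you are right to make this explicit: the lemma's literal hypothesis $\ell_\varphi(f(v_i)) = \ell(v_i)$ alone is not enough. For instance, take $\gamma(1)=\gamma(2)=\emptyset$, $\delta(1,1)=2$, $\delta(2,2)=1$, $\delta(1,2)=\delta(2,1)=1$, so that $F_\varphi$ is a path $r \to u_2 \to u_1$ with all labels empty. If $F$ is a path $r' \to w \to v$ with all labels empty and $\overline v = (w,v)$, then $f(r')=r$, $f(w)=u_2$, $f(v)=u_1$ is a label-preserving rooted-forest isomorphism, yet $(F,\ell) \not\models \type_{\gamma,\delta}(w,v)$ because $w$ has depth $1 \ne \delta(1,1)=2$. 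The paper's backward direction ``directly infers'' the two predicates from the stated hypothesis, which silently uses $f(v_i)=u_i$ in the same way; the forward construction confirms that this is the intended notion of isomorphism of pointed structures, and your explicit flagging of that reading is a genuine gain in rigor over the written proof.
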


\begin{proof}
Recall that $\overline{u}_\varphi = (u_1,\dots, u_k)$ includes the leaves of $F_\varphi$. Let  $\overline{v} = (v_1, \dots, v_k)$ be a set of $k$ vertices of $F$. Suppose first that $(F,\ell) \models \varphi(\overline{v})$. Then, for every $(i,j)\in [k]\times [k]$, 
\[
\gamma(i) = \ell(v_i), \text{ and } \lca(v_i, v_j) = \delta(i,j).
\]
Let $f$ be the function that, for every $i\in[k]$, $f(v_i)=u_i$, and, for every $r\in \{1,\dots, \delta(i,i)-1\}$, $f$ maps the ancestor of $v_i$ in $F$ at distance~$r$ from $v_i$ to the corresponding ancestor of $u_i$ in~$F_\varphi$. By construction $f$ is an isomorphism between $F[\overline{v}]$ and $F_\varphi$. Moreover, for every $i\in[k]$, we do have $\ell_\varphi(f(v_i)) = \ell_\varphi(u_i) = \gamma(i) = \ell(v_i)$, as claimed. 

Conversely, let us assume that there exists a graph isomorphism $f$ between $F[\overline{v}]$ and $F_\varphi$ such that, for every $i\in[k]$,  $\ell_\varphi(f(v_i)) = \ell(v_i)$.
Then we directly infer that, for every  $(i,j)\in[k]\times [k]$, $\gamma(i)= \ell(v_i)$ and $\lca(v_i,v_j)=\delta(i,j)$. This implies that $(F,\ell)\models \varphi(\overline{v})$, as claimed.
\end{proof}

The following statement establishes that every $\lca$-reduced formula can be expressed as a disjunction of $\lca$-types.

\begin{lemma}\label{prop:basicnormal}
For every $\lca$-reduced formula  $\varphi(\overline{x})\in \FO[\Lambda]$, there exists a set $I\subseteq \Type(|\overline{x}|,d, \Lambda)$, and a formula 
$$\varphi'(\overline{x}) = \bigvee_{\psi \in I} \psi(\overline{x})$$ such that, for every   $(F,\ell)\in \mathcal{F}_d[\Lambda]$, $\true(\varphi,F,\ell) = \true(\varphi',F,\ell)$.
Moreover, $\varphi'(\overline{x})$ can be computed given only $d$, $\Lambda$, and~$\varphi$. The formula $\varphi'(\overline{x})$ is called the \emph{basic normal form} of $\varphi$.
\end{lemma}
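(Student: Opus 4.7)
The plan is to exploit the fact that an $\lca$-reduced formula is, by definition, a Boolean combination of atomic predicates of only two kinds on the free variables $\overline{x}=(x_1,\dots,x_k)$: label predicates $\lab_\lambda(x_i)$ for $\lambda\in\Lambda$ and $i\in[k]$, and $\lca$-predicates $\lca_j(x_i,x_{i'})$ for $j\in\{-1,0,\dots,d-1\}$ and $(i,i')\in[k]^2$. An $\lca$-type $\type_{\gamma,\delta}(\overline{x})$ is precisely a maximally consistent assignment of truth values to all such atoms, so each non-trivial type should correspond to a ``complete description'' of a tuple up to what a quantifier-free $\lca$-reduced formula can see.

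The first key observation I would prove is a uniqueness statement: for every $(F,\ell)\in\mathcal{F}_d[\Lambda]$ and every $\overline{v}\in V(F)^k$, there is a unique $(\gamma_{\overline{v}},\delta_{\overline{v}})$ such that $(F,\ell)\models \type_{\gamma_{\overline{v}},\delta_{\overline{v}}}(\overline{v})$, and this type is necessarily non-trivial. Set $\gamma_{\overline{v}}(i):=\ell(v_i)$, and define $\delta_{\overline{v}}(i,i')$ as the index~$j$ such that $\lca_j(v_i,v_{i'})$ holds in $F$. Existence and uniqueness of~$j$ come from the fact that, in a forest of $\mathcal{F}_d$, exactly one $\lca_j$ predicate holds on any pair of vertices (either they lie in distinct trees, giving $j=-1$, or they share a well-defined number of common ancestors, giving $j\in\{0,\dots,d-1\}$).

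The second step is to observe that on any tuple $\overline{v}$ of type $(\gamma_{\overline{v}},\delta_{\overline{v}})$, the truth value of each atomic subformula of $\varphi$ is entirely determined by $(\gamma_{\overline{v}},\delta_{\overline{v}})$: $\lab_\lambda(x_i)$ reduces to $\lambda\in\gamma_{\overline{v}}(i)$, and $\lca_j(x_i,x_{i'})$ reduces to $j=\delta_{\overline{v}}(i,i')$. Since $\varphi$ is a Boolean combination of these atoms, the truth value $\varphi(\overline{v})$ depends only on the type of $\overline{v}$. This allows me to define
\[
I \;:=\; \bigl\{\, \type_{\gamma,\delta}\in\Type(k,d,\Lambda) \;:\; \text{evaluating } \varphi \text{ under } (\gamma,\delta) \text{ yields } \top\,\bigr\},
\]
which is computable purely from $d$, $\Lambda$ and~$\varphi$ by enumerating the finite set $\Type(k,d,\Lambda)$ (of size at most $2^{k|\Lambda|}(d+1)^{k^2}$) and performing the symbolic evaluation for each element; trivial types can be discarded since they contribute no models.

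Finally, take $\varphi'(\overline{x}):=\bigvee_{\psi\in I}\psi(\overline{x})$. If $(F,\ell)\models\varphi(\overline{v})$, then the unique type $(\gamma_{\overline{v}},\delta_{\overline{v}})$ of $\overline{v}$ lies in $I$ by definition, and thus $(F,\ell)\models\varphi'(\overline{v})$. Conversely, if $(F,\ell)\models\varphi'(\overline{v})$, then $(F,\ell)\models\psi(\overline{v})$ for some $\psi\in I$; by uniqueness, $\psi=\type_{\gamma_{\overline{v}},\delta_{\overline{v}}}$, and by construction of~$I$ the formula $\varphi$ evaluates to $\top$ under $(\gamma_{\overline{v}},\delta_{\overline{v}})$, hence $(F,\ell)\models\varphi(\overline{v})$. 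I do not expect a serious obstacle here; the only point requiring care is the uniqueness of $j$ in $\lca_j(u,v)$, which relies on $F\in\mathcal{F}_d$, and the careful bookkeeping that distinguishes trivial from non-trivial types so that the candidate $I$ only retains the types that actually have models.
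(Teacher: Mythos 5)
Your proof is correct and follows the same route as the paper, centered on the uniqueness of $\lca$-types and the observation that the truth value of a quantifier-free $\lca$-reduced formula is determined solely by the $\lca$-type of the tuple. In fact, your definition of $I$ by symbolic evaluation of $\varphi$ over types makes the ``Moreover, computable from $d$, $\Lambda$, $\varphi$'' part explicit (the paper defines $I$ semantically by quantifying over all forests and satisfying tuples), and you also spell out the direction $\true(\varphi',F,\ell)\subseteq\true(\varphi,F,\ell)$ that the paper subsumes under ``by construction''.
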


\begin{proof}
Let  $\varphi(\overline{x})$ be an $\lca$-reduced formula with $k$ free variables, and let $(F,\ell) \in \mathcal{F}_d[\Lambda]$. Let  $\overline{v} = (v_1, \dots, v_k)$ be a tuple of nodes such that $(F,\ell) \models \varphi(\overline{v})$. Let us define the function 
$$\delta_{\overline{v}}:[k]\times[k]\rightarrow [-1,d-1]$$ 
by $\delta_{\overline{v}}(i,j) = h$ if the path from $v_i$ to the root of the tree  of $F[\overline{v}]$ containing $v_i$, and the path from $v_j$ to the root of the tree of $F[\overline{v}]$ containing $v_j$ share exactly $h+1$ nodes. We also define the function $\gamma_{\overline{v}}$ by $$\gamma_{\overline{v}}(i)=\ell(v_i)$$ for every $i\in[k]$. We obtain that 
$$(F,\ell) \models \type_{\gamma_{\overline{v}},\delta_{\overline{v}}}(\overline{v}).$$
Now let $I\subseteq \Type(k,d,\Lambda)$ be the set of all possible $\lca$-types $\type_{\gamma_{\overline{v}},\delta_{\overline{v}}}$ that can be defined from a labeled forest $(F,\ell)\in \mathcal{F}_d$, and a tuple of vertives $\overline{v}\in \true(\varphi,F,\ell)$.
By construction, the function $$\varphi'(\overline{x}) = \bigvee_{\psi \in I} \psi(\overline{x})$$ satisfies $\true(\varphi,F,\ell) = \true(\varphi',F,\ell)$ for every $(F,\ell)\in \mathcal{F}_d[\Lambda]$.
\end{proof}

\paragraph{Example.} Let us consider the following formula using labels in $\{0,1\}$, hence the two predicates $\lab_0$ and~$\lab_1$: 
$$
\varphi(x_1, x_2,x_3) = \adj(x_1,x_2) \wedge \adj(x_1,x_3) \wedge \lab_1(x_1) \wedge \lab_0(x_2))\wedge \lab_0(x_3),$$ which holds when $x_1$ is adjacent to~$x_2$, $x_1$ is labeled~$1$, and $x_2$ and $x_3$ are both labeled~$0$.
This formula can be rewritten as the $\lca$-reduced formula
$$
\tilde{\varphi}(x_1, x_2, x_3) = \left( \bigvee_{i\in [0,d-2]}  \xi^1_i(x_1,x_2,x_3)\right) \vee \left(\bigvee_{i\in [0,d-3]} \xi_i^2(x_1,x_2,x_3)\right) \vee \left(\bigvee_{i\in [0,d-3]} \xi_i^3(x_1,x_2,x_3) \right)
$$
where
\begin{align*}
     \xi^1_i(x_1,x_2,x_3) = \bigg(& \lca_i(x_1,x_1) \wedge \lca_{i+1}(x_2,x_2) \wedge \lca_{i+1}(x_3,x_3) \wedge\\& \lca_i(x_1, x_2)  \wedge \lca_i(x_1,x_3) \wedge \lca_i(x_2,x_3) \wedge\\&
     \lab_1(x_1) \wedge \lab_0(x_2) \wedge  \lab_0(x_3)
     \bigg),
\end{align*}
\begin{align*}
     \xi^2_i(x_1,x_2,x_3) = \bigg(& \lca_{i+1}(x_1,x_1) \wedge \lca_i(x_2,x_2) \wedge \lca_{i+2}(x_3,x_3) \wedge\\& \lca_i(x_1, x_2) \wedge \lca_{i+1}(x_1,x_3)  \wedge \lca_i(x_2,x_3) \wedge\\&
     \lab_1(x_1) \wedge \lab_0(x_2) \wedge  \lab_0(x_3)
     \bigg),
\end{align*}
and
\begin{align*}
     \xi^3_i(x_1,x_2,x_3) = \bigg(& \lca_{i+1}(x_1,x_1)\wedge \lca_{i+2}(x_2,x_2) \wedge \lca_i(x_3,x_3) \wedge\\& \lca_{i+1}(x_1, x_2) \wedge \lca_i(x_1,x_3) \wedge \lca_i(x_2,x_3)\wedge\\&
     \lab_1(x_1) \wedge \lab_0(x_2) \wedge  \lab_0(x_3)
     \bigg).
\end{align*}
Observe that $\tilde{\varphi}$ is the basic normal form of $\varphi$. Indeed, for every $i\in [0,d-2]$, $\xi_i^1$ is an $\lca$-type formula, and, for every $i\in [0,d-3]$, $\xi_i^2$ and $\xi_i^3$ are $\lca$-type formulas. For instance, $\xi_i^1 = \type_{\gamma, \delta}$ with 
$$
\gamma(1) = \{1\},~ \gamma(2) = \{0\},~ \gamma(3) =\{0\},
$$ 
and with 
$$
\delta(1,1) = i,~ \delta(2,2) =i+1,~ \delta(3,3) = i+1,~ \delta(1,2) = i,~ \delta(1,3) = i \text{ and } \delta(2,3)=i.
$$ 
\Cref{fig:example3} depicts $F_{\xi^1_i}$,  $F_{\xi^2_i}$, and $F_{\xi^3_i}$. If a labeled forest $(F,\ell)$ with label set $\{0,1\}$ satisfies $\varphi$, then $F$ contains a forest isomorphic to $F_{\xi^1_i}$,  $F_{\xi^2_i}$ or $F_{\xi^3_i}$, for some~$i\in [0,d-2]$.

\begin{figure}[ht]
    \centering
    \scalebox{0.75}{

\tikzset{every picture/.style={line width=0.75pt}} %set default line width to 0.75pt        

\begin{tikzpicture}[x=0.75pt,y=0.75pt,yscale=-1,xscale=1]
%uncomment if require: \path (0,430); %set diagram left start at 0, and has height of 430

%Straight Lines [id:da21458596063029667] 
\draw [line width=1.5]    (110,150) -- (110,230) ;
\draw [shift={(110,230)}, rotate = 270] [color={rgb, 255:red, 0; green, 0; blue, 0 }  ][line width=1.5]    (0,6.71) -- (0,-6.71)   ;
\draw [shift={(110,150)}, rotate = 270] [color={rgb, 255:red, 0; green, 0; blue, 0 }  ][line width=1.5]    (0,6.71) -- (0,-6.71)   ;
%Straight Lines [id:da3574825998663079] 
\draw [line width=1.5]    (300,150) -- (300,230) ;
\draw [shift={(300,230)}, rotate = 270] [color={rgb, 255:red, 0; green, 0; blue, 0 }  ][line width=1.5]    (0,6.71) -- (0,-6.71)   ;
\draw [shift={(300,150)}, rotate = 270] [color={rgb, 255:red, 0; green, 0; blue, 0 }  ][line width=1.5]    (0,6.71) -- (0,-6.71)   ;
%Straight Lines [id:da2519771944946322] 
\draw [line width=1.5]    (480,150) -- (480,230) ;
\draw [shift={(480,230)}, rotate = 270] [color={rgb, 255:red, 0; green, 0; blue, 0 }  ][line width=1.5]    (0,6.71) -- (0,-6.71)   ;
\draw [shift={(480,150)}, rotate = 270] [color={rgb, 255:red, 0; green, 0; blue, 0 }  ][line width=1.5]    (0,6.71) -- (0,-6.71)   ;

% Text Node
\draw  [line width=1.5]   (140, 230.5) circle [x radius= 15.91, y radius= 15.91]   ;
\draw (131,222.4) node [anchor=north west][inner sep=0.75pt]    {$x_{1}$};
% Text Node
\draw  [line width=1.5]   (108, 287.5) circle [x radius= 15.91, y radius= 15.91]   ;
\draw (99,279.4) node [anchor=north west][inner sep=0.75pt]    {$x_{2}$};
% Text Node
\draw  [line width=1.5]   (180, 289.5) circle [x radius= 15.91, y radius= 15.91]   ;
\draw (171,281.4) node [anchor=north west][inner sep=0.75pt]    {$x_{3}$};
% Text Node
\draw  [line width=1.5]   (142.5, 150) circle [x radius= 13.9, y radius= 13.9]   ;
\draw (136,142.4) node [anchor=north west][inner sep=0.75pt]    {$ $};
% Text Node
\draw (91,182.4) node [anchor=north west][inner sep=0.75pt]    {$i$};
% Text Node
\draw  [line width=1.5]   (330, 231.5) circle [x radius= 15.91, y radius= 15.91]   ;
\draw (321,223.4) node [anchor=north west][inner sep=0.75pt]    {$x_{2}$};
% Text Node
\draw  [line width=1.5]   (330, 281.5) circle [x radius= 15.91, y radius= 15.91]   ;
\draw (321,273.4) node [anchor=north west][inner sep=0.75pt]    {$x_{1}$};
% Text Node
\draw  [line width=1.5]   (330, 332.5) circle [x radius= 15.91, y radius= 15.91]   ;
\draw (321,324.4) node [anchor=north west][inner sep=0.75pt]    {$x_{3}$};
% Text Node
\draw  [line width=1.5]   (329.5, 150) circle [x radius= 13.9, y radius= 13.9]   ;
\draw (323,142.4) node [anchor=north west][inner sep=0.75pt]    {$ $};
% Text Node
\draw (281,182.4) node [anchor=north west][inner sep=0.75pt]    {$i$};
% Text Node
\draw  [line width=1.5]   (510, 228.5) circle [x radius= 15.91, y radius= 15.91]   ;
\draw (501,220.4) node [anchor=north west][inner sep=0.75pt]    {$x_{3}$};
% Text Node
\draw  [line width=1.5]   (510, 278.5) circle [x radius= 15.91, y radius= 15.91]   ;
\draw (501,270.4) node [anchor=north west][inner sep=0.75pt]    {$x_{1}$};
% Text Node
\draw  [line width=1.5]   (510, 329.5) circle [x radius= 15.91, y radius= 15.91]   ;
\draw (501,321.4) node [anchor=north west][inner sep=0.75pt]    {$x_{2}$};
% Text Node
\draw  [line width=1.5]   (509.5, 149) circle [x radius= 13.9, y radius= 13.9]   ;
\draw (503,141.4) node [anchor=north west][inner sep=0.75pt]    {$ $};
% Text Node
\draw (461,182.4) node [anchor=north west][inner sep=0.75pt]    {$i$};
% Connection
\draw [line width=1.5]    (115.79,273.62) -- (132.21,244.38) ;
% Connection
\draw [line width=1.5]    (171.07,276.33) -- (148.93,243.67) ;
% Connection
\draw [line width=1.5]    (330,265.59) -- (330,247.41) ;
% Connection
\draw [line width=1.5]    (330,316.59) -- (330,297.41) ;
% Connection
\draw [line width=1.5]  [dash pattern={on 1.69pt off 2.76pt}]  (329.59,163.9) -- (329.9,215.59) ;
% Connection
\draw [line width=1.5]  [dash pattern={on 1.69pt off 2.76pt}]  (142.07,163.89) -- (140.49,214.59) ;
% Connection
\draw [line width=1.5]    (510,262.59) -- (510,244.41) ;
% Connection
\draw [line width=1.5]    (510,313.59) -- (510,294.41) ;
% Connection
\draw [line width=1.5]  [dash pattern={on 1.69pt off 2.76pt}]  (509.59,162.9) -- (509.9,212.59) ;

\end{tikzpicture}}
    
    \caption{$F_{\xi^1_i}$ (left), $F_{\xi_i^2}$ (middle), and  $F_{\xi^3_i}$ (right).}
    \label{fig:example3}
\end{figure}
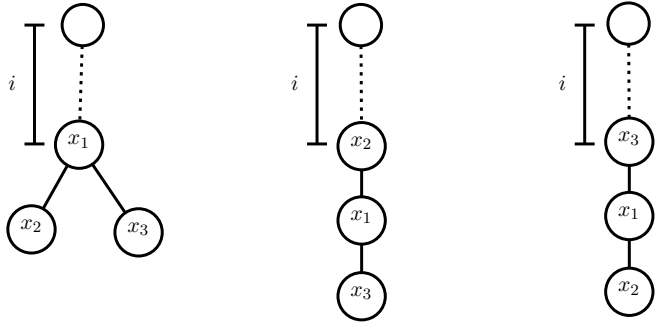

% - - - - - - - - - - - - - - - - - - - - - - - - - - - - 
\subsubsection{Quantifier Elimination of $\lca$-Expressed Formulas}
% - - - - - - - - - - - - - - - - - - - - - - - - - - - - 

In this section, we show that it is possible to eliminate existential quantifiers on $\lca$-expressed formulas,  by augmenting the set of labels.

\begin{definition}\label{def:reducibility}
    Let $\Lambda$ be a finite set of labels, and let $d$ be a positive integer.  A formula $\varphi(\overline{x})\in \FO[\Lambda]$ is said \emph{reducible on $\mathcal{F}_d[\Lambda]$} if there exists a set $\widehat{\Lambda}$, and an $\lca$-reduced formula $\widehat{\varphi}(\overline{x}) \in \FO[\widehat{\Lambda}]$ such that (1)~$\widehat{\Lambda}$ and $\widehat{\varphi}$ depend only on $\Lambda$, $\varphi$, and~$d$, and (2)~for every $(F,\ell)\in \mathcal{F}_d[\Lambda]$, there exists a $\widehat{\Lambda}$-labeling $\widehat{\ell}$ of $F$ such that 
    $$
    \true(\varphi,F,\ell) = \true(\widehat{\varphi},F,\widehat{\ell}).
    $$
\end{definition}

Each element in the triplet $(\widehat{\Lambda},\widehat{\varphi},\widehat{\ell})$ of \Cref{def:reducibility} is called the \emph{reduction on $\mathcal{F}_d[\Lambda]$} of $\Lambda$, $\varphi$, and $\ell$, respectively. We define a similar notion for local formulas.

\begin{definition}\label{def:local-reducibility}
    Let \(\Lambda\) be a finite set of labels, and let \(d\) be a positive integer. A formula
    \(\varphi(\overline x) \in FO[\Lambda]\) is said \emph{\(r\)-local reducible on
    \(\mathcal F_d[\Lambda]\)} if there exists a set $\widehat{\Lambda}$, and an $\lca$-reduced formula $\widehat{\varphi}(\overline{x}) \in \FO[\widehat{\Lambda}]$ such that (1)~$\widehat{\Lambda}$ and $\widehat{\varphi}$ depend only on $\Lambda$, $\varphi$, and~$d$, and (2)~for every $(F,\ell)\in \mathcal{F}_d[\Lambda]$, there exists a $\widehat{\Lambda}$-labeling $\widehat{\ell}$ of $F$ such that 
    $$
    \true(\varphi,F,\ell) = \true(\widehat{\varphi},F,\widehat{\ell})
    = \true(\widehat\varphi\wedge\ball(\overline x,x_1,r), F, \widehat \ell).
    $$
\end{definition}

In Definition~\ref{def:local-reducibility}, one insists on the fact that the reduction $\widehat{\varphi}$ of \(\varphi\) must be \(r\)-local. This is called
a \emph{\(r\)-local reduction}.

\paragraph{Remark.} The fact that both Definitions~\ref{def:reducibility} and~\ref{def:local-reducibility} require $\widehat{\Lambda}$ and $\widehat{\varphi}$ to depend only on $\Lambda$, $\varphi$ and~$d$ is not crucial for deciding (local) formulas, but will be used for the design of our certification scheme later in the paper (the prover gives $\widehat{\Lambda}$ and $\widehat{\varphi}$ to the nodes as constant-size certificates).
\medskip

The following lemma is at the core of quantifier elimination.

\begin{lemma}\label{lem:elimination_forest}
Every formula $\varphi(\overline{x})= \exists \overline{y} \; \zeta(\overline{x},\overline{y})$ in $\FO[\Lambda]$, where $\zeta$ is an $\lca$-reduced formula, is reducible on $\mathcal{F}_d[\Lambda]$.
\end{lemma}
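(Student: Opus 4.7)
My plan is to reduce to $\lca$-types, induct on $|\overline{y}|$, and do the technical work in a single-variable elimination step that introduces finitely many new unary labels encoding subtree-existence information. Since $\zeta$ is $\lca$-reduced, Lemma~\ref{prop:basicnormal} gives a set $I\subseteq \Type(|\overline{x}|+|\overline{y}|,d,\Lambda)$, depending only on $\zeta,d,\Lambda$, such that $\zeta(\overline{x},\overline{y})$ is equivalent to $\bigvee_{\tau\in I}\tau(\overline{x},\overline{y})$ on every forest of $\mathcal{F}_d[\Lambda]$. Pushing the existential quantifier through this disjunction yields $\varphi(\overline{x})\equiv \bigvee_{\tau\in I}\exists\overline{y}\,\tau(\overline{x},\overline{y})$. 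Taking the union of the enriched label sets and the disjunction of the individual $\lca$-reduced formulas, it suffices to prove the lemma for a single $\lca$-type $\tau$.

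\textbf{Induction on $|\overline{y}|$.} The base case $|\overline{y}|=0$ is immediate since $\tau(\overline{x})$ is already $\lca$-reduced. Otherwise, write $\overline{y}=(\overline{y}',y)$. I first eliminate the innermost $\exists y$ from $\tau(\overline{x},\overline{y}',y)$, treating $(\overline{x},\overline{y}')$ as free. This yields an $\lca$-reduced formula $\psi(\overline{x},\overline{y}')$ over an enlarged label set $\Lambda_1\supseteq\Lambda$, together with an extension $\ell_1$ of $\ell$ that depends only on $F$ and $\ell$. Applying Lemma~\ref{prop:basicnormal} to $\psi$ gives it in basic normal form, and the inductive hypothesis applied to $\exists\overline{y}'\,\psi(\overline{x},\overline{y}')$ over $\mathcal{F}_d[\Lambda_1]$ produces the final reduction.

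\textbf{Single-variable elimination (the crux).} Fix an $\lca$-type $\tau$ on $(\overline{z},y)$ with $\overline{z}=(z_1,\dots,z_k)$. The function $\delta$ of $\tau$ completely prescribes the tree skeleton of $\overline{z}\cup\{y\}$ (depths, labels, and every pairwise $\lca$), and thus forces $y$ into exactly one of three slots: (a) $y$ coincides with a prescribed ancestor of some $z_i$ (possibly with some $z_j$ itself), (b) $y$ is a descendant of a common ancestor $a$ of a prescribed sub-tuple of $\overline{z}$ without lying below any of $a$'s children that are themselves ancestors of some $z_i$, or (c) $y$ lies in a tree of $F$ disjoint from all the trees containing $\overline{z}$. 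In each case the required depth, label, and exclusion pattern of $y$ are functions of $\tau$ only. I now enrich $\ell$ with a constant number of new unary labels, depending solely on $\tau,d,\Lambda$: for every vertex $u$ I record (i) the label sets of the ancestors of $u$ at every depth $h\in[0,d-1]$ (making ancestor properties locally readable at $u$), (ii) for every pair $(h,S)\in[0,d-1]\times 2^{\Lambda}$ and every finite ``forbidden-child signature'' $\sigma$, a label $M^\tau_{h,S,\sigma}(u)$ meaning ``the subtree rooted at $u$ contains a vertex at depth $h$ with label $S$ whose path to $u$ avoids every child of $u$ whose subtree matches $\sigma$'', and (iii) for case (c), a family of ``global'' labels assigned uniformly to all vertices of $F$, enumerating, for each tree-profile among the finite list of possibilities, how many trees of $F$ realize that profile (capped at $k+1$). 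All these labels can be computed from $\ell$ alone. Using them, ``$\exists y$ of slot (a), (b), or (c)'' is a Boolean combination of (i) label predicates on $z_1$ and on ancestors of the $z_i$'s (accessed via the tag in~(i)), and (ii) the $\lca$-constraints of $\tau$ on $\overline{z}$; this is $\lca$-reduced as required.

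\textbf{Main obstacle.} The substantial part of the argument is the bookkeeping in the last step: given $\tau$, one must identify the ``slot root'' $a$, list precisely which subtrees of $a$ are forbidden, and rewrite $\exists y$ as a quantifier-free Boolean combination in the $\lca$-reduced language, using only the $\lca$-relationships among $\overline{z}$ (which pin down the ancestors at the relevant depths) and the new labels. The global case~(c) requires care: existence of a suitable tree disjoint from those of $\overline{z}$ is encoded by comparing the number of trees of a given profile in $F$ with the number of $z_i$'s lying in such trees, and one must check that finitely many capped counts suffice because $|\overline{z}|\leq |\tau|$. Throughout, one must verify that the new label set $\widehat{\Lambda}$ and the new formula $\widehat{\varphi}$ are defined uniformly, depending only on $\Lambda,\tau,d$, which is automatic from the definitions above.
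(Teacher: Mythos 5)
Your proposal follows essentially the same route as the paper's proof: reduce to $\lca$-types via Lemma~\ref{prop:basicnormal}, eliminate quantified variables one at a time (re-normalizing between steps), and for a single type $\tau=\type_{\gamma,\delta}$ classify the eliminated variable $y$ into the same three positions relative to the free variables. Your slots (a), (b), (c) match the paper's Cases 1--3 exactly: the paper picks $s$ maximizing $\delta(s,k+1)$, sets $h=\delta(s,k+1)$, $h_y=\delta(k+1,k+1)$, and splits on $h_y=h$ (your (a): $y$ coincides with an ancestor of $x_s$), $h_y>h$ (your (b): $y$ hangs off the pivot $z$ at depth $h$), and $\delta(s,k+1)=-1$ (your (c): $y$ is in a disjoint tree). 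Items (i) and (iii) of your labeling plan also match the paper's devices: pushing ancestor information down to the depth-$h_s$ descendants so it is readable from label predicates on $x_s$, and broadcasting capped global counts for the disjoint-tree case.

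The gap is in slot (b). The label $M^\tau_{h,S,\sigma}(u)$ (``the subtree at $u$ contains a depth-$h$ vertex with label $S$ whose path avoids all children of $u$ matching $\sigma$'') must be computable from $(F,\ell)$ alone, but the set of children that actually need to be avoided --- those whose subtree contains one of the $z_i$'s --- is determined by the variable assignment, not by $(F,\ell)$. Two children of the pivot can have identical intrinsic ``subtree signatures'' while only one of them happens to contain a $z_i$, and the pivot may have unboundedly many children, so one cannot enumerate which are forbidden by any finite intrinsic pattern. The paper avoids this with a counting argument: it marks a child of the pivot $z$ as \emph{good} if its subtree contains a candidate, stores $\min(\kappa(z),k+1)$ (where $\kappa(z)$ is the number of good children) as a label on the depth-$h_s$ descendants of $z$, marks each $x_j$ with $\good$ if it lies in a good child subtree, and then the $\lca$-reduced formula $\alpha(\overline{x})$ compares the stored count against the number of good child subtrees occupied by the $x_j$'s --- which is read off from the $\good$-marks and the pairwise $\lca$-pattern among the $x_j$'s. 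The cap at $k+1$ is justified because the $x_j$'s occupy at most $k$ subtrees. You use precisely this counting device in slot (c), so you have the right tool at hand; your slot (b) needs to be rewritten to use it at the pivot level in place of the ``forbidden-child signature'' idea, which as stated is not well-founded.
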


\begin{proof}
Without loss of generality, let us assume that $|x|=k\geq 1$. (In the case where $\varphi$ has no free-variables, we can create a formula $\varphi(z) = \varphi$ for a dummy variable $z$.) We only deal with the case where  $|y|=1$, i.e., with the elimination of a single quantified variable. Indeed, for the case where $|y|>1$ we can iteratively eliminate the quantifiers one by one, until we obtain a quantifier-free (actually an $\lca$-reduced) formula. 
Using \Cref{rem:fromFOtolca} we can assume without loss of generality that  $\zeta$ is $\lca$-reduced. Moreover,  thanks to  \Cref{prop:basicnormal}, we can also assume that $\zeta(\overline{x}, y)$ is expressed in the basic normal form. That is, there exist $I\subseteq \Type(k+1,d,\Lambda)$ such that $\zeta(\overline{x},y) = \bigvee_{\psi \in I} \psi(\overline{x},y)$. Since the existential quantifier commutes with the disjunction, we get that 
$$
\varphi(\overline{x}) = \bigvee_{\psi \in I} \exists y \; \psi(\overline{x}, y).
$$ 
Let us fix an arbitrary labeled graph $(F,\ell) \in \mathcal{F}_d[\Lambda]$. Let us first assume that each formula $\psi \in I$ is reducible on $\mathcal{F}_d[\Lambda]$ (we show how to do this reduction later in the proof). Under this assumption, let us denote by $\widehat{\Lambda}^\psi$, $\widehat{\psi}$, and $\widehat{\ell}^\psi$ the resulting reductions. We can relabel every set $\Lambda^\psi$, for $\psi \in I$, so that the sets $\widehat{\Lambda}^{\psi} \smallsetminus \Lambda$ are pairwise disjoint. Let us then define:
$$
\widehat{\Lambda} = \bigcup_{\psi \in I} \widehat{\Lambda}^\psi, \;
\widehat{\varphi} = \bigvee_{\psi \in I} \widehat{\psi},
\text{ and, for every node $u\in F$, } 
\; \widehat{\ell}(u) = \bigcup_{\psi \in I} \widehat{\ell}^\psi(u). 
$$
We get that $\widehat{\Lambda}$, $\widehat{\varphi}$, and $\widehat{\ell}$ are reductions on $\mathcal{F}_d[\Lambda]$ of $\Lambda$, $\varphi$, and $\ell$, respectively. Therefore, it just remains to prove that every $\psi \in \Type(k+1,d,\Lambda)$ is reducible. 

Let $\gamma$ and $\delta$ be such that  $\psi(\overline{x},y) = \type_{\gamma, \delta}(\overline{x},y) \in \Type(k+1, d, \Lambda)$, where $y$ is identified with variable numbered $k+1$ in the definitions of $\gamma$ and $\delta$. Let $s\in [k]$ such that $\delta(s,k+1)$ is maximum (this value exists because $\varphi$ has at least one free variable). We define 
$$
h = \delta(s,k+1), \;  h_{s} = \delta(s,s), \text{ and } h_y = \delta(k+1,k+1).
$$
In other words, $x_s$ is at depth $h_s$, $y$ is of depth $h_y$, and the least common ancestor of $x_s$ and $y$ (if exists) is at depth $h$ in $F_\psi$. We consider three cases, depending on the value of $\delta(s,k+1)$, and on the relative values of $h_y$ and $h$. 

\paragraph{Case 1: $\delta(s,k+1)\geq 0$ and $h_y = h$.}  

In this case $y$ is an ancestor of $x_s$ in $F_\psi$. Let $(F,\ell)\in \mathcal{F}_d[\Lambda]$. A node $w$ at depth $h$ is called a \emph{candidate} if $\ell(w)= \gamma(k+1)$. A node at depth $h_s$ having an ancestor that is a candidate is called a \emph{good node}.  Let $\widehat{\Lambda}^\psi = \Lambda \cup \{ \good\}$. We define a $\widehat{\Lambda}^\psi$-labeling $\widehat{\ell}^\psi$ of $F$ as follows: 
\[
\widehat{\ell}^\psi(u) = \left\{
\begin{array}{ll}
\ell(u) \cup \{\good\} & \mbox{if $u$ is a good node, }\\
\ell(u) & \mbox{otherwise.}
\end{array}
\right.
\]
 
Let us now construct an $\lca$-reduced formula $\widehat{\psi}$ that is equivalent to $\psi$. Let us denote by $\widehat{\gamma}: [k]\rightarrow 2^\Lambda$ and  $\widehat{\delta}: [k]\times[k] \rightarrow [-1,d-1]$ the restrictions of $\gamma$ and $\delta$ to $[k]$ (i.e., for each $i,j\in [k]$, $\widehat{\gamma}(i) = \gamma(i)$ and $\widehat{\delta}(i,j)= \delta(i,j)$). Let $\xi =  \type_{\widehat{\gamma},\widehat{\delta}}$. Since $h_y= h$, we have that $F_\xi=F_\psi$, and $\ell_\xi(\overline{u}_\xi) = \ell_\psi(\overline{u}_\psi)$. 
Now, given a tuple of vertices $\overline{v}= (v_1,\dots, v_k)$, we have that $(F,\ell)\models \exists y, \psi(\overline{v},y)$ if and only if the following conditions hold:
 \begin{itemize}
     \item $F[\overline{v}]$ is isomorphic to $F_\psi$ (which equals $F_\xi$), 
     \item for every $i\in[k]$, $\ell(v_i) = \gamma(i) = \widehat{\gamma}(i)$, and
     \item the ancestor of $v_s$ at level $h$ in $F[\overline{v}]$ is labeled $\gamma(k+1)$, i.e., $v_s$ is a good node. 
 \end{itemize}
From \Cref{prop:isomorphism} we obtain that the first two conditions imply that $(F,\ell) \models \psi(\overline{v})$.  The third condition can be expressed as $\good \in \ell(v_1)$. Therefore, $\widehat{\psi}$ is defined as the conjunction of the two previous expressions, that is,  $$\widehat{\psi}(\overline{x}) = \type_{\widehat{\gamma},\widehat{\delta}}(\overline{x}) \land \lab_{\good}(x_s). $$
Therefore, $\true(\psi,F,\ell) = \true(\widehat{\psi},F, \widehat{\ell}^\psi)$  for every $\Lambda$-labeled forest $(F,\ell)$ of depth at most $d$.

\paragraph{Case 2: $\delta(s,k+1)\geq 0$ and $h_y > h$.} 

Let $(F,\ell)\in \mathcal{F}_d[\Lambda]$. In this case, a node $v$ at depth $h_y$ is called a \emph{candidate} if $\ell(v) = \gamma(k+1)$. A node $v$ at depth greater than $h$ is called a $\emph{good node}$ if at least one of its descendants is a candidate node. Finally, a node $v$ at depth $h$ is called a \emph{pivot}, and we denote by $\kappa(v)$ the number of children of $v$ that are good nodes. 

We define $\widehat{\gamma}$,  $\widehat{\delta}$, and $\xi = \type_{\widehat{\gamma}, \widehat{\delta}}$ the same way than in Case~1. Under the assumptions of Case~2, $F_\xi$ is however not necessarily equal to $F_\psi$. In fact $F_\xi$ is a sub-forest of $F_\psi$. Indeed, $F_\psi$ can be obtained from $F_\xi$ by adding a path that connects a pivot node in $F_\psi$ to a candidate node not in~$F_\psi$. 
 
Given a tuple $\overline{v}$ of $k$ vertices, checking whether $F[\overline{v}]$ is isomorphic to $F_\xi$  can be achieved by merely using formula~$\xi$. Let $z$ be the pivot  in $F[\overline{v}]$, that is, $z$ is the ancestor of $v_s$ at level $h$. Let $\kappa(z,\overline{v})$ be the set of good children of $z$ contained in $F[\overline{v}]$.  To eliminate the existential quantifier in $\varphi(\overline{x})$,  it is sufficient to check whether one can extend $F[\overline{v}]$ by plugging a path from $z$ to a candidate node not contained in $F[\overline{v}]$. This condition holds if $\kappa(z,\overline{v})<\kappa(z)$, and it is exactly what is checked when considering the formula~$\widehat{\psi}$. 

Let us define $\widehat{\Lambda}^\psi = \Lambda \cup [0,k+1]\cup \{\good\}$, and let us consider the $\widehat{\Lambda}^\psi$-labeling $\widehat{\ell}^\psi$ of $F$ defined as follows. Initially, for every node $u$ of $F$, $\widehat{\ell}^\psi(u)=\ell(u)$. We then update the label $\widehat{\ell}^\psi(u)$ of every node $u$ as follows. 
\begin{itemize}
    \item If $u$ is a node at depth $h_s$, then let us denote $v$ the ancestor of $u$ at depth $h$. The label $\lambda$ is added to $\widehat{\ell}^\psi(u)$ where $$\lambda = \begin{cases}
        \kappa(v) & \text{ if } \kappa(v)\leq k\\
        k+1 & \text{otherwise}.
    \end{cases}$$
    \item If $u$ is a good node, the label $\good$ is added to $\widehat{\ell}^\psi(u)$. 
\end{itemize}
Observe that the two conditions above are not necessarily exclusive, and it may be the case that both labels $\lambda$ and $\good$ are added to the set of labels of a the same node~$u$. 

Let us suppose that there exists an $\lca$-reduced $\widehat{\Lambda}^\psi$-formula $\alpha(\overline{x})$ that checks whether $\kappa(z)>\kappa(z,\overline{x})$ where $z$ is the ancestor at level $h$ of $x_s$. We can then define  $\widehat{\psi}$ as the $\lca$-reduced $\widehat{\Lambda}^\psi$-formula 
$$
\widehat{\psi}(\overline{x}) = \type_{\widehat{\gamma},\widehat{\delta}}(\overline{x}) ~\wedge ~ \alpha(\overline{x}), 
$$ 
where

$$ 
\alpha(\overline{x}) = \bigvee_{q\in [p]} \bigg[\bigg(\bigvee_{q\leq m\leq k+1}\lab_{m}(x_s)\bigg)\wedge  \bigvee_{S \in K(p,q-1)}  \bigg(\bigwedge_{i \in S} \bigvee_{j\in X_i} \lab_{\good}(x_j)  \wedge\bigwedge_{i\notin S} \bigwedge_{j\in X_i}\neg \lab_{\good}(x_j) \bigg)\bigg].
$$

We claim that $\true(\psi,F,\ell) = \true(\widehat{\psi},F, \widehat{\ell}^\psi)$ for every $\Lambda$-labeled forest $(F,\ell) \in \mathcal F_d$. Indeed, observe first that $\kappa(z, \overline{x})\leq k$ as the forest $F_\xi$ has degree at most $k$ (because it has at most $k$ leaves). Therefore, $\alpha(\overline{x})$ holds whenever $\kappa(z)>k$, which is equivalent to the predicate $\lab_{k+1}(x_s)$. If $\kappa(z)\leq k$ then the value of $\kappa(z)$ is stored in the label of $x_s$. Let us denote by $z_1, \dots, z_p$ the $p\leq k$ children of $z$ in $F_\psi$. For every $i\in [p]$, let us denote by $X_i$ the set of variables in $\overline{x}$ that are descendants of~$z_i$. For every $q\in [p]$, we also denote by $K(p,q)$ the set of all subsets of $[p]$ of at most $q$ elements. If $q=\kappa(z)$ and $t=\kappa(z,\overline{x})$, then checking $q>t$ can be done by checking the existence of set $S \in K(p,q-1)$ satisfying that, for every $i\in S$, $X_i$ contains a good node, and, for every $i\notin S$, $X_i$ does not contain a good node. 

\paragraph{Case 3: $\delta(s, k+1)=-1$.}  

In this case, the node $y$ in  $F_\psi$ has no common ancestors with any of the nodes $x_1,\dots, x_k$. In other words, $y$ is in a tree of $F_\psi$ that is different from any of the trees containing $x_1, \dots, x_k$. Let $(F, \ell)\in \mathcal{F}_d[\Lambda]$. Like in previous cases, a node $u$ at depth $h_y$ is called a \emph{candidate} if $\ell(u)=\gamma(k+1)$. A root $r$ of $F$ is said to be \emph{active} if some descendant of $r$ is a candidate node. A node is called \emph{good} if it is a descendant of an active root. 

Let $\widehat{\gamma}$, $\widehat{\delta}$ and $\xi = \type_{\widehat{\gamma},\widehat{\delta}}$ be defined as in Case~1.  In Case~3,  $F_\xi$ is a sub-forest of $F_\psi$, where $F_\psi$ can be obtained from $F_\xi$ by adding a path from an active root $r$ not contained in $F_\xi$ to a candidate node.

Let $\overline{v} = (v_1, \dots, v_k)$ be a set of vertices of $F$, and let $\rho(F)$ be the number of active roots in $F$. Similarly, let $\rho(F,\overline{v})$ be the number of active roots in $F[\overline{v}]$. It follows that  $(F,\ell)\models \exists y, \psi(\overline{v},y)$ if and only if $F[\overline{v}]$ is isomorphic to $F_\xi$,  $\ell(v_i) = \gamma(i)$  for each $i\in[k]$, and $F$ contains an active root not contained in $F[\overline{v}]$. The first two conditions can be checked using $\xi$, and the latter is equivalent to $\rho(F) > \rho(F,\overline{v})$.

Let $\widehat{\Lambda} = \Lambda \cup [0,k+1]\cup \{\good\}$ be a set of labels, and let us consider the $\widehat{\Lambda}^\psi$-labeling $\widehat{\ell}^\psi$ of $F$ defined as follows. Initially, for every node $u$ in $F$, let us set $\widehat{\ell}^\psi(u)=\ell(u)$. We then update the label $\widehat{\ell}^\psi(u)$ of every node $u$ as follows. 
\begin{itemize}
    \item For every node $u$,  the label $\lambda$ is added to $\widehat{\ell}^\psi(u)$, where $$\lambda = \begin{cases}
        \rho(F) & \text{ if } \rho(F)\leq k\\
        k+1 & \text{ if } \rho(F) > k.
    \end{cases}$$
    \item If $u$ is a good node, then the label $\good$ is added $\widehat{\ell}^\psi(u)$. 
\end{itemize}

Let us suppose that there exists an $\lca$-reduced $\widehat{\Lambda}^\psi$-formula $\alpha(\overline{x})$ that checks whether $\rho(F)>\rho(F,\overline{x})$ where $z$ is the ancestor at level $h$ of $x_1$. We can  then define $\widehat{\psi}$ as the $\lca$-reduced $\widehat{\Lambda}^\psi$-formula 
$$
\widehat{\psi}(\overline{x}) = \type_{\widehat{\gamma},\widehat{\delta}}(\overline{x}) ~\wedge ~ \xi(\overline{x}), 
$$ 
where

\begin{equation}\label{eq:xi}
\xi(\overline{x}) =  \bigvee_{q\in [p]} \bigg[\bigg(\bigvee_{q\leq m\leq k+1}\lab_{m}(x_1)\bigg)\wedge  \bigvee_{S \in K(p,q-1)}  \bigg(\bigwedge_{i \in S} \bigvee_{j\in X_i} \lab_{\good}(x_j)  \wedge\bigwedge_{i\notin S} \bigwedge_{j\in X_i}\neg \lab_{\good}(x_j) \bigg)\bigg]
\end{equation}

We claim that $\true(\psi,F,\ell) = \true(\widehat{\psi},F, \widehat{\ell}^\psi)$ for every $\Lambda$-labeled forest $(F,\ell) \in \mathcal F_d$.  Indeed, observe that $F_\xi$ can contain at most $k$ roots. Therefore $\alpha$ is true if $\rho(F)>k$, which is equivalent the predicate $\lab_{k+1}(x_1)$. If $\rho(F)\leq k$ then the value of $\rho(F)$ is stored in the label of $x_1$. Let $r_1, \dots, r_p$ denote the  $p\leq k$ roots of $F_\xi$. For every $i\in [p]$, let $X_i$ be the set of variables in $\overline{x}$ that are descendants of~$r_i$. We obtain that checking whether $q=\rho(F) > \rho(F,\overline{x})$ is equivalent to check the existence of a set $S\in K(p,q-1)$ such that, for every $i\in[p]$,  $X_i$ contains a good node if and only if $i\in S$, from which the expression of $\xi$ follows.
\end{proof}

Finally, the following lemma is a refined version of \Cref{lem:elimination_forest} enforcing locality.

\begin{lemma}
\label{lem:local_elimination_forest}
Let \(d \in \N\), \(r\geq d\), and \(\varphi(\overline x) = \exists \overline y \, \big(
\zeta(\overline x, \overline y) \wedge \ball(\{\overline x,\overline y\},x_1,r)\big)\)
be an \(r\)-local \(\lca\)-expressed existential formula in \(\FO[\Lambda]\), i.e., 
\(\zeta\) is a \(\lca\)-reduced formula in \(\FO[\Lambda]\). It holds that
\(\varphi\) is \(d\)-local reducible on \(\mathcal F_d[\Lambda]\).
\end{lemma}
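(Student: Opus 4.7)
The plan is to apply \Cref{lem:elimination_forest} directly, after first absorbing the $\ball$ predicate into the $\lca$-reduced part of the formula, and then to verify that the locality property survives quantifier elimination. The crucial observation is that, in $\mathcal{F}_d$, every tree has diameter at most $d-1 \leq r$, so for any forest $F \in \mathcal{F}_d$ and any vertex $v$, the ball $B_F(v,r)$ coincides exactly with the tree of $F$ containing~$v$. In particular, $\ball(u,v,r)$ is equivalent on $\mathcal{F}_d$ to the $\lca$-expressible formula $\bigvee_{i=0}^{d-1} \lca_i(u,v)$, i.e., to $\neg \lca_{-1}(u,v)$. Let $\tilde\zeta(\overline{x}, \overline{y})$ be the $\lca$-reduced formula obtained from $\zeta \wedge \ball(\{\overline{x},\overline{y}\}, x_1, r)$ by performing this replacement. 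Then, on $\mathcal{F}_d$, $\varphi(\overline{x})$ is equivalent to $\exists \overline{y}\, \tilde\zeta(\overline{x}, \overline{y})$, which is exactly of the form required by \Cref{lem:elimination_forest}.

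Applying \Cref{lem:elimination_forest} yields a finite set of labels $\widehat{\Lambda}$, an $\lca$-reduced formula $\widehat{\varphi}(\overline{x}) \in \FO[\widehat{\Lambda}]$, and, for each $(F,\ell) \in \mathcal{F}_d[\Lambda]$, a $\widehat{\Lambda}$-labeling $\widehat{\ell}$ satisfying $\true(\varphi,F,\ell) = \true(\widehat{\varphi},F,\widehat{\ell})$. It remains to verify the second equality of \Cref{def:local-reducibility}, namely that $\widehat{\varphi}$ is $d$-local: whenever $\widehat{\varphi}(\overline{v})$ holds in $(F,\widehat{\ell})$, all $v_i$ lie within distance $d$ of $v_1$. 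For this, I would revisit the proof of \Cref{lem:elimination_forest}, keeping track of the invariant ``every variable lies in the same tree as $x_1$''. When $\tilde\zeta$ is rewritten in basic normal form as $\bigvee_{\psi \in I} \psi$, each non-trivial $\psi = \type_{\gamma,\delta}$ must satisfy $\delta(1,j) \geq 0$ for every~$j$, since the conjunct $\neg\lca_{-1}(x_1,\cdot)$ rules out every other case; by transitivity of ancestry within a tree, this propagates to $\delta(i,j) \geq 0$ for all $i,j$, so every variable lies in the same tree of $F_\psi$.

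The main step is to verify that this invariant survives every iteration of the existential elimination. For each surviving type $\psi$, the index $s$ selected in the proof of \Cref{lem:elimination_forest} satisfies $\delta(s,k+1) \geq 0$, so Case 3 (in which $y$ would lie in a different tree) never occurs, and we always fall into Case 1 or Case 2. In both cases, the reduced formula $\widehat\psi$ has the shape $\type_{\widehat\gamma,\widehat\delta}(\overline{x}) \wedge \alpha(\overline{x})$, where $\widehat\delta$ is the restriction of $\delta$ to $[k]\times[k]$ and inherits the property $\widehat\delta(i,j) \geq 0$ for all $i,j$; the $\type_{\widehat\gamma,\widehat\delta}$ component still pins all free variables to a common tree, while the extra $\alpha(\overline{x})$ consists only of unary label predicates that do not interact with distances. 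Iterating until every existential variable in $\overline{y}$ is eliminated, and re-expressing in basic normal form between successive eliminations, the final $\widehat{\varphi}$ remains a disjunction of formulas that each confine their free variables to a single tree of~$F$. Since every tree of $F \in \mathcal{F}_d$ has diameter at most $d-1 \leq d$, this yields $\overline{v} \subseteq B_F(v_1, d)$ whenever $\widehat{\varphi}(\overline{v})$ holds, which is the sought $d$-locality. The main obstacle I anticipate is precisely this bookkeeping across iterations: after each quantifier elimination, the intermediate formula is a Boolean combination of types and freshly introduced label predicates, and one must check that re-normalising it back into basic normal form never reintroduces types with $\delta(i,j) = -1$ for some entry involving $x_1$.
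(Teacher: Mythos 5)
Your first two steps coincide with the paper's proof: you replace the $\ball$ predicates by $\neg\lca_{-1}$ on $\mathcal F_d$ (using $r \geq d$ and the diameter bound on trees), obtaining an $\lca$-expressed formula $\exists \overline y\,\tilde\zeta(\overline x,\overline y)$ to which \Cref{lem:elimination_forest} applies, and you observe correctly that Case~3 of that lemma is ruled out by the $\neg\lca_{-1}$ constraints. Where you diverge is in the locality verification, and the route you take is considerably heavier than necessary --- and, as you yourself flag, not fully closed. Recall that \Cref{def:local-reducibility} only asks for $\true(\widehat\varphi,F,\widehat\ell) = \true(\widehat\varphi \wedge \ball(\overline x, x_1, d), F, \widehat\ell)$, a purely \emph{semantic} condition on truth sets. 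It follows in one line: \Cref{lem:elimination_forest} gives $\true(\widehat\varphi,F,\widehat\ell) = \true(\exists\overline y\,\tilde\zeta,F,\ell)$, and since $\tilde\zeta$ carries the conjunct $\neg\lca_{-1}(x_1, x_i)$ for every free variable $x_i$, any $\overline v$ in this truth set has every $v_i$ in the same tree as $v_1$, hence $\overline v \subseteq B_F(v_1,d)$ because trees of $\mathcal F_d$ have diameter at most $d-1$. This short-circuits the entire syntactic bookkeeping you anticipate as the main obstacle, namely tracking the invariant $\delta(i,j)\geq 0$ through every re-normalisation to basic normal form across successive eliminations. (Your invariant does hold, because the basic normal form of \Cref{prop:basicnormal} is built semantically from tuples witnessed in $\true(\cdot)$ and so cannot reintroduce types no tuple satisfies; but this is precisely the observation that permits arguing semantically from the start.) As written, your proposal leaves this bookkeeping asserted but unverified; replacing it with the semantic observation above closes the gap and matches the paper's argument.
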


\begin{proof}
Since all trees in the forests \(F \in \mathcal F_d\) have diameter at most \(d\), and since \(r\geq d\), we get that 
\(\ball(z,x_1,r)\) holds if and only if \(x_1\) and \(z\) are in the same tree,
which is equivalent to \(\neg \lca_{-1}(x_1,z)\).
We define:
\[
\varphi_1(\overline x) = \exists \overline y, \zeta(\overline x, \overline y) \wedge
\bigwedge_{z \in \overline x, \overline y} \neg \lca_{-1}(x_1, z)
\]
We have \(\true(\varphi, F, \ell) = \true(\varphi_1, F, \ell)\).
We can now apply \Cref{lem:elimination_forest} on \(\varphi_1\).
(Note that there were three cases the proof of \Cref{lem:elimination_forest}, but the third case cannot happen here as this case corresponds to the case where two variables are not in the same connected components
of \(F\), which is forbidden by the terms of the form \(\neg \lca_{-1}(x_1,z)\).) By \Cref{lem:elimination_forest}, we obtain an \(\lca\)-reduced formula \(\widehat\varphi\)
and a labeling \(\widehat \ell\) such that
\(\true(\varphi_1,F,\ell) =
\true(\widehat\varphi, F, \widehat \ell)\).
Now, as
mentioned above, each \(x_i\) is in the same component as \(x_1\). Since each
tree has diameter at most \(d\), this mere fact implies that \(\ball(x_i,x_1,d)\) holds for every~$i$. As a consequence, 
\(\true(\widehat\varphi, F, \widehat \ell)=\true(\widehat\varphi \wedge \ball(\overline x,x_1, d), F, \widehat \ell)\). 
\end{proof}

% - - - - - - - - - - - - - - - - - - - - - - - - - - - - 
\subsubsection{Quantifier Elimination of $\lca$-Reduced Formulas in CONGEST}
% - - - - - - - - - - - - - - - - - - - - - - - - - - - - 

We now show that the quantifier elimination on local formulas described in the proof of
\Cref{lem:local_elimination_forest} can be computed in $\cO(1)$ rounds in the \CONGEST model. 

\begin{lemma}\label{lem:local_forest_computing}
Let \(\Lambda\) be a finite set of labels, \(d\in \N\) and \(r\ge d\). For every \(r\)-local \(\lca\)-expressed existential formula 
$\varphi(\overline{x})= \exists \overline{y}\; \big(\zeta(\overline{x},\overline{y} \wedge
\ball(\{\overline x, \overline y\},x_1,r))\big)$ in $\FO[\Lambda]$, there exists
a set of labels \(\widehat\Lambda\), a \(\lca\)-reduced formula \(\widehat \varphi\) and a
distributed algorithm performing in $\cO(1)$ round in the \CONGEST model satisfying the following.
For every $\Lambda$-labeled graph $(G,\ell)$, every subgraph $F\in \mathcal{F}_d$ of $G$:
\begin{itemize}
\item every node $v\in V(G)$ is given as input: a set of labels $\ell(v)\subseteq \Lambda$, whether it belongs to~$F$, and, if applicable, its depth and parent in~\(F\);
\item every node $v\in V(G)$ outputs a set of labels  $\widehat{\ell}(v)\subseteq \widehat\Lambda$ such that
\(\widehat \Lambda, \widehat \varphi, \widehat \ell\) form a \(d\)-local reduction of \(\Lambda, \varphi, \ell\).
\end{itemize}
\end{lemma}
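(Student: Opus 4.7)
The plan is to prove \Cref{lem:local_forest_computing} by induction on $|\overline y|$, reproducing in the distributed setting the construction used to establish \Cref{lem:local_elimination_forest}. The base case $|\overline y| = 0$ is immediate: $\varphi$ is already $\lca$-reduced once one rewrites $\ball(\overline x, x_1, r)$ as $\bigwedge_{i \geq 2} \neg \lca_{-1}(x_1, x_i)$, which is valid since every tree of $\mathcal F_d$ has diameter at most $d - 1 \leq r$. We then set $\widehat\ell = \ell$ and no communication is required. The inductive step eliminates one quantified variable by executing, in $O(1)$ rounds, one round of the construction in the proof of \Cref{lem:local_elimination_forest}, and since $|\overline y|$ is a constant depending only on $\varphi$, iterating yields an $O(1)$-round algorithm overall.

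For the inductive step, nodes first spend one round exchanging parent identifiers with their $G$-neighbours, so that every node in $F$ learns which of its $G$-neighbours are its children in $F$ (edges of $F$ being edges of $G$). The formula $\zeta(\overline x, \overline y)$ is then rewritten in basic normal form $\bigvee_{\psi \in I} \psi(\overline x, \overline y)$ by \Cref{prop:basicnormal}; this is a symbolic computation depending only on $(\Lambda, d, \zeta)$, and is performed identically at every node without any communication. Because $I \subseteq \Type(k+1, d, \Lambda)$ has size bounded by a constant, the remaining task is to implement, for each $\psi \in I$ separately, the relabelling $\widehat\ell^\psi$ described in the proof of \Cref{lem:elimination_forest}. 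The locality assumption on $\varphi$ ensures, as observed in the proof of \Cref{lem:local_elimination_forest}, that Case~3 never arises, so only Cases~1 and~2 must be handled.

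In either case, the relabelling is determined by the \emph{candidate} and \emph{good} predicates (and the \emph{pivot} counts $\kappa(v)$ in Case~2). Each candidate predicate is defined by a local condition (a specific depth together with a specific label set), hence computable without communication. The good predicate in Case~1 requires knowing, for every node at depth $h_s$, whether its ancestor at depth $h$ is a candidate; since $h \leq h_s \leq d$, this is obtained by a top-down propagation along the edges of $F$ in at most $d = O(1)$ rounds, each round transmitting a single bit. The good predicate in Case~2 requires bottom-up propagation along $F$ of a single bit (``does my subtree contain a candidate?'') in at most $d$ rounds; once this is known, each pivot $v$ at depth $h$ computes $\min(\kappa(v), k+1)$ in one additional round from its children's flags, and this capped count is broadcast down to the relevant depth-$h_s$ descendants in at most $d$ further rounds. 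All values transmitted belong to a set of size $O(k + |\Lambda|)$, so each round uses $O(\log n)$ bits per edge. Processing the constantly many types in $I$ in parallel, using disjoint copies of the new labels as in the proof of \Cref{lem:local_elimination_forest}, yields $\widehat\ell(v) = \bigcup_{\psi \in I} \widehat\ell^\psi(v)$ in a total of $O(d) = O(1)$ rounds.

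The main obstacle is essentially bookkeeping rather than a new algorithmic idea: one must verify that every step of the centralized construction in Cases~1 and~2 of \Cref{lem:elimination_forest} corresponds to either a purely local computation or a constant-depth propagation along the tree $F$, and that the information carried along $F$-edges fits in $O(\log n)$ bits. Since the depth $d$, the arity $k$, and the size $|I|$ of the basic normal form are constants determined by $\Lambda$, $\varphi$ and $d$, both constraints are met, and the resulting formula $\widehat\varphi = \bigvee_{\psi \in I} \widehat\psi$ together with the label set $\widehat\Lambda = \bigcup_{\psi \in I} \widehat\Lambda^\psi$ depends only on $(\Lambda, \varphi, d)$, as required by \Cref{def:local-reducibility}.
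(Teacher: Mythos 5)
Your proposal is correct and follows essentially the same approach as the paper: it reduces to eliminating one quantifier at a time, decomposes $\zeta$ into $\lca$-types via the basic normal form, implements Cases~1 and~2 of \Cref{lem:elimination_forest} by constant-depth downward/upward ping propagation along $F$ followed by broadcasting the capped pivot count, and notes that Case~3 cannot arise for local formulas. The only differences are cosmetic — you frame the iteration as an explicit induction with a worked-out base case (rewriting $\ball$ via $\neg\lca_{-1}$), and you add an explicit preliminary round for nodes to learn which neighbors are their $F$-children, a step the paper leaves implicit.
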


\begin{proof}
Let $k = |\overline{x}|$. As for the proof of \Cref{lem:elimination_forest}, we assume w.l.o.g. that $k\geq 1$, and we treat only the case where $|\overline{y}|=1$ (as the general case $|\overline{y}|\geq 1$ can be obtained by iteratively eliminating the $|\overline{y}|$ quantifiers one by one). In the same way they are defined in \Cref{lem:elimination_forest}, let us define the set $I \subseteq~\Type(k+1, D, \Lambda)$, and, for each $\psi \in I$, the label sets $\widehat{\Lambda}^\psi$, the $\lca$-reduced formula $\widehat{\psi}$, and the $\hat{\Lambda}^\psi$-labeling $\widehat{\ell}^\psi$ of~$F$. Furthermore, we define 
$$
\widehat{\Lambda} = \bigcup_{\psi \in I} \widehat{\Lambda}^\psi, \;  \widehat{\varphi} = \bigvee_{\psi \in I} \widehat{\psi},  
\text{ and, for every node $u\in F$, } \widehat{\ell}(u) = \bigcup_{\psi \in I} \widehat{\ell}^\psi(u). 
$$
We then extend $\widehat{\ell}$  to the nodes of  $G\smallsetminus F$ by merely defining $\widehat{\ell}(v) = \ell(v)$ for all $v\notin V(F)$. 
Since $\Type(k+1, d, \Lambda)$ is a finite set, it is enough to show that, for every $\psi\in I$, the label set $\widehat{\Lambda}^\psi$, and the formulas $\widehat{\psi}$  and $\widehat{\ell}^\psi$ can be computed in $\cO(1)$ rounds. In fact, the definitions of the set $I$,  $\widehat{\Lambda}$, and $\widehat{\varphi}$ do not depend in the sub-forest~$F$ of~$G$, so $I$,  $\widehat{\Lambda}$, and $\widehat{\varphi}$
can be computed at every node without any communication. 
It remains to show how to compute, for any given $\psi\in I$, the value of $\widehat{\ell}^\psi(v)$ at all nodes~$v$ in a constant number of rounds. 

Initially the algorithm sets $\widehat{\ell}^\psi(v) = \ell(v)$ at every node.  Let $\gamma$ and $\delta$ be the functions such that $\psi = \type_{\gamma,\delta}$. Let $s$, $h_s$, $h_y$, and $h$ be as defined in the proof of \Cref{lem:elimination_forest}. Observe that every node knows $\varphi$, and thus every node can compute $\psi$, $h_s$, $h_y$, and $h$ without any communication. Furthermore, every node knows
whether its depth is equal to $h$, $h_s$, or $h_y$, or none of these three values.

We now revisit the three cases treated in the proof of \Cref{lem:elimination_forest}.

\paragraph{Case 1: $\delta(s,k+1)\geq0$ and $h_y = h$.} 

In this case, the algorithm just needs to identify the good nodes in $F$. To do so, for every $i\in [p]$,  every node $v$ at depth $h_y$ in $T_i$ checks whether it is a candidate node. Every such nodes does not need any communication for this, as it knows its depth in $T_i$, as well as $\ell(v)$ and $\gamma(v)$. Then, the algorithm broadcasts ping messages (on a single bit) downward every tree, from every candidate node. These ping messages are forwarded by their descendants downward every $T_i$, for $h_s-h$ rounds. More precisely, each candidate node sends a ping as a single bit~1 to all its children, and, for $h_s-h-1$ rounds, each node that receives that ping from its parent forwards it to all its children.  Finally, every node $u$ at depth $h_s$ that receives the ping know that their ancestor at level $h_y$ are candidates, and thus they mark themselves as good nodes. Moreover, every such node $u$ sets $\widehat{\ell}^\psi(u) = \ell(u) \cup \{\good\}$.  Every other node $u$ in~$F$ sets $\widehat{\ell}^\psi(u) = \ell(u)$. All these operations consume $h_s - h =\cO(d)$ rounds. 

\paragraph{Case 2: $\delta(1,k+1)\geq0$ and $h_y > h$.} 

In this case the algorithm is divided in two phases. Similar to previous case, the candidate nodes can identify themselves without any communication. In the first phase, ping messages are broadcast upward in each tree~$T_i$, starting from the candidate nodes, during $h-h_y$ rounds. In other words, the candidate nodes send a single bit to their parents, and, for $h-h_y-1$ more rounds, each node that receives a ping in the previous round forwards it to its parent. During this phase, every node $u$ at depth $>h$ that receives a ping message is marked as a good node, and it adds $\good$ to $\widehat{\ell}^\psi(u)$. This phase takes $h-h_y  = \cO(d)$ rounds. During the second phase, each pivot node $z$  (i.e., every node at depth $h$) computes $\kappa(z)$ as the number of children that sent a ping message during the last round of the first phase. Then, each pivot node $z$ broadcasts the value $\ell(\kappa(z))$ downward the trees, for $h-h_s$ rounds. After $h-h_s$ rounds, every node $u$ at depth $h_s$ has received the message $\ell(\kappa(z))$ from its pivot ancestor $z$, and it adds this label to $\widehat{\ell}^\psi(u)$. This phase takes $h-h_s  = \cO(d)$ rounds.

\paragraph{Case 3: $\delta(1,k+1)=-1$.} 

As mentioned in the proof of \Cref{lem:local_elimination_forest}, this case cannot happen for
local formulas.

\medskip

We conclude that for every $\psi \in I$, every node $u$ can compute $\ell^\psi(u)$, in $\cO(d)$ rounds. Therefore,  the total number of rounds sufficient for computing $\widehat{\ell}$ is $\cO(d \cdot |I|) = \cO(d \cdot 2^{k|\Lambda|}(d+1)^{k^2})$.
\end{proof}

%============================================
\subsection{Quantifier Elimination on Graphs of Bounded Treedepth}
\label{ss:local_treedepth}
%============================================

Given a graph $G =(V,E)$, a \emph{decomposition forest} $F$ of $G$ is a rooted spanning forest of $G$ satisfying that for every edge $e\in E$, one of its endpoints is an ancestor of the other in $F$. Observe that if $G$ is connected then any  decomposition forest  of $G$ is actually a tree, and every DFS tree of $G$ is a decomposition tree of $G$. For every $t>0$, any induced subgraph $H$ of $G$ with treedepth at most $t$ is called \emph{a treedepth-$t$ induced subgraph}.

We stress the fact that a decomposition forest $F$ of $G$ must be a subgraph of $G$, which is very convenient for distributed algorithms because this enables communication along the edges of~$F$.  Note that there exist graphs of treedepth~$t$ in which any decomposition forest is of diameter $\Omega(2^t)$. This is for example the case of paths on $2^t-1$ vertices. Nevertheless, graphs of treedepth at most $t$ have paths of at most $2^t-1$ vertices. In particular, for any class of graphs of bounded treedepth, all possible decomposition forests also have trees of constant (although exponentially larger) diameter.

\begin{proposition}[\cite{NesetrilOdM12}]\label{prop:depth}
    Any decomposition forest of a graph of treedepth at most $t$ belongs to~$\mathcal{F}_{2^t}$.
\end{proposition}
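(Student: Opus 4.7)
The plan is to reduce the required diameter bound on the decomposition forest to a bound on the length of paths in the underlying graph~$G$. Since a decomposition forest~$F$ of~$G$ is by definition a spanning subgraph of~$G$, every tree~$T$ of~$F$ is a subgraph of~$G$. The diameter of~$T$, as an abstract graph, equals the number of edges on its longest path (trees have unique paths between vertices, so the longest shortest path is simply the longest path). That path, living inside~$T$, is in particular a path of~$G$. Therefore it suffices to show that every path in~$G$ has at most $2^t-1$ edges.

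The key ingredient is the following classical identity: the treedepth of the path~$P_n$ on $n$ vertices satisfies $\mathrm{td}(P_n) = \lceil \log_2(n+1)\rceil$. This is proved by induction: removing any single vertex disconnects $P_n$ into two subpaths, the longer of which has at least $\lceil (n-1)/2\rceil$ vertices, giving the matching lower bound, while a root at the middle vertex yields a decomposition realizing the upper bound. Combined with the standard fact that treedepth is monotone under subgraphs (any rooted forest witnessing $\mathrm{td}(G)\le t$ restricts to a witness for any subgraph of~$G$), this shows that if $\mathrm{td}(G)\le t$ then every path contained in~$G$ has $n\le 2^t-1$ vertices, and hence at most $2^t-2$ edges.

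Putting the two pieces together, every tree~$T$ of the decomposition forest~$F$ has diameter at most $2^t-2$, which is at most $2^t-1$. By the definition of $\mathcal{F}_d$ as the class of forests whose trees have diameter at most $d-1$, we conclude that $F\in\mathcal{F}_{2^t}$, as required.

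The only potentially delicate step is the recollection (or short proof) of the path treedepth formula and the observation that diameter in a tree equals longest-path length; neither is really an obstacle, just a sanity check. In particular, note that we do \emph{not} try to bound the \emph{depth} of the rooted tree~$T$ directly: the rooted depth of a decomposition tree can indeed be as large as $2^t-1$ (as the excerpt itself points out for paths), and a naive ``diameter $\le 2\cdot\mathrm{depth}$'' bound would yield a loss of a factor of two. Arguing via paths in~$G$ instead of via the depth of~$T$ avoids this loss and gives the tight bound.
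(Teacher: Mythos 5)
Your proof is correct. The paper states this as a cited fact from~\cite{NesetrilOdM12} without reproducing an argument; your derivation via $\mathrm{td}(P_n)=\lceil\log_2(n+1)\rceil$ together with subgraph-monotonicity of treedepth, applied to the longest path of each tree of $F$ (which is itself a path of~$G$ since $F$ is a spanning subgraph), is precisely the standard proof underlying that citation, and it even yields the slightly stronger diameter bound $2^t-2$.
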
 

In the following lemma, we show that any distributed algorithm aiming at checking a quantifier-free formula in a treedepth-$t$ induced subgraph $H$ of a labeled graph~$G$,  may rather check an equivalent formula on a decomposition forest of~$H$. 

\begin{lemma}\label{lem:treedepthtoforest}
Let $\Lambda$ be a finite set of labels, and let $t>0$. For every quantifier free formula $\varphi(\overline{x})\in \FO[\Lambda]$, there exists a finite set $\widehat{\Lambda}$ depending only on $t$ and $\Lambda$, and an $\lca$-reduced formula $\widehat{\varphi}(\overline{x}) \in \FO[\widehat{\Lambda}]$ depending on $\varphi$, $\Lambda$, and $t$, such that, for every $\Lambda$-labeled treedepth-$t$ induced subgraph $(H,\ell)$ of a $\Lambda$-labeled graph $(G,\ell)$, and for every decomposition forest $F$ of~$H$, there exists a  $\widehat{\Lambda}$-labeling $\widehat{\ell}$ of $G$ such that
$$
\true(\varphi,H,\ell) = \true(\widehat{\varphi},F,\widehat{\ell}).
$$ 
Moreover, there is a 1-round \CONGEST\ algorithm that computes $\widehat{\ell}$, assuming each
node \(v\) is given as input \(\ell(v)\), whether it belongs to \(F\) and, if applicable, its
depth and parent in \(F\).
\end{lemma}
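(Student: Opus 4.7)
The plan is to reduce model checking on $(H,\ell)$ to model checking on the decomposition forest $(F,\widehat\ell)$ by exploiting two facts: by \Cref{prop:depth}, $F\in\mathcal{F}_d$ with $d:=2^t$, so the depth of every tree in $F$ is bounded by a function of~$t$; and by the defining property of a decomposition forest, every edge $\{u,v\}$ of $H$ joins two vertices that are in ancestor-descendant relation in~$F$. The equality predicate $x=y$ in $H$ is already expressible as an $\lca$-reduced formula thanks to \Cref{rem:fromFOtolca}. The remaining obstacle is the adjacency predicate $\adj(x,y)$, since \Cref{rem:fromFOtolca} interprets adjacency as the parent-child relation of the underlying forest, whereas edges of $H$ can correspond to ancestor-descendant pairs with arbitrary depth difference in $\{1,\dots,d-1\}$.

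To overcome this, I would introduce $d$ new unary labels $a_0,a_1,\dots,a_{d-1}$ and set $\widehat\Lambda:=\Lambda\cup\{a_0,\dots,a_{d-1}\}$. For every vertex $v\in V(H)$, I would place $a_i$ in $\widehat\ell(v)$ if and only if $v$ has an ancestor at depth $i$ in $F$ and this ancestor is adjacent to $v$ in~$H$; for $v\in V(G)\setminus V(H)$, I would simply set $\widehat\ell(v)=\ell(v)$. The adjacency predicate $\adj(x,y)$ in $H$ then becomes equivalent to the $\lca$-reduced formula stating that one of $x,y$ is an ancestor of the other in $F$ and the descendant carries the label corresponding to the ancestor's depth, namely
\[
\bigvee_{0\le i<j<d}\bigl(\lca_i(x,x)\wedge\lca_j(y,y)\wedge\lca_i(x,y)\wedge\lab_{a_i}(y)\bigr)\vee\bigl(\lca_j(x,x)\wedge\lca_i(y,y)\wedge\lca_i(x,y)\wedge\lab_{a_i}(x)\bigr).
\]
Applying this rewriting, together with \Cref{rem:fromFOtolca} for equalities, to every $\adj$ and $=$ predicate occurring in $\varphi$ yields the desired $\lca$-reduced formula $\widehat\varphi\in\FO[\widehat\Lambda]$. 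Note that $\widehat\Lambda$ depends only on $\Lambda$ and $t$, and $\widehat\varphi$ depends only on $\varphi$, $\Lambda$ and~$t$, as required. Correctness of $\true(\varphi,H,\ell)=\true(\widehat\varphi,F,\widehat\ell)$ follows by a straightforward structural induction on $\varphi$, using the decomposition-forest property to ensure that every $H$-edge is in fact between a descendant and one of its ancestors in~$F$.

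For the \CONGEST\ implementation, I would have every node $v$ broadcast in a single round either its depth in $F$ (an integer in $[0,d-1]$, encodable on $\cO(1)$ bits since $t$ is constant) if $v\in V(F)$, or a distinguished ``outside'' marker otherwise. Upon receiving these messages, each $v\in V(F)$ inspects its neighbors in $G$ and identifies those that belong to $F$ and have strictly smaller depth than its own; by the decomposition-forest property, each such neighbor $u$ must in fact be an ancestor of $v$ in $F$, because $\{u,v\}$ is an edge of the induced subgraph $H$ and $u$ has smaller depth. The node $v$ then adds $a_{\depth(u)}$ to its label for each such $u$, producing $\widehat\ell(v)$.

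The main obstacle, in my view, is conceptual rather than technical: a single adjacency edge in $H$ may correspond to an ancestor-descendant pair that is $\Omega(2^t)$ hops apart inside~$F$, and yet it must be captured by a purely local one-round labeling step. The reason this is possible is that the descendant carries the label and has direct graph access (through edges of $G$) to its ancestor, even though in~$F$ alone the ancestor is unreachable in one hop. Placing the adjacency witness on the descendant rather than on the ancestor is therefore the critical design choice that makes the scheme implementable in a single \CONGEST\ round.
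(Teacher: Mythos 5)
Your proposal is correct and takes essentially the same route as the paper: both introduce, for each vertex $v\in V(H)$, unary labels recording the set of depths $i$ at which $v$ has an ancestor in $F$ adjacent to it in $H$ (the paper calls this set $\levels(v)$), and both then rewrite $\adj(x,y)$ as a disjunction over ancestor--descendant depth pairs decided from these labels together with $\lca$ predicates, with the same one-round broadcast of depths to build the labeling. A minor observation: your rewritten adjacency formula explicitly includes the conjunct $\lca_i(x,y)$ to pin $x$ (resp.\ $y$) as \emph{the} depth-$i$ ancestor of $y$ (resp.\ $x$) rather than merely some vertex at depth $i$, and it ranges over $0\le i<j<d$ rather than $1\le i<j\le d-1$; both of these are needed for correctness and make your version of the formula slightly more careful than the one displayed in the paper's proof.
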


\begin{proof}
Each node $u$ initially knows $\Lambda$, $\varphi$, $\ell(u)$, $t$, whether it belongs to $H$, and, if applicable, its depth and parent in $F$. Node $u$ returns $\widehat{\Lambda}$, $\widehat{\varphi}$, and $\widehat{\ell}(u)$.
Let $d=2^t$, and let us consider a $\Lambda$-labeled graph $(G,\ell)$, a treedepth-$t$ induced subgraph $H$ of $G$, and a decomposition forest $F$ of $H$. The idea of the construction is the following. Each node $u$ of $F$ can be labeled by a subset of $[0,d-1]$ indicating the set of levels of $F$ containing a neighbor of $u$ in~$H$. Then, the  $\adj(x,y)$ predicate can be transformed into label and $\lca$ predicates on $x$ and~$y$. More precisely, for every node $v\in V(H)$, let $\levels(v) \subseteq [d]$ be the set of levels of $F$ containing ancestors $u$ of $v$ such that $\{u,v\}\in E(H)$. Then, let $\widehat{\Lambda} = \Lambda \cup [0,d-1]$ (we assume w.l.o.g. that $\Lambda \cap [0,d-1] = \emptyset$), and let us consider the $\widehat{\Lambda}$-labeling $\widehat{\ell}$ of $G$ defined by 
$$
\widehat{\ell}(v) = \begin{cases}
    \ell(v) \cup \levels(v) & \text{if } v \in V(F),\\
    \ell(v)  & \text{otherwise.}
\end{cases}
$$ 
For each $i\in [0,d-1]$, we denote by $\level_i(x)$ the predicate $\lab_i(x)$, which is true when $x$ has a neighbor at depth $i$. Finally, we define $\widehat{\varphi}$ by replacing each $\adj(x,y)$ predicates by

\begin{align*} 
\bigvee_{1\leq i<j\leq d-1}\Bigg[\bigg(\lca_j(x,x)  \wedge \lca_i(y,y) \wedge \level_i(x) \bigg)  \vee \bigg(\lca_i(x,x)  \wedge \lca_j(y,y) \wedge \level_i(y) \bigg)\Bigg],
\end{align*}
and every equality predicate by $\lca$ predicates as we explained in \Cref{rem:fromFOtolca}. In this way, we obtain a $\lca$-reduced formula $\widehat{\varphi}$ such that $\true(\varphi,H,\ell) = \true(\widehat{\varphi},F,\ell)$. 

Finally, observe that every node $u$ can compute $\widehat{\Lambda}, \widehat{\varphi}$, and $\widehat{\ell}(u)$ in a single communication round. In fact $\widehat{\Lambda}, \widehat{\varphi}$, and $\widehat{\ell}(u)$ for every $u\in V(G)\smallsetminus V(H)$ are computed without any communication, as these values do not depend on the instance. To compute $\widehat{\ell}(u)$ for $u\in V(F)$, each node broadcasts its depth in $F$ to all its neighbors. After this communication round, $\levels(u)$ can be set as the set of neighbors with a smaller depth in $F$, and the computation of $\widehat{\ell}(u) = \ell(u) \cup \levels(u)$ follows.
\end{proof}

We now prove that existential quantifier in formulas on graphs of bounded
treedepth can be eliminated efficiently. The reduction results in a \(\lca\)-reduced formula to be checked on a decomposition
forest of the considered graph.

\begin{lemma}
\label{lem:qelim:td}
Let \(\Lambda\) be a set of labels,
\(\varphi(\overline x) = \exists \overline y, \zeta(\overline x, \overline y)
\wedge \ball((\overline x, \overline y), x_1, r)\) be a \(r\)-local existential
formula in \(\FO[\Lambda]\), where \(r\in \N\) and \(\zeta\) is a quantifier free formula.
There exists a set of labels~\(\widehat \Lambda\), and an \(\lca\)-reduced formula
\(\widehat \varphi(\overline x)\) such that, for every treedepth-\(t\) induced subgraph \((H,\ell)\)
of a \(\Lambda\)-labeled graph \((G,\ell)\), and for every decomposition forest \(F\) of \(H\),
there exists a \(\widehat \Lambda\)-labeling \(\widehat \ell\) such that:
\[
\true(\varphi, H, \ell)
= \true(\widehat\varphi, F, \widehat \ell)
= \true(\widehat\varphi \wedge\ball(\overline x, x_1,2^t), F, \widehat \ell).
\]
Moreover, there exists an algorithm computing \(\widehat \ell\) in \(\mathcal O(1)\) \CONGEST
rounds, assuming each node \(v\) is given as input \(\ell(v)\), whether it belongs to \(F\) and,
if applicable, its depth and parent in \(F\).
\end{lemma}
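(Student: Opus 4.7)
The plan is to compose Lemma~\ref{lem:treedepthtoforest}, which transfers the problem from the treedepth-$t$ graph $H$ to its decomposition forest $F$, with Lemma~\ref{lem:local_elimination_forest}, which eliminates the existential quantifier on forests of $\mathcal{F}_d$. By Proposition~\ref{prop:depth}, $F \in \mathcal{F}_{2^t}$, so each tree of $F$ has diameter at most $2^t - 1$, which is precisely the bound that will appear in the conclusion. The main obstacle will be reconciling the two semantics of the $\ball$ predicate: in $\varphi$ it is evaluated in $H$, whereas after translation to the forest it must be evaluated in $F$. The bridge is the observation that the connected components of $H$ coincide with the trees of $F$, so that any tuple within $H$-distance $r$ of $v_1$ automatically lies in the tree of $v_1$ in $F$, and hence within $F$-distance at most $2^t-1$ of $v_1$.

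I would first remove the $\ball$ predicate from the body of $\varphi$ by applying Lemma~\ref{prop:beta}: each atom $\ball(u, x_1, r)$ is rewritten as $\exists \overline w_u\, \psi_r(x_1, u, \overline w_u)$ for a quantifier-free $\psi_r$ encoding a path. Collecting the new path vertices into a fresh tuple $\overline W$, the formula $\varphi(\overline x)$ becomes $\exists \overline y\, \exists \overline W\, \zeta^*(\overline x, \overline y, \overline W)$ with $\zeta^*$ quantifier-free over $\adj$, equality, and label predicates. I then apply Lemma~\ref{lem:treedepthtoforest} to $\zeta^*$, obtaining an $\lca$-reduced formula $\widehat{\zeta^*} \in \FO[\widehat\Lambda_1]$ and a $\widehat\Lambda_1$-labeling $\widehat\ell_1$ such that $\true(\zeta^*, H, \ell) = \true(\widehat{\zeta^*}, F, \widehat\ell_1)$. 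Pushing the two outer existentials through, $\true(\varphi, H, \ell) = \true(\exists \overline y\, \exists \overline W\, \widehat{\zeta^*}, F, \widehat\ell_1)$. Note that $\widehat\Lambda_1$ and $\widehat{\zeta^*}$ depend only on $\Lambda$, $\varphi$, and $t$, as required.

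Setting $r' := \max(r, 2^t)$, I reintroduce a distance predicate and consider
\[
\varphi''(\overline x) := \exists \overline y\, \exists \overline W\, \bigl(\widehat{\zeta^*}(\overline x, \overline y, \overline W) \wedge \ball((\overline x, \overline y, \overline W), x_1, r')\bigr).
\]
On $(F, \widehat\ell_1)$, this formula is equivalent to $\exists \overline y\, \exists \overline W\, \widehat{\zeta^*}$, because the path formulas encoded inside $\widehat{\zeta^*}$ force all variables and all intermediate path vertices to lie in the connected component of $v_1$ in $H$, hence in the tree of $v_1$ in $F$, and therefore within $F$-distance at most $2^t-1 \le r'$ of $v_1$. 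The formula $\varphi''$ is thus an $r'$-local $\lca$-expressed existential formula with $r' \geq d := 2^t$, so Lemma~\ref{lem:local_elimination_forest} applies and produces an $\lca$-reduced $\widehat\varphi \in \FO[\widehat\Lambda]$ together with a labeling $\widehat\ell$ satisfying
\[
\true(\varphi'', F, \widehat\ell_1) = \true(\widehat\varphi, F, \widehat\ell) = \true(\widehat\varphi \wedge \ball(\overline x, x_1, 2^t), F, \widehat\ell).
\]
Chaining this with the equality from the previous step yields the conclusion of the lemma.

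For the \CONGEST\ implementation, I would pipeline the $1$-round procedure of Lemma~\ref{lem:treedepthtoforest} that produces $\widehat\ell_1$, with the $\cO(1)$-round procedure of Lemma~\ref{lem:local_forest_computing} that produces $\widehat\ell$ from $\widehat\ell_1$. Both rely on the same local input (labels, membership in $F$, depth and parent in $F$), all of which remain available to every node throughout, so the total round complexity stays $\cO(1)$. The only delicate point in the entire argument is the one already highlighted, namely the transition from an $H$-ball to an $F$-ball, which is handled cleanly by the treedepth/diameter bound $2^t$ guaranteed by Proposition~\ref{prop:depth}.
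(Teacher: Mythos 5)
Your argument is essentially the same as the paper's: both rely on Proposition~\ref{prop:beta}, Lemma~\ref{lem:treedepthtoforest}, and Lemma~\ref{lem:local_forest_computing}, and both hinge on the fact that connected components coincide in $H$ and $F$ together with the diameter bound from Proposition~\ref{prop:depth}; the paper simply enlarges the ball radius to $2^t$ (via the $\beta_{r,2^t}$ form) so the ball predicate transfers unchanged between $H$ and $F$, whereas you unfold it into FO and reintroduce it. One small imprecision worth flagging: the claim that ``the path formulas force all intermediate path vertices to lie in the connected component of $v_1$'' is not true for an arbitrary witness of $\beta_r$, because the unused padding variables $z_{j+1},\dots,z_r$ are left unconstrained when the witnessing path has length $j<r$; the equivalence $\true(\varphi'', F, \widehat\ell_1) = \true(\exists\overline y\,\exists\overline W\,\widehat{\zeta^*}, F, \widehat\ell_1)$ is nonetheless correct, but it requires the modified-witness argument of the second part of Proposition~\ref{prop:beta} (setting the padding variables to $v_1$), not the per-witness claim you made.
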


The lemma above considers both a graph \(H\) of bounded treedepth, and a
decomposition forest~\(F\).  The distances in both structures are not the same, and 
two vertices may be further apart in \(F\) than they are in~\(H\). However, the
connected components remain the same in both graphs.

\begin{proof}
We are assuming
that each node \(u\) initially knows \(\Lambda, \varphi, \ell(u)\), whether it belongs to~\(H\),
and, if applicable, its depth and parent in \(F\).
We consider graphs of treedepth at most \(t\), thus their connected components have
diameter at most~\(2^t\).
If \(r < 2^t\), then, by \Cref{prop:beta} we can rewrite
\(\ball(z, x_1, r)\) as an existential formula of
the form \(\exists \overline w, \psi(x_1, z, \overline w) \wedge \ball((z,\overline w),
x_1, 2^t)\).
Thus, without loss of generality, we consider the case \(r \geq 2^t\). Therefore, the balls of
radius \(r\) are then exactly the connected components of \(H\).

By \Cref{lem:treedepthtoforest}, there exists an \(\lca\)-reduced formula
\(\zeta_1(\overline x, \overline y)\) such
that, for every induced labeled subgraph \( (H,\ell)\) of treedepth at most~\(t\), for every
decomposition forest \(F\) of~\(H\), there exists a labeling \(\ell_1\) such that
\(\true(\zeta, H, \ell) = \true(\zeta_1, F, \ell_1)\), and \(\ell_1\) can be computed in
\(\mathcal O(1)\) rounds. Moreover, since connected
components are the same in \(H\) and \(F\), we also have 
\[
\true(\zeta \wedge
\ball((\overline x, \overline y), x_1, r), H, \ell) = \true(\zeta_1 \wedge
\ball((\overline x, \overline y), x_1, r), F, \ell_1).
\]
The proof is completed by applying Lemma \ref{lem:local_forest_computing} to the formula \(\exists
\overline y, \zeta_1(\overline x, \overline y) \wedge \ball((\overline x,\overline y),
x_1,r)\).
\end{proof}

%============================================
\subsection{Graphs of Bounded Expansion}
\label{ss:local_bexp}
%============================================

Let us fix a graph class $\mathcal{G}$  of expansion~$f$. Given a set of labels~$\Lambda$,
let $\mathcal{G}[\Lambda]$ be the set of $\Lambda$-labeled graphs $(G,\ell)$ such that
$G\in\mathcal{G}$. Let  $(G,\ell) \in \mathcal{G}[\Lambda]$ be the input instance. For every
integer~$p$, let $\Col(p)$ be the collection of sets $U\subseteq[f(p)]$ with cardinality
at most $p$.

% - - - - - - - - - - - - - - - - - - - - - - - - - - - - 
\subsubsection{Skeletons}
% - - - - - - - - - - - - - - - - - - - - - - - - - - - - 

We introduce a central notion for our proof of \Cref{thm:thmFOlocal}. 

\begin{definition}
Let \(p \in \N\), and let $G$ be a graph of expansion~$f$.  A \emph{\(p\)-skeleton} of \(G\) is a triple $S = (G, c, \delta)$  where \(c\) is \((p,f(p))\)-treedepth coloring of~\(G\), and $\delta$ is a function mapping every $U\in \Col(p)$ to a decomposition forest $F^U$ of the subgraph \(G[U]\) induced by nodes with colors in~\(U\). Labeled \(p\)-skeletons are pairs \((S,\ell)\), where $S= (G, c, \delta)$ is a \(p\)-skeleton, and \(\ell\) is a labeling of \(G\).
\end{definition}

Note that, by definition of a \((p,f(p))\)-treedepth coloring~$c$, the graph $G[U]$ is of treedepth at most~$p$ for every $U\in \Col(p)$.
Hence, by \Cref{prop:depth}, we have $F^U\in \mathcal{F}_{2^p}$. The following lemma states that a $p$-skeleton can be computed in $\cO(\log n)$ rounds in the \CONGEST\ model, where the big-$\cO$ notation hides constants depending only of~$p$,  and of the expansion of the class.

\begin{lemma}
    \label{prop:skeleton-construction}
    For every \(p \in \N\), there exists an algorithm that constructs a \(p\)-skeleton
    of \(G\) in \(\mathcal O(\log n)\) rounds in  \CONGEST. Specifically, at the end of the algorithm, each node
    knows its own color, and, for every set \(U\in\Col(p)\) containing its color,
    its parent and depth in the forest~\(F^U\) corresponding to the skeleton.
\end{lemma}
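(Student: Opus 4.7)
My plan is to construct the $p$-skeleton in two stages: first the $(p,f(p))$-treedepth coloring, then the decomposition forests. For the coloring, I invoke Proposition~\ref{th:congestltd} with parameter $p$, which computes the coloring $c$ in $\cO(\log n)$ rounds of \CONGEST\/ so that each node knows its own color. Since $f(p)=\cO(1)$, the collection $\Col(p)$ has cardinality $\cO(1)$, hence it suffices to compute each forest $F^U$ in $\cO(1)$ rounds per $U$.

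Fix $U\in\Col(p)$. By the treedepth coloring property, $G[U]$ has treedepth at most $|U|\leq p$, and by Proposition~\ref{prop:depth} every decomposition forest of $G[U]$ has trees of diameter at most $2^p-1$. In particular, every connected component of $G[U]$ has diameter at most $2^p-1=\cO(1)$. I build $F^U$ by an iterative peeling procedure mirroring the recursive definition of a treedepth decomposition, where at each iteration one vertex per currently-active subcomponent is eliminated and assigned its depth in $F^U$.

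At iteration $1$, within each connected component of $G[U]$, the vertex of smallest identifier is elected through $\cO(2^p)$ rounds of min-flooding restricted to active vertices, assigned depth~$1$, and marked as a root of a tree of $F^U$. At iteration $i>1$, every still-active subcomponent $C$ carries a distinguished \emph{parent local root} $r_{i-1}$, namely the vertex eliminated at iteration $i-1$ whose removal produced $C$ as a connected component of the preceding subcomponent. Inside $C$, I elect the smallest-identifier vertex among those adjacent in $G[U]$ to $r_{i-1}$, assign it depth $i$, and set its parent in $F^U$ to $r_{i-1}$. Such a frontier vertex exists because $C$ was formed by removing $r_{i-1}$ from a strictly larger connected subgraph, so at least one vertex of $C$ must be adjacent to $r_{i-1}$ in $G[U]$. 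After elimination of the new local root $r_i$, $\cO(2^p)$ further rounds of flooding inform every vertex of each new sub-subcomponent that its parent local root is now $r_i$. The procedure terminates when all vertices of $G[U]$ have been eliminated; by Proposition~\ref{prop:depth}, the produced forest has depth at most $2^p-1$, so only $\cO(2^p)=\cO(1)$ iterations occur.

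The forest produced is a valid decomposition forest: every parent--child edge lies in $G[U]$ by construction, and for any edge $\{u,w\}\in G[U]$ with $u$ eliminated first, $u$ was a local root of a subcomponent containing $w$ at that moment, hence $u$ is an ancestor of $w$ in $F^U$. Each iteration uses $\cO(2^p)=\cO(1)$ rounds of \CONGEST, so Stage~2 runs in $\cO(1)$ rounds per $U$ and therefore in $\cO(1)$ rounds overall. Combined with the coloring stage, the total is $\cO(\log n)$. The main subtlety lies in the parent assignment: electing the local root of a subcomponent specifically among its vertices adjacent to the \emph{immediate} parent local root $r_{i-1}$, rather than, e.g., the global minimum-identifier vertex of the subcomponent, is what guarantees both that each parent--child edge of $F^U$ lies in $G[U]$ and that the subcomponent structure induces the correct ancestor relation for every non-tree edge; a naive ``min-ID in active subcomponent'' rule can otherwise select a local root not adjacent to any already-eliminated vertex, breaking the spanning-subgraph requirement.
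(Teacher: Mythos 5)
Your construction is correct, and for the second stage (building each decomposition forest $F^U$) you take a genuinely different, more self-contained route than the paper. After electing the min-identifier root of each component of $G[U]$ by flooding, the paper invokes the decomposition-tree construction of Lemma~11 in~\cite{FominFMRT24} as a black box (costing $2^{2p}$ rounds), whereas you give an explicit iterative-peeling procedure: eliminate one vertex per active subcomponent per iteration, always chosen from the frontier of the immediately-previous local root, and record its parent and depth. You correctly identify the crucial point, namely that restricting candidates to the frontier of $r_{i-1}$ is exactly what forces every tree edge of $F^U$ to lie in $G[U]$, so the forest is a subgraph of the communication graph as required. Both approaches cost $\cO(2^{2p})$ rounds per $U$, hence $\cO(1)$ rounds for the whole stage; yours removes the external dependency at the cost of spelling out the elimination invariant and the ancestor argument. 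Two points to tighten: the bound ``only $\cO(2^p)$ iterations occur'' is justified by applying Proposition~\ref{prop:depth} to the \emph{completed} forest, which is mildly circular — argue instead directly that $2^p+1$ iterations in some branch would exhibit a path on more than $2^p$ vertices of $G[U]$ (the chain of eliminated local roots together with the next frontier vertex), contradicting treedepth at most~$p$; and you assign roots depth~$1$, whereas the paper's $\lca$ conventions (e.g.\ $\lca_0(x,x)$ characterizing roots) place them at depth~$0$, so shift your depth indices accordingly so that downstream predicates match.
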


\paragraph{Remark.} In the remaining of the paper, when an algorithm takes a skeleton as input,
or outputs a skeleton, it is assumed
that the skeleton is represented in a distributed manner, in the way the algorithm in \Cref{prop:skeleton-construction} computes it.

\begin{proof}
The algorithm of \Cref{th:congestltd} enables to compute a $(p,f(p))$-treedepth coloring $c$ of $G$, in $\cO(\log n)$ rounds. After running this algorithm, each node $u$ knows $c(u)$. Let us order the sets in $\Col(p)$ lexicographically. The rest of the algorithm runs in $N = |\Col(p)| \leq 2^{f(p)}$ phases, where each phase takes at most $2^{2p+1}$ rounds. The $i$-th phase  consists of dealing with the $i$th set $U\in \Col(p)$. Let us fix one such set~$U$. For every $u\in V(G)$, if $c(u)\notin U$, then $u$ does not participate in that phase (it remains idle, and, in particular, it does not forward any messages). Only the nodes $u$ such that $c(u)\in U$ exchange messages during phase~$U$. The phase is divided in two steps. 

\begin{itemize}
    \item In the first step, each node $u\in G[U]$ performs  $2^p$ rounds during which it broadcasts the value of a variable $\textsf{min-id}(u)$.  Initially, $\textsf{min-id}(u)=\id(u)$. In the subsequent rounds, $\textsf{min-id}(u)$ is the minimum identifier of a node in the component of $G[U]$ containing~$u$ (i.e., the minimum identifier received by $u$ from a neighbor in~$U$). Observe that $G[U]$ is a graph of diameter at most~$2^p$, and thus $\textsf{min-id}(u)$ is eventually the smallest identifier in the whole component of $G[U]$ containing~$u$. After the first $2^p$ rounds of the phase, the node with identifier $\textsf{min-id}(u)$ is the root of the tree containing~$u$. Let $V_1, \dots, V_h$ be the components of $G[U]$, and, for every $i\in [h]$, let $r_i$ be the node of minimum identifier $\id(r_i)$ among all nodes in $V_i$. Note that, after step~1 has completed, every node $u\in V_i$ has $\textsf{min-id}(u) = \id(r_i)$. 

    \item In the second step, the algorithm runs in parallel in all components~$V_i$, $i\in\{1,\dots,h\}$ for computing a decomposition tree $T_i$ of $G[U]$ rooted at~$r_i$. For this purpose, we use the algorithm from \cite{FominFMRT24} (cf. Lemma~11 therein). This algorithm performs in $2^{2p}$ rounds. 
\end{itemize}

At the end of these two steps, every node $u$ knows its depth in its tree~$T_i$, as well as the identifier of its parent in~$T_i$. Observe that the collection of trees  $\{T_i\mid i\in [h]\}$ is a decomposition forest $F^U$ of $G[U]$. This completes the description of the phase for set~$U$. The whole algorithm runs in $\cO(\log n) + |\Col(p)|\cdot 2^{2p+1} = \cO(\log n)$ rounds.
\end{proof}

% - - - - - - - - - - - - - - - - - - - - - - - - - - - - 
\subsubsection{Reducibility on Graphs of Bounded Expansion}
\label{sss:reducibility-bexp}
% - - - - - - - - - - - - - - - - - - - - - - - - - - - - 

We now define a specific notion of quantifier elimination, denoted \emph{reducibility},
for graphs of bounded expansion. For this purpose, similarly to how   FO formulas on
treedepth-\(t\) graphs are transformed into FO formulas on their decomposition forests, we will transform
FO formulas on graphs of \(\mathcal G\) into FO formulas on their \(p\)-skeletons. For this purpose, we define  additional predicates for describing the structure of \(p\)-skeletons.
\begin{itemize}
    \item For each \(k\in[f(p)]\), the predicate \(\col_k(x)\) checks whether the considered vertex~$x$ has color \(k\) in the skeleton.

    \item For every $U\in \Col(p)$, and every $j\in [-1,d-1]$, the predicate $\lca^U_j(x,y)$ is equivalent to the $\lca_j$ predicates, but on $F^U$. That is,  \(\lca^U_j(x,y)\) is true whenever $x$ and $y$ are colored by colors in \(U\), and \(\lca_j(x,y)\) is true on~\(F^U\). Such predicates are called $\lca^U$ predicates.
\end{itemize}
A formula $\varphi(\overline x) \in \FO[\Lambda]$ is called \emph{$p$-$\lca$-reduced} if it is a quantifier-free, and it contains only label, \(\col\) and $\lca^U$ predicates, for some collection of sets $U\in \Col(p)$. For such a formula, one can ask whether a labeled $p$-skeleton $(S,\ell)$ satisfies~$\varphi$ on some vertices \(\overline v \in S\), i.e., one can ask whether 
\[
(S,\ell)\models \varphi(\overline v), 
\]
where $S=(G,s,\delta)$  is a labeled $p$-skeleton. The set $\true(\varphi,S,\ell)$ is then defined accordingly.
Note that we abusedly write \(\overline v\in S\) instead of \(\overline v \in G\),
which itself is an abuse of notation for \(\overline v \in V\), where \(V\) is the set of vertices
of \(G\).

\begin{definition}\label{def:reducibleonBE}
 A formula $\varphi \in \FO[\Lambda]$ is called \emph{reducible on $\mathcal{G}$} if there exists a positive integer~$p$, a set of labels~$\widehat{\Lambda}$, and a $p$-$\lca$-reduced formula $\widehat{\varphi}$ such that, for every labeled graph $(G,\ell) \in \mathcal{G}[\Lambda]$, and every $p$-skeleton $S$ of $G$, there exists a $\widehat{\Lambda}$-labeling  $\widehat{\ell}$ of \(S\) such that $$\true(\varphi,G,\ell) = \true(\widehat{\varphi},S,\widehat{\ell}).$$
Given a graph $G$ and a \(p\)-skeleton \(S\), the tuple $(p, \widehat{\Lambda}, \widehat{\varphi}, \widehat{\ell})$ is called the \emph{reduction} of $(\Lambda, \varphi, \ell)$ on~$S$.
\end{definition}

Importantly,  when considering local formulas, it is also required that the reduction \(\widehat\varphi\) be 
a local formula too.

\begin{definition}\label{def:local_reducible_BE}
An \(r\)-local formula \(\varphi(\bar x) \in \FO[\Lambda]\) is said \emph{locally reducible on
\(\mathcal G\)} if there exists \(r' \in \N\), such that, for every \((G,\ell) \in \mathcal
G[\Lambda]\), and every \(p\)-skeleton \(S\) of \(G\), there exists a reduction
\((p, \widehat\Lambda,\widehat\varphi,\widehat \ell)\) of \((\Lambda, \varphi, \ell)\) on \(S\)
such that:
\[
\true(\varphi, G, \ell) = \true(\widehat\varphi, S, \widehat \ell) =
\true(\widehat\varphi \wedge \ball(\overline x, x_1, r'), S, \widehat \ell).
\]
\end{definition}

\noindent Note that, when considering distance predicates on a skeleton $S=(G,c,\delta)$, we use the distances
on the underlying graph~$G$.
Our goal is now to show that every (local) formula can be efficiently reduced  in $\mathcal{G}$
by a distributed algorithm. We first establish this result for existential formulas. 

\begin{lemma}
\label{lem:qelim:bexp}
Let \(\Lambda\) be a set of labels, and let \(\varphi(\overline x) = \exists \overline y,
\psi(\overline x, \overline y) \wedge \ball((\overline x, \overline y),
x_1, r)\) be an \(r\)-local existential formula in \(\FO[\Lambda]\), where
\(\psi\) is a quantifier free formula.
For every integer \(p\geq r\cdot (|\overline x|+|\overline y|)\), there exists a set of labels
\(\widehat \Lambda\), and a
\(p\)-\(\lca\)-reduced formula
\(\widehat \varphi(\overline x)\) satisfying that, for every \((G,\ell) \in \mathcal
G[\Lambda]\), and for every \(p\)-skeleton \(S\) of~\(G\),
there exists a \(\widehat \Lambda\)-labeling of \(\widehat \ell\) such that
\[
\true(\varphi, G, \ell)
= \true(\widehat\varphi, S, \widehat \ell)
= \true(\widehat\varphi \wedge\ball(\overline x, x_1,2^p), S, \widehat \ell).
\]
Moreover, there exists an algorithm that computes \(\widehat \ell\) in \(\mathcal O(1)\) rounds
in \CONGEST, assuming \(S\) and \(\ell\) are given as input.
\end{lemma}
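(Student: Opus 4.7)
The plan is to reduce the claim to \Cref{lem:qelim:td} by handling each color subset $U \in \Col(p)$ separately, then taking a disjunction.

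First, I would expand the distance predicate using \Cref{prop:beta}, rewriting
\[
\varphi(\overline x) \equiv \exists \overline y\, \exists \overline z\; \psi(\overline x, \overline y) \wedge \eta(\overline x, \overline y, \overline z),
\]
where $\eta$ is quantifier-free and $\overline z$ enumerates intermediate vertices of paths of length at most $r$ from $x_1$ to each variable in $(\overline x, \overline y) \smallsetminus \{x_1\}$. Writing $k = |\overline x| + |\overline y|$, the total number of variables is at most $k + (k-1)(r-1) \leq rk \leq p$. Hence, for any $p$-skeleton $S=(G,c,\delta)$ and any satisfying assignment, the set of colors used forms some $U \in \Col(p)$, and by the $(p,f(p))$-treedepth coloring property, $G[U]$ has treedepth at most~$p$.

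Second, for each $U \in \Col(p)$, I would apply \Cref{lem:qelim:td} to the relativized formula
\[
\varphi_U(\overline x) \,:=\, \exists \overline y\, \exists \overline z\; \psi \wedge \eta \wedge \bigwedge_{z \in \overline y \cup \overline z} \bigvee_{c \in U} \col_c(z),
\]
evaluated on $G[U]$ with its decomposition forest $F^U$. This yields an $\lca$-reduced formula $\chi_U(\overline x)$ and a labeling $\widehat\ell_U$. By replacing each $\lca_j$ predicate in $\chi_U$ by its skeleton counterpart $\lca^U_j$, and relabeling the auxiliary alphabets disjointly, I obtain the $p$-$\lca$-reduced formula
\[
\widehat\varphi(\overline x) \,=\, \bigvee_{U \in \Col(p)} \Bigl( \bigwedge_{i=1}^{|\overline x|} \bigvee_{c \in U} \col_c(x_i) \Bigr) \wedge \widetilde \chi_U(\overline x),
\]
together with the combined labeling $\widehat\ell(v) = \bigcup_U \widehat\ell_U(v)$. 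Correctness follows by iterating the equivalence of \Cref{lem:qelim:td} over all $U \in \Col(p)$; the locality bound $\ball(\overline x, x_1, 2^p)$ is inherited directly, since each tree of $F^U$ is spanned by a subgraph of $G$ of diameter at most $2^p$ and the $\neg \lca^U_{-1}$ terms in $\widetilde\chi_U$ force all variables to lie inside a single such tree.

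For the \CONGEST implementation, \Cref{prop:skeleton-construction} provides every node with its parent and depth in every $F^U$ containing its color. Since $|\Col(p)| \leq 2^{f(p)} = O(1)$, the $O(1)$-round algorithm of \Cref{lem:qelim:td} can be invoked in parallel for all $U \in \Col(p)$, packing the constant number of resulting label bits into one $O(\log n)$-bit message per round. The main obstacle is the careful bookkeeping of locality: one must check that the expansion via \Cref{prop:beta} yields auxiliary variables $\overline z$ that themselves remain within the $r$-ball around $x_1$ in $G$ (using the $\beta_{r,r}$ variant), so that the chosen $U$ captures the entire witness structure and the $\lca^U$ predicates on the skeleton faithfully mimic the $\lca$ predicates on $F^U$, preserving \Cref{lem:qelim:td}'s locality guarantee.
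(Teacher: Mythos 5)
Your proposal is correct and follows essentially the same route as the paper's proof: bound the number of variables (including the implicit path witnesses for the $\ball$ predicate) by $p$, apply \Cref{lem:qelim:td} per color set $U\in\Col(p)$ on $G[U]$ with forest $F^U$, then combine via a disjunction guarded by $\col_U$ predicates with $\lca_j$ replaced by $\lca^U_j$. The only stylistic difference is that you pre-expand $\ball$ into explicit path variables $\overline z$ and add color constraints on the quantified variables; the paper instead applies \Cref{lem:qelim:td} \emph{once} (which internally handles the $\ball$ expansion) and uses a purely counting argument to show the witness vertices and their paths to $x_1$ fit inside a single $G[U]$ --- this makes the color constraints on the quantified variables unnecessary (they are vacuously true once evaluation is restricted to $G[U]$), and avoids re-deriving a per-$U$ $\lca$-reduced formula when a single uniform $\varphi_1$ suffices.
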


\begin{proof}
Let \(p \geq r(|\overline x|+|\overline y|)\). We apply Lemma \ref{lem:qelim:td} to
\(\varphi\), which says that 
there exists a set of labels~\(\Lambda_1\), and a \(lca\)-reduced formula
\(\varphi_1(\overline x)\) such that, for every labeled
graph \( (H, \ell)\) of treedepth at most \(p\), and for every decomposition forest \(F\)
of \(H\), there is a \(\Lambda_1\) labeling \(\ell_1\) of \(F\) such that
\[
\true(\varphi, H, \ell) = \true(\varphi_1, F, \ell_1) =
\true(\varphi_1\wedge\ball(\overline x, x_1, 2^p), F, \ell_1).
\]
Let us now consider now a \(p\)-skeleton~\(S=(G,s,\delta)\).
For each \(U\in \Col(p)\), the induced subgraph \(G[U]\) has treedepth at most
\(p\), and \(F^U=\delta(U)\) is a decomposition forest of \(G[U]\). Thus, we can
compute the corresponding labeling
\(\ell_1^U\) in \(\mathcal O(1)\) rounds. Let us denote by \(\widehat\varphi^U(\overline x)\) the \(p\)-lca-reduced
formula \(\varphi_1^U(\overline x)\wedge \col_U(\overline x)\). That is, \(\varphi_1^U\) is the formula \(\varphi_1\) in which all
\(\lca\) predicates of were replaced by \(\lca^U\) predicates.
Finally, let us define
\[
\widehat\varphi(\overline x) = \bigvee_{U \in \Col(p)}
\widehat\varphi^U(\overline x),
\]
and, for every node~\(u\), 
\[
\widehat \ell(u) = \bigcup_{U \in \Col(p)} \ell_1^U(u),
\]
where, w.l.o.g., we assume that the formulas 
\(\varphi^U\) use disjoint  sets of labels.
Let us prove that this formula has the desired property. Let \( (G,\ell)\in
\mathcal G[\Lambda]\),  and let \(S\) be a \(p\)-skeleton of \(G\). We prove the
equivalence of the following three statements:
\begin{enumerate}[label=(\roman*)]
\item \( (G,\ell)\models \varphi(\overline{v})\),
\item \( (S,\widehat \ell) \models \widehat\varphi(\overline v)\), and 
\item \( (S, \widehat \ell) \models \widehat\varphi(\overline v) \wedge \ball(\overline v, 
v_1, 2^p)\).
\end{enumerate}
The implication \((iii)\Rightarrow (ii)\) is obvious. Let us prove \( (i) \Rightarrow  (iii)\).
For this purpose, let us assume that 
\[
(G,\ell)\models \psi(\overline v, \overline w) \wedge \ball((\overline v, \overline w),
v_1,r)
\]
for some \(\overline v,
\overline w \subseteq V(G)\). Since \(p \geq r\cdot (|v|+|w|)\), there exists  \(U
\in \Col(p)\) such that \(G[U]\) contains both \(\overline v\) and \(\overline w\), and
also, for every \(u \in
\overline v, \overline w\), a path of length at most \(r\) from \(v_1\) to \(u\). We get that
\[
(G[U],\ell) \models \psi(\overline v, \overline
w) \wedge \ball((\overline v, \overline w), v_1, r).
\]
The first of the following series of implications is a direct consequence of the latter, and the second implication follows from Lemma~\ref{lem:qelim:td}:
\begin{align*}
 (G,\ell)\models \varphi(\overline v) & 
\implies \exists U\in\Col(p), \overline v \subseteq G[U] \wedge
	(G[U], \ell) \models \varphi(\overline v) \\
& \implies \exists U\in\Col(p), \overline v \subseteq G[U] \wedge
	(F^U, \ell_1^U) \models \varphi_1(\overline v) \wedge
	\ball(\overline v, v_1, 2^p) \\
& \implies \left[\exists U\in \Col(p),
	(S, \ell_1^U) \models \widehat\varphi^U(\overline v)\right] \wedge \left[
	\exists U \in \Col(p),
	F^U \models \ball(\overline v, v_1, 2^p)\right] \\
& \implies \left[(S, \widehat \ell) \models \widehat\varphi(\overline v) \right]
	\wedge \left[S \models \ball(\overline v, v_1, 2^p)\right] \\
& \implies (S, \widehat \ell) \models \widehat\varphi(\overline v) \wedge
	\ball(\overline v, v_1, 2^p)
\end{align*}
Finally we prove \( (ii) \Rightarrow  (i)\). We have
\begin{align*}
(S,\widehat \ell) \models \widehat\varphi(\overline v) 
& \implies \exists U \in \Col(p),
	(S, \ell_1^U) \models \widehat\varphi^U(\overline v) \\
&\implies \exists U \in \Col(p), \overline v \subseteq G[U] \wedge
	(F^U, \ell_1^U) \models \varphi_1(\overline v) \\
& \implies \exists U \in \Col(p), \overline v \subseteq G[U] \wedge
	(G[U], \ell) \models \varphi(\overline v) \\
& \implies (G,\ell) \models \varphi(\overline v)
\end{align*}
where the penultimate implication follows from Lemma~\ref{lem:qelim:td}, and the last from the fact that $\varphi$ is existential (i.e., if the quantified variables exist in a subgraph, then they exist in~$G$).  
This completes the proof. 
\end{proof}

\begin{lemma}
\label{lem:qelim:remove-lca}
Let \(p\in \N\), let \(\Lambda\) be a set of labels, and let \(\varphi(\overline x)\) be a
\(p\)-\(lca\)-reduced formula. There exists a set of labels \(\widehat
\Lambda\), and there exists  a quantifier free formula \(\psi(\overline x, \overline y)\) such
that, for every \(G\in \mathcal G\), for every 
\(\Lambda\)-labeling \(\ell\) of \(G\), and for every  \(p\)-skeleton \(S\) of \(G\), there exists a labeling \(\widehat \ell\) of \(G\) for which 
\[
\true\big((\varphi \wedge \ball(\overline x, x_1, r)), S, \ell\big)
= \true(\widehat\varphi, G, \widehat \ell)
\]
where
\[
\widehat\varphi(\overline x) = \exists \overline y \, \big(\psi(\overline x,
\overline y) \wedge \ball((\overline x, \overline y), x_1, r+2^p)\big).
\]
Moreover, \(\widehat \ell\) can be computed without any communication,
assuming \(S\) and \(\ell\) are given as inputs.
\end{lemma}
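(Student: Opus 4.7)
The plan is to eliminate each $\lca^U$ predicate by introducing existentially quantified ``ancestor'' variables that trace the path from the free variables up to the root of their tree in the forest $F^U$. Since $F^U$ has depth at most $2^p$ by \Cref{prop:depth}, each vertex has at most $2^p$ ancestors in $F^U$, and they all lie within graph-distance $2^p$ of the vertex because $F^U$ is a spanning subgraph of $G[U]$; this explains the $r+2^p$ radius in the conclusion. The new labeling $\widehat\ell$ simply augments $\ell$ with color labels $\col_c$ for $c\in[f(p)]$ and depth labels $\depth^U_d$ for every $U\in\Col(p)$ and $d\in[0,2^p-1]$. Because each vertex is told by the skeleton $S$ which color it carries and what its depth is in every relevant $F^U$, the labeling $\widehat\ell$ can be computed without any communication.

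The core observation is that, in a decomposition forest $F^U$, the parent of $v\in G[U]$ is the \emph{unique} neighbor $u$ of $v$ in $G$ with $u\in G[U]$ and $\depth^U(u)=\depth^U(v)-1$: every edge of $G[U]$ is an ancestor/descendant edge of $F^U$, so any such $u$ is an ancestor of $v$, and being at depth one less forces it to coincide with the parent. Consequently, ``$u$ is the parent of $v$ in $F^U$'' is expressible by the quantifier-free formula
\[
\parent^U(v,u) \;\equiv\; \adj(v,u) \wedge \bigvee_{d\in[1,2^p-1]} \bigl(\depth^U_d(v) \wedge \depth^U_{d-1}(u)\bigr).
\]
For each free variable $x_i$ and each $U$ mentioned in $\varphi$, I would then introduce $2^p$ fresh existential variables $\alpha^{U,i}_0,\ldots,\alpha^{U,i}_{2^p-1}$ and enforce their intended meaning through a conjunction $\chi^{U,i}$ asserting: whenever $\depth^U_d(x_i)$ holds, $\alpha^{U,i}_d = x_i$ and $\parent^U(\alpha^{U,i}_k,\alpha^{U,i}_{k-1})$ for every $k\in[1,d]$. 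Because the parent map is functional, this pins down each $\alpha^{U,i}_k$ uniquely whenever $x_i\in G[U]$; when $x_i\notin G[U]$, the constraint is vacuous and the witnesses can be set to $x_1$ to satisfy the outer ball predicate.

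Each $\lca^U_j(x_i,x_{i'})$ is then rewritten as a quantifier-free Boolean combination of $\col_U$, depth labels, and equalities on ancestor variables: for $j\geq 0$, ``$\alpha^{U,i}_j = \alpha^{U,i'}_j$, and either one of $x_i,x_{i'}$ has depth exactly $j$ or $\alpha^{U,i}_{j+1}\neq\alpha^{U,i'}_{j+1}$''; and for $j=-1$, ``$\alpha^{U,i}_0 \neq \alpha^{U,i'}_0$'', together with $\col_U(x_i)\wedge \col_U(x_{i'})$ in both cases. Because these replacements are purely syntactic equalities, they may safely appear under arbitrary Boolean contexts (including negations); the introduced existential quantifiers can therefore all be placed at the outermost level from the start, yielding a formula of the form $\widehat\varphi(\overline x) = \exists\overline y\,\bigl(\psi(\overline x,\overline y)\wedge \ball((\overline x,\overline y),x_1,r+2^p)\bigr)$ with $\psi$ quantifier-free. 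The step I expect to require the most care is the translation of $\lca^U_j$ for $j\geq 0$: handling the boundary case in which one of $x_i, x_{i'}$ sits at depth exactly $j$ (so that no $(j+1)$-st ancestor exists) requires the equality formula to interact correctly with the $\chi^{U,i}$ constraints, and verifying that unused ancestor witnesses can always be chosen inside the ball of radius $r+2^p$ needs a small but unavoidable bookkeeping argument.
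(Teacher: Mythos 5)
Your approach diverges from the paper's in an interesting way. The paper first puts $\varphi$ into CNF, eliminates negated $\lca^U$-predicates by rewriting $\neg\lca^U_i(x,y)$ as a disjunction of positive ones (exploiting that exactly one $\lca^U_i$ holds), and then replaces each \emph{occurrence} of an $\lca^U_i$-predicate by a fresh existential block encoding the two root-to-vertex paths; the CNF structure and the absence of negated $\lca$'s are what let the existentials be pulled to the front. You instead introduce a \emph{global} stock of ancestor witnesses, one chain $\alpha^{U,i}_0,\ldots,\alpha^{U,i}_{2^p-1}$ per free variable $x_i$ and per $U$, pinned down uniquely by the constraint $\chi^{U,i}$ via the functionality of the parent map. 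Because the witnesses are uniquely determined wherever they matter, the replacement of each $\lca^U_j$ is \emph{quantifier-free} and can be performed inside arbitrary Boolean contexts — no CNF normalization and no special treatment of negation is needed. This is a genuinely different, and arguably cleaner, route to the same lemma.

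However, there is a real gap in the translation of $\lca^U_j(x_i,x_{i'})$ for $j\ge 0$. Your constraint $\chi^{U,i}$ only pins down $\alpha^{U,i}_0,\ldots,\alpha^{U,i}_{d}$ when $\depth^U(x_i)=d$; the variables $\alpha^{U,i}_{d+1},\ldots,\alpha^{U,i}_{2^p-1}$ are completely unconstrained. Now suppose $x_i,x_{i'}\in G[U]$ but $\depth^U(x_i)=d<j$. Then $\lca^U_j(x_i,x_{i'})$ must be \emph{false} (the path from $x_i$ to its root has only $d+1\le j$ nodes, so at most $j$ can be shared), yet $\alpha^{U,i}_j$ and $\alpha^{U,i}_{j+1}$ are unconstrained, so the outer existential quantifier is free to set $\alpha^{U,i}_j:=\alpha^{U,i'}_j$ and $\alpha^{U,i}_{j+1}\ne\alpha^{U,i'}_{j+1}$, making your rewriting of $\lca^U_j(x_i,x_{i'})$ true and breaking the claimed equality $\true(\varphi\wedge\ball(\overline x,x_1,r),S,\ell)=\true(\widehat\varphi,G,\widehat\ell)$. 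The depth-$=j$ boundary case you flag is handled by your disjunct, but the depth-$<j$ case is not. The fix is simple: conjoin $\bigl(\bigvee_{d\ge j}\depth^U_d(x_i)\bigr)\wedge\bigl(\bigvee_{d'\ge j}\depth^U_{d'}(x_{i'})\bigr)$ to the rewriting of $\lca^U_j(x_i,x_{i'})$ (just as the paper's $\zeta^U_i$ implicitly enforces this through the disjunction over $(a,b)\in[\max\{0,i\},2^p-1]^2$). With that repair, and with the cosmetic bookkeeping of setting unused witnesses to $x_1$ for the ball predicate, your argument goes through.
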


The main idea of the proof is to replace each \(\lca^U\)-predicate of \(\varphi\) by an existential
FO formula on the graph \(G\) underlying the (labeled) \(p\)-skeleton \((S,\ell)\). To that end, we
augment the labeling \(\ell\) with additional labels, which encode the structure of \(S\).
Finally, we add distance predicates to enforce the locality of the resulting formula.

\begin{proof}
Let us assume that  \(\varphi\) is written in conjunctive normal form (CNF).
Let \(d=2^p\). We have that 
\[
\neg\lca_i(x,y) \equiv \neg\col_U(x) \vee \neg\col_U(y) \vee \bigvee_{j\in [-1,d] \setminus \{i\}}
\lca_j(x,y).
\]
Therefore, we can assume that \(\varphi\) does not have any 
predicate of the form \(\neg\lca_i(x,y)\), as it suffices to replace each occurrence
of such predicates by the corresponding disjunction of predicates. Note that the formula
we obtain is still in conjunctive normal form.
It only remains to replace each predicate 
\(\lca_i^U(x,y)\) by an equivalent existential formula: since \(\varphi\) is in conjunctive
normal form and no \(\lca^U\) predicate is negated, the formula we obtain is existential.
To that end, we add to \(\Lambda\) new labels representing the structure of the \(p\)-skeleton:
let
\[
\widehat \Lambda = \Lambda
\cup \{\col_k, k \in [f(p)]\}
\cup \{\depth_k^U, U\in \Col(p), k \in [2^p]\}.
\]
We assume w.l.o.g. that \(\Lambda\) does not already contain these labels.
We construct the labeling \(\widehat \ell\) as follows, without any communication. Each node
\(v\) adds its own color to its labeling, and, for each
\(U \in \Col(p)\) that contains \(v\)'s color, node $v$ adds its depth in \(F^U\) to its label.

Consider now a predicate \(\lca_i^U(y,z)\). We define a formula
\(\zeta_i^U \in \FO[\widehat \Lambda]\) such that
\((S,\ell)\models \lca_i^U(u,v) \iff (G,\widehat\ell) \models \zeta_i^U(u,v)\). 
\begin{align*}
\zeta_i^U(y,z) =\ &  \exists (y_0, \dots, y_d)\;  \exists (z_0, \dots, z_d)\;
\bigvee_{(a,b) \in [\max\{0,i\}, d-1]^2} \Biggl( (y_a = y) \wedge (z_b = z)
\\
 & \wedge \Big(\bigwedge_{s\in [0,i]} y_s = z_s\Big)
\wedge \Big(\bigwedge_{s\in [i+1, \min(a,b)]} y_s \neq z_s\Big)
\wedge \depth_0^U(y_0) \wedge \depth_0^U(z_0)
\\
& \wedge \Big(\bigwedge_{s\in [1,a]} \depth_s^U(y_s) \wedge \adj(y_{s-1}, y_s)\Big)
\wedge \Big(\bigwedge_{s\in [1,b]} \depth_s^U(z_s) \wedge \adj(z_{s-1}, z_s)\Big)\Biggr).
\end{align*}
Indeed, \(a\) (resp. \(b\)) is the depth of \(y\) (resp. \(z\)) in \(F^U\),
and \(y_0, \dots, y_a\) (resp. \(z_0, \dots, z_b\)) correspond to the path from
the root of \(y\)'s tree to \(y\) (resp. from the root of \(z\)'s tree to \(z\)).
The equality predicates on the first line, and the depth predicates on the second line
guarantee that the endpoints of the path are as expected.
The equality predicates on the second line guarantee that both paths share exactly
\(i+1\) vertices. The depth and adjacency predicates on the third line
guarantee that, for every \(s\in[1,a]\) (resp., every \(s\in[1,b]\)), \(y_{s-1}\) (resp., \(z_{s-1}\))
is the parent of \(y_{s}\) (resp., \(z_{s}\)). By construction, we can assume that each of the \(y_i\) (resp., \(z_i\)) we
introduced are at distance at most \(d\) from \(y\) (resp., \(z\)).

We would like to replace each occurrence of \(\lca_i^U\) in \(\varphi\) by \(\zeta_i^U\), while
preserving locality. To that end, we add a distance predicate to the formula \(\zeta_i^U\),
which enforces locality.
Recall that we are interested by the formula \(\varphi \wedge \ball(\overline x, x_1, r)\).
Whenever the formula is satisfied, every \(x_i\) is at distance
at most \(r\) from \(x_1\). This implies that every  $y_s$ and $z_s$ introduced in
\(\zeta_i^U(y,z)\) is at distance at most \(r+d\) from \(x_1\). We thus replace
each \(\lca_i^U(y,z)\) predicate in \(\varphi\) by the following formula:
\begin{align*}
\exists (y_0, \dots, y_d)\; &  \exists (z_0, \dots, z_d) \; \Bigg(
\ball((\overline y, \overline z), x_1, r+d) \wedge \bigvee_{(a,b) \in [\max(0,i), 2^p]^2} \Biggl[ (y_a = y) \wedge (z_b = z)
\\
\wedge& \Big(\bigwedge_{s\in [0,i]} y_s = z_s\Big)
\wedge \Big(\bigwedge_{s\in [i+1, \min\{a,b\}]} y_s \neq z_s\Big)
\wedge \depth_0^U(y_0) \wedge \depth_0^U(z_0)
\\
\wedge& \Big(\bigwedge_{s\in [1,a]} \big(\depth_s^U(y_s) \wedge \adj(y_{s-1}, y_s)\big)\Big)
\wedge \Big(\bigwedge_{s\in [1,b]} \big(\depth_s^U(z_s) \wedge \adj(z_{s-1}, z_s)\big)\Big)\Biggr]\Bigg)
\end{align*}
Let us denote by \(\tilde\varphi(\overline x)\) this formula.
By the definition of \(\beta_{r,r'}\) in Proposition \ref{prop:beta}, we can then define
\[
\widehat\varphi(\overline x) = \tilde\varphi(\overline x) \wedge
\beta_{r,r+d}(\overline x) \wedge
\ball(\overline x, x_1, r+d).
\]
By construction, \(\widehat\varphi\) is existential, and thus
\[
\true(\widehat\varphi, G, \widehat \ell) =
\true(\widehat\varphi, G, \widehat \ell) \cap \true(\ball(\overline x, x_1,r), G, \widehat \ell)
= \true(\varphi \wedge \ball(\overline x, x_1, r), S, \ell),
\]
which completes the proof. 
\end{proof}

We now tackle the general case.

\begin{lemma}
\label{lem:qelim:induction}
Let \(\Lambda\) be a set of labels, and let \(\varphi(\overline x)\) be an FO formula in
\(r\)-local form. There exists \(p\in \N\), a set of labels \(\widehat\Lambda\),
and a \(p\)-\(lca\)-reduced formula \(\widehat\varphi(\overline x)\) such that, for
every labeled graph \( (G,\ell)\in \mathcal G[\Lambda]\), and for every  \(p\)-skeleton \(S\) of \(G\), there
exists a \(\widehat \Lambda\)-labeling \(\widehat \ell\) for which
\[
\true(\varphi, G, \ell) = \true(\widehat\varphi, S, \widehat \ell) =
\true(\widehat\varphi \wedge \ball(\overline x, x_1, 2^p),
S, \widehat \ell)
\]
Moreover \(\widehat \ell\) can be computed in \(\mathcal O(\log n)\)  rounds in the \CONGEST model, assuming \(S\) and \(\ell\) are given as inputs.
\end{lemma}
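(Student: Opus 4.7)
The plan is to proceed by induction on the structure of the $r$-local form of $\varphi$ given by \Cref{def:r-local-form}, using two main building blocks. \Cref{lem:qelim:bexp} reduces an \emph{existential} local FO formula on $G$ to a $p$-$\lca$-reduced formula on a $p$-skeleton, while \Cref{lem:qelim:remove-lca} goes the other way, expressing a $p$-$\lca$-reduced formula as an existential FO formula on~$G$ (under a distance predicate that encodes locality). Together they allow me to iterate: once a block of existential quantifiers is eliminated, I move back to~$G$ by removing the $\lca^U$-predicates, absorb any new quantifiers that appear, and eliminate them again via \Cref{lem:qelim:bexp}. This alternation is what allows the induction to carry through nested quantifiers.

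Concretely, in the base case $\varphi=\psi\wedge\ball(\overline y,x,r)$ with $\psi$ quantifier-free, I expand the distance predicate via \Cref{prop:beta} into an existential formula with a quantifier-free kernel, and apply \Cref{lem:qelim:bexp}. In the existential case $\varphi=\exists z\,\psi(x,\overline y,z)$, I invoke the induction hypothesis on $\psi$ to obtain a $p_0$-$\lca$-reduced $\widehat\psi$, then use \Cref{lem:qelim:remove-lca} to rewrite $\widehat\psi$ as an existential FO formula on~$G$; combining the outer $\exists z$ with the new existentials yields an existential local formula, to which \Cref{lem:qelim:bexp} produces the desired $p$-$\lca$-reduced~$\widehat\varphi$. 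The universal case $\varphi=\ball(\overline y,x,r)\wedge\forall z\,(\neg\ball((\overline y,z),x,r)\vee\psi)$ is handled by first rewriting it as $\ball(\overline y,x,r)\wedge\neg\chi$, where $\chi(x,\overline y)=\exists z\,(\ball((\overline y,z),x,r)\wedge\neg\psi)$; applying induction to $\psi$ yields a $p_0$-$\lca$-reduced $\widehat\psi$, and since $p$-$\lca$-reduced formulas are quantifier-free, $\neg\widehat\psi$ is still $p_0$-$\lca$-reduced, so \Cref{lem:qelim:remove-lca} applies to it exactly as before. After eliminating $\exists z$ via \Cref{lem:qelim:bexp} to obtain a $p$-$\lca$-reduced $\widehat\chi$, I negate (again preserving $p$-$\lca$-reducibility) and separately reduce the outer $\ball(\overline y,x,r)$ to a $p$-$\lca$-reduced $\widehat\beta$ by a further application of \Cref{lem:qelim:bexp}; the final formula is $\widehat\varphi=\widehat\beta\wedge\neg\widehat\chi$, with labelings merged on disjoint copies of the label set.

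Locality is preserved throughout: every step enlarges the locality radius by only a constant depending on $\varphi$, so choosing $p$ large enough at the end ensures that $\widehat\varphi$ implies $\ball(\overline x,x_1,2^p)$. Each application of \Cref{lem:qelim:bexp} and \Cref{lem:qelim:remove-lca} costs $\mathcal O(1)$ \CONGEST rounds, the induction depth is bounded by $|\varphi|$, and the skeleton is given as input; hence the overall round complexity is $\mathcal O(1)$, comfortably within the stated $\mathcal O(\log n)$ bound. The main obstacle is the universal case, because negation does not preserve locality of FO formulas in general. The redundant outer $\ball(\overline y,x,r)$ built into \Cref{def:r-local-form} is exactly what resolves this: negation is only ever applied to quantifier-free ($p$-$\lca$-reduced) formulas, where it is harmless, while the outer ball predicate is handled as a separate existential layer and combined at the very end.
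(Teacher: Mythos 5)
Your proposal follows the same inductive structure as the paper's proof, with the same three cases of \Cref{def:r-local-form} and the same alternation between \Cref{lem:qelim:bexp} (to eliminate a block of existentials into a $p$-$\lca$-reduced formula) and \Cref{lem:qelim:remove-lca} (to return to an existential formula on $G$). The handling of the universal case --- negating the quantifier-free reduced subformula, removing $\lca$-predicates, eliminating $\exists z$, negating again, and reducing the outer $\ball(\overline y,x,r)$ separately --- matches the paper's Case~3. Your base case takes a slightly longer route (expanding $\ball$ via \Cref{prop:beta} before applying \Cref{lem:qelim:bexp}, rather than invoking \Cref{lem:qelim:bexp} directly with $|\overline y|=0$), but that is only a cosmetic difference.

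There is, however, a gap in the round-complexity accounting. You write that ``the skeleton is given as input; hence the overall round complexity is $\mathcal O(1)$.'' This is not correct. Only the \emph{final} $p$-skeleton $S$ is provided as input. The induction hypothesis applied to the inner formula $\psi$ produces a $p_0$-$\lca$-reduced formula whose meaning is evaluated against a \emph{$p_0$-skeleton} $S'$ for a different (smaller) parameter $p_0$, and likewise \Cref{lem:qelim:remove-lca} assumes a $p_0$-skeleton as input. These intermediate skeletons are not supplied; they must be constructed, which by \Cref{prop:skeleton-construction} takes $\mathcal O(\log n)$ rounds each. Since the recursion depth is bounded by $|\varphi|$ (a constant), the total cost is $\mathcal O(\log n)$ rounds --- which is precisely what the lemma claims, and is exactly the reason the statement reads $\mathcal O(\log n)$ rather than $\mathcal O(1)$. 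Your argument thus happens to land within the claimed bound, but for the wrong reason: it overlooks the need to build a fresh skeleton at each level of the induction.
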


\begin{proof}
The proof proceeds by induction on the structure of the formulas written in local form as defined (in an inductive way) in \Cref{def:r-local-form}. We consider the three cases occurring in \Cref{def:r-local-form}.

\paragraph{Case 1 (base case).} \(\varphi(\overline x) = \psi(\overline x) \wedge \ball(\overline
x, x_1,r)\), where \(\psi\) is quantifier free. We can
directly apply \Cref{lem:qelim:bexp} with \(|\overline y|=0\), to compute \(\widehat\ell\)
from the skeleton \(S\) and labeling \(\ell\) given as inputs.

\paragraph{Case 2.} \(\varphi(\overline x) = \exists y\; \psi(\overline x, y)\),
where \(\psi\) is in local form. By induction, there exists $p'$, $\Lambda_1$, and  $\psi_1(\overline x, y)$ such that, for every \( (G,\ell) \in \mathcal G[\Lambda]\), 
and every  \(p'\)-skeleton \(S'\) of \(G\), there exists a \(\Lambda_1\)-labeling \(\ell_1\) for which 
\[
\true(\psi, G, \ell) = \true(\psi_1, S', \ell_1) = \true(\psi_1 \wedge \ball((\overline x, y),
x_1, 2^{p'}), S', \ell_1)
\]
By  \Cref{prop:skeleton-construction}, a
\(p'\)-skeleton \(S'\) of $G$ can be constructed in \(\mathcal O(\log n)\) rounds in \CONGEST.
By induction, the corresponding labeling \(\ell_1\) can be computed in \(\mathcal O(\log n)\)
rounds.
For eventually applying \Cref{lem:qelim:bexp} on \(\exists y \, \psi_1(\overline x, y)\),  we first apply Lemma~\ref{lem:qelim:remove-lca} on \(\psi_1\), with \(r=2^{p'}\). Thanks to this lemma, 
there exists a set of labels $\Lambda_2$, a quantifier-free formula $\rho_2(\overline x, y, \overline z)$, and a labeling \(\ell_2\) that can be constructed from \(\ell_1\) such that 
\[
\true(\psi_1 \wedge \ball((\overline x,y), x_1, 2^{p'}), S', \ell_1) =
\true(\psi_2(\overline x, y), G, \ell_2),
\]
where \(\psi_2(\overline x, y) = \exists
\overline z \, \rho_2(\overline x, y, \overline z) \wedge \ball((\overline x, y, \overline
z), x_1, 2^{p'+1})\).
Let us then define \(\varphi_2(\overline x) = \exists y \, \psi_2(\overline x, y)\). This is formula is
existential since \(\psi_2\) is existential. We can thus apply Lemma~\ref{lem:qelim:bexp} to \(\varphi_2\), from which the induction step follows. In particular, we can compute \(\widehat\ell\)
from the skeleton \(S\) given as input.

\paragraph{Case 3.} \(\varphi(\overline x) = \ball(\overline x, x_1, r)
\wedge \forall y \, \Big( \neg \ball(\overline x, y), x_1,r)
\vee \psi(\overline x, y)\Big)\) where \(\psi\) is in \(r\)-local form. We can
rewrite \(\varphi\) as
\[
\varphi(\overline x) = \ball(\overline x, x_1,r) \wedge
\neg\exists y, \ball((\overline x,y), x_1, r) \wedge
\neg \psi(\overline x, y).
\]
By induction on \(\psi\), there exist \(p'\in \mathbb{N}\), a set of labels \(\Lambda_1\), and a \(p'\)-\(\lca\)-reduced
formula \(\psi_1(\overline x, y)\) such that, for every labeled graph  \( (G,\ell)\in \mathcal
G[\Lambda]\), and for every \(p'\)-skeleton $S'$ of $G$, there exists a \(\Lambda_1\)-labeling \(\ell_1\) satisfying  
\[
\true(\psi, G, \ell) = \true(\psi_1, S', \ell_1).
\]
By \Cref{prop:skeleton-construction},
a \(p'\)-skeleton \(S'\) of $G$ can be constructed in \(\mathcal O(\log n)\)
rounds in the \CONGEST model. By induction, the labeling \(\ell_1\) can thus be computed in \(\mathcal O(\log n)\) rounds. Let us then take the negation of \(\psi_1\). Note that the resulting formula $\neg\psi_1$ is still a \(p'\)-\(\lca\)-reduced
formula. We then apply Lemma \ref{lem:qelim:remove-lca} to \(\neg\psi_1\), which results in a set of labels \(\Lambda_2\),  an existential formula \(\rho_2\) in \((r+2^p)\)-local form, and a
\(\Lambda_2\)-labeling \(\ell_2\) of \(G\) such that 
\[
\true(\neg\psi_1 \wedge
\ball(\overline x, x_1, r), S', \ell_1) = \true(\rho_2, G,
\ell_2).
\]
As a consequence, we get 
\begin{align*}
(G,\ell) \models \ball((\overline v,u), v_1,r) \wedge \neg \psi(\overline
v, u) 
& \iff \Bigl(G \models \ball((\overline v,u), v_1,r)\Bigr)
\wedge \Bigl((S',\ell_1) \models \neg\psi_1(\overline v, u)\Bigr)\\
& \iff (S',\ell_1) \models \ball((\overline v,u), v_1,r) \wedge
\neg\psi_1(\overline v, u) \\
& \iff (G,\ell_2) \models \rho_2(\overline v, u)
\end{align*}
where the first equivalence is by induction, and the last by application of Lemma \ref{lem:qelim:remove-lca}. 

Let us now focus on the formula
\[
\eta_2(\overline x) = \exists y \; \rho_2(\overline x,y).
\]
Since \(\rho_2\) is
existential,  and since it is in \((r+2^{p'})\)-local form, the formula  \(\eta_2\) is also existential and in \((r+2^{p'})\)-local form.
By Lemma~\ref{lem:qelim:bexp}, there exists an integer \(p \geq r\cdot |\overline x|\), a set of labels~\(\Lambda_3\), and
a \(p\)-\(lca\)-reduced-formula \(\eta_3(\overline x)\) such that, for every 
\(p\)-skeleton \(S\) of \(G\) (and in particular the skeleton \(S\) given as input),
there exists a labeling \(\ell_3\) for which
\[
\true(\eta_2, G, \ell_2) = \true(\eta_3, S, \ell_3).
\]
Once again, the negation of \(\eta_3\) is still \(p\)-\(lca\)-reduced.

We also need to transform the first half of \(\varphi\), namely \(\ball(\overline x, x_1, r)\),  into a \(p\)-\(lca\)-reduced
formula. The formula \(\ball(\overline x, x_1, r)\) is in fact an
existential formula in \(r\)-local form (with zero existential quantifiers). Therefore, we can apply Lemma \ref{lem:qelim:bexp}. We can even apply this lemma with the same \(p\) since
\(p\)  satisfies the required condition \(p \geq r|\overline x|\). Thus, there
exists a set of label \(\Lambda_\gamma\), and a \(p\)-lca-reduced formula
\(\gamma(\overline x)\) such that, for every
\(G \in \mathcal G\), and every \(p\)-skeleton \(S\) of~\(G\), there exists a
\(\Lambda_\gamma\)-labeling \(\ell_\gamma\) for which  
\[
\true(\ball(\overline x,
x_1,r)), G) = \true(\gamma, S, \ell_\gamma) = \true(\gamma \wedge \ball(\overline x,
x_1, 2^{p}), S, \ell_\gamma).
\]
Finally, let \(\widehat \Lambda =
\Lambda_\gamma \cup \Lambda_3\) where, w.l.o.g., we assume that \(\Lambda_\gamma\) and \(\Lambda_3\) are disjoint, let 
\[
\widehat
\varphi(\overline x) = \gamma(\overline x) \wedge \neg \eta_3(\overline x),
\]
and for every node \(v\) let \(\widehat \ell(v) = \ell_\gamma(v) \cup \ell_3(v)\). We get
\begin{align*}
\true(\varphi, G, \ell) 
&= \true\big(\ball(\overline x, x_1,r), G, \ell\big) \cap \true\Big(\neg\exists y\,\big(
\ball((\overline x,y),x_1,r) \wedge \neg\psi(\overline x,y)\big), G, \ell\Big) \\
&= \true\big(\ball(\overline x, x_1,r), G, \ell_2\big) \cap \true(\neg\eta_2, G, \ell_2) \\
&= \true\big(\gamma(\overline x), S, \ell_\gamma\big) \cap \true\big(\neg\eta_3(\overline x), S, \ell_3\big)\\
&= \true\big(\widehat \varphi, S, \widehat \ell\big).
\end{align*}
Moreover, since \(\true(\gamma, S, \ell_\gamma) = \true(\gamma \wedge \ball(\overline x,
x_1, 2^{p}), S, \ell)\), we also have
\[
\true(\widehat \varphi, S,
\widehat \ell) = \true\big(\widehat \varphi \wedge \ball(\overline x, x_1,
2^{p}), S, \widehat\ell\big).
\]

We complete the proof by studying the round-complexity of computing the labeling $\widehat\ell$ in \CONGEST. For each of the lemmas used in the proof, the construction of the labeling is performed in \(\mathcal O(1)\)  rounds. For cases~2 and~3, we also need
to construct a skeleton, which can be done in \(\mathcal O(\log n)\) rounds. Thus, the overall round complexity is \(\mathcal O(\log n)\), as claimed.
\end{proof}

% - - - - - - - - - - - - - - - - - - - - - - - - - - - - 
\subsubsection{Proof of \Cref{thm:thmFOlocal}} 
% - - - - - - - - - - - - - - - - - - - - - - - - - - - - 

Let \(\varphi(x)\) be a \(r\)-local formula. By \Cref{prop:local_form}, it can be
rewritten in \(r\)-local form. Then, by \Cref{lem:qelim:induction}, for some \(p\in \N\)
\(\varphi\) can be reduced
in \(\cO(\log n)\) rounds to a \(p\)-\(\lca\)-reduced formula \(\widehat \varphi(x)\).
Using \Cref{prop:skeleton-construction}, we can build a \(p\)-skeleton \(S\) in
\(\cO(\log n)\) rounds. Then, from an input labeling \(\ell\), the reduced labeling
\(\widehat \ell\) can be computed in \(\cO(\log n)\) rounds. It satisfies:
\[
(G,\ell) \models \varphi(v) \iff (S, \widehat \ell) \models \widehat \varphi(v)
\]

Since \(\widehat \varphi(x)\) is quantifier free, each node \(v\) can check locally
without additional communication whether it satisfies \(\widehat\varphi\), which completes
the proof.

%%%%%%%%%%%%%%%%%%%%%%%%%%%%%%%%%%%%%%%%%%
\section{Distributed Model Checking of General Formulas}
\label{sec:general-model-checking}
%%%%%%%%%%%%%%%%%%%%%%%%%%%%%%%%%%%%%%%%%%

In this section, we establish \Cref{thm:thmFOexpCONG-informal}, generalized to labeled graphs, as formaly stated below. 

\begin{theorem}
\label{thm:thmFOexpCONG}
    Let $\Lambda$ be a finite set. For every FO formula $\varphi$ on $\Lambda$-labeled graphs, and for every class of graphs $\mathcal{G}$ of bounded expansion, there exists a distributed algorithm that,  for every $n$-node network $G=(V,E)\in \mathcal{G}$  of diameter~$D$, and every $\ell:V\to 2^\Lambda$, decides whether $(G,\ell)\models\varphi$ in $\cO(D+\log n)$ rounds under the \CONGEST model.
\end{theorem}

Actually, we show a result stronger than  \Cref{thm:thmFOexpCONG}. Given a formula $\varphi(x)$ with one free variable, we provide a CONGEST algorithm that marks all nodes of $n$-node graphs with bounded expansion that satisfy the formula, in $\mathcal{O}(D + \log n)$ rounds, where $D$ is the diameter of the input graph\footnote{This problem is more challenging than simply checking whether the graph satisfies a formula $\varphi$, i.e., stronger than the model checking problem as, given an FO formula $\varphi$ without free variables, one can define a logically equivalent formula $\varphi'(x)$ by adding a dummy free variable, so that the graph satisfies $\varphi$ if and only if it satisfies $\varphi'(v)$ at each of its nodes~$v$.}. 
To prove the result, we need to revisit the proof of \Cref{thm:thmFOlocal}, which involves quantifier elimination of existential formulas first over forests of bounded-depth, then over graphs of bounded tree-depth, and finally over graphs with bounded expansion. Fortunately, once we have resolved quantifier elimination over  forests bounded-depth, the extensions to the other two cases will follow from the arguments already developed in the previous section for local formulas. Therefore, it will be sufficient to revisit quantifier elimination for existential formulas on forests of bounded-depth, and then proceed by induction to cover the general case.

%============================================
\subsection{Rooted Forests of Bounded Depth}
%============================================

As established in the proof of \Cref{lem:elimination_forest}, an existential formula over bounded-depth forests corresponds to checking the presence of ``subpatterns'' (representing subforests) in which the free variables of the formula can be mapped onto. One of these free variables, quantified existentially, is to be eliminated. Depending on the position of this existentially quantified variable~$y$, we distinguish three cases. In the first two case, $y$~shares common ancestors with the other free variables --- these cases are the ``connected cases''. In the third case, $y$~belongs to a  connected component that is different from all the other variables in the formula. When dealing with local formulas, we only needed to handle the connected cases. However, for general formulas, we must also address the remaining case. This is for this ``disconnected case'' that our algorithm requires $\mathcal{O}(D)$ additional rounds.

\begin{lemma}\label{lem:generalforestcomputing}
Let $\varphi(\overline{x})= \exists \overline{y}\; \zeta(\overline{x},\overline{y})$ be a formula in $\FO[\Lambda]$. There exists \(\widehat\Lambda\), a \(\lca\)-reduced formula
\(\widehat\varphi(\overline x)\) and a distributed algorithm performing in $\cO(D)$ rounds in the
\CONGEST model satisfying the following. For every $\Lambda$-labeled graph \((G,\ell)\) of
diameter~$D$, and for every subgraph $F\in \mathcal{F}_d$ of $G$, 
\begin{itemize}
\item every node $v\in V(G)$ is given as input its labeling $\ell(v)$, whether it belongs to~$F$, and whether it is a root of a tree in~$F$ (if not, it also receives as input its depth and parent in \(F\));

\item every node $v\in V(G)$ outputs a set of labels $\widehat{\ell}(v)$, such that
\(\widehat\Lambda, \widehat\varphi, \widehat \ell\) form a reduction of \(\Lambda, \varphi, \ell\).
\end{itemize}
\end{lemma}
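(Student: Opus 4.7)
The plan is to follow exactly the same three-step decomposition as in \Cref{lem:local_forest_computing}: reduce the elimination of $\exists\overline y$ to eliminating a single variable by induction on $|\overline y|$; use \Cref{prop:basicnormal} to rewrite the matrix $\zeta(\overline x,y)$ in basic normal form, so $\varphi$ becomes a finite disjunction $\bigvee_{\psi\in I}\exists y\,\psi(\overline x,y)$ indexed by $\lca$-types $\psi=\type_{\gamma,\delta}\in\Type(k+1,d,\Lambda)$; and then reduce each disjunct $\exists y\,\psi$ independently, as defined in \Cref{def:reducibility}, and glue the resulting label sets, formulas, and labelings by union (the sets $\widehat\Lambda^\psi\setminus\Lambda$ being pairwise disjoint). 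The key quantity to bound is thus the cost, per $\psi\in I$, of computing the labeling $\widehat\ell^\psi$ described in the proof of \Cref{lem:elimination_forest}.

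For a fixed $\psi=\type_{\gamma,\delta}$, the labeling $\widehat\ell^\psi$ splits into three cases according to the value of $\delta(s,k+1)$ and of $h_y$ versus $h$. Cases~1 and~2 are the \emph{connected cases}: the existentially quantified $y$ shares an ancestor with some $x_i$ in $F_\psi$, so all information propagated is confined to a constant-depth subtree of $F$. These cases are handled exactly as in \Cref{lem:local_forest_computing}: a constant number of upward/downward ping phases along the edges of $F$ suffice to identify candidates, mark good nodes, and, in Case~2, count for each pivot $v$ the value $\min(\kappa(v),k+1)$ and broadcast it down to its depth-$h_s$ descendants. Since all these phases are bounded by the depth $d$ of $F$, they cost $\cO(d)=\cO(1)$ rounds.

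The main obstacle is Case~3, where $y$ lies in a tree of $F$ disjoint from the one containing any of the $x_i$'s. Here the new labels encode $\min(\rho(F),k+1)$ and the $\good$ marker for every descendant of an active root, and $\rho(F)$ is a genuinely global quantity since active roots may be arbitrarily far apart in $F$ (indeed, $F$ can be disconnected, though $G$ is connected of diameter $D$). To compute it in CONGEST, I will proceed as follows. First, in $\cO(d)=\cO(1)$ rounds inside $F$, each root decides whether it is active by propagating one-bit ping messages upward starting from the candidate nodes at depth $h_y$; symmetrically, in $\cO(d)$ rounds downward, every node learns whether its root is active, i.e.\ whether it is good. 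Next, I build a BFS tree $T$ of the \emph{communication graph} $G$ (not of $F$) rooted at an arbitrary node, in $\cO(D)$ rounds. Aggregating along $T$ the indicator bits of active roots, each capped at $k+1$ to stay on $\cO(\log n)$ bits, computes $\min(\rho(F),k+1)$ at the root of $T$ in $\cO(D)$ rounds; this value is then broadcast back down $T$ in another $\cO(D)$ rounds, so every node learns the label $\lambda$ to append to $\widehat\ell^\psi(u)$.

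Putting the three cases together, the labeling $\widehat\ell^\psi$ is computed in $\cO(D)$ rounds for each $\psi\in I$, and since $|I|\leq|\Type(k+1,d,\Lambda)|$ depends only on $\varphi$, $d$, and $\Lambda$, we can either run the constructions for the $|I|$ types sequentially or (sharing the single BFS tree $T$, built only once) in parallel, for a total complexity of $\cO(D)$ rounds. The formula $\widehat\varphi=\bigvee_{\psi\in I}\widehat\psi$ and the set $\widehat\Lambda$ are $\lca$-reduced and depend only on $\varphi,d,\Lambda$, as required, and the equivalence $\true(\varphi,F,\ell)=\true(\widehat\varphi,F,\widehat\ell)$ follows directly from \Cref{lem:elimination_forest}.
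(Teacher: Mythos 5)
Your proposal is correct and follows essentially the same route as the paper's proof: Cases~1 and~2 are handled as in the local case, and Case~3 is the one requiring $\cO(D)$ rounds, which you resolve with the same mechanism of building a BFS tree of $G$ and doing an aggregate-then-broadcast over it to compute (a capped version of) the count $\rho(F)$ of active roots. One minor remark: the paper transmits the raw partial sums $\val(u)\le n$ up the BFS tree (which fits in one CONGEST message) and only caps at $k+1$ when assigning the label, whereas you cap the partial sums at $k+1$ during aggregation; both are correct, and your version is marginally tighter on bandwidth, but they are the same argument.
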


\begin{proof}
We follow the same guidelines as in the proof of \Cref{lem:local_forest_computing} for Cases~1 and~2. It is thus sufficient to consider Case~3, as described in \Cref{lem:elimination_forest}. We use the same notations as in these lemmas.
Recall that Case~3 assumes $$\delta(1,k+1)=-1.$$ The algorithm consists in four phases. 

\begin{enumerate}
    \item During  the first phase, the algorithm aims at identifying the  active roots. First, the candidate nodes are identified in the same way as in the previous cases~1 and~2. By broadcasting a ping message upward each tree from the candidate nodes for $h_y$ rounds, each root becomes aware of the presence of a candidate node in its tree. The roots that receive a ping message mark themselves as active roots. This phase takes $ \cO(d)$ rounds. 

    \item During the second phase, the algorithm aims at identifying the good children, by broadcasting a ping message downward from every active root. Every node $u$ that receives such ping message on this phase marks itself as good,  and adds the label $\good$ to $\widehat{\ell}^\psi(u)$. This phase takes $\cO(d)$ rounds. At this point, the algorithm constructs a BFS tree $T$ of $G$, rooted at  node~$r_1$. This procedure takes $\cO(D)$ rounds. 

    \item During the third phase, the root $r_1$ gets the number $\rho(F)$ of active roots in~$F$, by counting upward~$T$. More specifically, let us denote by $d_T = \cO(D)$ the depth of $T$. The counting proceeds in $d_T+1$ round. For each node $u$ in~$T$, let $a(u)$ be~$1$ if $u$ is an active root, and $0$ otherwise.  In the first round, each node $u$ at depth $d_T$ set $\val(u) = a(u)$, and sends $\val(u)$ to its parent in~$T$. Then, at round $i+1$ with $i\in [d_t]$, every node $u$ of $T$ at depth $d_T-i$ computes $\val(u) = a(u) + \sum_{v\in C(u)} \val(u)$, where $C(u)$ is the set of children of $u$ in~$T$, and sends $\val(u)$ to its parent. At the end of this phase,  we have $\val(r_1)=\rho(F)$.  Observe that, for every node~$u$,  $\val(u)\leq n$, and thus this value can be communicated in one round under the \CONGEST model. Therefore, the third  phase runs in $\cO(D) + d_T+1 = \cO(D)$ rounds. 

    \item Finally, during the fourth phase, the root $r_1$ broadcast $\rho(F)$ downward the tree~$T$, to all nodes in $G$. Then, each node $u$ in $F$ adds the label $\min(\rho(F),k+1)$ to $\widehat{F}^\psi(u)$. This phase takes $d_T = O(D)$ rounds.
\end{enumerate}

We conplete the proof by noticing that, for every $\psi \in I$, every node $u$ can compute $\ell^\psi(u)$ in $\cO(D)$ rounds. Therefore,  the total number of rounds sufficient for computing $\widehat{\ell}$ is $$|I|\cdot D \leq  c\cdot 2^{k|\Lambda|}(d+1)^{k^2} \cdot D,$$ where $c$ is a constant not depending on either~$d$, $\Lambda$, $\psi$, or~$D$.
\end{proof}

%============================================
\subsection{Quantifier Elimination on Graphs of Bounded Treedepth}
%============================================

We now deal with the quantifier elimination on graphs of bounded treedepth.  

\begin{lemma}
\label{lem:qelim:td-general}
Let \(\Lambda\) be a set of labels, and let 
\(\varphi(\overline x) = \exists \overline y, \zeta(\overline x, \overline y)\) be an existential
formula in \(\FO[\Lambda]\), where  \(\zeta\) is a quantifier free formula.
There exists a set of labels \(\widehat \Lambda\), and a \(\lca\)-reduced formula
\(\widehat \varphi(\overline x)\) such that, for every induced subgraph \((H,\ell)\) of treedepth~\(t\) 
of a \(\Lambda\)-labeled graph \((G,\ell)\), and for every decomposition forest \(F\) of~\(H\),
there exists a \(\widehat \Lambda\)-labeling \(\widehat \ell\) such that:
\[
\true(\varphi, H, \ell)
= \true(\widehat\varphi, F, \widehat \ell)
\]
Moreover, there exists an algorithm computing \(\widehat \ell\) in \(\mathcal O(D)\) rounds in the
\CONGEST\ model, each node \(u\) is given as input \(\ell(u)\), whether it belongs to \(H\)
and, if applicable, its depth and parent in~\(F\).
\end{lemma}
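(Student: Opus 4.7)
The plan is to mirror the proof of \Cref{lem:qelim:td}, but to replace the appeal to \Cref{lem:local_forest_computing} (which only handles the ``connected'' Cases~1 and~2 of the case analysis in \Cref{lem:elimination_forest}) by an appeal to the more general \Cref{lem:generalforestcomputing}, which additionally handles Case~3, where the variable being eliminated may lie in a tree of the decomposition forest disjoint from the trees containing the free variables. This is precisely the case that could be safely ignored for local formulas (because a local formula forces all variables to sit in a single connected component, and $r\ge 2^t$ implied that balls of radius $r$ in $H$ coincide with connected components), but which must be addressed here.

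First I would apply \Cref{lem:treedepthtoforest} to the quantifier-free body $\zeta(\overline x, \overline y)$ of $\varphi$. This produces a set of labels $\Lambda_1$, an $\lca$-reduced formula $\zeta_1(\overline x, \overline y) \in \FO[\Lambda_1]$, and a one-round \CONGEST\ procedure that computes, from $(H,\ell,F)$, a $\Lambda_1$-labeling $\ell_1$ of $G$ satisfying $\true(\zeta, H, \ell) = \true(\zeta_1, F, \ell_1)$. Only information about the depths in $F$ of the $H$-neighbours of each vertex needs to be exchanged, so this step costs $\mathcal O(1)$ rounds. By \Cref{prop:depth}, the forest $F$ belongs to $\mathcal F_{2^t}$, so the depth parameter involved in subsequent steps is a constant depending only on~$t$.

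Next I would apply \Cref{lem:generalforestcomputing} to the existential formula $\exists\, \overline y\; \zeta_1(\overline x, \overline y)$, taking $F\in\mathcal F_{2^t}$ as the distinguished sub-forest of $G$. This yields a set of labels $\widehat \Lambda$, a $\lca$-reduced formula $\widehat\varphi(\overline x)$, and a distributed algorithm running in $\mathcal O(D)$ rounds that computes a $\widehat\Lambda$-labeling $\widehat\ell$ of $G$ satisfying $\true(\exists\, \overline y\, \zeta_1, F, \ell_1) = \true(\widehat\varphi, F, \widehat\ell)$. Chaining the two equalities gives $\true(\varphi,H,\ell) = \true(\widehat\varphi,F,\widehat\ell)$, as required. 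The total round complexity is $\mathcal O(1)+\mathcal O(D) = \mathcal O(D)$, and the constants implicit in the big-$\mathcal O$ depend only on~$\varphi$, $\Lambda$ and~$t$.

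The main obstacle (already absorbed into the statement of \Cref{lem:generalforestcomputing}) is Case~3 of the quantifier elimination: for each ``disconnected'' pattern, one must decide whether $F$ possesses strictly more active roots than are already matched by the tuple $\overline v$, and this comparison is inherently non-local. The algorithm therefore constructs a BFS tree of the entire communication graph $G$ and performs an upcast/downcast of an $\mathcal O(\log n)$-bit counter in order to make the number $\rho(F)$ of active roots known to every vertex of~$F$; this is exactly what produces the $\mathcal O(D)$ term. All remaining bookkeeping---detecting candidate nodes at the prescribed depth, propagating pings upward and downward along $F$ to mark good nodes and pivots, and enriching labels with the relevant counts---stays local in~$F$ and thus costs only $\mathcal O(2^t)=\mathcal O(1)$ rounds per type in $\Type(|\overline x|+|\overline y|, 2^t, \Lambda)$, of which there are finitely many.
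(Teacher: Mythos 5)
Your proposal is correct and follows essentially the same route as the paper: first apply \Cref{lem:treedepthtoforest} to reduce the quantifier-free body to an $\lca$-reduced formula on the decomposition forest, then apply \Cref{lem:generalforestcomputing} to eliminate the existential quantifiers, with the $\mathcal O(D)$ cost coming from the BFS-based counting needed in the disconnected Case~3. The paper's proof is terser but is the same two-step chaining argument.
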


\begin{proof}
By \Cref{lem:treedepthtoforest}, there exists an \(\lca\)-reduced formula
\(\zeta_1(\overline x, \overline y)\) such
that, for every induced labeled subgraph \( (H,\ell)\) of treedepth at most~\(t\), for every
decomposition forest \(F\) of~\(H\), there exists a labeling \(\ell_1\) such that
\(\true(\zeta, H, \ell) = \true(\zeta_1, F, \ell_1)\), and \(\ell_1\) can be computed in
\(\mathcal O(1)\) rounds in the  \CONGEST\  model. It is thus sufficient to  merely  apply Lemma~\ref{lem:generalforestcomputing} to the formula \(\exists
\overline y, \zeta_1(\overline x, \overline y)\), which takes $\cO(D)$ \CONGEST  rounds. 
\end{proof}

%============================================
\subsection{Graphs of Bounded Expansion}
%============================================

We can now directly treat the case of graphs of bounded expansion. As we did in the local case, we consider first the case of existential formulas. Let $\mathcal{G}$ be a class of graphs, of bounded expansion. 

\begin{lemma}\label{lem:reductionofexistential} 
Let $\Lambda$ be a set of labels, and let $\varphi(\overline{x}) = \exists \overline y \,  \psi(\overline x, \overline y)$ be an existential formula in $ \FO[\Lambda]$, where $\psi$ is quantifier-free. For every $p\geq |\overline{x}| + |\overline y|$, there exists a set $\widehat{\Lambda}$ of labels, and a $p$-$\lca$-reduced formula $\widehat{\varphi}$ such that, for every $(G,\ell)\in \mathcal{G}[\Lambda]$, and for every $p$-skeleton $S$ of~$G$, there exists a $\widehat{\Lambda}$-labeling $\widehat{\ell}$ such that
\[
\true(\varphi, G,\ell) = \true(\widehat{\varphi}, S, \widehat{\ell}).
\]
Moreover, there exists a distributed algorithm that computes \(\widehat \ell\) in \(\mathcal O(D)\)
rounds in  \CONGEST, assuming \(\ell\) and \(S\) are given as input.
\end{lemma}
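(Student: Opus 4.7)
\medskip

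The plan is to mirror the proof of \Cref{lem:qelim:bexp} almost verbatim, with the key substitution of \Cref{lem:qelim:td-general} for \Cref{lem:qelim:td}. Since the formula $\varphi$ is not required to be local, we drop all distance predicates from the construction, which both simplifies the argument and accounts for the increase in round complexity from $\cO(1)$ to $\cO(D)$.

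First, I would apply \Cref{lem:qelim:td-general} to the existential formula $\varphi(\overline x) = \exists \overline y\, \psi(\overline x, \overline y)$. This yields a set of labels $\Lambda_1$ and an $\lca$-reduced formula $\varphi_1(\overline x)$ such that, for every induced subgraph $(H,\ell)$ of treedepth at most $p$ of any $\Lambda$-labeled graph, and every decomposition forest $F$ of $H$, there is a labeling $\ell_1$ of $F$ (computable in $\cO(D)$ rounds) with $\true(\varphi, H, \ell) = \true(\varphi_1, F, \ell_1)$. I then apply this to each pair $(G[U], F^U)$ arising from the $p$-skeleton $S = (G, c, \delta)$, for $U \in \Col(p)$; this is valid because $G[U]$ has treedepth at most $p$ by definition of a $(p, f(p))$-treedepth coloring. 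Letting $\varphi_1^U$ denote the formula obtained from $\varphi_1$ by replacing each $\lca_j$ predicate by $\lca_j^U$, and $\col_U(\overline x) = \bigwedge_i \bigvee_{k \in U} \col_k(x_i)$, I define
\[
\widehat\varphi(\overline x) = \bigvee_{U \in \Col(p)} \bigl(\varphi_1^U(\overline x) \wedge \col_U(\overline x)\bigr),
\]
and $\widehat\ell(v) = \bigcup_{U \in \Col(p)} \ell_1^U(v)$, after relabeling so that the $\Lambda_1^U$ are pairwise disjoint copies of $\Lambda_1$.

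The main verification is the equivalence $\true(\varphi, G, \ell) = \true(\widehat\varphi, S, \widehat\ell)$. For the forward direction, if $(G,\ell) \models \psi(\overline v, \overline w)$ for some $\overline w$, then the set $U$ of colors appearing on $\overline v \cup \overline w$ satisfies $|U| \leq |\overline x| + |\overline y| \leq p$, so $U \in \Col(p)$ and $\overline v, \overline w \subseteq V(G[U])$. Hence $(G[U], \ell) \models \varphi(\overline v)$, and \Cref{lem:qelim:td-general} gives $(F^U, \ell_1^U) \models \varphi_1(\overline v)$, i.e., $(S, \widehat\ell) \models \widehat\varphi(\overline v)$. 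Conversely, if $(S, \widehat\ell) \models \varphi_1^U(\overline v) \wedge \col_U(\overline v)$ for some $U$, then $\overline v \subseteq V(G[U])$ and $(F^U, \ell_1^U) \models \varphi_1(\overline v)$, hence $(G[U], \ell) \models \varphi(\overline v)$, and since $\varphi$ is existential and $G[U]$ is an induced subgraph of $G$, the witnesses for the existential quantifiers survive in $G$, giving $(G, \ell) \models \varphi(\overline v)$.

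The only subtlety I expect concerns the distributed implementation, but it is mild. The number $|\Col(p)| \leq 2^{f(p)}$ depends solely on $p$ and on the expansion function $f$, hence is a constant. Thus I would run the algorithm of \Cref{lem:qelim:td-general} sequentially for each $U \in \Col(p)$; for each $U$, every node already knows from the skeleton $S$ whether it belongs to $G[U]$ and, if so, its depth and parent in $F^U$, which is exactly the input required by \Cref{lem:qelim:td-general}. Each such execution takes $\cO(D)$ rounds, and the constant number of executions yields the claimed total complexity of $\cO(D)$ rounds. The labels $\widehat\ell(v)$ are assembled locally as the union of the labels produced across these $|\Col(p)|$ executions.
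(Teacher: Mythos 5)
Your proposal is correct and follows essentially the same approach as the paper: apply \Cref{lem:qelim:td-general} to obtain the $\lca$-reduced formula $\varphi_1$, then for each color set $U\in\Col(p)$ form $\varphi_1^U\wedge\col_U$, take the disjunction over $U$, union the labelings, and verify the two inclusions using the pigeonhole bound $p\geq|\overline x|+|\overline y|$ for the forward direction and the fact that $\varphi$ is existential for the converse. The only cosmetic difference is that you spell out $\col_U(\overline x)$ as $\bigwedge_i\bigvee_{k\in U}\col_k(x_i)$, which the paper leaves implicit.
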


\begin{proof}
Let us fix an integer \(p \geq |\overline x|+|\overline y|\). By applying Lemma \ref{lem:qelim:td-general} to~\(\varphi\), we get that 
there exist a set of labels \(\Lambda_1\), and an \(lca\)-reduced formula
\(\varphi_1(\overline x)\) such that, for every labeled
graph \( (H, \ell)\) of treedepth at most \(p\), and for every decomposition forest \(F\)
of~\(H\), there exists a \(\Lambda_1\)-labeling \(\ell_1\) of \(F\) such that
\[
\true(\varphi, H, \ell) = \true(\varphi_1, F, \ell_1).
\]
Let us now consider a \(p\)-skeleton~\(S\) of $(G,\ell)$.
For each \(U\in \Col(p)\), the induced subgraph \(G[U]\) has treedepth at most~\(p\), and \(F^U\) is a decomposition forest of \(G[U]\). Thus, the corresponding labeling
\(\ell_1^U\) can be computed in \(\mathcal O(D)\) rounds. Let us consider the \(p\)-lca-reduced
formula 
\[
\widehat\varphi^U(\overline x)=\varphi_1^U(\overline x)\wedge \col_U(\overline x).
\]
Note that \(\varphi_1^U\) is the formula \(\varphi_1\) in which all
\(\lca\) predicates were replaced by \(\lca^U\) predicates.
We also define
\[
\widehat\varphi(\overline x) = \bigvee_{U \in \Col(p)}
\widehat\varphi^U(\overline x).
\]
Finally, for each node \(u\), we set 
\[
\widehat \ell(u) = \bigcup_{U \in \Col(p)} \ell_1^U(u),
\] 
where it is assumed, w.l.o.g., that the formulas
\(\varphi^U\) use disjoint  sets of labels.
Let us now prove that the formula $\widehat\varphi(\overline x)$ has the desired property. Let \( (G,\ell)\in
\mathcal G[\Lambda]\), let \(S\) be a \(p\)-skeleton of~\(G\), and let us prove that 
\[
(G,\ell)\models \varphi(\overline{v}) \iff (S,\widehat \ell) \models \widehat\varphi(\overline v). 
\]
For this purpose, let us assume  that \(
(G,\ell)\models \psi(\overline v, \overline w)\) for some \(\overline v,
\overline w \subseteq V(G)\). Since \(p \geq |v|+|w|\), there exists \(U
\in \Col(p)\) such that \(G[U]\) contains both $\overline v$ and $\overline w$. Therefore, \((G[U],\ell) \models \psi(\overline v, \overline
w) \). Furthermore,
\begin{align*}
 (G,\ell)\models \varphi(\overline v) 
&\iff \exists U\in\Col(p), \overline v \subseteq G[U], 
	(G[U], \ell) \models \varphi(\overline v) \\
&\iff \exists U\in\Col(p), \overline v \subseteq G[U],
	(F^U, \ell_1^U) \models \varphi_1(\overline v) \\
&\iff \exists U\in \Col(p),
	(S, \ell_1^U) \models \widehat\varphi^U(\overline v) \\
&\iff (S, \widehat \ell) \models \widehat\varphi(\overline v)
\end{align*}
where the second equivalence is thanks to Lemma \ref{lem:qelim:td-general}. 
\end{proof}

Next, we adapt \Cref{lem:qelim:remove-lca} to the case of general (i.e., non necessarily local) formulas. 

\begin{lemma}
\label{lem:qelim:remove-lca-general}
Let \(p\in \N\), let \(\Lambda\) be a set of labels, and let \(\varphi(\overline x)\) be a
\(p\)-\(lca\)-reduced formula. There exists a set of labels \(\widehat
\Lambda\), and  a quantifier free formula \(\psi(\overline x, \overline y)\) such
that, for every \(G\in \mathcal G\),  every \(p\)-skeleton \(S\) of~\(G\), and every 
\(\Lambda\)-labeling \(\ell\) of \(G\), there exists a labeling \(\widehat \ell\) of \(G\) for which
\[
\true(\varphi, S, \ell)
= \true(\widehat\varphi, G, \widehat \ell).
\]
Moreover, for every $p$, $\Lambda$ and $\varphi$, there exists a distributed algorithm that,
given $S$ and $\ell$ as inputs, computes $\widehat{\ell}$ without any communication.
\end{lemma}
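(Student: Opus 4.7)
The plan is to adapt the proof of Lemma~\ref{lem:qelim:remove-lca} to the general (non-local) setting. Since no locality needs to be preserved, I can simply drop every $\ball$-predicate and all associated bookkeeping, so the construction becomes strictly simpler than in the local case. The output will be $\widehat \varphi(\overline x) = \exists \overline y\, \psi(\overline x, \overline y)$, with $\psi$ quantifier-free.

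I would first put $\varphi$ in conjunctive normal form and eliminate negative occurrences of $\lca^U$-predicates using the equivalence
\[
\neg \lca^U_i(x,y) \;\equiv\; \neg \col_U(x) \,\vee\, \neg \col_U(y) \,\vee\, \bigvee_{j \in [-1, 2^p-1] \setminus \{i\}} \lca^U_j(x,y),
\]
where $\col_U(x)$ abbreviates $\bigvee_{k \in U} \col_k(x)$. After this rewriting every $\lca^U$-predicate appearing in $\varphi$ is positive. Next, I would enrich the labels by defining
\[
\widehat\Lambda \;=\; \Lambda \,\cup\, \{\col_k : k \in [f(p)]\} \,\cup\, \{\depth_k^U : U \in \Col(p),\; k \in [0, 2^p-1]\},
\]
and setting $\widehat\ell(v)$ to contain, besides $\ell(v)$, the color label $\col_{c(v)}$ and, for every $U \in \Col(p)$ with $c(v) \in U$, the depth label $\depth_k^U$ where $k$ is the depth of $v$ in $F^U$. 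Because each node knows its own color and its depth in every forest $F^U$ it belongs to (from the skeleton representation given by \Cref{prop:skeleton-construction}), the labeling $\widehat\ell$ is computed without any communication.

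Then I would replace every positive occurrence of $\lca^U_i(y,z)$ by the existential formula $\zeta_i^U(y,z)$ already designed in the proof of Lemma~\ref{lem:qelim:remove-lca}: existentially guess, as vertices of $G$, the root-to-$y$ and root-to-$z$ paths in $F^U$, verify each consecutive pair via the $\adj$ predicate of~$G$, enforce the depth labels along each path, and require that the two paths coincide on their first $i+1$ vertices and are disjoint thereafter. The key point, already checked in Lemma~\ref{lem:qelim:remove-lca}, is that this correctly recovers the ancestor structure in $F^U$ because $F^U$ is a subgraph of~$G$ and its depths are now encoded as labels. Pulling all freshly introduced existential quantifiers into a single front block $\overline y$ yields the desired $\widehat\varphi(\overline x) = \exists \overline y\, \psi(\overline x, \overline y)$, and the equivalence $\true(\varphi, S, \ell) = \true(\widehat\varphi, G, \widehat\ell)$ follows conjunct by conjunct.

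There is essentially no new obstacle here: the delicate work of emulating $\lca^U$ by an existential FO formula over~$G$ was already done in the local version, and removing the $\ball$-predicates only makes the construction cleaner. The only point requiring a small sanity check is that $\widehat\Lambda$ and $\widehat\varphi$ depend solely on $\Lambda$, $\varphi$ and~$p$, and not on the particular graph $G$, skeleton~$S$, or labeling~$\ell$; this is immediate, since the enumeration of colors, depths and color-sets $U \in \Col(p)$ depends only on $p$ and the expansion bound~$f$ of the class~$\mathcal{G}$.
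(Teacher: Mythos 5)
Your proposal is correct and follows essentially the same route as the paper: both drop the $\ball$-bookkeeping from the local lemma (\Cref{lem:qelim:remove-lca}), rewrite to CNF, eliminate negated $\lca^U$-predicates, encode colors and $F^U$-depths as labels computable without communication, and replace each positive $\lca^U_i(y,z)$ by the existential formula $\zeta_i^U$ guessing the two root paths. The paper's proof is terse and simply points back to the local case with the same replacement formula, so you have reconstructed precisely the intended argument.
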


\begin{proof}
This lemma is a direct consequence of the arguments in the proof of \Cref{lem:qelim:remove-lca}. Indeed, we follow the same steps as in that proof, until the predicates \(\lca_i^U(y,z)\) has to be replace by the similar existential formula
\begin{align*}
\exists (y_0, \dots, y_d) \; & \exists (z_0, \dots, z_d)\;
\bigvee_{(a,b) \in [i, 2^p-1]^2} \Biggl( (y_a = y) \wedge (z_b = z)
\\
& \wedge \Big(\bigwedge_{s\in [0,i]} y_s = z_s\Big)
\wedge \Big(\bigwedge_{s\in [i+1, \min(a,b)]} y_s \neq z_s\Big)
\wedge \depth_0^U(y_0) \wedge \depth_0^U(z_0)
\\
& \wedge \Big(\bigwedge_{s\in [1,a]} \depth_s^U(y_s) \wedge \adj(y_{s-1}, y_s)\Big)
\wedge \Big(\bigwedge_{s\in [1,b]} \depth_s^U(z_s) \wedge \adj(z_{s-1}, z_s)\Big)\Biggr)
\end{align*}
This completes the proof. 
\end{proof}

We are now ready to deal with quantifier elimination of general first-order formulas. 

\begin{lemma}\label{lem:reductiongeneral}
Let $\Lambda$ be a set of labels, and let $\varphi(\overline x)$ be a formula in $\FO[\Lambda]$. There exists $p\in \mathbb{N}$, a set of labels $\widehat{\ell}$, and a $p$-$\lca$-reduced formula $\widehat{\varphi}(\overline{x})$ such that, for every $(G,\ell)\in \mathcal{G}[\Lambda]$, and every $p$-skeleton $S$ of~$G$, there exists a $\widehat{\ell}$-labeling $\widehat{\ell}$ of~$G$ such that
\[ 
\true(\varphi,G,\ell) = \true(\widehat{\varphi},S,\widehat{\ell}).
\]
Moreover, there exists a distributed algorithm that computes the labeling $\widehat{\ell}$ in
$\mathcal{O}(D + \log n)$ rounds, assuming \(\ell\) and \(S\) are given as input.
\end{lemma}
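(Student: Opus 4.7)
The plan is to establish the lemma by induction on the quantifier depth of $\varphi$, assumed to be in prenex normal form $Q_1 y_1 \cdots Q_t y_t\, \zeta(\overline{x},\overline{y})$ with $\zeta$ quantifier-free. The base case $t = 0$ is handled directly by \Cref{lem:reductionofexistential} applied with an empty list of existentially quantified variables: this yields a $p$-\lca-reduced $\widehat \varphi$ for any $p \ge |\overline x|$, together with a labeling $\widehat \ell$ computable in $\mathcal O(D)$ rounds on a $p$-skeleton, itself built in $\mathcal O(\log n)$ rounds by \Cref{prop:skeleton-construction}.

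For the inductive step, I would split according to the outermost quantifier. If $Q_1 = \exists$, so that $\varphi(\overline x) = \exists y_1\, \chi(\overline x, y_1)$ with $\chi$ of quantifier depth $t-1$, I apply the induction hypothesis to $\chi$ to obtain $p'$, a label set $\Lambda_1$, a $p'$-\lca-reduced formula $\chi_1(\overline x, y_1)$, and, on some $p'$-skeleton $S'$ constructed in $\mathcal O(\log n)$ rounds, a corresponding labeling $\ell_1$ with $\true(\chi,G,\ell) = \true(\chi_1,S',\ell_1)$. I then invoke \Cref{lem:qelim:remove-lca-general} on $\chi_1$ to replace every $\lca^U$-predicate by an existentially quantified quantifier-free gadget on $G$, producing a quantifier-free $\rho_2(\overline x, y_1, \overline z)$ and a labeling $\ell_2$ with $\true(\chi_1, S', \ell_1) = \true(\exists \overline z\, \rho_2, G, \ell_2)$. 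At this point $\varphi$ is equivalent to the existential formula $\exists y_1\, \exists \overline z\, \rho_2(\overline x, y_1, \overline z)$ on $G$, and \Cref{lem:reductionofexistential} applies with any $p \ge |\overline x| + 1 + |\overline z|$, producing the desired $\widehat \varphi$ and $\widehat \ell$ on a $p$-skeleton in $\mathcal O(D)$ additional rounds.

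If $Q_1 = \forall$, I rewrite $\varphi(\overline x) = \neg \exists y_1\, \neg \chi(\overline x, y_1)$ and push the inner negation into $\chi$ via De Morgan to obtain an equivalent formula $\chi'(\overline x, y_1)$ in prenex normal form with the same quantifier depth $t-1$. I then apply the same pipeline as in the existential case to $\exists y_1\, \chi'$, obtaining a $p$-\lca-reduced formula $\eta$ and labeling such that $\true(\exists y_1\, \chi', G, \ell) = \true(\eta, S, \widehat \ell)$. Setting $\widehat \varphi := \neg \eta$ preserves the $p$-\lca-reduced form since boolean combinations of $\lab$, $\col$ and $\lca^U$ predicates are closed under negation. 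For the round complexity, each inductive step contributes $\mathcal O(D)$ rounds for the existential reduction plus $\mathcal O(\log n)$ rounds for the skeleton construction, and because $\varphi$ has constant size the induction terminates after a constant number of levels, yielding the advertised $\mathcal O(D + \log n)$ total.

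I expect the main obstacle to be bookkeeping rather than a new idea: one must ensure that the sequence of skeletons produced along the induction is coherent (each inductive call constructs its own skeleton of the required order on demand, while the final reduction targets the outermost $p$-skeleton given as input), and that negations introduced in the universal case are properly absorbed into the \lca-reduced structure without re-introducing quantifiers. Since all required ingredients are already in place --- existential elimination on graphs of bounded expansion (\Cref{lem:reductionofexistential}), removal of $\lca^U$-predicates (\Cref{lem:qelim:remove-lca-general}), and the $\mathcal O(\log n)$-round skeleton construction (\Cref{prop:skeleton-construction}) --- the argument essentially transposes the local quantifier-elimination template of \Cref{lem:qelim:induction} to the non-local setting, replacing every use of the forest lemma \Cref{lem:local_forest_computing} by its $D$-dependent general counterpart \Cref{lem:generalforestcomputing}.
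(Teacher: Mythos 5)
Your proposal matches the paper's proof: both proceed by induction on quantifier depth, using \Cref{lem:reductionofexistential} for the base case, and in the inductive step combining the induction hypothesis with \Cref{lem:qelim:remove-lca-general} to re-lift the $p'$-$\lca$-reduced formula back onto $G$ before applying \Cref{lem:reductionofexistential} again. The only (cosmetic) difference is in the universal case: you push the negation into the subformula via De Morgan before invoking the induction hypothesis, whereas the paper applies the induction hypothesis to the unnegated subformula and then negates the resulting $\lca$-reduced (hence quantifier-free) formula afterward --- both are valid and produce the same structure.
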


\begin{proof}
We prove the lemma by designing the algorithm by induction on the quantifier depth $t$ of the formula. For the base case, \(\varphi(\overline x)\) is quantifier-free. Let  $p= |\overline{x}|$. Thanks to \Cref{prop:skeleton-construction}, the $p$-skeleton $S$ can be computed in $\cO(\log n)$ rounds.  We can then apply \Cref{lem:reductionofexistential} with \(|\overline y|=0\), and return the reduction of $(\Lambda, \varphi, \ell)$ on~$S$. 

Now let us suppose that the statement of the lemma holds for all formulas with quantifier depth~$t\geq 0$. Let $\varphi(\overline x) = Qy, \psi(\overline x, y)$, where $Q \in \{\exists, \forall\}$, and $\psi$ is a formula of quantifier depth~$t$. We study $Q=\exists$ and $Q=\forall$ separately.

\begin{description}
    \item[Existential quantifier.] By induction hypothesis, an integer \(p_1\), a set \(\Lambda_1\),  a \(p_1\)-\(\lca\)-reduced
formula \(\psi_1(\overline x, y)\), a set of labels $\ell_2$, and a $p_1$-skeleton $S$ such that \[\true(\psi, G, \ell) = \true(\psi_1, S, \ell_1)\] 
can be computed in $\cO(D+ \log n)$ rounds.
One would then like to apply Lemma \ref{lem:reductionofexistential} on $\exists y\, \psi_1(\overline{x}, y)$. However, $\psi_1$  is not defined over graphs of bounded expansion, but over $p_1$-skeletons. Therefore, we first apply Lemma~\ref{lem:qelim:remove-lca-general} on
\(p_1, \Lambda_1, \psi_1, G, S, \ell_1\), to compute  \(\Lambda_2, \psi_2(\overline x, y)\), and $\ell_2$ satisfying
\[
\true(\psi_1, S, \ell_1) =
\true( \psi_2, G, \ell_2)
\]
in $\cO(\log n)$ rounds, 
where $\psi_2$ is existential, of the form  \(\psi_2(\overline x, y) = \exists
\overline z, \rho(\overline x, y, \overline z)\), where $\rho \in \FO[\Lambda_2]$ is quantifier-free. 
Now we can define $\varphi_2(\overline{x}) = \exists y\, \psi_2(\overline{x}, y)$. Note that $\varphi_2$ is also existential,  and
\[
\true(\varphi_2, G, \ell_2) = \true(\varphi, G, \ell).\]
We can finally apply Lemma~\ref{lem:reductionofexistential} to compute a reduction
\((p, \widehat\Lambda, \widehat \varphi, \widehat\ell)\) of $(\Lambda_2, \varphi_2, \ell_2)$.

\item[Universal quantifier.]  In this case, we merely
rewrite \(\varphi\) as:
\[
\varphi(\overline x) = \neg\exists y\,\neg \psi(\overline x, y).
\]
By induction hypothesis, an integer \(p_1\), a set \(\Lambda_1\), a \(p_1\)-\(\lca\)-reduced
formula \(\psi_1(\overline x, y)\), a $\Lambda_1$-labeling $\ell_1$ of $G$, and a $p_1$-skeleton $S_1$ of $G$ can be computed in $\cO(D+ \log n)$ rounds, such that 
\[
\true(\psi, G, \ell) = \true(\psi_1, S_1, \ell_1).
\] 
The negation of \(\psi_1\) is still a \(p_1\)-\(\lca\)-reduced formula. Let us apply Lemma~\ref{lem:qelim:remove-lca-general} on $p_1$, $\Lambda_1$, $\neg \psi_1$, $G$, $S_1$, and $\ell_1$ to compute a set \(\Lambda_2\),  an existential formula \(\psi_2\), and a
\(\Lambda_2\)-labeling \(\ell_2\) of \(G\)  in $\cO(\log n)$ rounds, such that 
\[
\true(\neg\psi_1 , S_1, \ell_1) = \true(\psi_2, G,
\ell_2).
\] 
Now we can define $\rho(\overline{x}) = \exists y\, \psi_2(\overline{x}, y)$. Observe that this formula is existential, and it satisfies
\[
\true(\varphi, G, \ell) = \true(\neg \rho, G, \ell_2).
\]
We can eventually apply Lemma~\ref{lem:reductionofexistential} to compute the reduction $(p, \widehat{\Lambda}, \widehat{\rho}, \widehat{\ell})$ of $(\Lambda_2, \rho, \ell_2)$  in $\cO(D + \log n)$ additional rounds. In particular, $\widehat{\rho}$ is $p$-lca-reduced, as well as $\neg \widehat{\rho}$. Therefore $(p, \widehat{\Lambda}, \neg \widehat{\rho}, \widehat{\ell})$ is the reduction of $(\Lambda, \varphi, \ell)$.
\end{description}

We obtain that eliminating each quantifier of $\varphi$ can be done in $\cO(D + \log n)$ rounds.
\end{proof}

% - - - - - - - - - - - - - - - - - - - - - - - - - - - - 
\subsection{Proof of \Cref{thm:thmFOexpCONG}} 
% - - - - - - - - - - - - - - - - - - - - - - - - - - - - 

Let \(\varphi\) be a formula in $\FO[\Lambda]$. We define the formula $\varphi'(x)$ with  one (dummy) free variable by \[\varphi'(x) = \varphi \wedge (x=x).\] We do have that, for every labeled graph $(G,\ell)$,  and for every node $v\in V(G)$, 
\[
(G,\ell)\models\varphi'(v) \iff (G,\ell)\models \varphi.
\]
By \Cref{lem:reductiongeneral}, $\varphi'(x)$  can be reduced to a \(p\)-\(\lca\)-reduced formula \(\widehat \varphi(x)\) 
in \(\cO(D+\log n)\) rounds. Since \(\widehat \varphi(x)\) is quantifier free, each node \(v\) can check locally
(without additional communication) whether it satisfies \(\widehat\varphi\).
By definition of the reduction, the truth of \(\widehat \varphi\) is equivalent to
that of \(\varphi'\), which completes the proof.\qed

%%%%%%%%%%%%%%%%%%%%%%%%%%%%%%%%%%%%%%%%%
\section{Counting and Optimization Problems} 
\label{se:countopt} 
%%%%%%%%%%%%%%%%%%%%%%%%%%%%%%%%%%%%%%%%%

In this section, we prove Theorems~\ref{thm:thmFOexpCNT-informal} and~\ref{thm:thmFOexpOPT-informal}. We restate these two theorems here, in their full generality of labeled graphs. 

\begin{theorem}\label{thm:thmFOexpCNT}
Let $\Lambda$ be a finite set. For every FO formula $\varphi(x_1,\dots,x_k)$ on $\Lambda$-labeled graphs with $k\geq 1$ free variables $x_1,\dots,x_k$, and for every class of graphs $\mathcal{G}$ of bounded expansion, there exists a distributed algorithm that, for every $n$-node network $G=(V,E)\in \mathcal{G}$ of diameter~$D$ and every  $\ell:V\to 2^\Lambda$, performs global counting in ${O(D+\log n)}$ rounds in \CONGEST. 
If the formula $\varphi(x_1,\dots,x_k)$  is local then there exists a distributed algorithm that performs local counting in $\cO(\log n)$ rounds in \CONGEST. 
\end{theorem}

\begin{theorem}\label{thm:thmFOexpOPT}
Let $\Lambda$ be a finite set. For every FO formula $\varphi(x_1,\dots,x_k)$ on $\Lambda$-labeled graphs with $k\geq 1$ free variables $x_1,\dots,x_k$, and for every class graphs $\mathcal{G}$ of bounded expansion, there exists a distributed algorithm that, for every $n$-node network $G=(V,E)\in \mathcal{G}$ of diameter~$D$, every $\ell:V\to 2^\Lambda$, and  every weight function $\omega:V\to \mathbb{N}$, computes a $k$-tuple of vertices $(v_1,\dots,v_k)$ such that $(G,\ell)\models\varphi(v_1,\dots,v_k)$ and $\sum_{i=1}^k\omega(v_i)$ is maximum, in $\cO(D+\log n)$ rounds  under the \CONGEST model. (The same holds if replacing maximum by minimum.) If no tuples $(v_1,\dots,v_k)$ exist such that $(G,\ell)\models\varphi(v_1,\dots,v_k)$, then all nodes reject. 
\end{theorem}

To ease the notations, let us adopt the following notations. 
\begin{itemize}
    \item Let $\OPT_{\varphi}(G,L)$ denote the optimum value for $\varphi$, i.e., the maximum value of $\sum_{i=1}^k \omega(v_i)$ over all $k$-uples of vertices $(v_1,\dots,v_k)$ such that $G \models \varphi(v_1,\dots,v_k)$. 
    
    \item Let $\cnt_{\varphi}(G,L)$ denote the number of $k$-uples satisfying the formula. 

    \item Given some $(p,f(p))$-treedepth coloring of $G$, and a set of colors $U$, let $\OPT_{\varphi}(G[U],L)$ and $\cnt_{\varphi}(G[U],L)$ be the same quantities as above, but assuming that the formula is restricted to the subgraph $G[U]$ of $G$ induced by the colors in~$U$.
\end{itemize}
Let us prove the following lemma, which is restricted to
quantifier-free formulas.

\begin{lemma}\label{lem:QF-counting}
Let $\Lambda$ be a finite set. For every \emph{quantifier-free} FO formula $\varphi(x_1,\dots,x_k)$ on $\Lambda$-labeled graphs with $k\geq 1$ free variables $x_1,\dots,x_k$, and for every class of graphs $\mathcal{G}$ of bounded expansion,
there exists a distributed algorithm that, for every $n$-node network $G=(V,E)\in \mathcal{G}$
of diameter~$D$, and every  $\ell:V\to 2^\Lambda$, performs global counting in ${O(D+\log n)}$
rounds  under the \CONGEST model. If the formula is local, then there exists an
algorithm that performs local counting in \(\cO(\log n)\) rounds under the \CONGEST model.
\end{lemma}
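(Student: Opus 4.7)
The plan is to exploit the $(k,f(k))$-treedepth coloring of $G$ to reduce counting on $G$ to counting on the constantly many low-treedepth induced subgraphs $G[U]$, $U\in\Col(k)$, and then combine the results by Möbius inversion. First, I would invoke \Cref{th:congestltd} to compute, in $\cO(\log n)$ rounds, a coloring $c:V\to[f(k)]$ such that every $G[U]$ with $U\in\Col(k)$ has treedepth at most $k$; note that $\Col(k)$ has cardinality bounded by a constant depending only on $k$ and $f(k)$.

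The key observation is that, since $\varphi(x_1,\dots,x_k)$ is quantifier-free, whether a tuple $\overline v=(v_1,\dots,v_k)$ satisfies $\varphi$ in $(G,\ell)$ depends only on the labels of the $v_i$ and on the equalities and adjacencies among them, so $(G,\ell)\models\varphi(\overline v)\iff (G[U],\ell)\models\varphi(\overline v)$ for any $U\supseteq\{c(v_1),\dots,c(v_k)\}$ in $\Col(k)$. Writing $A_U$ for the set of satisfying tuples whose colors all lie in $U$ and $B_U$ for the set of satisfying tuples whose color set equals $U$ exactly, we have $|A_U|=\sum_{V\subseteq U}|B_V|$ and $\cnt_\varphi(G,\ell)=\sum_{U\in\Col(k)}|B_U|$; Möbius inversion then expresses $\cnt_\varphi(G,\ell)$ as a fixed $\mathbb{Z}$-linear combination, with coefficients $\beta_V$ depending only on $\Col(k)$, of the values $|A_V|=\cnt_\varphi(G[V],\ell)$.

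It remains to compute each $\cnt_\varphi(G[V],\ell)$. For the \emph{global} counting statement, I would extend \Cref{thm:FFMRT24} to handle disconnected treedepth-$k$ graphs. By \Cref{prop:skeleton-construction}, each component of $G[V]$ is already equipped with a decomposition tree of depth at most $2^k$; I would compute a BFS tree of $G$ in $\cO(D)$ rounds, attach the roots of these decomposition trees as its children, and use the resulting rooted tree as a tree-decomposition of $G[V]$ of constant width and depth $D+\cO(1)$. The standard bottom-up dynamic programming of~\cite{BoPaTo92}, which has constant-size tables, then runs along this tree in $\cO(D)$ rounds using $\cO(\log n)$-bit messages. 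Adding the $\beta_V\cdot|A_V|$ at the BFS root and broadcasting back gives every node the total in $\cO(D+\log n)$ rounds.

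For \emph{local} counting when $\varphi$ is $r$-local, I would use the fact that any satisfying tuple lies in a ball of radius $r$, hence inside a single connected component of $G[V]$. Thus $|A_V|=\sum_{C}\cnt_\varphi(C,\ell)$ summed over components $C$ of $G[V]$, and each such $C$ is connected and of treedepth at most $k$, so \Cref{thm:FFMRT24} produces $\cnt_\varphi(C,\ell)$ at the root $v_C$ of its decomposition tree in $\cO(1)$ rounds. Defining $\nu(v):=\sum_{V\in\Col(k)}\beta_V\sum_{C}[v=v_C]\cdot\cnt_\varphi(C,\ell)$, each vertex can evaluate $\nu(v)$ locally from information already available at it, and $\sum_v\nu(v)=\cnt_\varphi(G,\ell)$ by construction. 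Together with the $\cO(\log n)$-round coloring phase, this gives the claimed $\cO(\log n)$ round complexity. The main obstacle, already flagged in the paper's technique overview, is the global case: extending the $\cO(1)$-round algorithm of \cite{FominFMRT24} from connected bounded-treedepth inputs to disconnected ones while routing all communication through the ambient graph $G$ under the \CONGEST{} bandwidth constraint, and this is precisely where the unavoidable additive $D$ appears.
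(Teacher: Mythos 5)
Your proposal is correct and follows essentially the same route as the paper's proof: compute a $(k,f(k))$-treedepth coloring via \Cref{th:congestltd}, reduce to counting over the constantly many induced subgraphs $G[U]$ of bounded treedepth using a signed combination, and evaluate each $\cnt_\varphi(G[U],\ell)$ by extending \Cref{thm:FFMRT24} to disconnected subgraphs (the paper's \Cref{lem:FFMRText}), with BFS-tree attachment for the global case and per-component $\cO(1)$-round counting for the local case. The only stylistic difference is that you phrase the combination step via M\"obius inversion on the color-subset lattice, whereas the paper's \Cref{lem:countie} uses inclusion--exclusion over the sets $\true(\varphi,G[U_i],\ell)$; these are equivalent identities and both rest on the same key observation that $\true(\varphi,G[U_1],\ell)\cap\true(\varphi,G[U_2],\ell)=\true(\varphi,G[U_1\cap U_2],\ell)$ when $\varphi$ is quantifier-free.
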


% - - - - - - - - - - - - - - - - - - - - - - - - - - - - 
\subsection{Proof of Lemma~\ref{lem:QF-counting} and Theorem~\ref{thm:thmFOexpOPT}}
% - - - - - - - - - - - - - - - - - - - - - - - - - - - - 

We first consider~\Cref{lem:QF-counting}, on counting. In this case, the formula~$\varphi$ has exactly
$k$ variables, namely $x_1,\dots,x_k$. We now apply 
\Cref{th:congestltd} to compute, in $\cO(\log n)$ rounds, a $(k,f(k))$-treedepth coloring
of $G$. Since formula $\varphi$ has exactly $k$ variables, a $k$-uple of vertices $(v_1\dots,v_k)$
satisfies the formula in $(G,\ell)$ if and only if it satisfies it in some set $G[U]$. By definition,
$\cnt_{\varphi}(G[U],\ell)$ is the size of the set $\true(\varphi,G[U],\ell)$ of $k$-uples
$(v_1,\dots, v_k)$ such that $(G[U],\ell) \models \varphi(v_1,\dots,v_k)$. An easy but crucial
observation is that, for any two sets of colors $U_1, U_2$, we have that 
\[
\true(\varphi,G[U_1],\ell)
\cap \true(\varphi,G[U_2],\ell) = \true(\varphi,G[U_1\cap U_2],\ell),\]
because $\varphi(x_1,\dots,x_k)$
has no other variables than the free ones. The condition fails if one allows other variables,
which explains that the technique does not extend directly to arbitrary FO formulas;
this is the subject of \Cref{ss:counting-FO}. The equality above allows us to use the inclusion-exclusion
principle in order to count the solutions of $\varphi$, assuming that we were able to count the partial
solutions over subgraphs $G[U]$, where $U$ are sets of at most $k$ colors.

\begin{lemma}\label{lem:countie}
Let $\varphi(x_1,\dots,x_k) \in FO(\Lambda)$ be a quantifier-free formula. Let us consider a $(k,f(k))$-treedepth coloring of $(G,\ell)\in \mathcal{G}[\Lambda]$. Let $\kappa = \binom{f(k)}{k}$, and let $U_1\dots U_{\kappa}$ be all subsets of $[f(k)]$ of size~$k$. Then 
$$
\cnt_{\varphi}(G,\ell) = \sum_{h=1}^{\kappa}(-1)^{h-1} \sum_{1 \leq i_1 \leq \dots \leq i_{h} \leq \kappa} \cnt_{\varphi}(G[U_{i_1} \cap \dots \cap U_{i_{h}}],\ell).
$$
\end{lemma}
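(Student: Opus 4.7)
The plan is to prove this as a direct application of the inclusion-exclusion principle to a well-chosen family of sets. For each $i \in \{1, \dots, \kappa\}$, define $A_i = \true(\varphi, G[U_i], \ell) \subseteq V^k$. The proof then reduces to establishing two set-theoretic identities:
\begin{enumerate}
    \item[(a)] $\true(\varphi, G, \ell) = \bigcup_{i=1}^{\kappa} A_i$;
    \item[(b)] for every $1 \leq i_1 < \dots < i_h \leq \kappa$,
    $A_{i_1} \cap \dots \cap A_{i_h} = \true(\varphi, G[U_{i_1} \cap \dots \cap U_{i_h}], \ell)$.
\end{enumerate}
Once these are in hand, the stated equality follows immediately from the standard formula $|\bigcup_{i} A_i| = \sum_{h=1}^{\kappa}(-1)^{h-1}\sum_{i_1<\dots<i_h} |A_{i_1}\cap\dots\cap A_{i_h}|$, combined with the fact that $|A_{i_1}\cap\dots\cap A_{i_h}| = \cnt_\varphi(G[U_{i_1}\cap\dots\cap U_{i_h}],\ell)$ by definition of $\cnt$.

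The central technical observation underlying both (a) and (b) is the following \emph{substructure invariance} for quantifier-free formulas: for every $U \subseteq [f(k)]$ and every tuple $(v_1,\dots,v_k)$ of vertices all lying in $G[U]$,
\[
(G,\ell) \models \varphi(v_1,\dots,v_k) \iff (G[U],\ell) \models \varphi(v_1,\dots,v_k).
\]
I would prove this by structural induction on $\varphi$: the atomic predicates are $\mathrm{adj}(x_i,x_j)$, $x_i = x_j$, and $\lab_\lambda(x_i)$, each of which only inspects the vertices $v_1,\dots,v_k$ themselves and the edges between them (together with their labels). Since $G[U]$ is the induced subgraph on the colors of $U$, these atomic facts are identical in $G$ and in $G[U]$ when $v_1,\dots,v_k \in V(G[U])$. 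Boolean combinations then preserve equivalence.

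From this invariance, (b) is immediate: $(v_1,\dots,v_k) \in A_{i_j}$ for every $j$ means that every $v_l$ has color in $U_{i_j}$ for every $j$, i.e., every $v_l$ has color in $U_{i_1}\cap\dots\cap U_{i_h}$, and then $(G[U_{i_j}],\ell)\models\varphi(v_1,\dots,v_k)$, equivalently $(G,\ell)\models\varphi(v_1,\dots,v_k)$, equivalently $(G[U_{i_1}\cap\dots\cap U_{i_h}],\ell)\models\varphi(v_1,\dots,v_k)$. For (a), the inclusion $\bigcup_i A_i \subseteq \true(\varphi,G,\ell)$ is a direct consequence of the invariance. For the converse, given $(v_1,\dots,v_k)\in \true(\varphi,G,\ell)$, the set $C = \{c(v_1),\dots,c(v_k)\}$ of colors used has size at most $k$; extend it arbitrarily (using colors from $[f(k)]$) to some $U_i$ of size exactly $k$, which exists because $f(k) \geq k$ in any meaningful treedepth coloring. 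Then $v_1,\dots,v_k \in V(G[U_i])$, so by invariance $(v_1,\dots,v_k) \in A_i$.

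No step presents a genuine obstacle; the only point requiring a moment of care is that the enumeration on the right-hand side of the lemma should be over strictly increasing indices $i_1 < \dots < i_h$ (as required by inclusion-exclusion), which I read as a mild typographical slip in the statement.
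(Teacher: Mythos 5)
Your proof is correct and follows essentially the same route as the paper's: both reduce the claim to the two facts that $\true(\varphi,G,\ell)=\bigcup_i \true(\varphi,G[U_i],\ell)$ and $\true(\varphi,G[U_{i_1}],\ell)\cap\dots\cap\true(\varphi,G[U_{i_h}],\ell)=\true(\varphi,G[U_{i_1}\cap\dots\cap U_{i_h}],\ell)$, and then invoke inclusion-exclusion, with the latter identity resting on the observation that a quantifier-free formula with $k$ free variables is evaluated identically on $G$ and on any induced subgraph containing those $k$ vertices. You are also right that the inner sum should range over strictly increasing indices $i_1<\dots<i_h$; the $\leq$ signs in the statement are a typographical slip.
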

\begin{proof}
    Recall that for any $i \in [\kappa]$, $\cnt_{\varphi}(G[U_i],\ell)$ is the size of the set $\true(\varphi,G[U_i],\ell)$ of $k$-uples $(v_1,\dots, v_k)$ such that $(G[U_i],\ell) \models \varphi(v_1,\dots,v_k)$. Also, $\cnt_{\varphi}(G[U_{i_1} \cap \dots \cap U_{i_{h}}],\ell)$ is the size of the set $\true(\varphi,G[U_{i_1} \cap \dots \cap U_{i_{h}}],\ell)$, which is itself equal to  $\true(\varphi,G[U_{i_1}]) \cap \dots \cap \true(\varphi,G[U_{i_{h}}],\ell)$. 
    Eventually, $\true(\varphi,G,\ell) = \true(\varphi,G[U_1],\ell) \cup \dots \cup \true(\varphi,G[U_{\kappa}],\ell)$ and $\cnt_{\varphi}(G,\ell) = |\true(\varphi,G,\ell)|$, so the formula is a straightforward application of the inclusion-exclusion principle.
\end{proof}

Before explaining (in~\Cref{lem:FFMRText}) how to compute the values $\cnt_{\varphi}(G[U],\ell)$, we address  optimization as stated in~\Cref{thm:thmFOexpOPT}. 
Recall that, in general, the formula $\varphi$ may have other (quantified) variables than its free valiables. 
We first transform $\varphi(\overline{x})$ into an equivalent existential formula $\tilde{\varphi}(\overline{x})$ at the cost of enriching the set of labels, and the labeling of the input graph.

\begin{lemma}\label{lem:existentialOnly}
    Given a formula $\varphi(\overline{x}) \in FO[\Lambda]$, and a labeled graph $(G,\ell) \in \mathcal{G}[\Lambda]$, a tuple $(\tilde{\Lambda},\tilde{\varphi},\tilde{\ell})$ such that 
    \begin{enumerate}
        \item $\tilde{\ell}$ is a $\tilde{\Lambda}$-labeling of $G$,
        \item $\tilde{\varphi}(\overline{x})$ is an existential formula in $FO[\tilde{\Lambda}]$, i.e., $\tilde{\varphi}(\overline{x}) = \exists\overline{y}\, \psi(\overline{x},\overline{y})$ where $\psi \in \tilde{\Lambda}$ is quantifier-free, and
        \item $\true(\varphi,G,\ell) = \true(\tilde{\varphi},G,\tilde{\ell})$, 
    \end{enumerate}
    can be computed in $\cO(D + \log n)$ rounds  in \CONGEST.
\end{lemma}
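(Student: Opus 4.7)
The plan is to obtain $\tilde{\varphi}$ by chaining the two quantifier-elimination reductions already developed for distributed model checking. First, I would apply \Cref{lem:reductiongeneral} to the input $(\varphi, G, \ell)$: in $\cO(D+\log n)$ rounds in \CONGEST, this computes an integer $p\in\N$, a set of labels $\widehat{\Lambda}$, a $p$-$\lca$-reduced formula $\widehat{\varphi}(\overline{x})$, a $p$-skeleton $S$ of $G$, and a $\widehat{\Lambda}$-labeling $\widehat{\ell}$ of $G$ satisfying
\[
\true(\varphi, G, \ell) \;=\; \true(\widehat{\varphi}, S, \widehat{\ell}).
\]
Note that $\widehat{\varphi}$ is quantifier-free, but it is interpreted over the skeleton $S$, which means it may involve $\col_k$ predicates and $\lca^U$ predicates, and thus is not a formula directly evaluable on $(G,\ell)$.

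The second step would be to apply \Cref{lem:qelim:remove-lca-general} to $\widehat{\varphi}$. This replaces the skeleton-level predicates (the $\col_k$ and $\lca^U$ predicates) by existentially-quantified patterns inside $G$, producing a set of labels $\tilde{\Lambda}\supseteq \widehat{\Lambda}$, a quantifier-free formula $\psi(\overline{x}, \overline{y})$, and a $\tilde{\Lambda}$-labeling $\tilde{\ell}$ of $G$ such that, defining
\[
\tilde{\varphi}(\overline{x}) \;=\; \exists\,\overline{y}\;\psi(\overline{x},\overline{y}),
\]
we have $\true(\widehat{\varphi}, S, \widehat{\ell}) = \true(\tilde{\varphi}, G, \tilde{\ell})$. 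Crucially, \Cref{lem:qelim:remove-lca-general} states that $\tilde{\ell}$ is obtained from $\widehat{\ell}$ (together with the locally-known information about $S$ each node already holds after \Cref{prop:skeleton-construction}) without any additional communication, since each node only needs to record its own color and its depth in each relevant decomposition forest as additional labels.

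Composing the two equalities gives $\true(\varphi, G, \ell)=\true(\tilde{\varphi}, G, \tilde{\ell})$, so the output tuple $(\tilde{\Lambda}, \tilde{\varphi}, \tilde{\ell})$ satisfies the three required properties. The round complexity is dominated by the first step, for a total of $\cO(D+\log n)$ rounds in \CONGEST, as claimed. No step of the argument presents a genuine obstacle: the two ingredients are both already proved, and the only thing to check is that their interfaces match, namely that the $p$-$\lca$-reduced formula produced by \Cref{lem:reductiongeneral} is precisely the type of formula accepted as input by \Cref{lem:qelim:remove-lca-general}, which it is by construction.
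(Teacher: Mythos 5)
Your proposal is correct and follows essentially the same two-step route as the paper: first invoke \Cref{lem:reductiongeneral} to obtain a $p$-$\lca$-reduced formula over a skeleton in $\cO(D+\log n)$ rounds, then apply \Cref{lem:qelim:remove-lca-general} to replace $\lca^U$/color predicates by existentially quantified FO subformulas over $G$ itself, with no extra communication. The interface-matching check you note is exactly the point the paper relies on as well.
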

\begin{proof}
    Thanks to~\Cref{lem:reductiongeneral} (see also~\Cref{def:reducibleonBE}), there exists some constant
    $p$ such that, for any $p$-skeleton $S$ of $G$
    (obtained by~\Cref{prop:skeleton-construction}), the tuple $(\Lambda,\varphi,\tau_p,\ell)$ can be
    reduced  to a tuple  $(p,\widehat{\Lambda},
    \widehat{\varphi},\widehat{\ell})$ in $\cO(D + \log n)$ rounds, where $\widehat{\varphi}(\overline{x})$ is a $p$-$\lca$-reduced
    (thus quantifier-free)
    formula in $FO(\widehat{\Lambda})$, $\widehat{\ell}$ is a $\widehat{\Lambda}$
    labeling of \(S\), and $\true(\varphi,G,\ell) = \true(\widehat{\varphi},S,
    \widehat{\ell})$.
    Now, given the tuple $(p,\widehat{\Lambda},\widehat{\varphi},\widehat{\ell})$,
    \Cref{lem:qelim:remove-lca-general} states that the desired
    tuple $(\tilde{\Lambda},\tilde{\varphi},\tilde{\ell})$ can be computed with no extra communication.
\end{proof}

Thanks to  \Cref{lem:existentialOnly}, we may now assume, w.l.o.g., that, in the proof of \Cref{thm:thmFOexpOPT}, the formula $\varphi(\overline{x}) = \exists \overline{y} \psi(\overline{x},\overline{y})$ is given in an existential form, where  $\psi \in FO[\Lambda]$ is quantifier-free.

Let $p = |\overline{x}| + |\overline{y}|$. We use again~\Cref{th:congestltd} to
compute a $(p,f(p))$-treedepth coloring of $G$ in $\cO(\log n)$ rounds. Since $\varphi$ has exactly $p$ variables, a $k$-uple of vertices $(v_1\dots,v_k)$ satisfies the
formula in $(G,\ell)$ if and only if it satisfies it in some subgraph $G[U]$, where $U$ can be viewed as obtained
by “guessing” the colors of $(v_1,\dots,v_k)$ as well as the colors of the quantified variables
$(y_1,\dots, y_{p-k}) = \overline{y}$. This is formally stated as follows, for further references. 

\begin{lemma}\label{lem:optcol}
Let $\varphi \in FO(\Lambda)$ be an existential formula, i.e., $\varphi(\overline{x}) = \exists \overline{y}, \psi(\overline{x},\overline{y})$, where  $\psi \in FO[\Lambda]$ is quantifier-free. Let $p = |\overline{x}| + |\overline{y}|$, and let us assume a given a $(p,f(p))$ coloring of $(G,\ell) \in \mathcal{G}[\Lambda]$. We have
$$
\OPT_{\varphi}(G,\ell) = \max_{U\subseteq [f(p)],\; |U|\leq p} \OPT_{\varphi}(G[U],\ell). 
$$ 
\end{lemma}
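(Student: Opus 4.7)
The plan is to prove the claimed equality by establishing two inequalities, both of which rest on a single elementary observation: since $\psi$ is quantifier-free (it only combines adjacency, equality, and label predicates on its variables), and since $G[U]$ is an \emph{induced} subgraph of $G$, the truth value of $\psi(\overline w)$ is the same in $(G,\ell)$ and in $(G[U],\ell)$ as soon as every coordinate of $\overline w$ lies in $G[U]$. The $(p,f(p))$-treedepth coloring hypothesis is used only to guarantee the palette has $f(p)$ colors, not the treedepth property itself; the point is that any $p$ vertices of $G$ span at most $p$ distinct colors.

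For the inequality $\OPT_\varphi(G,\ell) \leq \max_{|U|\leq p} \OPT_\varphi(G[U],\ell)$, I would first pick a $k$-tuple $(v_1,\dots,v_k)$ achieving $\OPT_\varphi(G,\ell)$ (if no such tuple exists, the left-hand side is $-\infty$ or undefined and the inequality is vacuous). Since $\varphi(\overline x)=\exists \overline y\, \psi(\overline x,\overline y)$ and $(G,\ell)\models\varphi(v_1,\dots,v_k)$, there exist witnesses $(u_1,\dots,u_{p-k})$ with $(G,\ell)\models\psi(v_1,\dots,v_k,u_1,\dots,u_{p-k})$. Let $U$ be the set of colors appearing among these $p$ vertices, so $|U|\leq p$ and all $p$ vertices belong to $G[U]$. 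By the observation above, $(G[U],\ell)\models\psi(v_1,\dots,v_k,u_1,\dots,u_{p-k})$, hence $(G[U],\ell)\models\varphi(v_1,\dots,v_k)$, and so $\OPT_\varphi(G[U],\ell)\geq\sum_{i=1}^k\omega(v_i)=\OPT_\varphi(G,\ell)$.

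For the reverse inequality, I would take any $U\subseteq[f(p)]$ with $|U|\leq p$ and any $k$-tuple $(v_1,\dots,v_k)$ achieving $\OPT_\varphi(G[U],\ell)$. By definition of satisfaction over $G[U]$, there are witnesses $(u_1,\dots,u_{p-k})$ in $G[U]$ with $(G[U],\ell)\models\psi(v_1,\dots,v_k,u_1,\dots,u_{p-k})$. Applying the same induced-subgraph observation in the opposite direction, we deduce $(G,\ell)\models\psi(v_1,\dots,v_k,u_1,\dots,u_{p-k})$, so $(G,\ell)\models\varphi(v_1,\dots,v_k)$, giving $\OPT_\varphi(G,\ell)\geq\sum_{i=1}^k\omega(v_i)=\OPT_\varphi(G[U],\ell)$. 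Taking the maximum over $U$ closes the second inequality.

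There is essentially no hard step; the only thing that needs emphasis is why quantifier-freeness of $\psi$ is crucial. If $\psi$ itself contained further quantifiers, satisfaction in $G[U]$ would not imply satisfaction in $G$ (and vice versa), so the argument would break down. This is precisely why \Cref{lem:existentialOnly} is invoked beforehand to push $\varphi$ into the form $\exists\overline y\,\psi$ with $\psi$ quantifier-free, and why the bound $p=|\overline x|+|\overline y|$ (rather than just $|\overline x|$) is used for the treedepth coloring: the set $U$ must accommodate the colors of both the free tuple and the witnesses for the existential quantifiers.
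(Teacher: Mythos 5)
Your proposal is correct and formalizes precisely the argument the paper sketches informally in the paragraph just before the lemma (the paper itself does not supply a formal proof of \Cref{lem:optcol}). The two-inequality structure resting on the single observation — a quantifier-free $\psi$ has the same truth value at a tuple $\overline{w}$ in $(G,\ell)$ and in any induced subgraph $(G[U],\ell)$ containing all coordinates of $\overline{w}$, because adjacency, equality, and label predicates are preserved by taking induced subgraphs — is exactly the "guess the colors of $\overline{x}$ and $\overline{y}$" argument the paper gestures at. Your remark that the treedepth property of the coloring is not needed for this equality (only the finiteness of the palette and the fact that $p$ vertices occupy at most $p$ colors matters here, with treedepth paying off only later in \Cref{lem:FFMRText}) is a correct and worthwhile clarification, as is the point that quantifier-freeness of $\psi$ is what lets satisfaction transfer in both directions and is the reason \Cref{lem:existentialOnly} is invoked first with $p = |\overline{x}| + |\overline{y}|$ rather than $|\overline{x}|$.
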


It remains to compute $\OPT_{\varphi}(G[U],\ell)$ and $\cnt_{\varphi}(G[U],\ell)$ for every set $U$ of at most $p$ colors, given some $(p,f(p))$-treedepth coloring of~$G$. This is partially done in~\Cref{thm:FFMRT24} thanks a previous result in~\cite{FominFMRT24} in which it is shown that  $\OPT_{\varphi}(G,\ell)$ and $\cnt_{\varphi}(G,\ell)$ can be computed  in a constant number of rounds in connected graphs of bounded treedepth.
We however need to extend their techniques to fit our purposes, i.e., to deal with \emph{disconnected} graphs of bounded treedepth (the non-necessarily connected subgraphs of $G$ induced by a set of colors~$U$). For global counting, the cost of communication among different components adds an  additive term $\cO(D)$ to the round complexity.

\begin{lemma}\label{lem:FFMRText}
    Let $\mathcal{G}[\Lambda]$ be a class of labeled graphs of bounded expansion, let $\varphi(\overline{x})$ be a $FO(\Lambda)$ formula, and $p\in\mathbb{N}$. Let us assume given  a $(p,f(p))$-treedepth coloring of $(G,\ell) \in \mathcal{G}[\Lambda]$, and let $U\subseteq [f(p)]$ be any set of at most $p$ colors. Then $\OPT_{\varphi}(G[U],\ell)$ and $\cnt_{\varphi}(G[U],\ell)$ can be computed in $\cO(D)$ rounds in \CONGEST.
    Moreover, if \(\varphi\) is local, then local counting in \(G[U]\) can be solved in \(\cO(1)\)  rounds.
\end{lemma}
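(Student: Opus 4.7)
The plan is to reduce everything to dynamic programming on a rooted tree of depth $\cO(D)$ that is a subgraph of~$G$, in the spirit of~\cite{FominFMRT24,BoPaTo92}, while carefully handling the fact that $G[U]$ need not be connected. First I would construct, in $\cO(D)$ rounds, a BFS tree $T$ of $G$ rooted at an arbitrary vertex~$r$, and, in parallel, a decomposition forest $F$ of $G[U]$ whose trees have depth at most $2^{|U|}\leq 2^p$; the latter is exactly the by-product of the skeleton construction in~\Cref{prop:skeleton-construction} applied to the color set~$U$, and it is a subgraph of $G[U]\subseteq G$. Gluing $F$ to $T$ by identifying each root of $F$ with its position in $T$ yields a single rooted spanning tree $\mathcal T$ of $G$ of depth $D+\cO(1)$ along which all messages can be routed on actual edges of~$G$.

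For global optimization and global counting, I would then perform a Bodlaender-style bottom-up dynamic programming along $\mathcal T$. Each $v$-to-root path in a tree of $F$ forms a bag of size $\leq 2^p$, so $F$ gives a tree decomposition of $G[U]$ of constant width, and the computation extends naturally to $\mathcal T$ by letting the BFS edges of $T$ between forest roots carry empty-bag interfaces. At each node of $\mathcal T$, the table stores, for every partial assignment of the free variables $\overline x$ (and, if needed, of the existential witnesses introduced when eliminating quantifiers via \Cref{lem:existentialOnly}) to elements of the current bag or to its descendants, the number (resp.\ the optimal weight) of completions satisfying $\varphi$ in the subtree rooted at that node. Since both the treedepth~$p$ and the formula size are constants, each table has constant size, its entries are polynomially bounded integers, and one table fits in $\cO(\log n)$ bits, so a single table can be forwarded up one edge of $\mathcal T$ per round; the whole bottom-up sweep thus costs $\cO(\depth(\mathcal T))=\cO(D)$ rounds. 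The root then holds $\cnt_\varphi(G[U],\ell)$ and $\OPT_\varphi(G[U],\ell)$ and broadcasts the value along $\mathcal T$ in $\cO(D)$ additional rounds.

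For local counting of an $r$-local formula~$\varphi$, the key observation is that by locality, any tuple $\overline v\in\true(\varphi,G[U],\ell)$ lies entirely within $B_{G[U]}(v_1,r)$, hence inside a single connected component of $G[U]$. Therefore local counting decouples across components: it suffices to run the constant-round algorithm of~\Cref{thm:FFMRT24} independently in each component of $G[U]$, which is connected and of treedepth at most~$p$, producing the values $\nu(v)$ with no inter-component communication and in $\cO(1)$ rounds overall.

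The step I expect to require the most care is the combination of DP tables across the BFS-tree backbone for the non-local global case: at each internal node $u$ of $T$ lying above several forest roots, the table inherited from the different subtrees must be combined so that tuples with free variables spread over different components of $G[U]$ are counted exactly once. This is handled by a standard convolution indexed by the subset of free variables already assigned in each sub-table (product for counting, sum for optimization), together with a consistency check that no variable is assigned twice; the resulting combined table still has constant size, so the bandwidth constraint of \CONGEST{} is respected. The routine correctness of this bookkeeping, together with the depth bound on~$\mathcal T$, yields the claimed $\cO(D)$ and $\cO(1)$ round complexities.
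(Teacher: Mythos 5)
Your plan follows the paper's proof almost exactly: you glue the decomposition forest $F^U$ of $G[U]$ to a BFS-tree backbone of $G$, obtaining a rooted tree decomposition of $G[U]$ of constant width and depth $\cO(D)$, then run the Borie--Parker--Tovey dynamic program along it for global counting and optimization, and for local counting you decompose across connected components of $G[U]$ and invoke \Cref{thm:FFMRT24} in each --- this is exactly the paper's argument, including the empty bags on BFS-tree nodes. The one wrinkle is your parenthetical about treating the existential witnesses from \Cref{lem:existentialOnly} as extra free variables in the DP tables: this is harmless for $\OPT_\varphi$, but for $\cnt_\varphi$ it would overcount each tuple $\overline v$ once per satisfying witness; the paper sidesteps this by running the BPT dynamic program directly on the (possibly quantified) formula $\varphi$, relying on the fact that BPT's tables internalize quantifiers via types, and you should do the same rather than making the witnesses explicit. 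Beyond that, your emphasis on the "combination across forest roots" step, and the product/sum convolution with a disjointness check on the assigned free variables, is precisely what the paper disposes of as "a matter of exercise" once the glued structure is recognized as a tree decomposition.
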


\begin{proof}
Let us first consider the case of local counting, that is, we assume \(\varphi\) is local. In this case,  \(\cnt_\varphi(G[U], \ell)\) is equal to the sum of \(\cnt_\varphi\) over all
connected components of \(G[U]\), i.e., 
\[
\cnt_\varphi(G[U], \ell) = \sum_{H \in \mbox{CC}(G[U])} \cnt_\varphi(H, \ell),
\]
where \(\mbox{CC}(G[U])\) denotes the set of connected components of~\(G[U]\). Each component \(H\in CC(G[U])\) has treedepth at most \(|U|\), and thus, by \Cref{thm:FFMRT24},
\(\cnt_\varphi(H,\ell)\) can be computed in \(\cO(1)\) rounds. One arbitrary node \(v_H \in H\) outputs \(\nu(v_H) = \cnt_\varphi(H,\ell)\), whereas every other node~$u$
outputs \(\nu(u) = 0\).
Since \(H\) has bounded treedepth and thus bounded diameter, we can choose \(v_H\) to be for
example the node in \(H\) with maximal identifier.
It follows that
\[
\sum_{v \in V} \nu(v)
= \sum_{H \in \mbox{CC}(G[U])} \nu(v_H)
= \sum_{H \in \mbox{CC}(G[U])} \cnt_\varphi(H, \ell)
= \cnt_\varphi(G[U],\ell)
\]

Let us now focus on counting and optimization for an arbitrary (non-necessarily local) formula.

To prove the lemma, we generalize the techniques of~\cite{FominFMRT24} for proving \Cref{thm:FFMRT24}. 
We will only restate here the main features of the proof in~\cite{FominFMRT24}, based on tree decomposition and treewidth-related techniques. Recall that the tree decomposition of a graph $H=(V,E)$ is a pair $$(T = (I,F), \{B_i\}_{i \in I})$$ where $T$ is a tree, and, for every $i \in I$, $B_i$~is a subset of vertices of~$V$, referred to as the \emph{bag} of node~$i$ in~$T$, or merely ``bag~$i$'', such that
\begin{itemize}
    \item every vertex of $H$ is contained in at least one bag,
    \item for every edge of $H$, at least one bag contains its two endpoints, and
    \item for every vertex $v$ of $H$, the nodes of $T$ whose bags contain $v$ form a connected subgraph of~$T$.
\end{itemize}
The width of the decomposition is the maximum size of a bag, minus one. The treewidth of $H$ is the minimum width over all its tree decompositions.
Note that graphs of treedepth at most~$p$ have treewidth at most $2^p$~\cite{NesetrilOdM12}. The proof of \Cref{thm:FFMRT24} is based on three ingredients.

\begin{enumerate}
    \item The first ingredient is a dynamic programming centralized algorithm for decision, optimization and counting solutions of a given formula $\varphi(x) \in FO(\Lambda)$, over labeled graphs $(H,\ell)$ of bounded treewidth. Here we will assume that a tree decomposition of constant width is part of the input, and moreover the tree is rooted. The distributed algorithm used in~\cite{FominFMRT24} is based on the centralized algorithm by Borie, Parker and Tovey~\cite{BoPaTo92}. The nodes of the tree decomposition $(T = (I,F), \{B_i\}_{i \in I})$ are processed bottom-up, from the leaves to the root of $T$. When processing node $i$, let $j_1,\dots,j_q$ be its children in $T$. The essential observation due to~\cite{BoPaTo92} is that the dynamic programming tables at node $i$ are of constant size for decision, and of $\cO(\log n)$ size for counting and optimization. Moreover, for computing the information (encoding partial solutions) at node $i$, one only needs to know the subgraphs $H[B_i]$ and $H[B_{j_1}], \dots, H[B_{j_q}]$ induced by the bags at node $i$ and its children, together with the information already computed at the children nodes. In particular, when $i$ is a leaf, $H[B_i]$ suffices for computing the dynamic programming table at node $i$.

    \item The second ingredient of~\cite{FominFMRT24} is an implementation of the above centralized algorithm in the \CONGEST model, in a number of rounds equal to the depth of the tree decomposition $T$,
    under the condition that one can associate to each node $i$ of $T$ a vertex $h(i)$ of $H$, such that vertices $h(i)$ satisfy several properties in order to “simulate the behavior” of node $i$ in the dynamic programming. Vertex $h(i)$ must know the depth of tree $T$ and the depth of node $i$ in $T$, the bag $B_i$ and the subgraph $H[B_i]$, and moreover $h(i)$ knows the node $h(p_T(i))$ associated to the parent $p_T(i)$ of $i$ in $T$, and vertices $h(i), h(p_T(i))$ are equal or adjacent in $H$. Informally, the distributed algorithm implements the centralized one by making vertex $h(i)$ to help it computing the right moment (i.e., at the round corresponding to the depth of $T$ minus the depth of $i$ in $T$, plus one) the dynamic programming table at node $i$ is processed, and the result transmitted to $h(p_T(i))$. Observe that $h(i)$ has collected all the information from the children $j_1, \dots, j_q$ of $i$, thanks to the fact that  $h(j_1), \dots, h(j_q)$ have already computed and communicated it at the previous round. (To be complete, in~\cite{FominFMRT24}, one actually has $h(i)=i$ and $T$ is a subgraph of $H$, but our generalization is a simple observation.) This explains that the number of rounds of the distributed algorithm is the depth of the tree decomposition $T$.

    \item The third ingredient in~\cite{FominFMRT24} is an algorithm computing  the desired tree decomposition \linebreak$(T = (I,F), \{B_i\}_{i \in I})$ of graph $H$ of treedepth at most $p$ in a constant number of rounds, together with the information requested by the previous item. The width of the tree-decomposition is at most $p$, and the depth of $T$ is at most $2^p$.
\end{enumerate}

Let us now come back to our framework, that is when $H=G[U]$ is the subgraph of $G$ induced by some set $U$ of at most $p$ colors of some $(p,f(p))$-treedepth decomposition of $G$. In particular $H=G[U]$ has treedepth at most~$p$. Our goal is to replace the third item above by a distributed algorithm running in $\cO(D)$ rounds in \CONGEST for computing a suitable tree decomposition of $G[U]$. We compute a $p$-skeleton using \Cref{prop:skeleton-construction}. Note that this can be done in $\cO(2^{2p})$ rounds, hence in a constant number of rounds whenever the coloring is provided.
Let $F^U$ be the resulting decomposition forest spanning $G[U]$, of depth at most $2^d$. Recall that each node knows its parent and depth in the forest.

Generalizing~\cite{FominFMRT24}, we transform the decomposition forest $F^U$ into a tree decomposition of $G[U]$. This tree decomposition $(T = (I,F), \{B_i\}_{i \in I})$ has to be computed in time $\cO(D)$ and to satisfy all the conditions of the second ingredient above. The only difference is that, for each node $i$ of $T$, its “leader” $h(i)$ has to be a vertex of $G$ (not necessarily of $G[U]$), and, for each pair of nodes $i,j$ of $T$ such that $j = p_T(i)$, vertices $h(i)$ and $h(j)$ must be adjacent in $G$. In our construction, there may be up to two different nodes of $T$ having the same leader in~$G$.

Let us choose an arbitrary vertex $r$ of $G$, and compute a breadth-first search tree of $G$. From this rooted tree, remove every node $v$ that does not have a descendant which is a root in \(F^U\). Note that the construction of this tree can be performed in time $\cO(D)$ rounds. We call $TS$ the resulting tree, rooted at $r$. $TS$ can be viewed as a Steiner tree in $G$ spanning $r$ and all the vertices labeled $\root^U$. The depth of $TS$ is, by construction, at most~$D$, the diameter of $G$. 

From now on, let us focus on $TS$ alone, independently from the the fact that it is a subgraph of~$G$. Let us then construct a ``virtual'' tree $T$ by augmenting $TS$ a follows. For every vertex labeled $\root^U$ in $TS$, we root the  corresponding tree $T^U$ to that vertex. So, such a vertex has two types of children in $T$, those originally in $TS$, and those in the tree $T^U$ rooted at that vertex. Note that two different vertices $i$ and $j$ in $T$ may correspond to the same node $v$ in~$G$, as node~$i$ may be $v$ as a node in $TS$, and $j$ may be $v$ again, but as a node of~$T^U$. 
For every node $i$ of $T$, let $h(i)$ be the corresponding vertex of $G$.  Note that, by a simple distributed algorithm working in $\cO(D)$ rounds, each node $h(i)$ of $G$ can learn the depth of~$T$, the depth of $i$ in~$T$, and the identifier of $h(p_T(i))$. The depth of $T$ is at most $D + 2^p$ because the forest $F$ has depth at most $2^p$. By construction, $h(i)$ and $h(p_T(i))$ are either adjacent in~$G$ or equal, whenever $i$ is labeled $\root^U$.  

It remains to construct the bag $B_i$ for each node $i$ of~$T$, and to ensure that the node $h(i)$ of $G$ knows the subgraph induced by $B_i$ in $H=G[U]$. 
\begin{itemize}
    \item For every node $i$ of $T$ belonging to $TS$, we set $B_i = \emptyset$.

    \item For every node $i$ belonging to $F^U$, we act as in ~\cite{FominFMRT24}, i.e., we set $h(i) =i$, and $B_i$ is the set of ancestors of $i$ in $F^U$. As observed in~\cite{FominFMRT24}, each node $i$ can learn the subgraph induced by bag $B_i$ in a constant number of rounds. Moreover, for each tree $T^U$ of the forest $F^U$, the restriction of the decomposition $(T=(I,F),\{B_i\}_{i \in F})$ to $T^U$ is a tree decomposition of the subgraph of $G$ induced by the vertices of $T^U$, of width at most $2^p$. 
\end{itemize}

It is a matter of exercise to check that, by attaching to $TS$ the  tree-decompositions corresponding to the trees $T^U$ in the forest $F^U$, the resulting tree is a tree decomposition of $G[U]$ with width at most $2^p$.
Therefore we can 
apply the same dynamic programming scheme as above for computing values $\OPT_{\varphi}(G[U],\ell)$ and $\cnt_{\varphi}(G[U],\ell)$ in $D+2^p$ rounds. Together with the construction of the tree decomposition, the overall algorithm takes $\cO(D)$ rounds.
\end{proof}

By plugging the result of \Cref{lem:FFMRText} into \Cref{lem:optcol} and \Cref{lem:countie}
over all sets $U$ of at most $p$ colors for a suitable $p$ depending on $\varphi$ and
a $(p,f(p))$-treedepth coloring of $G$, we achieve the proof of \Cref{lem:QF-counting} and
\Cref{thm:thmFOexpOPT}. Note that in each case, we need to build a \((p,f(p))\)-treedepth coloring, which can be done
in \(\cO(\log n)\) \CONGEST rounds by \Cref{th:congestltd}.
\qed

% - - - - - - - - - - - - - - - - - - - - - - - - - - - - 
\subsection{Proof of Theorem~\ref{thm:thmFOexpCNT}}
\label{ss:counting-FO}
% - - - - - - - - - - - - - - - - - - - - - - - - - - - - 

\Cref{lem:QF-counting} shows how to solve the counting problem for \emph{quantifier-free}
FO formulas on bounded expansion classes. In order to prove the more general 
\Cref{thm:thmFOexpCNT}, we apply the quantifier elimination techniques developed in
\Cref{sec:local-model-checking,sec:general-model-checking}. \Cref{lem:reductiongeneral}
transforms a general FO formula into a \(p\)-\(\lca\)-reduced formula. However,
\Cref{lem:QF-counting} does not handle such formulas, as it only deals with usual adjacency and
equality predicates. In order to solve that issue, it is tempting to use
\Cref{lem:qelim:remove-lca-general}, which replaces \(\lca\) predicates by FO formulas.
However, this would result in an existential formula rather than a quantifier free
formula as needed for applying \Cref{lem:QF-counting}. The main idea of the proof of \Cref{thm:thmFOexpCNT}
is to maintain that the quantified variables introduced by \Cref{lem:qelim:remove-lca-general}
are quantified \emph{uniquely}, in the following sense.

\begin{definition}\label{def:lca-freeable}
Let \(p \in \N\), let \(\Lambda\) a  set of labels, and let \(\varphi(\overline
x) \in \FO[\Lambda]\) be a \(p\)-\(\lca\)-reduced formula. \(\varphi\)~is
\emph{\(\lca\)-freeable on \(\mathcal G\)} if there exists \(\widehat\Lambda\), and a quantifier free formula 
\(\psi(\overline x, \overline y)\in\FO[\widehat\Lambda]\)  such that,
for every \(G = (V,E) \in \mathcal G\), for
every \(\Lambda\)-labeled \(p\)-skeleton \( (S,\ell)\) of \(G\), there exists a
\(\widehat\Lambda\)-labeling \(\widehat \ell\) of \(G\) satisfying 
\begin{enumerate}
    \item $\true(\varphi, S, \ell) = \true\Big(\big(\exists \overline y\,  \psi(\overline x, \overline y)\big), G, \widehat \ell\Big)$, and
    \item for every \(\overline v \subseteq V\) satisfying \((S,\ell) \models \varphi(\overline v)\), there exists a unique \(\overline w \subseteq V\) such that \( (G, \widehat \ell) \models \psi(\overline v, \overline w)\).
\end{enumerate}

\end{definition}

\noindent The following lemma outlines the usefulness of \(\lca\)-freeability for counting.

\begin{lemma}\label{prop:lca-freeable-counting}
Let \(p\in \N\), and let \(\varphi(\overline x) \in \FO[\Lambda]\) be a \(p\)-\(\lca\)-reduced, and \(\lca\)-freeable formula. For every  \(\widehat\Lambda, \psi(\overline x, \overline y), G, S, \ell, \widehat \ell\) as in \Cref{def:lca-freeable}, 
\[
\cnt_\varphi(S,\ell) = \cnt_\psi(G,\widehat \ell).
\]
\end{lemma}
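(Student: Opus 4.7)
The plan is to exhibit an explicit bijection between $\true(\varphi, S, \ell)$ and $\true(\psi, G, \widehat\ell)$, which gives the equality of cardinalities, i.e., of counts.

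First, I would unpack the two conditions in \Cref{def:lca-freeable}. Condition~(1) gives that, for every tuple $\overline v \subseteq V$,
\[
(S,\ell) \models \varphi(\overline v) \iff (G,\widehat \ell) \models \exists \overline y\, \psi(\overline v, \overline y),
\]
i.e., $\overline v \in \true(\varphi, S, \ell)$ iff there exists at least one tuple $\overline w$ such that $(\overline v, \overline w) \in \true(\psi, G, \widehat \ell)$. Condition~(2) then strengthens this: whenever $\overline v \in \true(\varphi, S, \ell)$, such a witness $\overline w$ is actually \emph{unique}.

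Next, I would define the map
\[
\Phi : \true(\varphi, S, \ell) \longrightarrow \true(\psi, G, \widehat \ell), \qquad \Phi(\overline v) = (\overline v, \overline w_{\overline v}),
\]
where $\overline w_{\overline v}$ is the unique tuple provided by condition~(2). By condition~(1), $(\overline v, \overline w_{\overline v}) \in \true(\psi, G, \widehat \ell)$, so $\Phi$ is well-defined. Injectivity is immediate since $\Phi(\overline v)$ has $\overline v$ as its first component. For surjectivity, let $(\overline v, \overline w) \in \true(\psi, G, \widehat \ell)$; then the existential formula $\exists \overline y\, \psi(\overline v, \overline y)$ is satisfied in $(G, \widehat \ell)$, so by condition~(1) we have $\overline v \in \true(\varphi, S, \ell)$, and by the uniqueness part of condition~(2) we must have $\overline w = \overline w_{\overline v}$, hence $(\overline v, \overline w) = \Phi(\overline v)$.

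Thus $\Phi$ is a bijection, and
\[
\cnt_\varphi(S, \ell) = |\true(\varphi, S, \ell)| = |\true(\psi, G, \widehat \ell)| = \cnt_\psi(G, \widehat \ell).
\]
There is no real obstacle here; the lemma is essentially a direct consequence of the uniqueness clause having been built into the very definition of $\lca$-freeability. The actual work lies elsewhere, namely in later showing that the $\lca$-elimination procedure of \Cref{lem:qelim:remove-lca-general} can be carried out in an $\lca$-freeable manner, so that this lemma becomes applicable.
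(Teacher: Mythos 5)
Your proof is correct and takes the same approach as the paper: the paper's own proof is a one-line observation that \Cref{def:lca-freeable} gives a one-to-one correspondence between $\true(\varphi,S,\ell)$ and $\true(\psi,G,\widehat\ell)$, and you simply spell out that bijection explicitly via $\Phi(\overline v)=(\overline v,\overline w_{\overline v})$ and verify well-definedness, injectivity, and surjectivity.
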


\begin{proof}
    \Cref{def:lca-freeable} specifies the existence of a one-to-one correspondence between \(\true(\varphi,S,\ell)\) and
    \(\true(\psi, G, \widehat \ell)\). These two set have thus the same cardinality.
\end{proof}

The following lemma is an analog of \Cref{lem:qelim:remove-lca-general}.

\begin{lemma}\label{lem:lca-freeable-lca}
Every formula \(\varphi(y,z) = \lca_i^U(y,z)\) is \(\lca\)-freeable on
\(\mathcal G\).
Moreover,
assuming \( (S,\ell)\) is given as input, every node \(u\) can compute \(\widehat\Lambda,
\psi, \widehat \ell(u)\) without any communication.
\end{lemma}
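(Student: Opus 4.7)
The plan is to reuse the existential encoding of $\lca_i^U$ already developed in the proof of \Cref{lem:qelim:remove-lca-general}, and to strengthen it so as to enforce uniqueness of witnesses. Recall that this encoding introduces fresh variables $y_0, \dots, y_d$ and $z_0, \dots, z_d$ (where $d = 2^p$) representing the path from the root of the tree of $y$ down to $y$ in $F^U$, and similarly for $z$. A disjunction over candidate depths $(a,b) \in [\max(0,i), d-1]^2$, together with depth, adjacency, and (in)equality constraints, ensures that the prefixes $y_0, \dots, y_a$ and $z_0, \dots, z_b$ correctly describe the respective paths in $F^U$ and that their least common ancestor lies at level~$i$. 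The only source of non-uniqueness is that the ``tail'' variables $y_{a+1}, \dots, y_d$ and $z_{b+1}, \dots, z_d$ are left unconstrained.

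I would adopt verbatim the label set $\widehat{\Lambda}$ and the labeling $\widehat{\ell}$ used in that proof: each node $v$ receives a label encoding its color $c(v)$, and, for every $U \in \Col(p)$ containing $c(v)$, a label $\depth_s^U$ encoding its depth $s$ in~$F^U$. Since this information is already stored in the $p$-skeleton $S$ given as input, the labeling can be computed without any communication. I would then define the quantifier-free formula $\psi(y, z, \overline{w})$, with $\overline{w} = (y_0, \dots, y_d, z_0, \dots, z_d)$, by starting from the disjunction of \Cref{lem:qelim:remove-lca-general} and adding to each disjunct indexed by $(a,b)$ the extra conjuncts $y_s = y$ for all $s \in [a+1, d]$ and $z_s = z$ for all $s \in [b+1, d]$. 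These extra equalities pin the ``tail'' variables to canonical values without altering satisfiability.

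To verify \Cref{def:lca-freeable}, the semantic equivalence $\true(\lca_i^U, S, \ell) = \true(\exists \overline{w}\,\psi(y, z, \overline{w}), G, \widehat{\ell})$ follows from the argument of \Cref{lem:qelim:remove-lca-general}, since the added equalities are compatible with, and in fact forced by, any valid canonical witness. For the uniqueness condition, fix $(v, v')$ with $(S, \ell) \models \lca_i^U(v, v')$. The depth predicates $\depth_a^U(v)$ and $\depth_b^U(v')$ hold for exactly one pair $(a,b)$, so at most one disjunct of $\psi$ can be satisfied. Within that disjunct, a straightforward induction on $s$ using the depth and adjacency constraints, combined with the fact that every non-root node in a rooted forest has a unique parent, shows that $y_0, \dots, y_a$ are uniquely determined as the ancestors of $v$ in $F^U$, and similarly for $z_0, \dots, z_b$. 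The added equalities then force $y_s = v$ for $s > a$ and $z_s = v'$ for $s > b$, so the whole tuple $\overline{w}$ is uniquely determined.

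I do not foresee any serious obstacle, and expect the main care to be in the bookkeeping for the case $i = -1$: there the ``common prefix'' portion $\bigwedge_{s\in[0,i]} y_s = z_s$ is empty, and one must instead rely on the constraint $y_s \neq z_s$ for $s \in [0, \min(a,b)]$ together with $\depth_0^U(y_0)\wedge\depth_0^U(z_0)$ to ensure that $v$ and $v'$ lie in distinct trees of $F^U$. The uniqueness argument above, however, goes through unchanged, which completes the proof.
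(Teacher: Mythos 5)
Your proposal is correct and takes essentially the same route as the paper: strip the quantifiers from the existential encoding of $\lca_i^U$ (making the path variables free), augment the formula with the tail equalities $y_s = y$ for $s > a$ and $z_s = z$ for $s > b$, and then argue uniqueness from the uniqueness of ancestor paths in $F^U$ together with the determinacy of the depths $(a,b)$. The paper's proof does exactly this, including the observation that the uniqueness argument decomposes into the three cases (prefix ancestors, tail copies of $y$ or $z$) that you outline.
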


\begin{proof}
The proof follows the same guidelines as the one of \Cref{lem:qelim:remove-lca-general}.
Recall labels \(\col_k\) were added for each color \(k\), and the label \(\depth^U_j\) was added to mark the depth
of a node in each forest \(F^U\). Each \(\lca\) predicates was then replaced by an existential formula.
The following formula is very similar, except that we removed the quantifiers on \(y_0,\dots,y_{d-1}, z_0, \dots, z_{d-1}\), which are now free variables, and we added another constraint, specified on the last line of the formula below.

\begin{align*}
\psi(y,z,y_0, \dots, y_{d-1}, & z_0, \dots, z_{d-1})
 = \bigvee_{(a,b) \in [\max(0,i), d-1]^2} \Biggl( (y_a = y) \wedge (z_b = z)
\\
& \wedge \Big(\bigwedge_{s\in [0,i]} y_s = z_s\Big)
\wedge \Big(\bigwedge_{s\in [i+1, \min(a,b)]} y_s \neq z_s\Big)
\wedge \depth_0^U(y_0) \wedge \depth_0^U(z_0)
\\
& \wedge \Big(\bigwedge_{s\in [1,a]} \depth_s^U(y_s) \wedge \adj(y_{s-1}, y_s)\Big)
\wedge \Big(\bigwedge_{s\in [1,b]} \depth_s^U(z_s) \wedge \adj(z_{s-1},
z_s)\Big)
\\
& \wedge \Big(\bigwedge_{s \in [a+1, d-1]} y_s = y\Big)
\wedge \Big(\bigwedge_{s \in [b+1,d-1]} z_s = z \Big)
\Biggr).
\end{align*}
Recall that \(a\) (resp., \(b\))  represents here the depth of \(y\) (resp., \(z\)) in \(F^U\),
and \(y_0, \dots, y_a\) (resp., \(z_0, \dots, z_b\)) represent the path from the root of
\(y\)'s tree to \(y\) (resp., from the root of \(z\)'s tree to \(z\)). Since, in general,
variables \(y_{a+1},\dots,y_{d-1},z_{b+1}, \dots, z_{d-1}\) can take any value, the (new)
last line of the formula above is needed to enforce their uniqueness. For the same reasons as in \Cref{lem:qelim:remove-lca-general}, \(\psi\) satisfies
\[
\true(\varphi, S, \ell) =
\true(\exists \overline y, \exists \overline z, \psi(y,z,\overline y, \overline
z), G, \widehat \ell).
\]
Let us now prove that the quantified variables are unique.
Fix \((y,z) \in \true(\varphi, S, \ell)\) and let \(\overline y, \overline z\) such that
\((G,\widehat \ell) \models \psi(y,z,\overline y, \overline z)\). For some \(
(a,b)\in [i,d-1]^2\), the inside of the disjunction of \(\psi\) is true. This
implies that:
\begin{itemize}
\item \(y\) (resp. \(z\)) has depth \(a\) (resp. \(b\)),
\item for \(s\leq a\) (resp. \(s \leq b\)), \(y_s\) (resp. \(z_s\)) is the
(unique) ancestor of \(y\) (resp. \(z\)) at depth \(s\), and
\item for \(s > a\) (resp. \(s > b\)), \(y_s = y\) (resp. \(z_s = z\)), thus it
is also unique.
\end{itemize}
This completes the proof. 
\end{proof}

\begin{lemma}\label{lem:lca-freeable-conj}
A conjunction of \(\lca\)-freeable formulas is \(\lca\)-freeable.
\end{lemma}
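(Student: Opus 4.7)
The plan is to construct witnesses for the conjunction by taking a disjoint union of the witnesses provided for each conjunct. It suffices to treat the binary case $\varphi(\overline x) = \varphi_1(\overline x) \wedge \varphi_2(\overline x)$, since the general case follows by iteration. For each $i \in \{1,2\}$, let $\widehat \Lambda_i$ and the quantifier-free $\psi_i(\overline x, \overline y_i) \in \FO[\widehat\Lambda_i]$ witness the $\lca$-freeability of $\varphi_i$, and, for each labeled $p$-skeleton $(S,\ell)$ of $G$, let $\widehat\ell_i$ be the corresponding $\widehat\Lambda_i$-labeling of $G$. By renaming quantified variables and fresh labels if necessary, I would assume $\overline y_1 \cap \overline y_2 = \emptyset$ and $(\widehat\Lambda_1 \setminus \Lambda) \cap (\widehat\Lambda_2 \setminus \Lambda) = \emptyset$. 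I then set $\widehat\Lambda = \widehat\Lambda_1 \cup \widehat\Lambda_2$, $\psi(\overline x, \overline y_1, \overline y_2) = \psi_1(\overline x, \overline y_1) \wedge \psi_2(\overline x, \overline y_2)$, and $\widehat\ell(v) = \widehat\ell_1(v) \cup \widehat\ell_2(v)$ for every vertex $v \in V$.

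The key observation is that, since $(\widehat\Lambda_1 \setminus \Lambda) \cap (\widehat\Lambda_2 \setminus \Lambda) = \emptyset$, the truth value of $\psi_i$ at any tuple under $\widehat\ell$ coincides with its truth value under $\widehat\ell_i$: every label predicate occurring in $\psi_i$ probes a label in $\widehat\Lambda_i$, and membership of that label in $\widehat\ell(v)$ is equivalent to its membership in $\widehat\ell_i(v)$. Consequently, for every $\overline v \subseteq V$, the equivalences
\begin{align*}
(S,\ell) \models \varphi(\overline v)
&\iff (S,\ell) \models \varphi_1(\overline v) \text{ and } (S,\ell) \models \varphi_2(\overline v) \\
&\iff \exists \overline w_1, \overline w_2 \; : \; (G, \widehat\ell) \models \psi_1(\overline v, \overline w_1) \wedge \psi_2(\overline v, \overline w_2) \\
&\iff (G, \widehat\ell) \models \exists \overline y_1 \, \exists \overline y_2 \; \psi(\overline v, \overline y_1, \overline y_2)
\end{align*}
establish the first condition of $\lca$-freeability (item~1 of \Cref{def:lca-freeable}).

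For the uniqueness condition (item~2 of \Cref{def:lca-freeable}), fix any $\overline v$ with $(S,\ell) \models \varphi(\overline v)$. Then $(S,\ell) \models \varphi_i(\overline v)$ for $i\in\{1,2\}$, so the uniqueness clause of each $\varphi_i$'s $\lca$-freeability provides a unique $\overline w_i$ such that $(G, \widehat\ell_i) \models \psi_i(\overline v, \overline w_i)$. By the key observation above, the same $\overline w_i$ is the unique witness under $\widehat\ell$, so the pair $(\overline w_1, \overline w_2)$ is the unique tuple making $\psi(\overline v, \cdot, \cdot)$ true. I expect the bulk of the work to be purely bookkeeping: the only real subtlety is to guarantee that the two labelings $\widehat\ell_1$ and $\widehat\ell_2$ do not interfere with the evaluation of each other's formula, which is precisely what is ensured by the disjointness of the fresh labels. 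Once this bookkeeping is in place, both conditions of $\lca$-freeability for the conjunction reduce immediately to the corresponding conditions for each conjunct, and the iteration to arbitrary finite conjunctions is straightforward.
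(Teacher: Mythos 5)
Your proof is correct and follows essentially the same approach as the paper: reduce to the binary case, take the conjunction of the witness formulas over disjointly-renamed fresh variables and labels, set $\widehat\ell$ to the union of the two labelings, and derive both the truth-equivalence and the uniqueness from the corresponding properties of each conjunct. The only (positive) difference is that you make explicit the "key observation" about label disjointness preventing cross-interference between $\psi_1$ and $\psi_2$, which the paper leaves implicit, and you are slightly more precise in requiring disjointness only of the \emph{fresh} labels $\widehat\Lambda_i\setminus\Lambda$ rather than of the whole sets $\widehat\Lambda_i$.
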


\begin{proof}
Let \(\varphi_1(\overline x), \varphi_2(\overline x)\) be two \(\lca\)-freeable
formulas and \(\varphi(\overline x) = \varphi_1(\overline x) \wedge
\varphi_2(\overline x)\).
Let \(\widehat\Lambda_1, \widehat\Lambda_2\), \(\psi_1(\overline x,\overline y),
\psi_2(\overline x, \overline z)\) as in \Cref{def:lca-freeable}.
We set \(\widehat\Lambda = \widehat\Lambda_1 \cup \widehat\Lambda_2\), where we 
assume, w.l.o.g., that the sets of labels \(\widehat\Lambda_1, \widehat\Lambda_2\) are disjoint, and we set \(\psi(\overline x,\overline y,
\overline z) = \psi_1(\overline x, \overline y) \wedge \psi_2(\overline x,
\overline z)\). For \(G,S,\ell_1,\ell_2,\widehat \ell_1, \widehat
\ell_2\) as in \Cref{def:lca-freeable}, we define \(\widehat \ell(u) = \widehat
\ell_1(u)\cup\widehat \ell_2(u)\).
Let \(\overline v \in \true(\varphi, S, \ell)\).
Since \(\psi_1\) (resp. \(\psi_2\)) is \(\lca\)-freeable, there exists a unique
\(\overline w\) (resp. \(\overline u\))
such that \( (\overline v, \overline w) \in \true(\psi_1, G,
\widehat \ell_1)\) (resp. \( (\overline v, \overline u) \in \true(\psi_2, G,
\widehat \ell_2)\)). Then, \((G, \widehat \ell)\models \psi(\overline v, \overline w, \overline u)\).
Conversely, let \(\overline{w'}, \overline{u'}\) such that \( (G, \widehat \ell) \models
\psi(\overline v, \overline{w'}, \overline{u'})\). Then, \((G, \widehat \ell) \models \psi_1(\overline v, \overline{w'}\) (resp. \(\psi_2(\overline v, \overline{u'})\)), which implies
\(\overline{w'} = \overline w\) (resp. \(\overline{u'} = \overline u\)).
\end{proof}

At this stage, we can now prove the global counting part of \Cref{thm:thmFOexpCNT}.

\begin{proof}[Proof of \Cref{thm:thmFOexpCNT} (global counting)]
Let \(\Lambda\) be a (finite) set of labels and \(\varphi(\overline x) \in \FO[\Lambda]\).
By \Cref{lem:reductiongeneral}, there exists \(p\in \N\), \(\Lambda_1\) and
\(\varphi_1(\overline x) \in \FO[\Lambda_1]\) \(p\)-\(\lca\)-reduced such that, for every
\((G,\ell) \in \mathcal G[\Lambda]\), and for every \(S\) \(p\)-skeleton of \(G\), there exists a
\(\Lambda_1\)-labeling \(\ell_1\) of \(S\) such that \(\true(\varphi, G, \ell) = \true(\varphi_1, S, \ell_1)\) (and thus \(\cnt_\varphi(G,\ell) = \cnt_{\varphi_1}(S,\ell_1)\).
We compute a \(p\)-skeleton using \Cref{prop:skeleton-construction}, and the corresponding labeling \(\ell_1\),
in \(\cO(D+\log n)\) rounds. Let us write \(\varphi_1\) in disjunctive normal form:
\[
\varphi_1(\overline x) = \bigvee_{j=1}^k \psi_j(\overline x)
\]
where \(\psi_j\) is a conjunction of \(\lca^U_i\) predicates, label predicates, color predicates,
or their negations.
As in \Cref{lem:qelim:remove-lca}, note that \(\neg\lca^U_i(x,y)\) is equivalent to:
\[\neg\col_U(x) \vee \neg\col_U(y) \vee \bigvee_{j\in [-1,d] \setminus \{i\}} \lca^U_j(x,y).\]
We may thus assume that \(\varphi_1\) does not contain any negated \(\lca\) predicate.
We apply the inclusion/exclusion principle to \(\varphi_1\), that is, 
\[
\cnt_{\varphi_1}(S,\ell_1) = \sum_{h=1}^k (-1)^{h-1} \sum_{1\leq j_1 < \dots < j_h \leq k}
\cnt_{\bigwedge_{m=1}^h \psi_{j_m}}(S,\ell_1). 
\]
It only remains to compute the values of \[\cnt_{\bigwedge_{m=1}^h \psi_{j_m}}(S,\ell_1),\]
for some fixed indexes \(j_1,\dots,j_h\).
Since each \(\psi_{j_m}\) are conjunctions of \(\lca^U\)-predicates and (possibly negated) label and color predicates, it is also true of
\(\bigwedge_{m=1}^h \psi_{j_m}\). By \Cref{lem:lca-freeable-lca,lem:lca-freeable-conj}, this
formula is \lca-freeable. By \Cref{prop:lca-freeable-counting} there exists \(\widehat\Lambda\),
\(\widehat\psi \in \FO[\Lambda]\) quantifier-free, and \(\widehat \ell\) such that
\[\cnt_{\bigwedge_{m=1}^h \psi_{j_m}}(S,\ell_1) = \cnt_{\widehat\psi}(G, \widehat \ell).\]
Moreover, \(\widehat \Lambda, \widehat \psi,\widehat \ell(u)\) can be computed by any node $u$ without communication
(cf. \Cref{lem:lca-freeable-lca,lem:lca-freeable-conj}, and their proofs).
Finally, since \(\widehat\psi\) is quantifier free, it only remains to apply \Cref{lem:QF-counting} to complete the proof.
\end{proof}

In order to prove our result on \emph{local} counting, we need to adapt the notion of \emph{\lca-freeable
formulas} so that it preserves locality.

\begin{definition}
Let \(p \in \N\), let \(\Lambda\) be a (finite) set of labels,  and let  \(\varphi(\overline
x) \in \FO[\Lambda]\) be a \emph{local} \(p\)-\(\lca\)-reduced formula.
\(\varphi\) is
\emph{locally \(\lca\)-freeable on \(\mathcal G\)} if the following two conditions hold:
\begin{itemize}
    \item \(\varphi\) is \lca-freeable, and 
    \item the quantifier-free formula \(\psi\) from \Cref{def:lca-freeable} can be chosen
    to be local.
\end{itemize}
\end{definition}

\noindent The following lemma is the ``locality preserving version'' of \Cref{lem:lca-freeable-lca,lem:lca-freeable-conj}.

\begin{lemma}\label{lem:locally-lca-freable}
    Let \(\varphi(\overline x) \in \FO[\Lambda]\) be a \(p\)-\lca-reduced formula that  is supposed to be a
    conjunction of \(\lca^U\) predicates, and (possibly negated) label and color predicates.
    If \(\varphi\) is local then it is locally \lca-freeable.
    Moreover, assuming a \(p\)-skeleton \((S,\ell)\) is given as input, each node \(u\) can
    compute \(\widehat \Lambda, \psi, \widehat\ell(u)\) matching the requirement of \Cref{def:lca-freeable}
    without any communication.
\end{lemma}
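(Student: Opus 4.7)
The plan is to combine the constructions of \Cref{lem:lca-freeable-lca} and \Cref{lem:lca-freeable-conj} with an explicit distance predicate, and argue that locality is preserved. Assume $\varphi(\overline x)$ is $r$-local for some $r\in\N$, and decompose it as $\varphi = \bigwedge_{j=1}^m \alpha_j$, where each $\alpha_j$ is an $\lca^U$ predicate, a (possibly negated) label predicate, or a (possibly negated) color predicate. I would apply \Cref{lem:lca-freeable-lca} to each $\lca^U$ atom to produce, for each such atom, an $\lca$-freeable equivalent, and then invoke \Cref{lem:lca-freeable-conj} to combine the results together with the already quantifier-free label and color literals. This yields a set of labels $\Lambda'$, a quantifier-free formula $\psi'(\overline x, \overline y) \in \FO[\Lambda']$, and a labeling $\widehat\ell$ such that $(\Lambda', \psi', \widehat \ell)$ witnesses the $\lca$-freeability of~$\varphi$.

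Next I would observe that locality is already essentially present in this construction. Since $\varphi$ is $r$-local, whenever $\overline v \in \true(\varphi, S, \ell)$ we have $\overline v \subseteq B_G(v_1, r)$. Moreover, in each elimination performed by \Cref{lem:lca-freeable-lca}, the auxiliary variables representing the paths from the roots of $F^U$ down to $y$ and $z$ are either equal to $y$ or $z$, or are ancestors of $y$ or $z$ in $F^U$, and hence lie within distance $2^p$ of $\{y,z\}$ in $G$ (since trees in $F^U$ have diameter at most $2^p$ by \Cref{prop:depth}). Consequently, for the unique tuple $\overline w$ that the $\lca$-freeability yields, every entry of $\overline w$ is in $B_G(v_1, r + 2^p)$.

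Building on this, I would define the desired local quantifier-free formula as
\[
\psi(\overline x, \overline y) \;=\; \psi'(\overline x, \overline y) \,\wedge\, \ball((\overline x, \overline y), x_1, r + 2^p),
\]
which matches the base case of \Cref{def:r-local-form} and therefore makes $\exists \overline y\, \psi(\overline x, \overline y)$ an $(r+2^p)$-local existential formula. Correctness and uniqueness follow because the newly added ball constraint is automatically satisfied by the tuple $\overline w$ already produced by \Cref{lem:lca-freeable-lca} (as noted in the previous paragraph), so the set of satisfying tuples is unchanged and uniqueness is inherited. The ``no communication'' claim is inherited as well: the formula $\psi$ and the label set $\widehat \Lambda$ depend only on $\varphi, \Lambda$ and~$p$, and the labeling $\widehat \ell$ is produced locally as in \Cref{lem:lca-freeable-lca,lem:lca-freeable-conj}, since the $\ball$ predicate is handled syntactically at the formula level rather than via any auxiliary labeling.

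The main subtlety I anticipate is simultaneously preserving three properties of the construction: the semantic equivalence with $\varphi$, the uniqueness of the existential witnesses, and membership in the $r$-local form syntax of \Cref{def:r-local-form}. The first two follow straightforwardly once one notices that the ball predicate is vacuous on the unique witness; the third is essentially bookkeeping, but requires being careful that $\ball$ is treated as an atomic predicate of local forms (rather than expanded via $\beta_r$ of \Cref{prop:beta}, which would reintroduce non-unique existentially quantified path variables and break uniqueness).
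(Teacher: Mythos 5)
There is a genuine gap, and it centers on what you need the locality of $\psi$ to mean and how you try to enforce it.

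First, the formula $\psi = \psi' \wedge \ball((\overline x, \overline y), x_1, r+2^p)$ is not a valid witness for locally $\lca$-freeability. \Cref{def:lca-freeable} requires $\psi$ to be a \emph{quantifier-free} formula in $\FO[\widehat\Lambda]$, i.e.\ a Boolean combination of equality, adjacency, and label predicates. But $\ball(\cdot,\cdot,\cdot)$ is a derived abbreviation for an \emph{existential} FO formula (\Cref{prop:beta}); it is not an atom of the signature. You acknowledge this and propose to ``treat $\ball$ as an atomic predicate,'' but that is not available: there is no relation in $\FO[\widehat\Lambda]$ that a node can decode from its label alone to decide $\ball(u,v,r)$, since this depends on paths in the graph and on the assignment to the variable $x_1$. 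Moreover the downstream use of \Cref{lem:locally-lca-freable} is precisely to feed $\psi$ into \Cref{lem:QF-counting} (hence \Cref{lem:countie}), whose inclusion--exclusion argument relies on $\true(\psi,G[U_1],\cdot) \cap \true(\psi,G[U_2],\cdot) = \true(\psi,G[U_1\cap U_2],\cdot)$; that identity holds \emph{because} $\psi$ has no quantified variables, and it fails for $\ball$, since a path witnessing $\ball$ in $G[U_1]$ may use colors outside $U_1\cap U_2$.

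Second, and more fundamentally, your justification that the $\ball$ constraint is ``automatically satisfied'' only invokes the $r$-locality of $\varphi$ together with the $\lca$-freeability identity on the given $(S,\ell)$ and its underlying $(G,\widehat\ell)$. But being local (\Cref{def:r-local-general}, and, for quantifier-free formulas, \Cref{lem:local-disjunction}) is a property quantified over \emph{all} labeled graphs $(G',\ell')$: you must show that whenever $(G',\ell')\models\psi'(\overline v,\overline w)$, all of $\overline v, \overline w$ lie in a bounded ball around $v_1$, even on labelings $\ell'$ that do not come from any skeleton. Your argument does not address that. The paper closes this gap via \Cref{lem:lca-connected}: if $\varphi$ is local and satisfiable, then the auxiliary graph $H(\varphi)$ on the variables (edges given by non-negative $\lca^U$ predicates) must be connected. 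Then, because the replacement of $\lca^U_i(y,z)$ in $\psi'$ contains explicit $\adj$ predicates along the two root-to-node paths that share a prefix, any model of $\psi'$ in any graph forces $y$ and $z$ to be within distance $2\cdot 2^p$. Chaining this along a path in $H(\varphi)$ from $x_1$ to $x_j$ bounds $d(v_1,v_j)$ by $2\cdot 2^p(|\overline x|-1)$, establishing locality of $\psi'$ itself, with no extra $\ball$ conjunct needed. You need this connectedness argument; without it, $\psi'$ need not be local, and appending $\ball$ is not a syntactically admissible fix.
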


Before proving \Cref{lem:locally-lca-freable}, we first prove a technical lemma.

\begin{lemma}\label{lem:lca-connected}
    Let \(\varphi(\overline x)\) be a local \(p\)-\lca-reduced formula that  is assumed to be a
    conjunction of \(\lca^U\) predicates, and (possibly negated) label and color predicates.
    Let \(k=|x|\), and let \(H(\varphi)\) be the graph with vertex set equal to the set of variables \(x_1, \dots, x_k\),
    and where there is an edge between \(x_i\) and \(x_j\) if and only if the formula
    \(\varphi\) contains a predicate of the form \(\lca^U_m(x_i,x_j)\) with $m \geq 0$.
    If there exists a labeled skeleton \((S,\ell)\), and a set of vertices \(\overline v\) such that
    \((S,\ell) \models \varphi(\overline v)\), then, \(H(\varphi)\) is connected.

\end{lemma}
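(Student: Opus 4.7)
The plan is to argue by contradiction: assume $H(\varphi)$ is disconnected, and exhibit a $p$-skeleton with an assignment satisfying $\varphi$ in which some variable sits at infinite distance from $x_1$, violating the $r$-locality of $\varphi$. Let $C_1$ be the connected component of $H(\varphi)$ containing $x_1$, let $C_2$ be any other component, and pick some $x_j \in C_2$. By the hypothesis on $\varphi$, all binary conjuncts are non-negated $\lca^U_m$ predicates, and only those with $m \geq 0$ contribute edges to $H(\varphi)$. Hence any binary conjunct relating a variable of $C_1$ to a variable of $C_2$ is necessarily of the form $\lca^U_{-1}(x_i, x_{i'})$.

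Fix $r$ such that $\varphi$ is $r$-local and a witness $(S,\ell) \models \varphi(\overline v)$ with $S = (G, c, \delta)$. I would then take $G' = G \sqcup G$, the disjoint union of two copies of $G$, and lift the skeleton to $G'$ copy-wise: the coloring $c'$ agrees with $c$ on each copy, the decomposition forest $F'^U$ is $F^U \sqcup F^U$ (still a valid decomposition forest of $G'[U]$, since treedepth is preserved by disjoint union), and $\ell'$ agrees with $\ell$ on each copy. Define the new assignment $\overline{v'}$ by placing $v_i'$ in the first copy of $G$ when $x_i \in C_1$, and in the second copy when $x_i \in C_2$.

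Next, I would check $(S',\ell') \models \varphi(\overline{v'})$ conjunct by conjunct. The unary label and color conjuncts (possibly negated) carry over because labels and colors are preserved on each copy. An $\lca^U_m$ conjunct whose variables lie in a common component of $H(\varphi)$ involves two vertices living in the same copy of $G$, which, together with its inherited skeleton structure, is isomorphic to~$S$; such conjuncts therefore hold as in the original model. A cross-component conjunct must be of the form $\lca^U_{-1}(x_i,x_{i'})$: its two vertices sit in different copies of $G$, hence in different trees of $F'^U$, while both still carry colors in $U$ (as in the original); so $\lca^U_{-1}(v_i', v_{i'}')$ holds.

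This yields $(S',\ell') \models \varphi(\overline{v'})$, yet $v_j'$ and $v_1'$ belong to different connected components of $G'$, so $v_j' \notin B_{G'}(v_1', r)$. This contradicts the $r$-locality of $\varphi$, which requires $\overline{v'} \subseteq B_{G'}(v_1', r)$. Therefore $H(\varphi)$ must be connected. The only delicate point is the handling of cross-component $\lca^U_{-1}$ conjuncts; they behave correctly precisely because they are the binary predicates that, by definition of $H(\varphi)$, do \emph{not} contribute edges. This is what makes the clean disjoint-union construction go through without any further surgery on $G$.
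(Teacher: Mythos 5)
Your proof takes essentially the same route as the paper: double the witness skeleton, place the $C_1$-variables in one copy and the rest in the other, verify $\varphi$ conjunct-by-conjunct (using that all cross-component binary conjuncts are necessarily $\lca^U_{-1}$, which hold across copies), then contradict $r$-locality since $v_j'$ is in a different connected component from $v_1'$. The conjunct-by-conjunct verification you spell out is more detailed than the paper's, which simply asserts $(S',\ell')\models\varphi(\overline w)$; your handling of the cross-copy $\lca^U_{-1}$ predicates (noting both endpoints still carry colors in $U$) is the right observation.

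One small incompleteness: you define the placement of $v_i'$ only for $x_i\in C_1\cup C_2$, where $C_2$ is a single component other than $C_1$. If $H(\varphi)$ has three or more components, variables in $C_3, C_4, \dots$ are left unassigned. The paper avoids this by splitting variables into $C_1$ versus ``all the rest'' and placing the latter in the second copy; the verification goes through unchanged since any $\lca^U$-conjunct between two non-$C_1$ components must also be $\lca^U_{-1}$ and still holds inside the second copy (which is isomorphic to the original). Replacing ``let $C_2$ be any other component'' with ``let $C_2 = \{x_1,\dots,x_k\}\setminus C_1$'' fixes this and leaves the rest of your argument intact.
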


\begin{proof}
    Let us fix a formula \(\varphi\), a labeled skeleton \((S, \ell)\), and a set of vertices \(\overline v\)
    as in the statement of the lemma.
    We proceed by contradiction. Let us assume that \(H(\varphi)\) is
    not connected. Up to a reordering of the variables \(x_2, \dots, x_k\), we may assume that
    \(x_1, \dots, x_s\) form a single connected component, $1\leq s<k$, and that the vertices \(x_{s+1}, \dots, x_k\)
    are in other connected components. Let us consider the labeled skeleton \((S',\ell')\) consisting of two connected components \(S^1\) and \(S^2\), each of which being a copy of \((S,\ell)\).
    For any vertex \(u \in S\), let us denote by \(u^i\) the copy of \(u\) in \(S^i\). Let \(\overline{w}\) where, for every $i=1,\dots,k$, 
    \[
        w_i = \begin{cases}
            v^1_i & \textrm{if } i\leq s \\
            v^2_i & \textrm{if } i>s
        \end{cases}   
    \]
    We have that \((S',\ell') \models \varphi(\overline w)\).
    However \(w_{s+1}, \dots, w_k\) are not in the connected component of \(w_1\). In particular,
    they are not at bounded distance from \(w_1\). This contradicts the fact that
    \(\varphi\) is local (see \Cref{def:r-local-general}).
\end{proof}

When $H(\varphi)$ is connected, we say that \(\varphi\) is \emph{\(\lca\)-connected}.

\begin{proof}[Proof of \Cref{lem:locally-lca-freable}]
Since \(\varphi\) is local, we get from \Cref{lem:lca-connected} that it is either \lca-connected, or always
false. Let us assume that it is not \lca-connected. Then one can trivially output the formula \(\bot\)
that is always false.
Thus, we now assume that \(\varphi\) is \lca-connected.
We then merely adapt the proofs of \Cref{lem:lca-freeable-lca,lem:lca-freeable-conj}.
Let us consider a labeled \(p\)-skeleton \((S, \ell)\). As in the proof of \Cref{lem:lca-freeable-lca},
we add new labels \(\depth^U_i\), and \(\col_k\), and we denote by \(\widehat\ell\) the resulting labeling.
We then replace each predicate \(\lca^U_i(y,z)\) of \(\varphi\) by
\begin{align*}
\bigvee_{(a,b) \in [\max(0,i), d-1]^2} & \Biggl( (y_a = y) \wedge (z_b = z)
 \wedge \Big(\bigwedge_{s\in [0,i]} y_s = z_s\Big)
\wedge \Big(\bigwedge_{s\in [i+1, \min(a,b)]} y_s \neq z_s\Big)\\
& \wedge \depth_0^U(y_0) \wedge \depth_0^U(z_0)
\\
& \wedge \Big(\bigwedge_{s\in [1,a]} \depth_s^U(y_s) \wedge \adj(y_{s-1}, y_s)\Big)
\wedge \Big(\bigwedge_{s\in [1,b]} \depth_s^U(z_s) \wedge \adj(z_{s-1},
z_s)\Big)
\\
& \wedge \Big(\bigwedge_{s \in [a+1, d-1]} y_s = y\Big)
\wedge \Big(\bigwedge_{s \in [b+1,d-1]} z_s = z \Big)
\Biggr),
\end{align*}
where \(y_1, \dots, y_d, z_1, \dots, z_d\) are new free variables (which are \emph{unique},
see the proof of \Cref{lem:lca-freeable-lca}).
Note that if the formula above is satisfied, then  each newly introduced variable \(y_i\)
(resp. \(z_i\)) is at distance at most \(d\) from \(y\) (resp.~\(z\)). Moreover, if \(i\ge 0\), then \(y\) and \(z\) are at distance at most \(2d\), as they are both
at distance at most \(d\) from \(y_0=z_0\).
Let us denote by \(\psi(\overline x, \overline t)\) the formula obtained by replacing each
\(\lca^U\)-predicate by the formula above (\(\overline t\) is the collection of newly introduced variables).
By the same argument as in \Cref{lem:lca-freeable-lca}, we have
\[
\true(\varphi, S, \ell) =
\true(\exists \overline t\; \psi(\overline x, \overline t), G, \widehat\ell), 
\]
and
\[
\cnt_\varphi(S, \ell) = \cnt_\psi(G, \widehat\ell).
\]
It only remains to show that \(\psi\) is local.
Since it is quantifier-free, this is equivalent to showing 
that, for some \(r'\), every labeled graph \((G',\ell')\) satisfies
\[
    (\overline v,\overline w) \in \true(\psi, G', \ell') \implies
    \overline v, \overline w \subseteq B_{G'}(v_1, r').
\]
Since, are underlined above, each newly introduced variable in \(t_i\in\bar t\)
is at distance at most \(d\) from at least one of the variables \(x_j\), it is sufficient to show that for some $r'$
\[
    (\overline v,\overline w) \in \true(\psi, G', \ell') \implies
    \overline v \subseteq B_{G'}(v_1, r'), 
\]
as $\overline v \subseteq B_{G'}(v_1, r')$ implies that \(\overline w \subseteq B_{G'}(v_1,r'+d)\).
Recall that \(\varphi(\overline x)\) is \lca-connected, and so let \(H(\varphi)\) be defined as in
\Cref{lem:lca-connected}. For each index \(1 \leq j \leq |x|\), since \(H(\varphi)\) is connected
there is a path \(x_{i_1}, \dots, x_{i_k}\) from \(x_1\) to \(x_j\) in \(H(\varphi)\).
Since \(H(\varphi)\) has \(|x|\) nodes, we may assume this path has length at most \(|x|-1\).
The formula \(\psi\) guarantees that, for every \(l<k\), \(x_{i_l}\) and
\(x_{i_{l+1}}\) are at distance at most \(2d\) in \(G'\). Finally, \(x_1\) and \(x_j\) are at
distance at most \(2d(|x|-1)\). This concludes the proof that \(\varphi\) is local.
\end{proof}

We are now ready to establish \Cref{thm:thmFOexpCNT}.

\begin{proof}[Proof of \Cref{thm:thmFOexpCNT} (local counting)]
    We follow the same guidelines as in the proof of the global counting result, but we make sure that
    locality is preserved throughout the process. Let us consider a labeled graph \((G,\ell)\).
    Let us apply the \emph{locality preserving} elimination technique (cf. \Cref{lem:qelim:induction})
    for reducing \(\varphi\) to a local \(p\)-\lca-reduced formula \(\varphi_1\).
    We then construct a \(p\)-skeleton~\(S\) (cf.  \Cref{prop:skeleton-construction}), and a new labeling
    \(\ell_1\), in \(\cO(\log n)\) rounds.
    Let us rewrite \(\varphi_1\)
    in disjunctive normal form:
    \[
    \varphi_1(\overline x) = \bigvee_{j=1}^k \psi_j(\overline x)
    \]
    where \(\psi_j\) is a conjunction of \(\lca^U\)-predicates, and (possibly negated) label
    and color predicates --- recall that we can assume that there are no negated \(\lca^U\)-predicates. The formula 
    \(\varphi_1\) is quantifier-free so, thanks to \Cref{lem:local-disjunction}, each term of the
    disjunction is local.
    Let us apply the inclusion/exclusion principle to \(\varphi_1\), i.e., 
    \[
    \cnt_{\varphi_1}(S,\ell_1) = \sum_{h=1}^k (-1)^{h-1} \sum_{1\leq j_1 < \dots < j_h \leq k}
    \cnt_{\bigwedge_{m=1}^h \psi_{j_m}}(S,\ell_1). 
    \]
    It thus remains solely to perform the local counting of 
    \[\cnt_{\bigwedge_{m=1}^h \psi_{j_m}}(S,\ell_1)\]
    for fixed indexes \(j_1, \dots, j_h\). Since each formula \(\psi_{j_m}\) is a \emph{local}
    conjunction of \(\lca\)-predicates (as well as possibly negated label and color predicates), then same holds for\(\bigwedge_{m=1}^h \psi_{j_m}\). By \Cref{lem:locally-lca-freable}, this latter formula 
    formula is locally \lca-freeable, and one can compute a local, quantifier free
    formula \(\widehat\psi\), and a labeling \(\widehat\ell\) of \(G\), without any communication,  such that
    \[\cnt_{\bigwedge_{m=1}^h \psi_{j_m}}(S,\ell_1) = \cnt_{\widehat\psi}(G, \widehat \ell).\]
    Since \(\widehat\psi\) is local and quantifier-free, it suffices to apply
    (the local case of) \Cref{lem:QF-counting} for completing the proof.
\end{proof}

%%%%%%%%%%%%%%%%%%%%%%%%%%%%%%%%%%%%%%%
\section{Distributed Certification}
\label{sec:proofofthmFOexpCER}
%%%%%%%%%%%%%%%%%%%%%%%%%%%%%%%%%%%%%%%

In this section we prove that every graph property expressible in FO can be certified with certificates on $\cO(\log n)$ bits in graphs of bounded expansion. 

\begin{theorem}
\label{thm:thmFOexpCER} 
Let $\Lambda$ be a finite set. For every FO formula $\varphi$ on $\Lambda$-labeled graphs, and for every class  of graphs $\mathcal{G}$ of bounded expansion, there exists a distributed certification scheme that, for every $n$-node network $G=(V,E) \in \mathcal{G}$ and every $\ell: V\rightarrow 2^{\Lambda}$, certifies $(G,\ell) \models \varphi$ using certificates on $\cO(\log n)$ bits.
\end{theorem}

For establishing \Cref{thm:thmFOexpCER}, we revisit the logical structures introduced and analyzed in the previous section, by mainly focusing on the points where theses structures must be certified instead of being computed. 

%------------------------------------------
\subsection{Certifying Reduction of Trees with Bounded Depth}
%------------------------------------------

We begin with a basic primitive, that we use to count values, to identify connected components, and for several other applications. 

\begin{proposition}[\cite{KormanKP10}]\label{lem:STcertification}
 Let us suppose that each node $u$ in a graph $G = (V,E)$ is given as an input a triple $(\root(u), \parent(u), \depth(u))$, where $\root(u)$ and $\parent(u)$ are node identifiers, and $\depth(u)$ is a positive integer. Let us consider the algorithm where each node $u$ checks:
 \begin{itemize}
     \item For each $v\in N(u)$, $\root(v)=\root(u) = \rho$.
     \item If $\id(u) = \rho$ then $\parent(u) = \id(u)$ and $\depth(u)=0$.
     \item If $\id(u)\neq \rho$, then there exists $v\in N(u)$ s.t. $\id(v) = \parent(u)$ and $\depth(v) = \depth(u)-1$. 
 \end{itemize}
 If all checks are passed at every node, then there exists a spanning tree $T$ of $G$ rooted at $\rho$ such that the function $\parent$ defines the parent relation in~$T$, and $\depth(u)$ is the depth of $u$ in~$T$. 
\end{proposition}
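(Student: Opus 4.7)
The plan is to exploit two facts: the connectivity of $G$, which forces global agreement on the root identifier via the local check $\root(v)=\root(u)$ across every edge, and the strict monotonicity imposed by the $\depth$ check, which rules out cycles in the parent pointer graph. Concretely, I would proceed in three steps.

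First, I would argue that all nodes share a common root value. Since every neighbor $v$ of $u$ satisfies $\root(v)=\root(u)$, and $G$ is connected, a trivial induction along any path shows that $\root$ is constant on $V$; call this common value $\rho$. From the second check, any node $u$ with $\id(u)=\rho$ must satisfy $\parent(u)=\id(u)$ and $\depth(u)=0$; since identifiers are unique, there is at most one such node in $G$. I would then observe that at least one such node exists: taking any node $u_0$ and following $\parent$ pointers via the third check yields a sequence of nodes whose depths strictly decrease by one at each step, and such a sequence must eventually reach a node of depth~$0$, which by the second check has identifier~$\rho$.

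Second, from the parent pointers I would define the directed graph $T$ on $V$ with an arc from $u$ to $\parent(u)$ whenever $\id(u)\neq\rho$. The third check guarantees that each such arc corresponds to an actual edge of $G$, and that $\depth(\parent(u))=\depth(u)-1$. Because depth strictly decreases along parent pointers, $T$ is acyclic; because every non-root node has exactly one outgoing parent arc, $T$ is a functional graph whose only sink is the node of identifier~$\rho$. A standard argument then shows $T$ is a spanning tree of $G$ rooted at $\rho$.

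Third, I would verify the labeling claim: by construction, $\parent(u)$ is indeed the identifier of the parent of $u$ in~$T$, and $\depth(u)$ equals the distance from $u$ to the root in~$T$, by induction on the length of the parent path from $u$ to the root. The main obstacle, if any, is purely notational: one must be careful to distinguish the unique node with identifier~$\rho$ from the symbol~$\rho$ itself, and to rule out self-loops or multi-arcs by the uniqueness of identifiers and the strict depth decrement. Apart from that, the argument is a direct unwinding of the three local checks, and no further structural hypothesis on $G$ beyond connectedness is needed.
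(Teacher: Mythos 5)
The paper cites this proposition from~\cite{KormanKP10} without giving its own proof, so there is no in-paper argument to compare against; your proof is correct in substance and follows the standard route (connectivity forces global agreement on $\rho$; strict depth decrease along parent pointers rules out cycles; the parent chain from any vertex terminates at the unique root, so the parent digraph is a spanning tree). One small logical inaccuracy: in establishing that a vertex with identifier $\rho$ actually exists, you say the parent chain must reach a depth-$0$ vertex ``which by the second check has identifier $\rho$.'' But the second check is the implication $\id(u)=\rho \Rightarrow \depth(u)=0$, not its converse, so it cannot deliver that conclusion. The correct justification comes from the third check: the chain can stop only at a vertex with $\id=\rho$ (check~3 forces a further step whenever $\id\neq\rho$), and it must stop because depths are non-negative integers that strictly decrease by one at each step; equivalently, a vertex with $\depth=0$ and $\id\neq\rho$ would force a neighbor at depth $-1$, violating check~3. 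This slip is easily patched and does not affect the overall validity of your argument.
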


In the above proposition, if all checks are passed at every node, then the set $$\{(\root(u), \parent(u), \depth(u)) \mid u\in V\}$$ is called the \emph{encoding}  of the tree~$T$. 
Given a graph $G$ and a  forest $F\subseteq G$ with components $T_1, \dots, T_k$, we say that the nodes of $G$ are given the \emph{encoding}  of $F$ if, for every $i\in[k]$, each node $u\in T_i$ is given the encoding  of~$T_i$. 

%For technical reasons that will appear clear later, let us introduce a weaker variant of the \CONGEST model, called \BCONGEST. This model is defined in the same way as the \CONGEST\ model, except that, at each round, every node must communicate the same $\cO(\log n)$-bit message  to all its neighbors. That is, at every round, every node can solely \emph{broadcast} a same $\cO(\log n)$-bit message to its neighbors instead of potentially sending a specific $\cO(\log n)$-bit message to each of its neighbors. 

\begin{lemma}\label{lem:bcongestToPLS}
     Let $\mathcal{A}$ be an algorithm such that, in every $n$-node network $G=(V,E)$,  every node $u\in V$ is given an $\cO(\log n)$-bit input $\textsf{In}(u)$,  and must produce an $\cO(\log n)$-bit output $\textsf{Out}(u)$. Let $R(n)$ be the round-complexity of $\mathcal{A}$  in the \BCONGEST\ model. Then, there is a certification scheme using certificates on $\cO(R(n)\log n)$ bits that certifies that a pair of functions $f:V\to \{0,1\}^*$ and $g:V\to \{0,1\}^*$ form an input-output pair for $\mathcal{A}$ (i.e., if, for every node $u\in V$, $\textsf{In}(u)=f(u)$, then, for every node $u\in V$, $\textsf{Out}(u)=g(u)$).
\end{lemma}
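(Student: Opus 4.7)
The plan is to let the prover hand each node a compact transcript of the execution of $\mathcal{A}$ on input~$f$, and to have the verifier locally re-simulate $\mathcal{A}$ round by round using its neighbors' transcripts. The key feature of \BCONGEST\/ that makes this work is that, in each round, a node sends the \emph{same} $\cO(\log n)$-bit message to all its neighbors, so the full outgoing communication of a node is encoded by a single bitstring per round.

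More concretely, for a graph $G=(V,E)$ and functions $f,g$, the prover first simulates $\mathcal{A}$ on input $f$. Let $m_u^{(r)}\in\{0,1\}^{\cO(\log n)}$ denote the message that node $u$ broadcasts in round~$r$ of this simulation, for $r\in\{1,\dots,R(n)\}$. The certificate given to $u$ is the tuple
\[
    \Cert(u) \;=\; \big(m_u^{(1)}, m_u^{(2)}, \dots, m_u^{(R(n))}\big),
\]
whose total size is $\cO(R(n)\log n)$ bits. During the single verification round, every node $u$ receives the certificates of all its neighbors, and therefore knows, for every $v\in N(u)$ and every $r$, the message $m_v^{(r)}$ that $v$ claims to have broadcast. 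From its own input $f(u)$ and these claimed incoming messages, $u$ can re-simulate its own execution of $\mathcal{A}$: it checks that, for every $r\in\{1,\dots,R(n)\}$, the message it would broadcast in round $r$ according to $\mathcal{A}$, given $f(u)$ and the claimed messages $\{m_v^{(s)} : v\in N(u),\, s<r\}$, coincides with $m_u^{(r)}$ as stored in $\Cert(u)$, and that the output produced by $\mathcal{A}$ after $R(n)$ rounds equals $g(u)$. If any check fails, $u$ rejects; otherwise it accepts.

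Completeness is immediate: on a genuine input-output pair $(f,g)$, the prover uses the true transcript, so all local checks succeed. The core of the argument is soundness, which I would establish by induction on the round index. Consider any certificate assignment that is accepted by every node. For $r=1$, the verifier at each node $u$ checks that $m_u^{(1)}$ is exactly the function of $f(u)$ prescribed by $\mathcal{A}$, so the round-$1$ claimed messages coincide with the true ones. Assuming the claimed messages of all rounds $<r$ coincide with the true ones, the verifier at $u$ accepts only if $m_u^{(r)}$ is the $\mathcal{A}$-prescribed function of $f(u)$ and of the round-$(<r)$ messages received from $N(u)$; by the induction hypothesis those incoming messages are the actual ones, so $m_u^{(r)}$ is the true round-$r$ broadcast of $u$. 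After $R(n)$ rounds of induction, the certificates describe the true transcript, and the final check $g(u)=\mathcal{A}\text{-output}(u)$ forces $g$ to be the actual output produced by $\mathcal{A}$ on~$f$.

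I do not anticipate any serious obstacle: the main subtlety is simply to observe that the broadcast assumption of \BCONGEST\/ is what keeps each per-round certificate of size $\cO(\log n)$ (otherwise one would need to store a separate message per neighbor, which could blow up with the degree), and that a single verification round suffices because the induction is carried out inside each node's \emph{local} re-simulation, not by communication between rounds of verification.
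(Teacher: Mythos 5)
Your proposal is correct and follows essentially the same route as the paper's proof: the prover supplies each node with its $R(n)$-round broadcast transcript, and the verifier locally re-simulates $\mathcal{A}$ against its own and its neighbors' claimed transcripts. The only difference is that you make the soundness argument explicit via an induction on the round index, whereas the paper states that soundness and completeness ``follow by construction.''
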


\begin{proof}
    On legal instances $(f,g)$ for network $G=(V,E)$, the prover provides each vertex $u\in V$ with the sequence $M(u) = (m_1(u), \dots, m_{R(n)}(u))$ where $m_i(u)$ is the message that $u$ broadcasts to its neighbors on input $f(u)$ during the $i$-th round of~$\mathcal{A}$. 
    
    The verification algorithm proceeds as follows. Every node collects the certificates of its neighbors, and simulates the $R(n)$ rounds of~$\mathcal{A}$. That is, $u$ checks first whether $m_1(u)$ is indeed the message that it would send in the first round of~$\mathcal{A}$. For every $i\in \{2,\dots,R(n)\}$, node $u$ assumes that the messages $m_1(v), \dots, m_{i-1}(v)$ of each of its neighbors $v\in N(u)$ were correct, and checks whether $m_i(u)$ is indeed the message that it would broadcast at round~$i$. If some of these tests are not passed then $u$ rejects. If all tests are passed, then $u$ finally checks whether $g(u)$ is the output that $u$ would produce by executing~$\mathcal{A}$, and accepts or rejects accordingly. 
    
    The completeness and soundness of the algorithm follow by construction. 
\end{proof}

Let us now prove a result analogous to \Cref{lem:generalforestcomputing}, but for distributed certification. 

\begin{lemma}\label{lem:forestReductionverification}
Let $(G,\ell)$ be a connected $\Lambda$-labeled graph, and let $F\in \mathcal{F}_d$ be a subgraph of $G$. Let $\varphi(\overline{x})= \exists \overline{y}\; \zeta(\overline{x},\overline{y})$ be a formula in $\FO[\Lambda]$, where $\zeta$ is an $\lca$-reduced formula. Let us assume that every node $u$ of $G$ is provided with the following values:  
\begin{itemize}
\item Two sets $\Lambda$ and $\widehat{\Lambda}$;
\item The formula $\varphi$, and a $\lca$-reduced formula $\widehat{\varphi}$;
\item A set of labels $\ell(u)\subseteq \Lambda$, and a set of labels $\widehat{\ell}(u)\subseteq \hat{\Lambda}$;
\item The encoding $(\root(u), \parent(u), \depth(u))$ of $F$;
\item The encoding $(\root^T(u), \parent^T(u), \depth^T(u))$ of a spanning tree $T$ of $G$.
\end{itemize}
There is a certification scheme  using $\cO(\log n)$-bit certificates that certifies  that $\true(\varphi,F,\ell) = \true(\widehat{\varphi},F,\tilde{\ell})$.
\end{lemma}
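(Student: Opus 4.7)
The plan is to reduce certification to the correctness of the distributed algorithm of \Cref{lem:generalforestcomputing}. Since that algorithm deterministically produces from $(F,\ell,\varphi,\Lambda,d)$ a labeling whose $\true$-set coincides with $\true(\varphi,F,\ell)$ via $\widehat\varphi$, it suffices to design a scheme that accepts iff the labeling $\widehat\ell$ given to the nodes is the same one the algorithm would output. Note that $\widehat\varphi$ and the index set $I \subseteq \Type(k+1,d,\Lambda)$ depend only on $\varphi$, $\Lambda$, and $d$; each node recomputes them from its inputs, compares them with its neighbors' copies, and rejects on mismatch, so no certificate bits are spent on $\widehat\varphi$. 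Thus only $\widehat\ell$ must be certified, and since $\widehat\ell(u) = \bigcup_{\psi \in I} \widehat\ell^\psi(u)$ with $|I|$ bounded by a constant depending only on $\varphi$, $\Lambda$, $d$, I will certify each $\widehat\ell^\psi$ separately.

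For every $\psi \in I$ that falls under Case~1 or Case~2 of \Cref{lem:elimination_forest}, the per-type labeling $\widehat\ell^\psi$ is computed by an $O(d)=O(1)$-round \BCONGEST\ procedure inside $F$ (identifying candidates, propagating pings upward or downward by $O(d)$ steps, computing pivot degrees $\kappa(z)$, marking good nodes, and attaching the extra labels). I would apply \Cref{lem:bcongestToPLS} directly to this procedure: the prover gives each node the $O(1)$ messages it would have broadcast, and the verifier, using the encoding of $F$ to identify parents and children in $F$, simulates the constant-round execution and checks that $\widehat\ell^\psi(u)$ matches the rule of Case~1 or Case~2. Each such certificate fits in $O(\log n)$ bits.

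The main obstacle is Case~3, whose contribution to $\widehat\ell^\psi(u)$ depends on $\rho(F)$, the number of active roots of $F$ across the whole graph, a global quantity. My strategy is to use the given spanning tree $T$ of $G$ as an aggregation scaffold. The prover provides every node $u$ with: (i)~a bit asserting whether $u$ is an active root of $F$; (ii)~for each relevant $\psi$, bits marking $u$ as candidate and as good; (iii)~an integer $s(u) \in [0,n]$ equal to the number of active roots of $F$ in the $T$-subtree rooted at $u$; and (iv)~the truncated value $\min(\rho(F),k{+}1)$. The verifier then performs standard proof-labeling checks: the active-root flag agrees with the presence of a candidate descendant in $F$ (an $O(d)$-depth property in $F$, certifiable via the encoding of $F$ by a routine downward-upward consistency scheme); the good-node flag agrees with being a descendant of an active root in $F$ (analogously); $s(u) = [\,u\text{ is active root}\,] + \sum_{v} s(v)$ where $v$ ranges over children of $u$ in $T$; the $T$-root advertises $s(\mathrm{root}^T) = \rho(F)$; all neighbors agree on $\min(\rho(F),k{+}1)$; and finally, $\widehat\ell^\psi(u)$ contains $\min(\rho(F),k{+}1)$ and $\good$ exactly when prescribed.

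Combining the certifications of all $\psi \in I$, and checking in a final step that $\widehat\ell(u) = \bigcup_{\psi \in I} \widehat\ell^\psi(u)$, completes the scheme. Since $|I|$ is constant and every component uses $O(\log n)$ bits, the total certificate size is $O(\log n)$. The key insight making this work is that, although \Cref{lem:generalforestcomputing} spends $\Omega(D)$ rounds aggregating $\rho(F)$, the aggregation along $T$ admits a classical subtree-sum proof-labeling certificate, compressing the $D$-round computation into a one-shot local verification.
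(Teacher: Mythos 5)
Your proposal is correct and follows essentially the same strategy as the paper's proof: recompute $\widehat\varphi$ and the index set $I$ locally, invoke \Cref{lem:bcongestToPLS} for the $O(d)$-round local portions (Cases~1 and~2, and the ping phases of Case~3), and replace the $\Omega(D)$-round aggregation of $\rho(F)$ by a classical subtree-sum proof-labeling scheme along the given spanning tree $T$, exactly as the paper does with its certificate $(\rho(u),a(u),\textsf{val}(u))$ and checks (1)--(7). The only point you leave implicit is the case $|\overline y|>1$, which the paper handles by iterated single-variable elimination, the prover supplying the intermediate labelings $\ell_i$ and per-step certificates $\Cert_i$; since $|\overline y|$ is a constant bounded by the size of $\varphi$, this is a routine extension of your scheme and does not affect the $O(\log n)$ bound.
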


\begin{proof}
Let $k = |\overline{x}|$. As  in the proof of \Cref{lem:elimination_forest}, let us assume w.l.o.g. that $k\geq 1$ (as otherwise we can create a dummy free variable $x$ for $\varphi$). 
Moreover, let us assume that $\widehat{\Lambda}$ and $\widehat{\varphi}$ are the set and the formula defined in the proof of \Cref{lem:elimination_forest}. Since $\widehat{\Lambda}$ and $\widehat{\varphi}$ can be computed by every node without any communication, each node can reject in case they are not appropriately defined. Similarly, we can assume that $\ell$ and $\hat{\ell}$ are, respectively, a $\Lambda$-labeling and a $\hat{\Lambda}$-labeling. Thanks to \Cref{lem:STcertification}, $\cO(\log n)$-bits certificates are sufficient to certify that the encodings of $F$ and $T$ are correctly. The remaining of the proof consists in checking that $\widehat{\ell}$ is correctly set as specified the proof of \Cref{lem:elimination_forest}.

Let us first suppose that $|\overline{y}|=1$. We shall deal with the  case when $|y|> 1$ later in the proof. We start by proving the result for the case where $\zeta$ is a $\lca$-type $\type_{\gamma, \delta}\in \Type(k+1,d,\Lambda)$, where the variable $y$ is identified with variable $k+1$ in the definition of $\gamma$ and $\delta$. In a second stage, we'll define $s$, $h$, $h_s$ and $h_y$ in the same way as we did in \Cref{lem:elimination_forest}, and revisit the cases of \Cref{lem:elimination_forest}. 

In Case~1 and~2, observe that the algorithms provided in \Cref{lem:local_forest_computing} are algorithms that fit with the broadcast constraint of the \BCONGEST\ model, running in $\cO(d)$ rounds. On input $\ell$, these algorithms output $\widehat{\ell}$. By \Cref{lem:bcongestToPLS}, we can certify whether $\widehat{\ell}$ is well defined in $d\cdot \cO(\log n) = \cO(\log n)$ rounds.  

Now let us deal with Case~3. The algorithm $\mathcal{A}$ given in \Cref{lem:generalforestcomputing} for this case has three phases. The  first two phases can be implemented by an algorithm $\mathcal{B}$ under the \BCONGEST\ model, running in $\cO(d)$ rounds. The output of $\mathcal{B}$ consists in a marking of all active roots, as well as a marking of all good nodes. We can simulate these two first phases by giving a certificate to each node, interpreted as the markings given by $\mathcal{B}$, as well as a certificate of the execution of $\mathcal{B}$. According to \Cref{lem:bcongestToPLS}, this can be implemented using certificates on $\cO(\log n)$ bits. The last phase of Algorithm~$\mathcal{A}$ aims at providing each node with  the number of active roots. Each node $u$ receives a certificate consisting of:
\begin{itemize}
    \item an integer $\rho(u)$ equal the number of active roots in $F$;
    \item  a bit $a(u)$ indicating whether or not $u$ is an active root;
    \item an integer $\textsf{val}(u)$ equal to the sum of all bits $a(v)$ in the subtree of $T$ rooted at $u$. 
\end{itemize}
Observe that these certificates can be encoded in $\cO(\log n)$ bits. For a non-negative integer~$\nu$, let $$h(\nu) =\begin{cases} \nu &\textsf{ if } \nu \leq k,\\
\infty &\textsf{ otherwise.}
\end{cases}$$
The verifier checks the following conditions at every node~$u$:
\begin{itemize}
    \item[(1)] For every $v\in N(u)$, $\rho(u) = \rho(v)$;
    \item[(2)] $a(u) = 1$ if and only if $u$ is marked as active by $\mathcal{B}$;
    \item[(3)] The following condition holds 
    $$\textsf{val}(u) = a(u) + \sum_{v \in \textsf{Children}(u)} \textsf{val}(v),$$ where $\textsf{Children}(u)$ denotes the set of nodes $v$ such that $\parent^T(v) = \id(u)$;
    \item[(4)] If $\id(u) = \root^T(u)$, then $\textsf{val}(u) = \rho(u)$;
    \item[(5)] If $u\in F$ is marked as good by $\mathcal{B}$, then $\widehat{\ell}(u) = \ell(u) \cup \{\good, h(\rho(u))\}$;
    \item[(6)] If $u\in F$ is not marked as good by $\mathcal{B}$, then $\widehat{\ell}(u) = \ell(u) \cup \{h(\rho(u))\}$;
    \item[(7)] If $u\notin F$ then $\widehat{\ell}(u) = \ell(u)$.
\end{itemize}

We now check completeness and soundness of this certification scheme. Condition~(1) is satisfied if and only if every node received the same value $\rho$ for the number of active roots. Conditions~(2) and~(3) are both satisfied if and only if $\val(u)$ correspond to the number of active roots in the subgraph of $T$ rooted at~$u$. In particular $\val(r)$ is equal to the number of active roots. Then, conditions (1) to (4) are satisfied if and only if $\rho$ corresponds to the number of active roots in $F$. Finally, conditions (1) to (7) are satisfied if and only if $\widehat{\ell}$ is well defined according to the proof of \Cref{lem:elimination_forest}.

Let us denote by $\mathcal{A}^{\type}$ the verification algorithm, and by $\Cert^\zeta$ the certificates  described for each of the three cases when $\zeta$ is an $\lca$-type formula.  Let us now assume that $\zeta$ is an arbitrary $\lca$-reduced formula. From \Cref{prop:basicnormal}, there exists a set $I\subseteq~\Type(k+1, D, \Lambda)$ such that,
$$
\zeta(\overline{x},y) = \bigvee_{\psi \in I} \psi(\overline{x},y).
$$
For each $\psi \in I$, let us denote by $\widehat{\Lambda}^\psi$, $\widehat{\psi}$ and $\widehat{\ell}^\psi$ the reduction on $\mathcal{F}_d$ of $\Lambda$, $\psi$ and $\ell$, respectively. The prover provides each node $u$ of $G$ with the certificate $c^\psi(u)=\widehat{\ell}^\psi(u)$ for each $\psi$, in addition to the certificate $\Cert^\psi$.  

The verifier proceeds as follows. Each node $u$ checks  that it accepts when running algorithm $\mathcal{A}^{\type}$ with certificates $c^\psi(u)$, for every $\psi \in I$. Then, each node $u$  relabels the sets $\widehat{\ell}^\psi$ to make them disjoint, as we did in the proof of \Cref{lem:elimination_forest}), and checks whether $\widehat{\ell}(u) = \bigcup_{\psi \in I} \widehat{\ell}^\psi(u)$. The soundness and completeness of the whole scheme follows from the soundness and completeness of $\mathcal{A}^{\type}$. 

It remains to deal with the case where $q= |\overline{y}| > 1$. Let us denote by $\mathcal{A}$ the verification algorithm that we described for the case when $|\overline{y}| = 1$.  The prover provides the nodes with a set of triplets $(\zeta_i, \ell_i, \Cert_i)_{i\in [q]}$ such that, for each $i \in [q]$,  $\text{Cert}_i$ is a certificate used to check whether
$$
\true(\exists y_i, \zeta_i(\overline{x}, y_1, \dots, y_i), G, \ell_i) = \true(\zeta_{i-1}(\overline{x}, y_1,\dots, y_{i-1}), G, \ell_{i-1}),
$$
Where $\zeta_q = \zeta$, $\ell_q = \ell$,  $\zeta_0 = \widehat{\varphi}$,  and $\ell_0 = \widehat{\ell}$. Every node $u$  accepts if $\mathcal{A}$ accept all certificates at~$u$. From the soundness and completeness of $\mathcal{A}$, we obtain that every node accepts if and only if
\begin{align*}
\true(\varphi(\overline{x}), G, \ell) &= \true(\exists (y_1, \dots, y_{q})\,  \zeta(\overline{x},(y_1, \dots, y_{q})), G, \ell)\\
&= \true(\exists (y_1, \dots, y_{q})\,  \zeta_q(\overline{x},(y_1, \dots, y_{q})), G, \ell_q)\\
&= \true(\exists (y_1, \dots, y_{q-1})\, \zeta_{q-1}(\overline{x},(y_1, \dots, y_{q-1})), G, \ell_{q-1})\\
&\dots\\
&= \true(\exists y_1\, \zeta_{1}(\overline{x},y_1), G, \ell_{1}) \\
&= \true( \zeta_{0}(\overline{x}), G, \ell_{0}) \\
&= \true(\widehat{\varphi}(\overline{x}), G, \widehat{\ell}).
\end{align*}
Which completes the proof. 
\end{proof}

%------------------------------------------
\subsection{Certifying Low-Treedepth Colorings and Skeletons}
%------------------------------------------

Bousquet et al. \cite{FeuilloleyBP22} have shown that, for every MSO property $\mathcal{P}$, there is a certification scheme  using $\cO(\log n)$-bit certificates for certifying $\mathcal{P}$ on graphs of bounded treedepth. Since the mere property ``treedepth at most $k$'' is expressible in MSO, we obtain the following. 

\begin{proposition}[\cite{FeuilloleyBP22}]\label{prop:PLSfortreedepth}
For every integer $p\geq 1$, there exists a certification scheme  using $\cO(\log n)$-bit certificates for certifying ``treedepth at most $p$''. 
\end{proposition}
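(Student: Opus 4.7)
The plan is to give a direct certification scheme, rather than invoking the general MSO-on-bounded-treedepth machinery of \cite{FeuilloleyBP22} (although that route would also work since ``treedepth at most $p$'' is MSO-expressible). Recall that $G=(V,E)$ has treedepth at most $p$ if and only if it admits an \emph{elimination forest}: a rooted forest $F$ on vertex set $V$, of depth at most $p$, such that for every edge $\{u,v\}\in E$, one of $u,v$ is an ancestor of the other in $F$. A preliminary step is to argue that one may always pick such an $F$ with the additional property $F\subseteq G$, i.e., each non-root has its parent as a $G$-neighbor: starting from any elimination forest, if a vertex $v$ has no $G$-edge to a strict ancestor, then the entire subtree rooted at $v$ can be detached and made a new tree without increasing depth; otherwise, $v$ can be re-attached to its deepest $G$-ancestor, which again does not increase depth. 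Iterating this cleanup yields an elimination forest with $F\subseteq G$.

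On a legal instance, the prover picks such an elimination forest $F$ and provides each node $v$ with: (i) the encoding $(\root(v),\parent(v),\depth(v))$ of $F$ as in \Cref{lem:STcertification}; (ii) the list $A(v)=(a_0,a_1,\ldots,a_{\depth(v)})$ of identifiers of all ancestors of $v$ in $F$, ending with $v$ itself. Since $\depth(v)\leq p$ and $p$ is a fixed parameter, the entire certificate has size $\cO(p\log n)=\cO(\log n)$ bits.

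The verifier at each node $v$ exchanges its certificate with all neighbors in one round and then checks: (1)~the encoding of $F$ is valid, by applying the test of \Cref{lem:STcertification} component-wise (this also enforces $F\subseteq G$); (2)~$\depth(v)\leq p$; (3)~$A(v)$ is consistent with the parent encoding, that is, either $v$ is a root and $A(v)=(v)$, or $A(v)$ equals $A(\parent(v))$ with $v$ appended---this is local because $\parent(v)$ is a $G$-neighbor of $v$ and has sent $A(\parent(v))$; (4)~for every $G$-neighbor $u$ of $v$, either $u\in A(v)$ or $v\in A(u)$, which is again local since $u$ has sent $A(u)$. Completeness is immediate. For soundness, if all checks pass, then (1)~certifies that $F$ is a genuine rooted forest inside $G$; (3)~together with an induction along $F$ certifies that $A(v)$ is exactly the ancestor path of $v$ in $F$; (2)~certifies that $F$ has depth at most $p$; and (4)~certifies that every edge of $G$ satisfies the elimination property. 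Hence $F$ is a valid elimination forest of depth at most $p$, and so $G$ has treedepth at most $p$.

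The only conceptually delicate point, and the only place where the argument is not purely mechanical, will be the preliminary WLOG reduction to elimination forests $F\subseteq G$: without that reduction $\parent(v)$ may fail to be a $G$-neighbor of $v$, and the consistency check~(3) between $A(v)$ and $A(\parent(v))$ would no longer be executable in a single round. Once this reduction is in place, the constant bound on $p$ keeps every certificate within the $\cO(\log n)$ budget, and the whole scheme reduces to local ancestor-list consistency checks.
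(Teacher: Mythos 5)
The preliminary ``WLOG $F\subseteq G$'' reduction is a genuine gap, and it is false. It is not true that every graph of treedepth at most $p$ admits an elimination forest of depth at most $p$ that is a subgraph of $G$. Take the path on $2^p-1$ vertices: its treedepth is $p$, but since the path is already a tree, its only spanning subforest is the path itself, whose depth is $\Omega(2^p)$ under any rooting. So for $p\geq 3$ there is no elimination forest of depth $p$ contained in $G$ at all. This exponential gap is exactly why the paper works with decomposition forests rather than elimination forests and records \Cref{prop:depth}: decomposition forests of treedepth-$t$ graphs land in $\mathcal F_{2^t}$, not $\mathcal F_t$. Both branches of your cleanup fail for the same reason: detaching the subtree below a vertex $v$ that has no $G$-edge to a strict ancestor can sever a $G$-edge between some descendant of $v$ and a strict ancestor of $v$, and reattaching $v$ to its deepest $G$-ancestor severs the same kind of edge for descendants of~$v$.

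This gap breaks the scheme in two ways. First, completeness fails: since check~(1) (via \Cref{lem:STcertification}) forces $\parent(v)$ to be a $G$-neighbor, on the yes-instance $P_{2^p-1}$ no certificate can pass checks~(1) and~(2) simultaneously. Second, you cannot repair this by simply dropping the parent encoding and keeping only the ancestor-list checks~(2) and~(4), because they are not sound on their own. On a cycle $C_n$ (treedepth $\Theta(\log n)$), a cheating prover can set $A(v_i)=(v_{i-1},v_{i+1},v_i)$ and declare $\depth(v_i)=0$; this satisfies both~(2) and~(4) even with $p=3$. The consistency check~(3) that you rely on to rule this out is precisely the one that needs $\parent(v)$ to be a $G$-neighbor.

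There is a clean fix that discards the false reduction altogether. Give each node $v$ only its ancestor path $A(v)$ (length at most $p$, ending with $\id(v)$) in some elimination forest of depth at most $p$, with no parent encoding and no spanning-forest certificate. The verifier at $v$ checks $|A(v)|\leq p$ and, for every $G$-neighbor $u$ of $v$, that one of $A(u),A(v)$ is a prefix of the other. This is a one-round check on $\cO(p\log n)=\cO(\log n)$-bit certificates. Soundness: the strict-prefix relation on $\{A(v):v\in V\}$ is a forest order, because the strict prefixes of any fixed $A(v)$ are totally ordered; it has depth at most $p$ because the strict prefixes of a list of length at most $p$ have pairwise distinct lengths; and every $G$-edge joins two vertices comparable in this order, so the order is an elimination forest of depth at most $p$ and $\td(G)\leq p$. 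Completeness is immediate. For the record, the paper's own ``proof'' of this proposition is just the citation of \cite{FeuilloleyBP22} together with the observation that ``treedepth at most $p$'' is MSO-expressible; your attempt at a self-contained scheme is a reasonable and different route, but it requires the correction above.
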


We show how to use this result for certifying a $(p,f(p))$-treedepth coloring. Moreover, we also extend this certification scheme to an encoding of a $p$-skeleton. Given a $(p,f(p))$-treedepth coloring $C_p$, and $p$-skeleton $S$,  the \emph{encoding} of $S$ is defined as the encoding of all decomposition forests $F^U$ of $G[U]$, for all $U\in \Col(p)$. More precisely, in the encoding of $S$, for every  $U\in \Col(p)$, every node $u$ has  a triple $\{(\root^U(u), \parent^U(u), \depth^U(u))\}$ corresponding to the subtree $T_u^U$ containing $u$ of  the decomposition forest $F^U$ of $G[U]$.

\begin{lemma}\label{lem:verificationofpskeleton}
Let $f:\mathbb{N}\to\mathbb{N}$, and 
let $\mathcal{G}$ be a class of graphs of expansion $f$. Let $p>0$, and let us suppose that each node $u$ is given (1)~a value $C_p(u)\in [f(p)]$, and (2)~an encoding of $S$.
There is a certification scheme using $\cO(\log n)$-bit certificates for certifying that $C_p$ is a $(p,f(p))$-treedepth coloring of $G$, and that the encoding of  $S$ is indeed defining a $p$-skeleton of $G$. 
\end{lemma}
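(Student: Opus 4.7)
\medskip

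\noindent\textbf{Proof plan.} The plan is to run, in parallel, one certification sub-scheme for each color set $U\in\Col(p)$. Since $|\Col(p)|\le 2^{f(p)}$ is a constant depending only on $f$ and $p$, concatenating a constant number of $\cO(\log n)$-bit sub-certificates yields a total certificate of size $\cO(\log n)$. Each sub-scheme must certify three things about the color set $U$: (i)~the induced subgraph $G[U]$ has treedepth at most~$|U|$; (ii)~the encoded data $(\root^U,\parent^U,\depth^U)$ at the nodes of $G[U]$ defines a rooted spanning forest $F^U$ of $G[U]$; and (iii)~$F^U$ is a \emph{decomposition} forest, i.e., for every edge of $G[U]$ one endpoint is an ancestor of the other in $F^U$.

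For (i), the node set of $G[U]$ is identified locally from the values $C_p(u)\in[f(p)]$ that every node already knows, so we can directly apply \Cref{prop:PLSfortreedepth} restricted to the subgraph $G[U]$, which certifies ``treedepth at most~$|U|$'' using $\cO(\log n)$ bits. For (ii), we use the standard proof-labeling scheme for spanning trees from \Cref{lem:STcertification}, applied independently to each connected component of $G[U]$: every node $u\in G[U]$ holds $(\root^U(u),\parent^U(u),\depth^U(u))$, and the verifier at $u$ checks that all neighbors in $G[U]$ with the same root agree on the root, that $\parent^U(u)$ is a neighbor of $u$ in $G$ whose color lies in $U$, that $\depth^U(\parent^U(u))=\depth^U(u)-1$, and that the declared root $r$ satisfies $\parent^U(r)=\id(r)$ and $\depth^U(r)=0$. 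This certifies that $F^U$ is a rooted spanning forest of $G[U]$ and is a subgraph of $G[U]$.

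The main obstacle is (iii), since the ancestor relation in $F^U$ is inherently non-local. We overcome it by exploiting \Cref{prop:depth}: because $G[U]$ has treedepth at most $p$, any decomposition forest of $G[U]$ has depth at most $2^p$, a constant depending only on~$p$. Therefore we augment the certificate of every node $u\in G[U]$ with the full ancestor chain $(a_0(u),a_1(u),\dots,a_{\depth^U(u)}(u))$ in $F^U$, where $a_j(u)$ is the identifier of the ancestor of~$u$ at depth~$j$; this occupies $\cO(2^p \log n)=\cO(\log n)$ bits. The verifier at $u$ first checks internal consistency of its own chain (namely $a_{\depth^U(u)}(u)=\id(u)$, $a_{\depth^U(u)-1}(u)=\parent^U(u)$, and $a_0(u)=\root^U(u)$), then uses the chain of each neighbor $v\in N(u)\cap G[U]$ (visible in the neighbor's certificate) to check that either $\id(v)$ appears in $u$'s chain or $\id(u)$ appears in $v$'s chain. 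If this check passes at every node, then every edge of $G[U]$ is indeed an ancestor-descendant edge of $F^U$.

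Completeness is immediate: on a legitimate skeleton the prover assigns the true ancestor chains, the true treedepth-certificate from \Cref{prop:PLSfortreedepth}, and the true forest encoding, and all local checks succeed at every node and for every $U$. For soundness, if $C_p$ fails to be a $(p,f(p))$-treedepth coloring, some $U$ with $|U|\le p$ violates the treedepth bound, so by the soundness of \Cref{prop:PLSfortreedepth} the corresponding sub-scheme makes some node reject; if instead $C_p$ is a valid coloring but the encoded $F^U$ is not a correct decomposition forest of $G[U]$, then either the forest-structure checks fail (soundness of \Cref{lem:STcertification}), or the ancestor-chain checks detect a mismatch, or some edge of $G[U]$ is witnessed with no ancestor-descendant relation in $F^U$ and is rejected by its endpoints. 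In all cases at least one node rejects, which establishes the claimed $\cO(\log n)$-bit certification scheme.
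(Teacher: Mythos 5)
Your scheme has the same overall architecture as the paper's proof — certify treedepth via \Cref{prop:PLSfortreedepth}, certify the forest structure via \Cref{lem:STcertification}, then certify the ancestor-closure property of each $F^U$ by storing the (constant-length, by \Cref{prop:depth}) root-to-node path in every node's certificate and testing, for each edge of $G[U]$, that one endpoint's identifier appears in the other endpoint's path. However, the internal consistency check you impose on the ancestor chains is too weak, and soundness of step~(iii) breaks.

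You only verify three positions of $u$'s chain $(a_0(u),\dots,a_{\depth^U(u)}(u))$: that $a_{\depth^U(u)}(u)=\id(u)$, that $a_{\depth^U(u)-1}(u)=\parent^U(u)$, and that $a_0(u)=\root^U(u)$. Nothing constrains the intermediate entries $a_1(u),\dots,a_{\depth^U(u)-2}(u)$, so a cheating prover can fill them with arbitrary identifiers and cause the ancestor-appearance test to pass on an edge of $G[U]$ whose endpoints are \emph{not} in an ancestor--descendant relation in $F^U$. Concretely, let $F^U$ be the rooted tree with root $r$, child $x$ of $r$, children $a$ and $b$ of $x$, and child $c$ of $b$, and suppose $G[U]$ contains the edge $\{a,c\}$ (which is not covered by $F^U$, since $a$ and $c$ are in incomparable branches). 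The prover declares $c$'s chain as $(r,a,b,c)$. At $c$, all three of your checks pass: $a_3(c)=\id(c)$, $a_2(c)=b=\parent^U(c)$, $a_0(c)=r=\root^U(c)$. The entry $a_1(c)=a$ is never examined, yet it makes $\id(a)$ appear in $c$'s chain, so the bad edge $\{a,c\}$ is falsely accepted. Neither \Cref{lem:STcertification} nor \Cref{prop:PLSfortreedepth} catches this, because both are independent of the chain data.

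The missing ingredient is a \emph{prefix} cross-check between a node's chain and its parent's chain, which is exactly condition~(iii) in the paper's proof: if $u$ is not a root and $v=\parent^U(u)$, then $u$'s chain must have length exactly $\depth^U(u)+1$ and its first $\depth^U(u)$ entries must coincide with $v$'s chain. Together with the root check, this forces (by induction on depth, from the root down) that every stored chain equals the true ancestor path in $F^U$; only then is the pairwise appearance test on edges of $G[U]$ sound. Once you add this prefix check, your argument matches the paper's.
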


\begin{proof}
Let us first certify that $C_p$ is a $(p,f(p))$-treedepth coloring of~$G$. For every set $U\in \Col(p)$, every node $u$ receives as a certificate: (1)~The encoding of a rooted spanning tree $T$ of the component of $G[U]$ containing~$u$ (where each component is identified with the node identifier of the root of~$T$); (2)~The certification of the property ``treedepth at most $p$'' for all the nodes in the component of $G[U]$ containing~$u$. The soundness and completeness of (1) and (2) follow from  \Cref{lem:STcertification} and \Cref{prop:PLSfortreedepth}. 

Next, let us certify that $S$ is a $p$-skeleton of~$G$. First, we use the verifier of \Cref{lem:STcertification} to check that the encoding of each tree $T_u^U$ is correct, for all $u\in V$ and all $U\in \Col(p)$. If no nodes reject at that point, we can safely deduce that if $u$ and $v$ are in the same connected component of $G[U]$, then the encoding of $T_u^U$ and of $T_v^U$ must induce the same tree.

It remains to certify that each $T^U_u$ forms a decomposition tree of the component of $G[U]$ containing $u$. More precisely, we must certify that all neighbors of $u$ in $G[U]$ are either descendants or ancestors of $u$ in $T_u^U$. Let us fix an arbitrary set $U\in \Col(p)$. Let $(\root^U(u), \parent^U(u), \depth^U(u))$ be the encoding of $T_u^U$ that has $u$ in its input (given in the encoding of $S$). Then node $u$ receives the set $P^U_u = \{w^u_1, \dots, w^u_d\}$ as certificate, corresponding to a sequence of at most $d \leq 2^p$ node identifiers representing the path from $\root(u)$ to~$u$. Then node $u$ accepts $P^U_u$ if the following holds:
\begin{itemize}
    \item[(i)] $d=\depth^U(u)$;
    \item[(ii)] $w^u_1 = \root^U(u)$ and $w^u_d = \id(u)$;
    \item[(iii)] if $\id(u)\neq \root^U(u)$ and  $v\in N(u)$ is such that $\id(v)=\parent^U(u)$, then $P^U_v$ is a sequence of length $d-1$ satisfying $w_v^i = w_u^i$ for every $i\in[d-1]$;
    \item[(iv)] if every $v \in N(u)$ is such that $C_p(v) \in U$, then either $\id(u)\in P^U_v$ or $\id(v)\in P^U_u$. 
\end{itemize}
Conditions (i) to (iii) are satisfied if and only if $P_u^U$ represents the path from the root of $T_u^U$ to $u$, for all~$u$. Assuming that (i) to (iii) are satisfied, we have that (iv) is satisfied if and only if  $T_u^U$ is a decomposition tree of $G[U]$, for all nodes $u$ of $G[U]$. Indeed, for every $u$ in $G[U]$, and for every $v\in N(u)$ that is also in $G[U]$, we have that either $u$ is an ancestor of $v$ (i.e. $\id(u)$ belongs to $P^U_v$), or $v$ is an ancestor of $u$ (i.e. $\id(v)$ belongs to $P^U_u$). 

Then, for every $U\in \Col(p)$, one can certify that $T^U_u$ is an decomposition tree of  $G[U]$ with a certificate containing at most $2^p$ node identifiers, which can be encoded on $\cO(2^p \cdot \log n) = \cO(\log n)$ bits. Furthermore, one can certify that the union over all nodes $u$ and all $U\in \Col(p)$, of $T^U_u$ is a $p$-skeleton of $G$ using certificates of size $\cO(|\Col(p)| \cdot 2^p \cdot \log n) =\cO(\log n)$ bits. 
\end{proof}

%------------------------------------------
\subsection{Certifying Reductions on Graphs of Bounded Expansion}
%------------------------------------------

Let $f:\mathbb{N}\to\mathbb{N}$, and let $\mathcal{G}$ be a class of connected graphs with expansion $f$. We first define a notion of reducibility that admits an efficient distributed certification scheme. 

\begin{definition}\label{def:verifiablereduction}
Let $\varphi\in \FO[\Lambda]$. For every labeled graph $(G,\ell)$ with $G\in \mathcal{G}$, let us assume that every node $u$ of $G$ is provided with the following values:
\begin{enumerate}
    \item The set $\Lambda$, the formula $\varphi$, and the label $\ell(u)\subseteq \Lambda$;
    \item A positive integer $p$; 
    \item A set $\widehat{\Lambda}$, a formula $\widehat{\varphi}$, and a label $\widehat{\ell}(u)\subseteq \widehat{\Lambda}$;
    \item The encoding of a $p$-skeleton $S$ of $G$.
\end{enumerate}
We say that $\varphi$ admits a \emph{distributedly certifiable reduction on $\mathcal{G}$} if there is a  certification scheme  using $\cO(\log n)$-bits certificates for certifying that $(p, \widehat{\Lambda}, \widehat{\varphi}, \widehat{\ell})$ is the reduction of $(\Lambda, \varphi, S, \ell)$ on~$G$.
\end{definition}

In a way similar to what was done in \Cref{sec:general-model-checking}, we show that every first-order formula over labeled graphs admits a  certifiable reduction. We fist establish this result for existential formulas. 

\begin{lemma}\label{lem:reductionofexistentialverif}
Every existential formula in $\FO[\Lambda]$ admits a distributed certifiable  reduction on $\mathcal{G}$.
\end{lemma}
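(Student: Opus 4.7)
The plan is to mimic the constructive proof of Lemma~\ref{lem:reductionofexistential} step by step, having the prover provide certificates for each intermediate object it constructs. Write $\varphi(\overline x) = \exists \overline y\, \psi(\overline x,\overline y)$ with $\psi$ quantifier-free, fix $p \geq |\overline x|+|\overline y|$, and recall that the $\widehat \varphi$ produced by Lemma~\ref{lem:reductionofexistential} has the form $\bigvee_{U\in\Col(p)} (\varphi_1^U \wedge \col_U)$, where $\varphi_1^U$ is a $\lca$-reduced formula obtained via Lemmas~\ref{lem:treedepthtoforest} and~\ref{lem:generalforestcomputing} from an intermediate $\Lambda_1$-labeling $\ell_1^U$ of the decomposition forest $F^U$ of $G[U]$.

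The prover gives every node $u$ the following additional certificates (beyond the data already listed in Definition~\ref{def:verifiablereduction}): (i)~for every $U\in\Col(p)$, the intermediate label $\ell_1^U(u)$; (ii)~the encoding of a BFS spanning tree $T$ of the (connected) graph~$G$; and (iii)~all certificates required by the primitives invoked below. The verifier then performs four groups of checks. First, that $S$ is a valid $p$-skeleton, using Lemma~\ref{lem:verificationofpskeleton}; this also certifies the coloring and every parent/depth value in each $F^U$, so in particular the $\levels$-labels used by Lemma~\ref{lem:treedepthtoforest} can be checked locally by each node. Second, that $T$ is a spanning tree of $G$, using Proposition~\ref{lem:STcertification}. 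Third, for each $U\in\Col(p)$, that $(\varphi_1^U,\ell_1^U)$ is the correct forest-reduction of $(\varphi,\ell)$ on $F^U$: after applying the purely syntactic rewriting from Lemma~\ref{lem:treedepthtoforest} to produce an $\lca$-expressed existential formula $\exists\overline y\,\zeta^U(\overline x,\overline y)$ over $(F^U,\ell\cup\levels)$, invoke Lemma~\ref{lem:forestReductionverification} with the spanning tree $T$ to certify $\true(\exists\overline y\,\zeta^U,F^U,\cdot) = \true(\varphi_1^U,F^U,\ell_1^U)$. Fourth, that $\widehat\ell(u)=\bigcup_{U\in\Col(p)}\ell_1^U(u)$ and that $\widehat\varphi=\bigvee_{U\in\Col(p)}(\varphi_1^U\wedge\col_U)$, both of which are local checks not requiring communication.

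Completeness follows because all sub-schemes accept on a legal instance by construction. Soundness follows by chaining the soundness guarantees of each sub-scheme: if any $\ell_1^U$ or the structural data is incorrect, the corresponding invocation of Lemma~\ref{lem:verificationofpskeleton}, Proposition~\ref{lem:STcertification}, or Lemma~\ref{lem:forestReductionverification} will make at least one node reject. Every certificate fits on $\cO(\log n)$ bits because $|\Col(p)|$, the set $\Lambda_1$, the depth bound $2^p$, and $|\widehat\Lambda|$ depend only on $\varphi$, $\Lambda$, and the expansion function $f$ of~$\mathcal{G}$, so the per-$U$ contributions are constant-size and there are only constantly many of them, while the skeleton and spanning-tree encodings contribute $\cO(\log n)$ bits each.

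The main obstacle is that the induced subgraphs $G[U]$ need not be connected, which prevents a purely local treatment of the ``disconnected case'' of quantifier elimination on forests (Case~3 in the proof of Lemma~\ref{lem:elimination_forest}), where the verifier must agree on the global count $\rho(F^U)$ of active roots. This is precisely the reason Lemma~\ref{lem:forestReductionverification} was formulated to take as input a spanning tree of the whole graph $G$: the count $\rho(F^U)$ is certified by propagating partial sums $\val(u)$ along~$T$, and the uniqueness of $\rho$ is enforced by requiring all neighbors in $G$ to agree on its value. Once this global-counting trick is in place, all remaining ingredients reduce to the certification primitives already developed.
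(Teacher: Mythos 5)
Your proof is correct and follows essentially the same approach as the paper: certify the skeleton via Lemma~\ref{lem:verificationofpskeleton}, certify each per-$U$ forest reduction via Lemma~\ref{lem:forestReductionverification} (which already embeds the spanning-tree trick for the disconnected count $\rho(F^U)$), and verify $\widehat\ell$ and $\widehat\varphi$ locally. The only cosmetic difference is that you check the $\levels$-labels of Lemma~\ref{lem:treedepthtoforest} directly from the certified skeleton structure, whereas the paper instead observes that the corresponding labeling is a one-round \BCONGEST\ output and appeals to Lemma~\ref{lem:bcongestToPLS}; both observations are valid and yield the same $\cO(\log n)$-bit bound.
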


\begin{proof}
Let $\varphi \in \FO[\Lambda]$ be an existential formula. Consider that each node is provided with the values listed in \Cref{def:verifiablereduction}. Let $\varphi( \overline{x}) = \exists \overline{y}\, \psi(\overline{x},\overline{y})$, where $\psi$ is quantifier free. Every node checks that $p=|\overline{x}|+|\overline{y}|$. The  scheme of \Cref{lem:verificationofpskeleton} enables to certify that the encoding $S$ correctly defines a $p$-skeleton of~$G$. Let $C_p$ be the $(p,f(p))$-treedepth coloring that defines $S$. The same construction as in  \Cref{lem:reductionofexistential} can then be used. More precisely, for every $U\in \Col(p)$, let $\widetilde{\Lambda}^U$, $\widetilde{\psi}^U$, $\widetilde{\ell}^U$, $\Lambda^U$, $\psi^U$, and $\ell^U$ be defined as in the proof of \Cref{lem:reductionofexistential}. Instead of computing these values, each node  receives them from the prover as certificates. The values of $\widehat{\Lambda}$, $\widehat{\varphi}$, $\widetilde{\Lambda}^U$, $\widetilde{\psi}^U$,  $\Lambda^U$, and  $\psi^U$ do not depend on the input graph, and can be computed by each node. The labeling $\tilde{\ell}^U$ can be computed from $\ell$ by a 1-round algorithm in the \BCONGEST\ model. Hence, using \Cref{lem:bcongestToPLS}, the correctness of each labeling $\tilde{\ell}^U$ can be certified with $\cO(\log n)$-bit certificates. The correctness of each labeling $\widehat{\ell}^U$ can be certified using \Cref{lem:forestReductionverification}, using $\cO(\log n)$-bit certificates. Finally, each node can check the correctness of $\widehat{\ell}$, and accept or reject accordingly. 
\end{proof}

We can now proceed with the general case. 

\begin{lemma}\label{lem:reductiongeneralverif}
Every formula in $\FO[\Lambda]$ admits a distributed certifiable reduction on $\mathcal{G}$. 
\end{lemma}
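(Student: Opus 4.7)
The plan is to mirror the structure of the proof of Lemma~\ref{lem:reductiongeneral}, replacing each computational step by a certification sub-scheme. The argument proceeds by induction on the quantifier depth~$t$ of the formula~$\varphi$. Throughout, every intermediate labeling, skeleton, and auxiliary formula produced by the reduction chain of Lemma~\ref{lem:reductiongeneral} will be supplied to the nodes as part of the certificates, and we will verify each link of the chain separately.

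For the base case $t=0$, the formula $\varphi(\overline{x})$ is quantifier-free, hence already existential (with an empty block of quantifiers). Each node is given the values listed in Definition~\ref{def:verifiablereduction}; the verification is then a direct application of Lemma~\ref{lem:reductionofexistentialverif}, which handles any existential formula.

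For the inductive step, write $\varphi(\overline{x}) = Q y\, \psi(\overline{x}, y)$ with $\psi$ of quantifier depth~$t$, and suppose first that $Q = \exists$. The prover supplies every node with the intermediate data used along the chain in Lemma~\ref{lem:reductiongeneral}: an integer $p_1$, sets $\Lambda_1,\Lambda_2$, the $p_1$-$\lca$-reduced formula $\psi_1(\overline{x},y)$, the existential formula $\psi_2(\overline{x},y) = \exists \overline z\, \rho(\overline x, y, \overline z)$ obtained by eliminating the $\lca$-predicates from $\psi_1$, the corresponding labelings $\ell_1$ and~$\ell_2$, and the encoding of a $p_1$-skeleton $S_1$ of~$G$. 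By the inductive hypothesis, there is an $\cO(\log n)$-bit certification scheme guaranteeing that $(p_1,\Lambda_1,\psi_1,\ell_1)$ is a valid reduction of $(\Lambda,\psi,\ell)$ on~$S_1$. To certify the transition from $\psi_1$ to~$\psi_2$, one uses Lemma~\ref{lem:verificationofpskeleton} to certify~$S_1$, and observes that in Lemma~\ref{lem:qelim:remove-lca-general} the labeling $\ell_2$ is computed from $S_1$ and~$\ell_1$ without any communication; its correctness can therefore be checked by each node locally, and the formula~$\psi_2$ depends only on $\psi_1$ and~$p_1$, hence can also be verified locally. Finally, from $\varphi_2(\overline{x}) = \exists y\, \psi_2(\overline{x},y)$ one reaches $(p,\widehat{\Lambda},\widehat{\varphi},\widehat{\ell})$ by invoking the certification scheme for existential formulas given by Lemma~\ref{lem:reductionofexistentialverif}. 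The case $Q = \forall$ is identical after the rewriting $\varphi = \neg\exists y\,\neg \psi$: one verifies the reduction of $\psi$ by induction, negates the resulting $p_1$-$\lca$-reduced formula (which remains $p_1$-$\lca$-reduced), certifies the elimination of $\lca$-predicates as above, and finally certifies the existential reduction of $\exists y\, \psi_2(\overline x, y)$ before negating the outcome.

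The main obstacle lies in organizing the certificates so that all the labelings, skeletons, and intermediate formulas produced along the chain fit within $\cO(\log n)$ bits per node. This is guaranteed because the quantifier depth of~$\varphi$ is constant, so the chain has constant length; each transition contributes only $\cO(\log n)$ bits (either through Lemma~\ref{lem:verificationofpskeleton}, Lemma~\ref{lem:forestReductionverification}, Lemma~\ref{lem:bcongestToPLS}, or Lemma~\ref{lem:reductionofexistentialverif}); and the constant-size symbolic data (sets of labels and formulas) can be recomputed locally by each node from~$\Lambda$ and~$\varphi$, and thus needs no certification at all. Soundness and completeness of the whole scheme follow by a straightforward composition: if every local verifier accepts, every link of the reduction chain is valid, and hence the final tuple $(p,\widehat{\Lambda},\widehat{\varphi},\widehat{\ell})$ is indeed a reduction of $(\Lambda,\varphi,\ell)$ on the certified skeleton~$S$.
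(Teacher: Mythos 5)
Your proposal mirrors the paper's own proof exactly: both induct on quantifier depth, supply the intermediate labelings and the encodings of the relevant skeletons as certificates, recompute the constant-size symbolic data (label sets and formulas) locally at each node, verify the skeletons via Lemma~\ref{lem:verificationofpskeleton}, appeal to the inductive hypothesis for the inner formula, observe that the lca-removal step of Lemma~\ref{lem:qelim:remove-lca-general} is communication-free and hence locally checkable, and close the argument with Lemma~\ref{lem:reductionofexistentialverif}. The only cosmetic difference is that you treat the universal quantifier by rewriting $\forall$ as $\neg\exists\neg$ inline, whereas the paper states negation-handling as a separate inductive case, but these are the same argument.
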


\begin{proof}
Let $\varphi \in \FO[\Lambda]$ be an existential formula. Let us assume that each node is provided with the values described in \Cref{def:verifiablereduction}, i.e. the tuples $(\Lambda, \varphi, \ell)$ and $(p, \hat{\Lambda}, \hat{\varphi}, \hat{\ell})$, and the encoding of a $p$-skeleton~$S$.  We follow the same induction as in \Cref{lem:reductionofexistential}, over the quantifier depth of~$\varphi$. The base case is trivial as the construction requires no communication. Furthermore, if $\varphi = \neg \psi$ where $\psi$ admits a distributed certifiable reduction, then $\varphi$ admits a distributed certifiable reduction as well,  as we described in \Cref{lem:reductionofexistential}. 

It remains to consieder the case $\varphi(\overline{x}) = \exists y\, \psi(\overline{x}, y)$. Let  $(p^\psi, \widehat{\Lambda}^\psi, \widehat{\psi}, \widehat{\ell}^\psi(u))$, $(\Lambda^\xi, \xi, \ell^\xi)$ as it was done in Case~1 in the proof of \Cref{lem:reductiongeneral}. Again, instead of computing these values, each node receives them as certificates, and checks them. The values of $p, p^\psi, \widehat{\Lambda}^\psi$, $\widehat{\psi}$, $\Lambda^\xi$, and $\xi$ do not depend on the input graph, and can be computed  by each node.  Then, each node can construct the input values (1) to (4) given in \Cref{def:verifiablereduction} with respect to to $\xi$ and $\psi$. By the induction hypothesis, the correctness of $\widehat{\ell}^\psi(u)$ at each node $u$ can be certified with $\cO(\log n)$-bit certificates.

Recall that, in \Cref{lem:reductiongeneral}, the value of $\ell^\xi(u)$ on each node $u$ was computed using \Cref{lem:qelim:remove-lca-general}, without any communication, under the assumption that each node knows a encoding a $p^\psi$-skeleton $S$ of $G$. The prover provides the encoding of a $p^\psi$-skeleton to each node as a $\cO(\log n)$-bit certificate, which can be checked using \Cref{lem:verificationofpskeleton}. Finally, the correctness of $\widehat{\ell}$ can be certified with $\cO(\log n)$-bit certificates, using the scheme of \Cref{lem:reductionofexistentialverif} with given values
  $(\Lambda^\xi, (\exists y, \xi(\overline{x},y)), \ell^\xi)$ and $(p, \widehat{\Lambda},\widehat{\varphi}, \widehat{\ell})$ and $S$.
\end{proof}

%------------------------------------------
\subsection{Proof of \Cref{thm:thmFOexpCER}} 
%------------------------------------------

We are now ready to prove     \Cref{thm:thmFOexpCER}. 

\begin{proof}
Let $\varphi \in \FO[\Lambda]$. We proceed the same way as in \Cref{thm:thmFOexpCONG}. Specifically, let us redefine $\varphi(x)$, as $\varphi$ with a dummy free-variable~$x$. The prover provides each node with the following certificate:
\begin{itemize}
    \item[(i)] a positive integer $p$, a set $\widehat{\Lambda}$, a $p$-$\lca$-reduced formula $\widehat{\varphi}$, and a label set~$\widehat{\ell}(u)$;
    \item[(ii)] the certificate of $u$ used to certify the $p$-skeleton $S$ of $G$;
    \item [(iii)] the certificate $\Cert(u)$ for certifying that $(p, \widehat{\Lambda}, \widehat{\varphi}, \widehat{\ell})$ is the reduction of $(\Lambda, \varphi, S, \ell)$. 
\end{itemize}
The part (i) of the certificate can be encoded in $\cO(1)$ bits, the part (ii) can be encoded in $\cO(\log n)$ bits, and \Cref{lem:reductiongeneralverif} asserts that a certificate for (iii) exists and, furthermore, it can be encoded in $\cO(\log n)$ bits. 
 
The verifier proceeds as follows. Each node checks that $(p, \widehat{\Lambda}, \widehat{\varphi}, \widehat{\ell})$ is the reduction of $(\Lambda, \varphi, S, \ell)$ using the certificates given in~(iii) using the  verifier $\mathcal{A}$ in \Cref{lem:reductiongeneralverif}, with certificates (i) and (ii) provided to the nodes. If a node rejects in $\mathcal{A}$, then the node rejects, otherwise it carries on the verification as described hereafter. 

Assuming that no nodes have rejected so far, we have that $\true(\varphi, G,\ell) = \true(\widehat{\varphi}, G, \widehat{\ell})$. Since $\varphi$ has only one dummy variable, the truth value of $\varphi(x)$, as well as the one of $\widehat{\varphi}(x)$, is the same for every node in~$G$. Moreover, the formula $\widehat{\varphi}(x)$ is a Boolean combination of label predicates of~$x$. Therefore, every node~$u$ can check $\widehat{\varphi}(u)$ using $\widehat{\ell}(u)$, and it accepts if $\widehat{\varphi}(u)$ holds, and rejects otherwise.

It remains to analyze soundness and completeness. 

For establishing completeness of the certification scheme, let us suppose that $G\models \varphi$. Then, by \Cref{lem:reductiongeneral},  we have that $\varphi$ is reducible in $\mathcal{G}$. Then there exist certificates for (i) and~(ii), and, by \Cref{lem:reductiongeneralverif}, there exist certificates for (iii) such that every node accepts when running verifier~$\mathcal{A}$. Since $G\models \varphi$ we have that $\true(\varphi, G) = \true(\widehat{\varphi}, G, \widehat{\ell}) = V(G)$. It follows that  $\widehat{\varphi}(u)$ is true at every node $u$,  and thus every node accepts. 

For establishing soundness,  let us suppose that $G \not\models \varphi$, and let us suppose that the nodes received certificates (i)-(iii) such that every node accepts when renning the verifier~$\mathcal{A}$. Then $\true(\varphi, G,\ell) = \true(\widehat{\varphi}, G, \widehat{\ell})$. However, in this case $\true(\varphi, G,\ell) = \emptyset$.  Therefore $\widehat{\varphi}(u)$ does not hold at any node $u$, and thus all nodes reject. 
\end{proof}

%%%%%%%%%%%%%%%%%%%%%%%%%%%%%%%%%%%%%%%%%
\section{Lower Bounds and Impossibility Results}
\label{se:lowerbounds} 
%%%%%%%%%%%%%%%%%%%%%%%%%%%%%%%%%%%%%%%%%

In this section we give evidence that our results somehow represent the limit of tractability of distributed model checking and distributed certification of properties on graphs of bounded expansion. 

All our reductions are based on classical results in communication complexity. Let $f: X \times Y \rightarrow \{0,1\}$ be a function. The \emph{deterministic communication complexity of $f$} $D(f)$ corresponds to the minimum number of bits that must be exchanged  in any deterministic protocol between a player Alice holding an input $x \in X$ and another player Bob $y\in Y$ in order to output the value $f(x,y)$. The \emph{non-deterministic communication complexity of $f$}, $N(f)$ is defined similarly, but for non-deterministic communication protocols. We refer to the book of Kushilevitz and Nissan for more detailed definitions \cite{kushilevitz1997communication}. 

In important problem in communication complexity is \emph{set disjointedness}. It is defined by function $\Disj_n$. For each $n$ the inputs of $\Disj_n$ are subsets of $[n]$. Given $A, B\subseteq [n]$ we have that $\Disj(A,B) = 1$ if and only if $A\cap B = \emptyset$.  
\begin{proposition}[\cite{kushilevitz1997communication}]\label{prop:lowerboundDISJ}
    $D(\Disj_n) = \Omega(n)$ and $N(\Disj_n) = \Omega(n)$. 
\end{proposition}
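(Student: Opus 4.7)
The plan is to derive both the deterministic and the nondeterministic lower bounds from a single fooling-set construction, which is the standard route for $\Disj_n$. Recall that a fooling set for a function $f$ on its $1$-inputs is a collection of pairs $(x_i,y_i)$ with $f(x_i,y_i)=1$ such that, for every $i\neq j$, at least one of $f(x_i,y_j)$ or $f(x_j,y_i)$ equals~$0$. A classical fact from communication complexity (see, e.g., Chapter~1 of~\cite{kushilevitz1997communication}) is that the existence of a fooling set of size $m$ for $f$ implies $D(f)\geq \log_2 m$ and, since every $1$-monochromatic combinatorial rectangle in the communication matrix of $f$ can contain at most one element of the fooling set, also $N(f)\geq \log_2 m$. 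So the entire task reduces to exhibiting a fooling set of size $2^n$ for $\Disj_n$.

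The natural candidate is $\mathcal{F}=\{(S,\overline{S}) : S\subseteq [n]\}$, where $\overline{S}=[n]\setminus S$. First I would verify that each pair is a $1$-input: by construction $S\cap\overline{S}=\emptyset$, hence $\Disj_n(S,\overline{S})=1$. Second, I would check the fooling condition: given two distinct pairs $(S,\overline{S})$ and $(T,\overline{T})$, the symmetric difference $S\triangle T$ is nonempty, and picking any $i\in S\triangle T$, say $i\in S\setminus T$, one obtains $i\in S\cap \overline{T}$, so $\Disj_n(S,\overline{T})=0$. This establishes the fooling-set property for a family of size $|\mathcal{F}|=2^n$, and the two lower bounds $D(\Disj_n)\geq n$ and $N(\Disj_n)\geq n$ follow at once.

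No genuine obstacle is expected: the construction and the two verifications above are immediate, and the fooling-set $\Rightarrow$ communication-complexity-lower-bound implication is a textbook fact that can simply be cited from~\cite{kushilevitz1997communication}. The only minor subtlety worth flagging is that, while $N(\neg\Disj_n)=O(\log n)$ (a single index in $A\cap B$ serves as a nondeterministic certificate of non-disjointness), the fooling set here sits on the $1$-side of $\Disj_n$, which is why the argument yields a nontrivial lower bound on $N(\Disj_n)$ itself rather than on its negation.
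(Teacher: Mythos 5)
The paper states this proposition purely as a citation to the Kushilevitz--Nisan textbook and gives no proof of its own, so there is nothing internal to compare against. Your fooling-set argument is correct and is in fact the canonical proof: the family $\{(S,\overline{S}) : S\subseteq [n]\}$ is exactly the standard fooling set for $\Disj_n$, each verification step (that every pair is a $1$-input, and that $S\triangle T\neq\emptyset$ forces a non-disjoint cross pair) is sound, and the inference from a size-$m$ fooling set to $D(f)\geq \log_2 m$ and $N^1(f)\geq \log_2 m$ (via the observation that a $1$-monochromatic rectangle can contain at most one fooling pair) is the textbook fact you cite. Your closing remark that the fooling set lives on the $1$-inputs, so the bound applies to $N(\Disj_n)$ and not to the cheap $N(\neg\Disj_n)=O(\log n)$ side, is a correct and worthwhile clarification.
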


\subsection{No Extension to Monadic Second-Order Logic}

We first show that it is not possible to extend our results to the whole set of MSO properties on graphs of bounded expansion. We illustrate this fact by considering the property of being non-three-colorable.  We define $\ntcol$ the set of graphs that do not admit a proper three-coloring (their chromatic number is at least $4$). The membership in $\ntcol$  can be represented by the following MSO formula:
$$\forall C_1, C_2, C_3, \exists x,y,~
 \adj(x,y) \wedge \big((x\in C_1 \wedge y \in C_1)\vee (x\in C_2 \wedge y \in C_2) \vee (x\in C_3 \wedge y \in C_3)\big) $$

In the following theorem, we show that there are no efficient \CONGEST\ algorithms for checking nor compact certification scheme  for certifying the membership in $\ntcol$, even under the promise that the input graph belongs to a class of graphs of bounded expansion and of logarithmic diameter.

\begin{theorem}\label{thm:lowerboundMSOinEXP}
Let $\mathcal{G}$ the class of graphs of maximum degree $5$ and diameter $\cO(\log n)$.
Every \CONGEST\ algorithm deciding the membership in $\ntcol$ of a $n$-node input graph in $\mathcal{G}$ requires $\Omega(n/\log^2 n)$ rounds of communication. 

Every certification scheme  that certifies membership to $\ntcol$ of a $n$-node input graph in $\mathcal{G}$  requires certificates of size $\Omega(n/\log n)$.
\end{theorem}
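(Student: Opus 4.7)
The plan is to reduce from the two-party set disjointness problem of Proposition~\ref{prop:lowerboundDISJ}, simultaneously for both the \CONGEST round complexity and the certification size. Given inputs $A,B\subseteq [n]$ held by Alice and Bob, I would construct a graph $G_{A,B}$ with $\cO(n)$ vertices satisfying four properties:
(i)~maximum degree at most $5$ and diameter $\cO(\log n)$;
(ii)~$G_{A,B}\in \ntcol$ if and only if $A\cap B=\emptyset$, i.e., $\Disj(A,B)=1$;
(iii)~a vertex partition $V(G_{A,B})=V_A\sqcup V_B$ such that the subgraph induced by $V_A$ depends only on $A$ and that induced by $V_B$ only on $B$;
(iv)~the edge cut $E(V_A,V_B)$ has size $\cO(\log n)$.

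\paragraph{Construction.} For (i)--(iv), I would build $G_{A,B}$ out of three ingredients wired together. First, a small non-$3$-colorable core (such as $K_4$) that \emph{can} be made $3$-colorable if and only if it receives a certain ``relaxation signal''. Second, two balanced binary trees $T_A$ and $T_B$ of depth $\lceil \log n\rceil$, one owned by each player, rooted at the core and whose $n$ leaves are indexed by~$[n]$; these trees act as \emph{color propagation channels} whose local structure is just chains of triangles so that degrees stay bounded by~$5$. Third, at each leaf index $i$ a conflict gadget whose Alice-side is activated iff $i\in A$ and whose Bob-side is activated iff $i\in B$; the design ensures that the relaxation signal reaches the core \emph{iff} there is an index at which both activations occur, i.e., iff $A\cap B\neq \emptyset$. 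Then $G_{A,B}\in \ntcol\iff A\cap B=\emptyset$. The tree depth gives diameter $\cO(\log n)$; the degree stays at most $5$ by construction; and the Alice--Bob cut can be collapsed to the $\cO(\log n)$ edges joining the two roots through a bridging path (alternatively, a constant-size bridge through a single auxiliary tree), giving (iv).

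\paragraph{Deriving the two bounds.} Both lower bounds are then standard simulation arguments over the cut $E(V_A,V_B)$. For \CONGEST, any $t$-round algorithm deciding $\ntcol$ on $G_{A,B}$ can be simulated by Alice and Bob, each running the algorithm locally on their side and exchanging, at each round, the $\cO(\log n)$-bit messages transiting across the $\cO(\log n)$ cut edges. This gives a deterministic protocol for $\Disj_n$ of cost $\cO(t\log^2 n)$, so by Proposition~\ref{prop:lowerboundDISJ} we obtain $t=\Omega(n/\log^2 n)$. For certification, given any certification scheme with certificates of size~$s$, Alice and Bob nondeterministically guess the certificates of the $\cO(\log n)$ endpoints touching the cut, using $\cO(s\log n)$ bits, and each independently checks that the verifier accepts on their side. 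Completeness and soundness of the scheme give a nondeterministic protocol for $\Disj_n$ of cost $\cO(s\log n)$, hence $s=\Omega(n/\log n)$ again by Proposition~\ref{prop:lowerboundDISJ}.

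\paragraph{Main obstacle.} The genuinely delicate step is the gadget design in items~(i)--(iii): one must enforce the biconditional ``$3$-colorable $\iff$ $A\cap B\neq\emptyset$'' while simultaneously keeping the maximum degree bounded by a small constant (so triangles/wheels cannot be used naively), routing all $n$ ``signals'' through only $\cO(\log n)$ cut edges, and keeping the diameter logarithmic. Once this combinatorial construction is in place, the communication-complexity reductions are routine, as the simulation plans above depend only on the promise of a small cut and the existence of the desired biconditional.
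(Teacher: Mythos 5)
Your high-level plan---reduce from set disjointness, route the protocol across a cut of size $\cO(\log n)$, and read off both bounds from $D(\Disj)$ and $N(\Disj)$---is the same strategy the paper uses, and your two simulation arguments at the end are the standard ones and are correct. The difference, and the genuine gap, is entirely in the construction of $G_{A,B}$, which you acknowledge as ``the genuinely delicate step'' but then leave unconstructed. The paper does not attempt a direct bounded-degree encoding of $\Disj_n$ on $\cO(n)$ vertices. Instead it starts from the G\"o\"os--Suomela reduction, which encodes $\Disj_{n^2}$ into non-$3$-colorability of an $\cO(n)$-vertex graph of \emph{unbounded} degree with a $\cO(\log n)$ cut, then applies the Garey--Johnson--Stockmeyer degree-reduction gadget (replacing each high-degree vertex by a chain of degree-$\le 4$ gadgets), which blows the graph up to $\Theta(n^2)$ vertices, and finally attaches two subdivided binary trees to bring the diameter down to $\cO(\log n)$ while raising the degree to at most $5$. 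Because the final graph has $N=\Theta(n^2)$ vertices and the reduction is from $\Disj_{n^2}$, the bounds you want in terms of the actual vertex count $N$ come out as $\Omega(N/\log N)$ and $\Omega(N/\log^2 N)$, exactly as stated.

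Your alternative would only work if you could exhibit a bounded-degree, $\cO(n)$-vertex, small-cut gadget for $\Disj_n$ directly, and your sketch does not come close. Two concrete problems. First, you describe ``at each leaf index $i$ a conflict gadget whose Alice-side is activated iff $i\in A$ and whose Bob-side is activated iff $i\in B$,'' but also claim the Alice--Bob cut ``can be collapsed to the $\cO(\log n)$ edges joining the two roots.'' If Alice's and Bob's halves of the $i$-th conflict gadget interact locally, you have $\Omega(n)$ cut edges; if they do not, then the conjunction $i\in A\wedge i\in B$ has to be detectable from whatever crosses the cut, and you have not explained how. With a cut of $k$ edges the set of boundary colourings has size $3^{\cO(k)}$, and the two sides are jointly $3$-colourable iff the sets of boundary colourings achievable on Alice's side and on Bob's side meet; designing bounded-degree, $\cO(n)$-vertex gadgets whose achievable boundary sets realize an arbitrary $A\subseteq[n]$ is precisely the hard combinatorial content that the paper sidesteps by going through G\"o\"os--Suomela and GJS. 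Second, your parenthetical ``(alternatively, a constant-size bridge through a single auxiliary tree)'' cannot work even in principle: a constant-size cut admits only $\cO(1)$ boundary types, so at most $\cO(1)$ bits of the input can influence $3$-colourability across the cut, which is far short of what a reduction from $\Disj_n$ requires. In short, the reduction framing is right, but the construction---which is the whole theorem---is missing, and the hints you give about it are mutually inconsistent.
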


\begin{proof}
  \cite{GoosS16} show that every locally checkable proof, and therefore every proof-labeling scheme  for $\ntcol$ requires certificates of size $\Omega(n^2/\log n)$. The same proof directly implies the same lowerbound for the number of rounds of a \CONGEST\ algorithm deciding the membership in $\ntcol$. We give an sketch of the proof of Göös and Suomella and then we show how to adapt it to the case when the input graph has maximum degree $4$.

  The proof of \cite{GoosS16} is consists in a reduction to $\Disj_{n^2}$. In the reduction each instance $(A,B)$ of $\Disj_{n^2}$ is associated with a $\cO(n)$-node graph $G_{A,B}$. The set of vertices of $G_{A,B}$ is divided in two sets of vertices, denoted $V_A$ and $V_B$, and satisfies the following conditions:
  \begin{itemize}
      \item[(C1)] The graph induced by $V_A$ only depends on $A$.
      \item[(C2)] The graph induced by $V_B$ only depends on $B$.
    \item[(C3)] The set of edges $E_{A,B}$ with one endpoint in $V_A$ and the other in $V_B$ contains at most $\cO(\log n)$ edges, and does not depend on $A$ and $B$, only on $n$. The nodes with endpoints in $E_{A,B}$ are denoted $V_{A,B}$.
      \item[(C4)] $A\cap B = \emptyset$ if and only if $G_{A,B} \in \ntcol$.
  \end{itemize}
Then, given a certification scheme  $\mathcal{P}$ for $\ntcol$ with certificates of size $c(n)$ on $n$-node graphs, we build the following non-deterministic protocol for $\Disj_{n^2}$. First, Alice and Bob build graph $G_{A,B}$, where Alice builds the subgraph depending on $A$, i.e., $V_A$ and all edges incident to $V_A$, and Bob the part depending on $B$. Then, non-deterministically each player generate the certificates of all the nodes of $G_{A,B}$, and communicate to the other player the certificates generated for the nodes in the cut. Then, the players run the verification algorithm of $\mathcal{P}$ and accept if all the nodes on their side accept. We have that both players accept if and only if $G_{A,B}\in \ntcol$, which holds if and only if $A\cap B = \emptyset$. The total communication is $\cO(\log n \cdot c(n))$. Using \Cref{prop:lowerboundDISJ} we obtain that $\cO(\log n \cdot c(n)) = N(\Disj_{n^2}) = \Omega(n^2)$. Therefore, $c(n) = \Omega(n^2/\log n)$. 

The construction is also useful for a lowerbound in the \CONGEST\ model. Let $\mathcal{A}$ be an algorithm deciding the membership in $\ntcol$ in $\mathcal{R}(n)$ rounds. Then Alice and Bob construct $G_{A,B}$ as above, and simulate the $\mathcal{R}(n)$ rounds of $\mathcal{A}$ on it, communicating on each round to other player the messages transmitted over the edges $E_{A,B}$. The players accept if all the nodes of their sides accept. On each communication round the players send $\cO(\log n)$ bits per edge of $E_{A,B}$. Hence, the nodes interchange $\cO(R(n)\cdot |E_{A,B}|\cdot \log n) = \cO(R(n)\cdot \log^2 n)$ bits in total. From \Cref{prop:lowerboundDISJ} we obtain that $\cO(R(n)\cdot \log^2 n) = D(\Disj_{n^2}) = \Omega(n^2)$. Therefore,  $R(n) = \Omega(n^2/\log^2 n)$.\\

We now use the classical reduction from Garey, Johnson and Stockmeyer \cite{garey1974some}, used to show that 3-colorability on planar graphs of maximum degree $4$ is $NP$-Complete.  Given an $n$-node graph $G$, the reduction  picks an arbitrary graph $G$ and constructs a graph $\widetilde{G}$ of maximum degree $4$, such that $G\in \ntcol$ if and only if $\widetilde{G} \in \ntcol$. Roughly, the we can construct $\widetilde{G}$ replacing each node $v$ of degree greater than $4$ by a node gadget, as the one represented in \Cref{fig:Thm5-2}. Let us denote $d_v$ the degree of $v$. The gadget consists in a set $S_v$ of $7(d_v-2) + 1$ nodes of degree at most $4$. In $S_v$ there are $d_v$ nodes $v_1, \dots, v_{d_v}$ satisfying that on every proper three-coloring of $\widetilde{G}$, these nodes must be colored the same. Furthermore, every coloring that assigns $v_1, \dots, v_{d_v}$  the same color can be extended to a proper $3$-coloring of the whole gadget. For each node in $G$, consider an arbitrary ordering of its neighbors. Let us pick an edge $\{u,v\}$  of $G$, such that $v$ is the $i$-th neighbor of $u$ and $v$ is the $j$-th neighbor of $v$. Then, in $\widetilde{G}$ we add the edge $\{u_i, v_j\}$. By construction satisfies that $\widetilde{G}$ has the desired properties (c.f. \cite{garey1974some} for more details). Observe that in the worst case  $\widetilde{G}$ has $\sum_{v\in V(G)}{7(d_v-2) + 1} = \Theta(n^2)$ nodes.

\begin{figure}[h]
    \centering
    \scalebox{0.75}{

\tikzset{every picture/.style={line width=0.75pt}} %set default line width to 0.75pt        

\begin{tikzpicture}[x=0.75pt,y=0.75pt,yscale=-1,xscale=1]
%uncomment if require: \path (0,896); %set diagram left start at 0, and has height of 896

% Text Node
\draw  [line width=1.5]   (409.5, 300) circle [x radius= 15.91, y radius= 15.91]   ;
\draw  (409.5, 300) node     {$\ v\ $};
% Text Node
\draw  [line width=1.5]   (340.5, 289.5) circle [x radius= 16.99, y radius= 16.99]   ;
\draw (340.5, 289.5) node    {$w_{1}$};
% Text Node
\draw  [line width=1.5]   (357.5, 239.5) circle [x radius= 16.99, y radius= 16.99]   ;
\draw (357.5, 239.5) node     {$w_{2}$};
% Text Node
\draw  [line width=1.5]   (408.5, 220.5) circle [x radius= 16.99, y radius= 16.99]   ;
\draw (408.5, 220.5) node     {$w_{3}$};
% Text Node
\draw  [line width=1.5]   (457.5, 239.5) circle [x radius= 16.99, y radius= 16.99]   ;
\draw  (457.5, 239.5) node    {$w_{4}$};
% Text Node
\draw  [line width=1.5]   (478.5, 290.5) circle [x radius= 16.99, y radius= 16.99]   ;
\draw (478.5, 290.5) node     {$w_{5}$};
% Text Node
\draw  [line width=1.5]   (81.5, 579.5) circle [x radius= 16.99, y radius= 16.99]   ;
\draw (81.5, 579.5) node     {$w_{1}$};
% Text Node
\draw  [line width=1.5]   (136, 580.5) circle [x radius= 15.91, y radius= 15.91]   ;
\draw (136, 580.5) node     {$v_{1}$};
% Text Node
\draw  [line width=1.5]   (191.5, 579) circle [x radius= 13.9, y radius= 13.9]   ;
\draw (185,571.4) node [anchor=north west][inner sep=0.75pt]    {$$};
% Text Node
\draw  [line width=1.5]   (247.5, 580) circle [x radius= 13.9, y radius= 13.9]   ;
\draw (241,572.4) node [anchor=north west][inner sep=0.75pt]    {$$};
% Text Node
\draw  [line width=1.5]   (307.5, 579) circle [x radius= 13.9, y radius= 13.9]   ;
\draw (301,571.4) node [anchor=north west][inner sep=0.75pt]    {$$};
% Text Node
\draw  [line width=1.5]   (167.5, 510) circle [x radius= 13.9, y radius= 13.9]   ;
\draw (161,502.4) node [anchor=north west][inner sep=0.75pt]    {$$};
% Text Node
\draw  [line width=1.5]   (223.5, 511) circle [x radius= 13.9, y radius= 13.9]   ;
\draw (217,503.4) node [anchor=north west][inner sep=0.75pt]    {$$};
% Text Node
\draw  [line width=1.5]   (277.5, 510) circle [x radius= 13.9, y radius= 13.9]   ;
\draw (271,502.4) node [anchor=north west][inner sep=0.75pt]    {$$};
% Text Node
\draw  [line width=1.5]   (226, 460.5) circle [x radius= 15.91, y radius= 15.91]   ;
\draw  (226, 460.5)  node {$v_{2}$};
% Text Node
\draw  [line width=1.5]   (224.5, 410.5) circle [x radius= 16.99, y radius= 16.99]   ;
\draw  (224.5, 410.5)  node     {$w_{2}$};
% Text Node
\draw  [line width=1.5]   (363.5, 579) circle [x radius= 13.9, y radius= 13.9]   ;
\draw (357,571.4) node [anchor=north west][inner sep=0.75pt]    {$$};
% Text Node
\draw  [line width=1.5]   (417.5, 581) circle [x radius= 13.9, y radius= 13.9]   ;
\draw (411,573.4) node [anchor=north west][inner sep=0.75pt]    {$$};
% Text Node
\draw  [line width=1.5]   (473.5, 581) circle [x radius= 13.9, y radius= 13.9]   ;
\draw (467,573.4) node [anchor=north west][inner sep=0.75pt]    {$$};
% Text Node
\draw  [line width=1.5]   (337.5, 511) circle [x radius= 13.9, y radius= 13.9]   ;
\draw (331,503.4) node [anchor=north west][inner sep=0.75pt]    {$$};
% Text Node
\draw  [line width=1.5]   (393.5, 512) circle [x radius= 13.9, y radius= 13.9]   ;
\draw (387,504.4) node [anchor=north west][inner sep=0.75pt]    {$$};
% Text Node
\draw  [line width=1.5]   (447.5, 511) circle [x radius= 13.9, y radius= 13.9]   ;
\draw (441,503.4) node [anchor=north west][inner sep=0.75pt]    {$$};
% Text Node
\draw  [line width=1.5]   (396, 461.5) circle [x radius= 15.91, y radius= 15.91]   ;
\draw (396, 461.5) node     {$v_{3}$};
% Text Node
\draw  [line width=1.5]   (394.5, 410.5) circle [x radius= 16.99, y radius= 16.99]   ;
\draw (394.5, 410.5) node     {$w_{3}$};
% Text Node
\draw  [line width=1.5]   (533.5, 580) circle [x radius= 13.9, y radius= 13.9]   ;
\draw (527,572.4) node [anchor=north west][inner sep=0.75pt]    {$$};
% Text Node
\draw  [line width=1.5]   (587.5, 581) circle [x radius= 13.9, y radius= 13.9]   ;
\draw (581,573.4) node [anchor=north west][inner sep=0.75pt]    {$$};
% Text Node
\draw  [line width=1.5]   (646, 581.5) circle [x radius= 15.91, y radius= 15.91]   ;
\draw (646, 581.5) node    {$v_{5}$};
% Text Node
\draw  [line width=1.5]   (507.5, 511) circle [x radius= 13.9, y radius= 13.9]   ;
\draw (501,503.4) node [anchor=north west][inner sep=0.75pt]    {$$};
% Text Node
\draw  [line width=1.5]   (563.5, 512) circle [x radius= 13.9, y radius= 13.9]   ;
\draw (557,504.4) node [anchor=north west][inner sep=0.75pt]    {$$};
% Text Node
\draw  [line width=1.5]   (617.5, 511) circle [x radius= 13.9, y radius= 13.9]   ;
\draw (611,503.4) node [anchor=north west][inner sep=0.75pt]    {$$};
% Text Node
\draw  [line width=1.5]   (566, 461.5) circle [x radius= 15.91, y radius= 15.91]   ;
\draw (566, 461.5) node     {$v_{4}$};
% Text Node
\draw  [line width=1.5]   (566.5, 411.5) circle [x radius= 16.99, y radius= 16.99]   ;
\draw (566.5, 411.5) node     {$w_{4}$};
% Text Node
\draw  [line width=1.5]   (694.5, 580.5) circle [x radius= 16.99, y radius= 16.99]   ;
\draw (694.5, 580.5) node     {$w_{5}$};
% Connection
\draw [line width=1.5]    (408.71,237.48) -- (409.3,284.09) ;
% Connection
\draw [line width=1.5]    (368.57,252.38) -- (399.13,287.93) ;
% Connection
\draw [line width=1.5]    (357.29,292.06) -- (393.77,297.61) ;
% Connection
\draw [line width=1.5]    (446.94,252.81) -- (419.39,287.53) ;
% Connection
\draw [line width=1.5]    (461.67,292.82) -- (425.27,297.83) ;
% Connection
\draw [line width=1.5]    (98.48,579.81) -- (120.09,580.21) ;
% Connection
\draw [line width=1.5]    (151.91,580.07) -- (177.6,579.38) ;
% Connection
\draw [line width=1.5]    (205.4,579.25) -- (233.6,579.75) ;
% Connection
\draw [line width=1.5]    (261.4,579.77) -- (293.6,579.23) ;
% Connection
\draw [line width=1.5]    (161.83,522.7) -- (142.49,565.97) ;
% Connection
\draw [line width=1.5]    (172.07,523.13) -- (186.93,565.87) ;
% Connection
\draw [line width=1.5]    (181.4,510.25) -- (209.6,510.75) ;
% Connection
\draw [line width=1.5]    (217.58,523.58) -- (197.42,566.42) ;
% Connection
\draw [line width=1.5]    (228.07,524.13) -- (242.93,566.87) ;
% Connection
\draw [line width=1.5]    (263.6,510.26) -- (237.4,510.74) ;
% Connection
\draw [line width=1.5]    (283.04,522.75) -- (301.96,566.25) ;
% Connection
\draw [line width=1.5]    (272.02,522.78) -- (252.98,567.22) ;
% Connection
\draw [line width=1.5]    (213.85,470.78) -- (178.11,501.02) ;
% Connection
\draw [line width=1.5]    (237.47,471.53) -- (267.48,500.37) ;
% Connection
\draw [line width=1.5]    (225.01,427.48) -- (225.52,444.59) ;
% Connection
\draw [line width=1.5]    (377.39,579.51) -- (403.61,580.49) ;
% Connection
\draw [line width=1.5]    (431.4,581) -- (459.6,581) ;
% Connection
\draw [line width=1.5]    (342.47,523.99) -- (358.53,566.01) ;
% Connection
\draw [line width=1.5]    (351.4,511.25) -- (379.6,511.75) ;
% Connection
\draw [line width=1.5]    (387.82,524.69) -- (369.18,566.31) ;
% Connection
\draw [line width=1.5]    (398.07,525.13) -- (412.93,567.87) ;
% Connection
\draw [line width=1.5]    (433.6,511.26) -- (407.4,511.74) ;
% Connection
\draw [line width=1.5]    (452.34,524.04) -- (468.66,567.96) ;
% Connection
\draw [line width=1.5]    (442.02,523.78) -- (422.98,568.22) ;
% Connection
\draw [line width=1.5]    (383.85,471.78) -- (348.11,502.02) ;
% Connection
\draw [line width=1.5]    (407.47,472.53) -- (437.48,501.37) ;
% Connection
\draw [line width=1.5]    (395,427.48) -- (395.53,445.59) ;
% Connection
\draw [line width=1.5]    (331.89,523.72) -- (313.11,566.28) ;
% Connection
\draw [line width=1.5]    (349.6,579) -- (321.4,579) ;
% Connection
\draw [line width=1.5]    (547.4,580.26) -- (573.6,580.74) ;
% Connection
\draw [line width=1.5]    (601.4,581.12) -- (630.09,581.36) ;
% Connection
\draw [line width=1.5]    (512.4,524.01) -- (528.6,566.99) ;
% Connection
\draw [line width=1.5]    (521.4,511.25) -- (549.6,511.75) ;
% Connection
\draw [line width=1.5]    (557.89,524.72) -- (539.11,567.28) ;
% Connection
\draw [line width=1.5]    (568.07,525.13) -- (582.93,567.87) ;
% Connection
\draw [line width=1.5]    (603.6,511.26) -- (577.4,511.74) ;
% Connection
\draw [line width=1.5]    (622.71,523.89) -- (640.03,566.74) ;
% Connection
\draw [line width=1.5]    (612.02,523.78) -- (592.98,568.22) ;
% Connection
\draw [line width=1.5]    (553.85,471.78) -- (518.11,502.02) ;
% Connection
\draw [line width=1.5]    (577.47,472.53) -- (607.48,501.37) ;
% Connection
\draw [line width=1.5]    (566.33,428.48) -- (566.16,445.59) ;
% Connection
\draw [line width=1.5]    (501.43,523.51) -- (479.57,568.49) ;
% Connection
\draw [line width=1.5]    (519.6,580.23) -- (487.4,580.77) ;
% Connection
\draw [line width=1.5]    (677.52,580.85) -- (661.91,581.17) ;

\end{tikzpicture}}
    
    \caption{Node gadget.  In this figure is depicted the gadget for a node of degree $5$. Notice that in every proper three-coloring of the gadget the nodes $v_1, \dots, v_5$  receive the same colors. Conversely, if we color the nodes $v_1, \dots, v_5$ with the same color, we can extend the coloring into a three-coloring of all the nodes of the gadget.  }
    \label{fig:Thm5-2}
\end{figure}
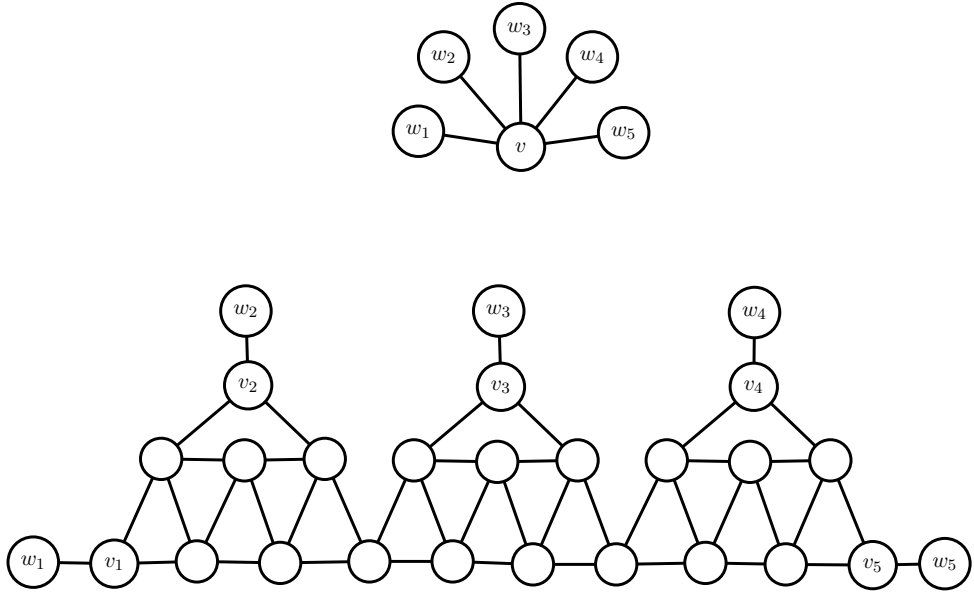

 The statement of the theorem is obtained by plugging the reduction construction of Garey, Johnson and Stockmeyer to the construction of Göös and Suomella. More precisely, given an instance $(A,B)$ of $\Disj_n$, we build graph $G_{A,B}$. Then, we apply the construction of Garey, Johnson and Stockmeyer on $G_{A,B}$ to obtain a $\widetilde{G}_{A,B}$ of maximum degree $4$. Observe that  $\widetilde{G}_{A,B}\in \ntcol$ if and only if $A\cap B = \emptyset$.  We partition the nodes of $\widetilde{G}_{A,B}$ into $\widetilde{V}_A$ and $\widetilde{V}_B$. All the nodes inside gadgets created for nodes in $V_A$ (respectively $V_B$) belong to $\widetilde{V}_A$ (respectively $\widetilde{V}_B$).  We obtain that $\widetilde{G}_{A,B}$ satisfies conditions (C1)-(C4) and has maximum degree $4$. However, $\widetilde{G}_{A,B}$ might have a large diameter. 
 
We now reduce the diameter constructing a graph $\widehat{G}_{A,B}$ obtained from $\widetilde{G}_{A,B}$ by plugging to it two rooted binary trees $T_A$ and $T_B$, where each edge of $T_A$ and $T_B$ is subdivided. Tree $T_A$ has depth $\cO(\log |\widetilde{V_A}|)$, and each leaf corresponds to a node of $\widetilde{V_A}$. The tree $T_B$ is defined analogously for $\widetilde{V}_B$. Finally, add an edge between the root of $T_A$ and the root of $T_B$. Let us define $\widehat{V}_A = \widetilde{V}_A \cup V(T_A)$ and $\widetilde{V}_B = \widetilde{V}_B \cup V(T_B)$.  We obtain that $\widehat{G}_{A,B}$ also satisfies conditions (C1)-(C4) and belongs to $\mathcal{G}$. Conditions C1 and C2 follow from the construction of $\widehat{V}_A$ and $\widehat{V}_B$. Condition C3 holds because in $\widehat{G}_{A,B}$ there is only one more edge in the cut than $\widetilde{G}_{A,B}$ consisting in the edge between the roots. Finally, for condition (C4) we observe that any $3$-coloring of the nodes in $\widetilde{G}_{A,B}$ can be extended to a three coloring of the trees $T_A$ and $T_B$. First, pick two different colors for the roots of $T_A$ and $T_B$. Then, color arbitrarily the other nodes of degree $3$ in $T_A$ and $T_B$. Then, color the nodes of degree two (the subdivided edges) with a color different than their endpoints. We deduce that $\widehat{G}_{A,B}$ satisfies conditions (C1)-(C4) and belongs to $\mathcal{G}$.

Let $\mathcal{P}$ be a certification scheme  (resp. let $\mathcal{A}$ be a \CONGEST\ algorithm)  for $\ntcol$ with certificates of size $c(n)$ (resp. running in $R(n)$ rounds) under the promise that the input graph is a $n$-node graph contained in $\mathcal{G}$.  Let $A,B\subseteq[n]$ be an instance of $\Disj_{n^2}$. Consider the communication protocol where Alice and Bob construct $\widehat{G}_{A,B}$ and simulate $\mathcal{P}$ (resp. $\mathcal{A}$) in the same way that we explained for $G_{A,B}$. Since $\widehat{G}_{A,B}$ is a graph with $\Theta(n^2)$ nodes, we obtain that $\cO(c(n^2)\cdot \log n) = N(\Disj_{n^2}) = \Omega(n^2)$, hence $c(n^2) = \Omega(n^2/\log n)$, and then $c(n) = \Omega(n/\log n)$. Similarly, we obtain that $R(n) = \Omega(\sqrt{n}/\log^2(n))$.
\end{proof}

\paragraph{Remark. }

\Cref{thm:lowerboundMSOinEXP} implies that the difficulty of certifying MSO properties over graphs of max degree 5 is (up to logarithmic factors) as hard as the hardest problem over that family.  Indeed, a graph of max degree $5$ can be encoded using $\cO(n \log n)$ bits, e.g. by adjacency lists of all nodes. We can use such encoding to design a certification scheme  that certifies any property on graphs of maximum degree $5$: every node interprets its certificate as an encoding of a graph of maximum degree $5$, and checks that every neighbor received the same encoding, and that its local view corresponds with the one in the encoding. If the conditions are verified, the nodes can compute any property of the input graph without any further communication. 

\medskip

We can also show that we cannot extend the certification of first-order properties to monadic second-order properties for planar graphs. However, in this case we obtain a weaker lowerbound. 

\begin{theorem}\label{thm:lowerboundMSOinPlanar}
Every certification scheme  for membership to $\ntcol$ under the promise that the input graph is planar requires certificates of size $\Omega(\sqrt{n}/\log n)$ bits.

\end{theorem}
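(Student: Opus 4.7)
The plan is to mimic the reduction from set disjointness used in Theorem~\ref{thm:lowerboundMSOinEXP}, but to engineer a \emph{planar} graph, which forces a trade-off between the cut size and the amount of information that can be encoded. Starting from an instance $(A,B)$ of $\Disj_N$, I will construct a planar graph $\widehat G_{A,B}$ on $\Theta(n)$ vertices, partitioned into $\widehat V_A$ and $\widehat V_B$, satisfying the analogues of conditions (C1)--(C4) of the previous proof, plus:
\begin{itemize}
  \item[(C5)] the number of edges between $\widehat V_A$ and $\widehat V_B$ is $O(\sqrt n)$.
\end{itemize}
To build such a graph, I will start from the (non-planar) Göös–Suomela style reduction $G_{A,B}$ of Theorem~\ref{thm:lowerboundMSOinEXP}, apply the Garey–Johnson–Stockmeyer gadget to bound the degree by~$4$ (this gadget is planar, so it does not introduce new obstructions to planarity of the parts of the graph that were already planar), and then \emph{planarize} the construction: replace each edge crossing by a constant-size planar $3$-coloring crossover gadget (classical in the NP-hardness proof of planar $3$-coloring), which is designed so that the $3$-coloring behavior of the two crossing edges is faithfully simulated. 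Choosing the parameters so that the resulting graph has $n$ vertices forces $N=\Theta(\sqrt n)$; alternatively, one builds directly a planar ``grid-like'' construction in which the $N$ conflict gadgets are laid out so that Alice's side and Bob's side are separated by an $O(\sqrt n)$-edge strip (analogous to the planar separator theorem), yielding $N=\Theta(n)$ and a cut of $O(\sqrt n)$. Either way, the end product is a planar graph with the desired properties.

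Next, given a certification scheme $\mathcal P$ for $\ntcol$ on planar graphs using certificates of size $c(n)$, I will run exactly the simulation of the proof of Theorem~\ref{thm:lowerboundMSOinEXP}: Alice and Bob construct their respective halves of $\widehat G_{A,B}$, the prover non-deterministically assigns certificates, each player communicates to the other the certificates of the endpoints of the $O(\sqrt n)$ edges of the cut, and then each player locally runs the verifier on its side. Since $\widehat G_{A,B}\in \ntcol$ iff $A\cap B \neq \emptyset$, this is a valid non-deterministic protocol for $\Disj_N$, with total communication
\[
O\bigl(\sqrt n \cdot c(n) \cdot \log n\bigr).
\]
Combined with $N(\Disj_N) = \Omega(N) = \Omega(n)$ from \Cref{prop:lowerboundDISJ}, this yields
\[
c(n) = \Omega\!\left(\frac{\sqrt n}{\log n}\right),
\]
as claimed.

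The principal obstacle is showing that the construction can simultaneously be (i) planar, (ii) of size $\Theta(n)$, (iii) encode enough information about $(A,B)$, and (iv) have a cut of only $O(\sqrt n)$. The planar crossover gadgets for $3$-colorability are standard, but using them on the Göös–Suomela graph would blow up the size by a quartic factor in the number of crossings, and one must recompute how much of the $\Disj_{m^2}$ hardness survives. A cleaner route, which I expect to follow, is to directly design a planar ``grid'' construction: arrange $N=\Theta(n)$ conflict gadgets on an $\sqrt n\times\sqrt n$ grid with Alice controlling the top half and Bob the bottom half, connecting them along a single row of $\sqrt n$ interface vertices. Verifying that such a construction enforces non-$3$-colorability precisely when $A\cap B\neq\emptyset$, while preserving planarity and the cut bound, is the step that requires the most care.
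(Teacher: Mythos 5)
Your proposal correctly identifies the two levers one can pull---a small cut with a blown-up instance, or a linear-size instance with a $\sqrt n$-edge cut---and either trade-off would yield the claimed bound. Your ``route (a)'' (planarize the Göös--Suomela graph $G_{A,B}$ via the $3$-coloring crossover gadget) is essentially the paper's proof, and the point at which you step back from it---writing that ``one must recompute how much of the $\Disj_{m^2}$ hardness survives''---is exactly where the argument finishes and should not be abandoned. The recomputation is short: $G_{A,B}$ encodes $\Disj_{m^2}$ on $\Theta(m)$ nodes with cut $\cO(\log m)$; any fixed embedding has at most $\cO(m^4)$ crossings, so the planarized graph has $n=\Theta(m^4)$ vertices while the cut (suitably partitioned between Alice and Bob) stays $\cO(\log m)=\cO(\log n)$. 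The non-deterministic protocol for $\Disj_{m^2}$ then exchanges $\cO(c(n)\cdot\log n)$ bits, forcing $c(n)\log n=\Omega(m^2)=\Omega(\sqrt n)$, i.e.\ $c(n)=\Omega(\sqrt n/\log n)$. One small deviation: you also propose to run the Garey--Johnson--Stockmeyer degree-$4$ gadget first, but this step is needed only for the bounded-degree lower bound of Theorem~\ref{thm:lowerboundMSOinEXP}; for the planar case the paper applies the crossover gadget directly to $G_{A,B}$, and the extra step does nothing useful here.

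Your preferred ``route (b)'' is a genuine gap. You posit a planar graph on $\Theta(n)$ vertices encoding $\Disj_n$ across an interface of only $\cO(\sqrt n)$ edges, with $3$-colorability governed by $A\cap B$. This is not a cosmetic variant of the Göös--Suomela construction: there, the $\Theta(n^2)$ disjointness bits live on the $\Theta(n^2)$ potential conflict pairs between two $\Theta(n)$-vertex sides, and the small cut comes from a completely different mechanism than a planar separator. A grid layout in which $\Theta(n)$ independent conflict gadgets must each make their presence felt across $\cO(\sqrt n)$ interface vertices, using only the information a $3$-coloring of those vertices can carry, would be a new gadget design that you do not supply; your own ``step that requires the most care'' remark concedes as much. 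Without that construction, route (b) is only a sketch. The clean way to complete your argument is to drop route (b) and carry route (a) to its conclusion, as the paper does.
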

Similar to the proof of \Cref{thm:lowerboundMSOinEXP} we use the construction of Göös and Suomella and the reduction of Garey, Johnson and Stockmeyer for planar graphs.  Given an $n$-node graph $G$, the reduction of \cite{garey1974some} for planar graphs first pick an arbitrary embedding of $G$ in the plane. In this embedding, there might be a number $\cross(G)$ of edge crossings (they must exist if $G$ is non-planar). Then, a planar graph $\widetilde{G}$ is constructed from $G$ by replacing on point where a pair of edges cross, by the \emph{ crossing gadget} shown in \Cref{fig:Thm5-1}. This gadget allows to transmit the coloring of one of the endpoints of the edge in a planar way. For each crossing, we add 13 new vertices. In the worst case, $\cross(G) = \Theta(n^4)$. Hence, the obtained graph has $\widetilde{G}$ has $\Theta(n^4)$ nodes. 

\begin{proof}
 \begin{figure}
    \centering
    \scalebox{0.75}{

\tikzset{every picture/.style={line width=0.75pt}} %set default line width to 0.75pt        

\begin{tikzpicture}[x=0.75pt,y=0.75pt,yscale=-1,xscale=1]
%uncomment if require: \path (0,364); %set diagram left start at 0, and has height of 364

% Text Node
\draw  [line width=1.5]   (570.5, 171) circle [x radius= 15.91, y radius= 15.91]   ;
\draw (570.5, 171) node  {$\ x_2\ $};
% Text Node
\draw  [line width=1.5]   (360.5, 170) circle [x radius= 15.91, y radius= 15.91]   ;
\draw (360.5, 170) node     {$\ x_1\ $};
% Text Node
\draw  [line width=1.5]   (465.5, 70) circle [x radius= 15.91, y radius= 15.91]   ;
\draw (465.5, 70) node     {$\ y_1\ $};
% Text Node
\draw  [line width=1.5]   (465.5, 273) circle [x radius= 15.91, y radius= 15.91]   ;
\draw (465.5, 273)  node    {$\ y_2\ $};
% Text Node
\draw  [line width=1.5]   (465.5, 171) circle [x radius= 15.91, y radius= 15.91]   ;
\draw (456,163.4) node [anchor=north west][inner sep=0.75pt]    {$\ \ \ \ $};
% Text Node
\draw  [line width=1.5]   (414.5, 171) circle [x radius= 15.91, y radius= 15.91]   ;
\draw (405,163.4) node [anchor=north west][inner sep=0.75pt]    {$\ \ \ \ $};
% Text Node
\draw  [line width=1.5]   (413.5, 121) circle [x radius= 15.91, y radius= 15.91]   ;
\draw (404,113.4) node [anchor=north west][inner sep=0.75pt]    {$\ \ \ \ $};
% Text Node
\draw  [line width=1.5]   (465.5, 121) circle [x radius= 15.91, y radius= 15.91]   ;
\draw (456,113.4) node [anchor=north west][inner sep=0.75pt]    {$\ \ \ \ $};
% Text Node
\draw  [line width=1.5]   (414.5, 221) circle [x radius= 15.91, y radius= 15.91]   ;
\draw (405,213.4) node [anchor=north west][inner sep=0.75pt]    {$\ \ \ \ $};
% Text Node
\draw  [line width=1.5]   (465.5, 222) circle [x radius= 15.91, y radius= 15.91]   ;
\draw (456,214.4) node [anchor=north west][inner sep=0.75pt]    {$\ \ \ \ $};
% Text Node
\draw  [line width=1.5]   (516.5, 222) circle [x radius= 15.91, y radius= 15.91]   ;
\draw (507,214.4) node [anchor=north west][inner sep=0.75pt]    {$\ \ \ \ $};
% Text Node
\draw  [line width=1.5]   (516.5, 172) circle [x radius= 15.91, y radius= 15.91]   ;
\draw (507,164.4) node [anchor=north west][inner sep=0.75pt]    {$\ \ \ \ $};
% Text Node
\draw  [line width=1.5]   (516.5, 121) circle [x radius= 15.91, y radius= 15.91]   ;
\draw (507,113.4) node [anchor=north west][inner sep=0.75pt]    {$\ \ \ \ $};
% Connection
\draw [line width=1.5]    (376.41,170.29) -- (398.59,170.71) ;
% Connection
\draw [line width=1.5]    (430.41,171) -- (449.59,171) ;
% Connection
\draw [line width=1.5]    (481.41,171.31) -- (500.59,171.69) ;
% Connection
\draw [line width=1.5]    (532.41,171.71) -- (554.59,171.29) ;
% Connection
\draw [line width=1.5]    (465.5,136.91) -- (465.5,155.09) ;
% Connection
\draw [line width=1.5]    (465.5,186.91) -- (465.5,206.09) ;
% Connection
\draw [line width=1.5]    (465.5,237.91) -- (465.5,257.09) ;
% Connection
\draw [line width=1.5]    (465.5,85.91) -- (465.5,105.09) ;
% Connection
\draw [line width=1.5]    (454.14,81.14) -- (424.86,109.86) ;
% Connection
\draw [line width=1.5]    (401.81,131.8) -- (372.19,159.2) ;
% Connection
\draw [line width=1.5]    (372.07,180.93) -- (402.93,210.07) ;
% Connection
\draw [line width=1.5]    (425.64,232.36) -- (454.36,261.64) ;
% Connection
\draw [line width=1.5]    (476.75,81.25) -- (505.25,109.75) ;
% Connection
\draw [line width=1.5]    (528.18,131.81) -- (558.82,160.19) ;
% Connection
\draw [line width=1.5]    (476.75,261.75) -- (505.25,233.25) ;
% Connection
\draw [line width=1.5]    (528.07,211.07) -- (558.93,181.93) ;
% Connection
\draw [line width=1.5]    (425.86,159.86) -- (454.14,132.14) ;
% Connection
\draw [line width=1.5]    (476.75,132.25) -- (505.25,160.75) ;
% Connection
\draw [line width=1.5]    (505.14,183.14) -- (476.86,210.86) ;
% Connection
\draw [line width=1.5]    (454.25,210.75) -- (425.75,182.25) ;
% Connection
\draw [line width=1.5]    (414.5,186.91) -- (414.5,205.09) ;
% Connection
\draw [line width=1.5]    (481.41,222) -- (500.59,222) ;
% Connection
\draw [line width=1.5]    (516.5,156.09) -- (516.5,136.91) ;
% Connection
\draw [line width=1.5]    (449.59,121) -- (429.41,121) ;

\end{tikzpicture}}
    
    \caption{Crossing gadget. In every proper three-coloring of the gadget node label $x_1$ has the same color of node labeled $x_2$, as well as node labeled $y_1$ has the same color of $y_2$.  Conversely, every  three-coloring of $x_1, x_2, y_1$ and $y_2$ that assigns the same color to $x_1$ and $x_2$, and the same color to $y_1$ and $y_2$ can be extended into a three coloring of the gadget.}
    \label{fig:Thm5-1}
\end{figure}
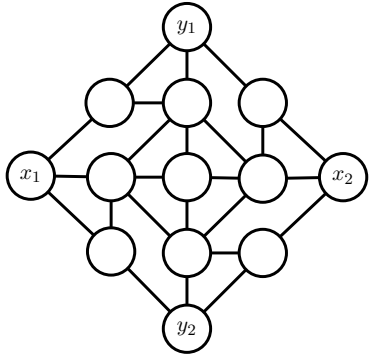

Let $\mathcal{G}$ be the class of planar graphs. As we did for \Cref{thm:lowerboundMSOinEXP}, we plug the reduction construction of Garey, Johnson and Stockmeyer to the construction of Göös and Suomella. More precisely, given an instance $(A,B)$ of $\Disj_n$, we build graph $G_{A,B}$ (c.f. proof of \Cref{thm:lowerboundMSOinEXP}). Then, using the crossing gadget we obtain a planar graph $\widetilde{G}_{A,B}$. We partition the nodes of $\widetilde{G}_{A,B}$ into $\widetilde{V}_A$ and $\widetilde{V}_B$. All the nodes inside gadgets created for edges with both endpoint in $V_A$ (respectively $V_B$) belong to $\widetilde{V}_A$ (respectively $\widetilde{V}_B$).  We also consider that the nodes inside the gadgets used for the edges in the set $E_{A,B}$ of $G_{A,B}$ belong to $V_A$.  We obtain that $\widetilde{G}_{A,B}$ satisfies conditions (C1)-(C4). 

The remaining of the proof is analogous to \Cref{thm:lowerboundMSOinEXP}. Let $\mathcal{P}$ be a certification scheme   for $\ntcol$ with certificates of size $c(n)$ under the promise that the input graph is planar. On input $A,B\subseteq[n]$, Alice and Bob construct $\widetilde{G}_{A,B}$ and simulate $\mathcal{P}$ in the same way that we explained for $G_{A,B}$. Since $\widetilde{G}_{A,B}$ is a graph with $\Theta(n^4)$ nodes, we obtain that $\cO(c(n^4)\cdot \log n) = N(\Disj_{n^2}) = \Omega(n^2)$, hence $c(n^4) = \Omega(n^2/\log n)$, and then $c(n) = \Omega(\sqrt{n}/\log n)$.
\end{proof}

\subsection{No Extension to First Order Logic on Graphs of Bounded Degeneracy}

We now show that we cannot extend the efficient checking nor the certification of first order properties into graphs of degeneracy $2$ and diameter $7$. Consider the set of graphs $\csixfree$ containing all graphs not having a cycle of six nodes as a subgraph. 

\begin{theorem}\label{thm:lowerboundC6degeneracy}
Let $\mathcal{G}$ the class of degeneracy $2$ and diameter at most $6$.
Every \CONGEST\ algorithm deciding the membership in $\csixfree$  under the promise that the input graph belongs to $\mathcal{G}$ requires $\tilde{\Omega}(\sqrt{n})$ rounds of communication. 

Every certification scheme  that certifies membership to $\csixfree$  under the promise that the input graph belongs to $\mathcal{G}$ requires certificates of size $\tilde{\Omega}(\sqrt{n})$.
\end{theorem}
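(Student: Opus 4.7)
I would prove both lower bounds by a single reduction from the two-party set disjointness problem $\Disj_N$, following the framework of Drucker, Kuhn, and Oshman~\cite{DruckerKO13}. Fix $N=\Theta(n)$ and identify $[N]$ with $[m]\times[m]$ for $m=\Theta(\sqrt n)$. The plan is to build, for every instance $(A,B)\in 2^{[N]}\times 2^{[N]}$, an $n$-vertex graph $G_{A,B}$ together with a vertex bipartition $V(G_{A,B})=V_A\sqcup V_B$ satisfying: (i) Alice (knowing only $A$) determines $V_A$ and the edges incident to $V_A$; (ii) Bob (knowing only $B$) determines $V_B$ and the edges incident to $V_B$; (iii) the cut $|E(V_A,V_B)|=O(\sqrt n)$; and (iv) $G_{A,B}\in\mathcal G$ and contains a $C_6$ if and only if $A\cap B\neq\emptyset$. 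Given such a family, the two bounds follow by standard simulation. Any $R$-round \CONGEST algorithm deciding $C_6$-freeness on $G_{A,B}$ yields a deterministic protocol for $\Disj_N$ of communication $O(R\cdot\sqrt n\cdot\log n)$ bits (each of the $O(\sqrt n)$ cut edges transmits an $O(\log n)$-bit message per round), and any certification scheme with certificates of size $c(n)$ yields a nondeterministic protocol of $O(c(n)\cdot\sqrt n)$ bits (each player announces the certificates of their $O(\sqrt n)$ cut-adjacent vertices). Combined with $D(\Disj_N)=N(\Disj_N)=\Omega(n)$ from \Cref{prop:lowerboundDISJ}, this gives $R=\tilde\Omega(\sqrt n)$ and $c(n)=\tilde\Omega(\sqrt n)$.

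The gadget family is built as follows. For every $i\in[m]$, introduce a row anchor as two copies, $r_i^A\in V_A$ and $r_i^B\in V_B$, linked by a single \emph{bridge} edge; similarly for column anchors $c_j^A,c_j^B$. The $2m=O(\sqrt n)$ bridges form the entire cut. For each $(i,j)\in A$ Alice adds a length-$2$ path $r_i^A$--$a_{ij}$--$c_j^A$ through a fresh degree-$2$ intermediate $a_{ij}\in V_A$; symmetrically for each $(i,j)\in B$ Bob adds $r_i^B$--$b_{ij}$--$c_j^B$ in $V_B$. The intended $C_6$ associated with $(i,j)\in A\cap B$ is then the concatenation of Alice's length-$2$ gadget, the bridge at $c_j$, Bob's length-$2$ gadget, and the bridge at $r_i$, giving a cycle of length exactly $6$. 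To enforce the diameter bound, a central hub $C\in V_A$ is added together with a carefully chosen auxiliary tree of depth at most $3$ connecting $C$ to all anchors, so that every vertex lies within distance $3$ of $C$ and hence every two vertices are at distance at most $6$. The degeneracy-$2$ bound follows from the elimination ordering that first removes all intermediates (which have degree exactly $2$), leaving the anchors and the hub of residually low degree.

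The main obstacle is to verify that the only $C_6$'s in $G_{A,B}$ are the intended ones. A case analysis on the number of bridge edges used by a putative $C_6$ (necessarily even, hence $0,2,4,$ or $6$) reduces the problem: the $4$- and $6$-bridge cases are immediately ruled out because distinct bridges share no endpoint; the $2$-bridge case, combined with the degree-$2$ constraint on intermediates (which forces both of their incident edges to lie on the cycle whenever they are used), pins down the cycle to exactly one Alice gadget and one Bob gadget sharing a common $(i,j)$, i.e., to an element of $A\cap B$. The delicate case is that of $0$ bridges, where a spurious $C_6$ could live entirely within $V_A$ (or $V_B$), for instance of the form $C$--$r_i^A$--$a_{ij}$--$c_j^A$--$a_{i'j}$--$r_{i'}^A$--$C$ when $A$ contains two gadgets sharing a column anchor. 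Avoiding these spurious cycles is the technically delicate part: one tunes the auxiliary tree backbone so that the distance in it between any two distinct anchors is an odd number different from $4$, which implies that appending a length-$2$ Alice (or Bob) gadget to a backbone path never produces a cycle of length $6$. One can alternatively restrict the reduction to a ``sparse'' version of set disjointness (supported on matching-shaped inputs in $[m]\times[m]$) for which the communication complexity remains $\tilde\Omega(n)$ and for which the spurious cycles do not occur by construction, matching the desired $\tilde\Omega(\sqrt n)$ bound.
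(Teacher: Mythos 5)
Your proposal follows the high-level template of the paper's proof (reduce from set disjointness via a graph in which the cut between Alice's and Bob's sides is small, then simulate), but it diverges substantially in the construction, and the deviation is where the problems lie. The paper does not build a new gadget: it invokes the existing $C_6$-freeness construction of Drucker et al.\ and Korhonen et al., which already yields a degeneracy-$2$ graph on $\Theta(n)$ vertices with an $O(n)$-edge cut, reducing from $\Disj_m$ with $m=\Theta(n^{3/2})$. The only new work in the paper is to fix the diameter, and this is done by a very light patch: add one hub $r$, attach $r$ to every vertex, and subdivide each new edge twice. Every original vertex is then at distance exactly $3$ from $r$, so the diameter becomes $6$; the subdivision vertices have degree $2$ so degeneracy remains $2$; and $r$ together with the subdivision vertices cannot lie on any $C_6$ because any cycle through $r$ must traverse two subdivided legs of length $3$ each plus at least one edge between two distinct original endpoints, giving length $\ge 7$. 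Your custom bipartite-incidence gadget would instead give an $O(\sqrt n)$-edge cut and reduce from $\Disj_{\Theta(n)}$, which yields the same $\tilde\Omega(\sqrt n)$ bound, so there is no quantitative advantage.

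The gaps in your version are concrete. First, the hub/backbone design is underspecified in a way that makes the diameter-$6$ and spurious-cycle constraints pull against each other: to put every vertex within distance $3$ of $C$ you need anchor-to-$C$ distance $\le 2$, which makes many anchor-to-anchor backbone distances even, directly contradicting your proposed ``odd distance'' tuning; and if $C$ and its tree live entirely in $V_A$ then Bob's intermediates end up at distance up to $5$ from $C$, giving diameter well above $6$, while putting tree edges across the cut requires you to recount the cut and re-examine whether those edges create new $C_6$'s. Second, the alternative fix of restricting to ``sparse, matching-shaped'' inputs of disjointness is not viable: a matching in $[m]\times[m]$ with $m=\Theta(\sqrt n)$ is specified by $O(m\log m)=\tilde O(\sqrt n)$ bits, so the deterministic and nondeterministic communication complexity of disjointness restricted to such inputs is $\tilde O(\sqrt n)$, not $\tilde\Omega(n)$ as you claim, and the resulting bound on $R$ collapses to $\tilde\Omega(1)$. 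You would do better to adopt the paper's strategy of reusing the already-verified $C_6$-freeness lower-bound construction and only patching the diameter with a subdivided star.
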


\begin{proof}
The result for \CONGEST\ algorithms and graph of degeneracy $2$ was already shown in Korhonen et al. \cite{korhonen_et_al:LIPIcs:2018:8625},  which in turn use a construction of Druker et al. \cite{DruckerKO13}. Roughly, the lower-bound is based on a reduction to $\Disj_m$, for $m = \Theta(n^{1+1/2})$. Each instance $A,B \in [m]$  is associated with a $\cO(n)$-node graph $G_{A,B}$ of degeneracy $2$. The set of vertices of $G_{A,B}$ are divided in two sets of vertices, denoted $V_A$ (and called \emph{the nodes owned by Alice}) and $V_B$ (\emph{the nodes owned by Bob}), satisfying the following conditions:
  \begin{itemize}
      \item[(B1)] The graph induced by $V_A$ only depends on $A$.
      \item[(B2)] The graph induced by $V_B$ only depends on $B$.
    \item[(B3)] The set of edges $E_{A,B}$ with one endpoint in $V_A$ and the other in $V_B$ contains at most $\cO(n)$ edges, and does not depend on $A$ and $B$. 
      \item[(B4)] $A\cap B = \emptyset$ if and only if $G_{A,B} \in \csixfree$.
  \end{itemize}
We need to deal with one detail. Graph $G_{A,B}$ is not necessarily of bounded diameter. To fix this issue, we build graph $\widehat{G}_{A,B}$ from $G_{A,B}$ by adding a node $r$, connecting it to every node in $V_A \cup V_B$, and then subdividing all the new edges two times. We obtain that every node in $V_A \cup V_B$ is at a distance of 3 from $r$, and then $\widehat{G}_{A,B}$ has a  diameter of at most 6 and still degeneracy $2$. Observe also that $r$ cannot belong to any $C_6$, hence $\widehat{G}_{A,B}$ belongs to $\csixfree$ if and only if $G_{A,B}$ does. We define $\widehat{V}_A$ as $V_A$ plus all the new nodes of $\widehat{G}_{A,B}$, and $\widehat{V}_B$ simply as $V_B$. We obtain that $\widehat{G}_{A,B} \in \mathcal{G}$ and satisfies conditions (B1)-(B4).

Then, analogously to the proof of \Cref{thm:lowerboundMSOinEXP}, we can use $\widehat{G}_{A,B}$ to transform any \CONGEST\ algorithm into a protocol for $\Disj_m$. More precisely, let $\mathcal{A}$ be an $R(n)$-round algorithm that decides the membership in $\csixfree$ of a graph in $\mathcal{G}$. Given an instance $A, B \in [m]$, the players construct $G_{A,B}$, simulate $\mathcal{A}$ in the nodes they own, and in each round communicate to the other player the messages sent from nodes on their side through $E_{A,B}$; in particular, the information communicated in one round is of size $\cO(n \log n)$. We deduce that $\cO(n \log n \cdot R(n)) = D(\Disj_m) = \Omega(n^{1+1/2})$. Therefore, $R(n) = \tilde{\Omega}(\sqrt{n})$. The same construction directly implies a lower bound of $\tilde{\Omega}(n^{1/2})$ for the size of a certificate of any certification scheme  for $\csixfree$.
\end{proof}

%%%%%%%%%%%%%%%%%%%%%%%%%%%%%%%%%%%
\section{Conclusion}
%%%%%%%%%%%%%%%%%%%%%%%%%%%%%%%%%%%

In centralized model checking, the fixed-parameter tractability of FO was extended from graphs of bounded expansion to the more general class of \emph{nowhere dense} graphs in~\cite{GroheKS14}. Notably, nowhere dense graphs represent the ultimate frontier in the program of designing fixed-parameter tractable algorithms for model checking FO formulas. Indeed, as shown in~\cite{DvorakKT13}, for any graph class $\mathcal{G}$ that is not nowhere dense (i.e., $\mathcal{G}$ is \emph{somewhere dense}) and closed under taking subgraphs, FO model checking on $\mathcal{G}$ is not fixed-parameter tractable (unless $\FPT = \W{1}$).
The tractability frontier for FO model checking in the \CONGEST model remains unknown. Moreover, there is no reason to expect that the tractability landscape in the distributed setting should mirror the one of the centralized setting. This raises several concrete and intriguing questions, discussed hereafter.

\subsection{Distributed Computing in Nowhere Dense Graphs}

Is it possible to design model-checking algorithms for FO formulas running in $\cO(D+\polylog(n))$ rounds in \CONGEST on nowhere dense graphs of diameter~$D$? From a theoretical perspective, a negative answer to this question would be even more insightful than a positive one.
More broadly, what is the precise boundary for $\cO(D+\polylog(n))$-round FO model checking in the \CONGEST model? Does it coincide with the class of bounded expansion graphs, extend to nowhere dense graphs, or even include a larger family of graphs? There are several difficulties in extending our approach to nowhere dense graphs. First, it is unclear how to implement the combinatorial arguments from~\cite{GroheKS14}, particularly the constructions of sparse neighborhood covers, in a distributed model. Second, the proof of \cite{GroheKS14} builds on a weak variant of Gaifman's local form, and not a quantifier elimination procedure. While this is sufficient for fixed-parameter tractability of FO model-checking, we do not know whether it could be adapted in the distributed setting.

\subsection{Distributed Computing for Extensions of First-Order Logic}
 
 Another question is whether a meta-theorem with $\cO(D+\polylog(n))$ round complexity can be obtained for some extensions of FO logic on planar graphs. \Cref{thm:lowerboundMSOinEXP} shows that we cannot hope to extend \Cref{thm:thmFOexpCONG} to MSO, even on planar graphs. On the other hand, in centralized computing, the fixed-parameter tractability of logics lying strictly between FO and MSO is an active area of research.
For example, the extension of FO known as \emph{separator logic}, denoted $\textsf{FOL+conn}$, was introduced independently by Schirrmacher, Siebertz, and Vigny~\cite{schirrmacher2022first}, and by Bojańczyk and Pilipczuk~\cite{bojanczyk2016definability}. This logic extends FO, for every $k \geq 1$, with the general predicate $\mathsf{conn}_k(x, y, z_1, \ldots, z_k)$, which evaluates to true on a graph $G$ if the nodes corresponding to $x$ and $y$ are connected by a path that avoids the nodes corresponding to $\{z_1, \ldots, z_k\}$. 
Another example is \emph{disjoint paths logic}, denoted $\textsf{FOL+DP}$, which extends FO with the atomic predicate $\mathsf{dp}_k(x_1, y_1, \ldots, x_k, y_k)$, expressing the existence of $k$ pairwise internally vertex-disjoint paths between $x_i$ and $y_i$ for $i \in \{1, \ldots, k\}$.
The main obstacle to developing distributed meta-theorems for these logics lies in the absence of distributed counterparts of the core algorithmic tools used in the centralized setting---such as tree decompositions %recursive understanding,\pierre{recursive understanding of what?}
which underlies tractability results for $\textsf{FOL+conn}$~\cite{pilipczuk2022algorithms}, or the irrelevant vertex technique used for $\textsf{FOL+DP}$~\cite{golovach2023model}.

%%%%%%%%%%%%%%%%%%%%%%%%%%%%%%%%%%%%%%%
\bibliographystyle{siam}
\bibliography{biblio}
%%%%%%%%%%%%%%%%%%%%%%%%%%%%%%%%%%%%%%%

%\newpage
%\appendix
%\centerline{\Large A P P E N D I X}

%%%%%%%%%%%%%%%%%%%%%%%%%%%%%%%%%%%%%%%%
\end{document}